\documentclass[a4paper,UKenglish,cleveref, autoref, thm-restate, numberwithinsect]{lipics-v2021}
\nolinenumbers
\title{A Complexity Bound for Determinisation of Min-Plus Weighted Automata}

\author{Shaull Almagor}{Department of Computer Science, Technion, Israel \and \url{https://shaull.cswp.cs.technion.ac.il/}} {shaull@technion.ac.il}{https://orcid.org/0000-0001-9021-1175}{}%TODO mandatory, please use full name; only 1 author per \author macro; first two parameters are mandatory, other parameters can be empty. Please provide at least the name of the affiliation and the country. The full address is optional. Use additional curly braces to indicate the correct name splitting when the last name consists of multiple name parts.

\author{Guy Arbel}{Department of Computer Science, Technion, Israel} {guy.arbel@campus.technion.ac.il}{}{}

\author{Sarai Sheinvald}{Department of Computer Science, Technion, Israel} {surke@technion.ac.il}{https://orcid.org/0000-0002-0524-7390}{}

\authorrunning{S. Almagor,  G. Arbel and S. Sheinvald} %TODO mandatory. First: Use abbreviated first/middle names. Second (only in severe cases): Use first author plus 'et al.'

\Copyright{Shaull Almagor, Guy Arbel and Sarai Sheinvald} %TODO mandatory, please use full first names. LIPIcs license is "CC-BY";  http://creativecommons.org/licenses/by/3.0/

\ccsdesc{Theory of computation~Formal languages and automata theory}
\ccsdesc{Theory of computation~Quantitative automata} %TODO mandatory: Please choose ACM 2012 classifications from https://dl.acm.org/ccs/ccs_flat.cfm 

\keywords{Automata, Weighted Automata, Determinisation, Tropical, Min Plus, Primitive Recursive}

\category{} %optional, e.g. invited paper

\relatedversion{} 
\acknowledgements{This research was supported by the ISRAEL SCIENCE FOUNDATION (grant No. 989/22)}%optional

\EventEditors{Claudia Faggian and Joost-Pieter Katoen}
\EventNoEds{2}
\EventLongTitle{41st Annual Symposium on Logic in Computer Science (LICS 2026)}
\EventShortTitle{LICS 2026}
\EventAcronym{LICS}
\EventYear{2026}
\EventDate{July 20--23, 2026}
\EventLocation{Lisbon, Portugal}
\EventLogo{}
\SeriesVolume{380}
\ArticleNo{16}
\usepackage[dvipsnames]{xcolor}
\usepackage{todonotes}
\usepackage{thmtools}
\usepackage{graphicx}
\usepackage{amsmath}
\usepackage{stmaryrd}
\usepackage{mathtools}
\usepackage{tikz}
\usetikzlibrary{decorations.pathmorphing, backgrounds,shapes.geometric, arrows, positioning, calc}
\usetikzlibrary{decorations.pathreplacing}
\usetikzlibrary{automata, positioning}
\usepackage{amssymb}
\usepackage{etoolbox}
\usepackage{xspace}
\usepackage{upgreek}
\usepackage{subcaption}
\usepackage{circledsteps}
\usepackage{halloweenmath}
\usepackage{cancel}
\usepackage{float}
\usepackage{caption}
\tikzstyle{block} = [rectangle, draw, fill=blue!20, 
    text centered, rounded corners, minimum height=1cm]
\tikzstyle{line} = [draw, -latex']

\newcommand{\bs}[1]{\boldsymbol{#1}}

\newcommand{\bbN}{\mathbb{N}}
\newcommand{\bbZ}{\mathbb{Z}}

\newcommand{\bbB}{\mathbb{B}}
\newcommand{\bbNinf}{\mathbb{N}_{\infty}}
\newcommand{\bbZinf}{\mathbb{Z}_{\infty}}
\newcommand{\cA}{\mathcal{A}}
\newcommand{\cB}{\mathcal{B}}
\newcommand{\cD}{\mathcal{D}}

\newcommand{\RefStates}{\textsc{RefStates}}
\newcommand{\MinRefStates}{\textsc{MinStates}}

\newcommand{\GroundPairs}{\textsc{GrnPairs}}

\newcommand{\tup}[1]{\langle #1 \rangle}

\newcommand{\init}{\texttt{init}}

\newcommand{\pref}{\mathsf{pref}}
\newcommand{\weight}{\mathsf{wt}}
\newcommand{\xconf}{\mathsf{xconf}}
\newcommand{\maxdom}{\mathsf{max\_dom}}
\newcommand{\domval}{\mathsf{dom\_val}}

\newcommand{\ST}{\text{ s.t. }}
\newcommand{\runsto}[1]{\xrightarrow{#1}}
\newcommand{\minweight}{\mathsf{mwt}}
\newcommand{\wmax}[1]{{\mathsf{max\_wt}(#1)}}
\newcommand{\maxeff}[1]{{\mathsf{max\_eff}(#1)}}

\newcommand{\augA}{\widehat{\cA}}
\newcommand{\augStates}{S}

\newcommand{\augInitState}{s_0}

\newcommand{\augTrans}{\widehat{\Delta}}

\newcommand{\rhobase}{\rho_{\mathrm{base}}}
\newcommand{\baseshift}[2]{\mathsf{bshift}[{#1}/{#2}]}

\newcommand{\stab}{\mathsf{stab}}
\newcommand{\rebase}{\mathsf{rebase}}
\newcommand{\jl}{\mathsf{jump}}

\newcommand{\depth}{\mathsf{dep}}

\newcommand{\jlset}{\mathsf{JL}}

\newcommand{\boundedCacti}[1]{\Upsilon_{#1}}

\newcommand{\simpL}{L_{\mathsf{simp}}}
\newcommand{\generL}{L_{\mathsf{gen}}}

\newcommand{\simpLfuncLength}[2]{\texttt{Len}(#1,#2)} %d then i 
\newcommand{\simpLfuncAmp}[2]{\texttt{Amp}(#1,#2)} 
 
\newcommand{\simpLfuncCover}[2]{\texttt{Cov}(#1,#2)}
\newcommand{\simpLfuncTypes}[2]{\texttt{Typ}(#1,#2)}
\newcommand{\simpLfuncMaxW}[1]{\texttt{MaxWt}(#1)}

\newcommand{\generLfuncLength}[2]{\texttt{GLen}(#1,#2)} %d then i 
\newcommand{\generLfuncAmp}[2]{\texttt{GAmp}(#1,#2)} 
 
\newcommand{\generLfuncCover}[2]{\texttt{GCov}(#1,#2)}
\newcommand{\generLfuncTypes}[2]{\texttt{GTyp}(#1,#2)}
\newcommand{\generLfuncMaxW}[1]{\texttt{GMaxWt}(#1)}

\newcommand{\ramsey}{\mathsf{Ramsey}}

\newcommand{\unfold}{\mathsf{unfold}}

\newcommand{\booltrans}{\delta_\bbB}
\renewcommand{\vec}[1]{\boldsymbol{#1}}

\newcommand{\supp}{\mathsf{supp}}
\newcommand{\flatten}{\mathsf{flatten}}

\newcommand{\near}{\mathsf{near}}
\newcommand{\difftypeset}{\mathsf{DIFF\_TYPES}}

\newcommand{\difftype}{\mathsf{diff\_type}}

\newcommand{\sign}{\mathsf{sign}}

\newcommand{\pot}{\upphi}
\newcommand{\charge}{\uppsi}

\newcommand{\bigM}{{\bs{\mathfrak{m}}}}

\newcommand{\ghostTrans}{\delta_{\mathghost}}

\newcommand{\boundedCac}[1]{{\Upsilon^0_{#1,0}}}
\newcommand{\boundedCacReb}[1]{{\Upsilon^1_{#1,0}}}
\newcommand{\boundedCacRebCac}[1]{{\Upsilon^1_{#1,1}}}
\newcommand{\boundedCacRebJump}[1]{{\Upsilon^1_{#1,0+\jlset}}}

\newcommand{\simpAbsCac}{\Upsilon_{\mathsf{simp}}}

\newcommand{\genAbsCac}{\Upsilon_{\mathsf{gen}}}

\newcommand{\simpCac}{{\boundedCac{\mathsf{simp}}}}
\newcommand{\simpCacReb}{{\boundedCacReb{\mathsf{simp}}}}
\newcommand{\simpCacRebCac}{{\boundedCacRebCac{\mathsf{simp}}}}
\newcommand{\simpCacRebJump}{{\boundedCacRebJump{\mathsf{simp}}}}

\newcommand{\absCac}{\Upsilon}
\newcommand{\absDomL}{L_{\mathrm{dom}}}

\newcommand{\genCac}{{\boundedCac{\mathsf{gen}}}}
\newcommand{\genCacReb}{{\boundedCacReb{\mathsf{gen}}}}
\newcommand{\genCacRebCac}{{\boundedCacRebCac{\mathsf{gen}}}}
\newcommand{\genCacRebJump}{{\boundedCacRebJump{\mathsf{gen}}}}
\newcommand{\budH}{\textbf{H}}

\newcommand{\extraSize}{32\bigM^2}

\newcommand{\slope}{\mathsf{slope}}
\newcommand{\minslope}{\mathsf{min\_slope}}

\newcommand{\gregF}{\mathbf{F}}
\newcommand{\prim}{\mathsf{prim}}

\theoremstyle{definition}
\newtheorem{problem}{Problem}
\begin{document}

\maketitle
\begin{abstract}
  The determinisation problem for min-plus (tropical) weighted automata was recently shown to be decidable. However, the proof is purely existential, relying on several non-constructive arguments. Our contribution in this work is twofold: first, we present the first complexity bound for this problem, showing it is primitive recursive. Second, our techniques introduce a versatile framework to analyse runs of weighted automata in a constructive manner. In particular, this simplifies the previous decidability argument and provides a tighter analysis, thus serving as a critical step towards a tight complexity bound. 
\end{abstract}

\section{Introduction}
\label{sec:abs:intro}
Tropical Weighted Finite automata (WFAs) are a popular quantitative computational model~\cite{droste2009handbook,schutzenberger1961definition,chatterjee2010quantitative,Almagor2020Whatsdecidableweighted}, defining functions from words to values over the \emph{tropical semiring} (also called the \emph{min-plus} semiring). There, a WFA $\cA$ assigns to a word $w$ the \emph{minimal} weight of a run of $\cA$ on $w$, where the weight of a run is the \emph{sum} of the weights along the run, hence min-plus. 

For example, consider the tropical WFA in \cref{fig:min num a num b}. For every word $w\in \{a,b\}^*$ there are two runs: one cycles in $q_a$ and counts the number of $a$'s, and the other cycles in $q_b$ and counts the number of $b$'s. Thus, the weight of $w$ is the minimum between the number of $a$'s and the number of $b$'s.
\begin{figure}[H]
    \centering
    \includegraphics[width=0.3\linewidth]{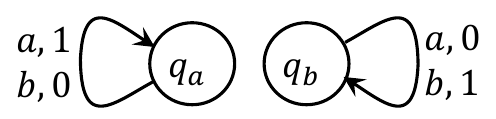}
        \caption{A tropical WFA that assigns to $w\in \{a,b\}^*$ the minimum between the number of $a$'s and $b$'s in $w$.}
        \label{fig:min num a num b}
\end{figure}

Applications of such weighted automata fall on a wide spectrum including verification, rewriting systems, tropical algebra, Boolean language theory, and speech and image processing (in the pre-LLM era). We refer the reader to \cite{kirsten2005distance,Has82,Has00,LP04,daviaud2020register,droste2009handbook,Almagor2020Whatsdecidableweighted,almagor2026determinization} and references therein for applications and surveys. 

Unlike Boolean automata, nondeterministic WFAs are strictly more expressive than their deterministic fragment.  
Accordingly, a natural problem for WFAs is the \emph{determinisation problem}: given a WFA $\cA$, is there a deterministic WFA $\cD$ such that $L_{\cA}\equiv L_{\cD}$. 
The deterministic fragment of WFAs admits much better algorithmic properties than general WFAs~\cite{Almagor2020Whatsdecidableweighted}, making this problem important in order to increase the practical usages of this model.
The determinisability problem was recently shown to be decidable in~\cite{almagor2026determinization}, after being open for over 30 years~\cite{mohri1994compact,mohri1997finite,klimann2004deciding,kirsten2009deciding}. 

Unfortunately, the proof of decidability in~\cite{almagor2026determinization} is non-constructive (i.e., it is shown via two semi-algorithms). Moreover, it is heavily based on several inherently non-constructive techniques. %Therefore obtaining a constructive complexity bound is not merely a matter of ``accounting''.
Specifically, \cite{almagor2026determinization} introduces two important concepts that pave the way to decidability: \emph{Cactus letters} and the \emph{Zooming technique}. Cactus letters are a method for (carefully) constructing words in a recursive manner to allow pumping in a ``safe'' manner (as pumping arguments are notoriously difficult for WFAs~\cite{chattopadhyay2021pumping}). 
The Zooming technique is a way of untangling the run structure of WFAs in order to find well-behaved infixes that can be used to obtain cactus letters.  
Both techniques, while useful, rely on non-constructive arguments. In a nutshell, the former constructs an infinite alphabet, and the latter relies on Ramsey-arguments for infinite sequences, and moreover -- requires an infinite set of words to be applied.

Determinisability of WFAs has a well-known characterisation by means of runs, and of how far two potentially-minimal runs can get away from one another, which we commonly refer to as a \emph{gap}. 
Intuitively, $\cA$ is determinisable if and only if there is some bound $B\in \bbN$ such that if two runs on a word $x$ obtain weights that are more than $B$ apart, but the ``upper'' run can still become minimal over some suffix. For example, consider the WFA in \cref{fig:min num a num b} and the word $a^nb^{n+1}$: upon reading $a^n$, the $q_a$ component accumulates weight $n$ and $q_b$ accumulates $0$, making it minimal and far ``below'' $q_a$. However, the suffix $b^{n+1}$ makes $q_b$ increase to weight $n+1$ and surpass $q_a$. Thus, there is a gap of size $n$ for every $n$, showing that this WFA is not determinisable.

This characterisation (potentially) gives rise to a simple algorithm for determinisability: if we could show a ``small-model'' property for the $B$ above (i.e., an a-priori upper bound on it, if it exists) it is enough to check whether there is some gap above this upper bound. This can be effectively checked. The only component missing is such an upper bound.

A disappointing surprise from the algorithm in~\cite{almagor2026determinization} is that it does not provide such an upper bound, rendering the algorithm even stranger.

\subparagraph*{Our Contribution}
Our main result is that the determinisation problem is decidable in primitive-recursive complexity, thus providing the first complexity upper bound for it. Moreover, our algorithm is the ``expected'' one: an upper bound on the gaps. 

Technically, we build upon the outline of~\cite{almagor2026determinization} for the basic framework, but deviate at the non-constructive steps. Concretely, for cactus letters we specialise the definition so that we only consider a finite alphabet, whereas for the Zooming technique we keep the intuition, but develop a new quantitative approach which enables us to apply the technique on a single word instead of an infinite sequence. The heart of our contribution is technical, and we explore it throughout this extended abstract. We remark that our work is not an ``accounting'' layer over~\cite{almagor2026determinization}, but rather introduces new structures and concepts.

The new tools we develop in this work present generic ideas, which we hope can be used to tackle other problems and settings. In particular, the quantitative zooming technique shows how to untangle the run tree for a concrete word, rather than an infinite sequence, which is arguably far more useful.
We also generalise and demystify some of the ideas in~\cite{almagor2026determinization} to make them more accessible and applicable.

\subparagraph*{Paper Organisation}
The following extended abstract is an informal survey of the main work, starting at \cref{sec:prelim}. Inevitably, we recall much of the results of~\cite{almagor2026determinization}, as we build upon them and they are not (yet) standardised in literature. As we progress, we present our modifications, until reaching our central technical contributions.

In \cref{sec:abs:WFAs det and gaps} we define the model and state basic properties. In \cref{sec:abs:effective cactus} we recall the cactus-letters framework, and take the first step towards making it finite in~\cref{sec:abs:effective cactus letters}. In \cref{sec:abs:potential and charge,sec:abs:witnesses} we recall fundamental notions from~\cite{almagor2026determinization}, adapted to our new cactus letters. In \cref{sec:abs:two length functions} we deviate from these notions and present important elements in the proof. 
In \cref{sec:abs:SRI} we introduce our new key construction -- \emph{separated repeating infix}, and present our tools for reasoning about it. In \cref{sec:abs:quantitative zooming} we present the quantitative zooming technique, and in \cref{sec:abs:proof outline} we combine all the various elements to construct the main proof. Throughout, we refer to the relevant sections in the technical appendix where the precise arguments can be found. Often, the precise details differ wildly from their informal description, due to omitted details. Finally, in \cref{sec:prelim} we start the technical appendix with the complete proof.

\section{WFAs, Determinisability, Gaps, and the Baseline-Augmented Construction} 
\label{sec:abs:WFAs det and gaps}
\subparagraph*{WFAs}
A WFA $\cA=\tup{Q,\Sigma,q_0,\Delta}$ is a finite automaton where $Q$ is a set of states, $\Sigma$ is an alphabet, $q_0\in Q$ is an initial state\footnote{It is equivalent to consider a set of initial states, which we sometimes do for illustrations.},
and $\Delta\subseteq Q\times \Sigma\times \bbZinf\times Q$ is a \emph{weighted} transition relation, assigning every $p,q,\sigma$ a weight in $\bbZinf=\bbZ\cup \{\infty\}$. Intuitively, weight $\infty$ means that there is no transition. 

A \emph{run} of $\cA$ on a word $w$ is a sequence of transitions reading $w$, and its {\em weight} is the sum of the weights along its transitions. We denote a run $\rho$ from state $q$ to state $p$ reading $w$ by $\rho:q\runsto{w}p$, and its weight by $\weight(\rho)$.
The {\em weight of a word} $w$, is the minimal weight of a run on $w$ (where weight $\infty$ signifies having no run on $w$). More generally, we denote the minimal weight of a run on $w$ from state $q$ to $p$ by $\minweight(q\runsto{w}p)$, and the minimal weight of a run between sets of states $Q'$ and $P'$ by $\minweight(Q'\runsto{w}P')$. An example of a WFA is depicted in \cref{fig:min num a num b}.

A \emph{configuration} of $\cA$ is a vector $\vec{c}\in \bbZinf^Q$ which, intuitively, describes for each $q\in Q$ the weight $\vec{c}(q)$ of a minimal run to $q$ thus far. We denote by $\xconf(q_0,w)$ the configuration reached after reading $w$ starting from $q_0$. 

We denote by $\booltrans:Q\times \Sigma\to 2^Q$ the $\bbB$oolean transition function that corresponds to the underlying NFA induced by finite-weight transitions (and naturally extend it to $\booltrans:2^Q\times \Sigma \to 2^Q$).
A WFA is \emph{deterministic} if it has a single initial state, 
%and from every state there exists at most one finite-weight transition on each letter. 
and $|\booltrans(q,\sigma)|\le 1$ for every $q\in Q,\sigma\in \Sigma$.
It is \emph{determinisable} if it has an equivalent deterministic WFA (i.e., they describe the same function) and is otherwise \emph{undeterminisable}.
It is well-known that not every WFA is determinisable~\cite{chatterjee2010quantitative}. 
The \emph{determinisability} problem asks, given a WFA $\cA$, whether it is determinisable.
In this paper, we give a complexity bound on the determinisability problem.
% \begin{figure}[H]
%     \centering
%     \includegraphics[width=0.7\linewidth]{fig/abs/wfa_example.pdf}
%     \caption{A tropical WFA $\cA$. All the states are initial. \cref{fig:abs:running dag example} depicts a run DAG of $\cA$.}
%     %Indeed, the configurations along the runs are:
%     \label{fig:running example}
% \end{figure}

% \begin{figure}[H]
%   \begin{center}
%     \includegraphics[width=0.45\textwidth]{fig/abs/run_dag.pdf}
%   \end{center}
%   \caption{After reading $bab^2ab^3ab^4ab^5a$, we arrive at the configuration $(0,0,5)$, as depicted. We then show how reading $b^6a$ increases the weight of $r$ by $1$. Note that this cannot be done with any shorter word.}
%   \label{fig:abs:running dag example}
% \end{figure}

We remark that there are several variants of WFAs: with or without initial and final weights, and with or without distinguished accepting states. 
The determinisability problems for these variations are inter-reducible (see \cref{rmk: initial and final weights}). Thus, 
our definition is of the simplest variant: no initial and final weights, and all states are accepting. 

\subsection{Gaps and Determinisability}
The starting point for discussing determinisation is the folklore characterisation by means of \emph{gaps} discussed in \cref{sec:abs:intro} and depicted in \cref{fig:abs:gap witness}.
\begin{figure}
    \centering
    \includegraphics[width=0.4\linewidth]{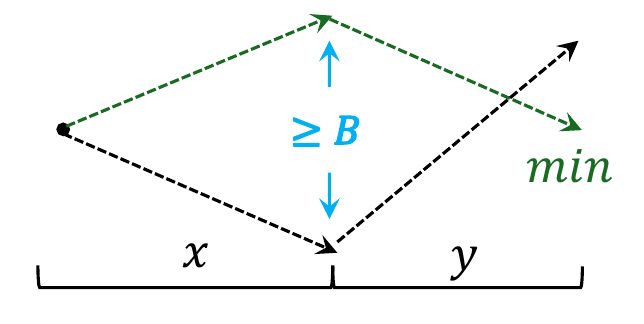}
    \caption{A $B$-gap witness: upon reading $x$, the minimal run on $x$ is at least $B$ below another run, but after reading $y$, the upper run becomes minimal.}
    \label{fig:abs:gap witness}
\end{figure}
\begin{definition}[$B$-Gap Witness]
\label{def:abs: B gap witness}
    For $B\in \bbN$, a \emph{$B$-gap witness} consists of a pair of words $x,y$ and a state $q\in Q$ such that
    \begin{itemize}
        \item $\minweight(q_0\runsto{x\cdot y} Q)=\minweight(q_0\runsto{x}q\runsto{y}Q)<\infty$, but 
        \item $\minweight(q_0\runsto{x} q)-\minweight(q_0\runsto{x} Q)>B$.
    \end{itemize}
\end{definition}
\noindent Then, the characterisation is as follows.
\begin{theorem}[\cite{filiot2017delay,almagor2026determinization}]
    \label{thm:abs:det iff bounded gap}
    Consider a WFA $\cA$, then $\cA$ is determinisable if and only if there is some $B\in \bbN$ such that there is no $B$-gap witness.
 \end{theorem}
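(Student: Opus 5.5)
We prove the two directions separately. Both rest on the \emph{reduced configuration} of a configuration $\vec{c}$, namely $\vec{c}-\min_{q}\vec{c}(q)$. This is the object a deterministic automaton has to track, and a gap bound is exactly what makes the set of such objects finite.

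\emph{($\Leftarrow$) No $B$-gap witness implies determinisable.} We use the standard subset-with-residuals construction. Its states are reduced configurations in which every entry above $B$ is truncated to $\infty$. On letter $\sigma$, the automaton applies the min-plus update to obtain $\vec{c}'$, outputs the weight $\min\vec{c}'$, subtracts that minimum, and truncates again. Every entry is either $\infty$ or lies in $\{0,\dots,B\}$, so there are at most $(B+2)^{|Q|}$ states and the result is a finite deterministic WFA.

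Correctness is an induction on the length of the word, with the invariant that after reading $x$ the state equals the reduced form of $\xconf(q_0,x)$ restricted to the states $q$ with $\minweight(q_0\runsto{x}q)-\minweight(q_0\runsto{x}Q)\le B$. Suppose a state was discarded after some prefix $x$. The absence of a $B$-gap witness says that for every suffix $y$, no run through that state can realise $\minweight(q_0\runsto{xy}Q)$. Hence discarding it never changes the minimum of any future configuration. From this it follows that the accumulated outputs equal $\minweight(q_0\runsto{w}Q)$. Since all states are accepting and there are no final weights, the value of $w$ is this accumulated output. One point needs care: a discarded state could be re-entered later via a different run, but the invariant is stated about minimal weights at each state, so this case is handled automatically.

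\emph{($\Rightarrow$) Determinisable implies a gap bound.} Let $\cD=\tup{Q_\cD,\Sigma,d_0,\delta}$ be deterministic and equivalent to $\cA$. Let $M$ be the maximal absolute finite transition weight of $\cD$ and $N=|Q_\cD|$. The main step is to show that a $B$-gap witness with large $B$ contradicts the equivalence. Take a witness $x,y,q$ with $B$ large, say $B>2M|Q|N+\dots$ (the exact bound to be fixed by the argument below). We compare the suffix $y$ as read from $q$ with the minimal run from the state $p$ minimising $\xconf(q_0,x)$.

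The plan is to shorten $y$, by removing cycles in the product of $\cA$'s subset automaton with $\cD$, so that $|y|\le 2^{|Q|}N$. Once $y$ is short, the gap cannot close: along $y$, the weights of $\cA$-runs change by at most $W|y|$, where $W$ is the maximal absolute weight in $\cA$, and this is less than $B$.

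Naively removing cycles, however, may destroy the witness property. The robust argument pumps instead of shortening. Use the pair $(\booltrans(q_0,x'),\delta(d_0,x'))$ over prefixes of the witness to find a cycle. Pumping this cycle changes the weights of $\cA$'s runs in a way that $\cD$ must track exactly. Since $\cD$ is deterministic, $\cD(xy)-\cD(x)$ depends only on $\delta(d_0,x)$ and $y$. For $\cA$, in contrast, $\cA(xy)-\cA(x)$ depends on the full gap profile, because the run through $q$ starts more than $B$ higher than the minimal run. So we choose two prefixes $x_1,x_2$ that lead to the same $\cD$-state and the same support, but with different gaps at $q$. Appending $y$ then yields different differences for $\cA$ while $\cD$ yields the same difference, which is the contradiction.

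I expect this ($\Rightarrow$) direction to be the main obstacle: producing two prefixes with the same $\cD$-state but distinct relevant gaps needs a pigeonhole over the possible gap values. Since this is the folklore characterisation of~\cite{filiot2017delay,almagor2026determinization}, we can also cite it directly and present only the constructive ($\Leftarrow$) direction in detail, since that direction supplies the bound used later.
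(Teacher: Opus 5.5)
The paper does not prove this theorem; it cites it. The only part it describes, informally, is the truncated automaton $\cA|_B$ used in the algorithm, and that is your construction for ($\Leftarrow$). As a proof, however, your proposal has a genuine gap in ($\Rightarrow$) and a false invariant in ($\Leftarrow$).

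\textbf{The ($\Rightarrow$) direction.} You never supply the argument, and the sketched route would not close. Taking $x_1,x_2$ as prefixes of one witness, found by pumping a cycle of $(\booltrans(q_0,\cdot),\delta(d_0,\cdot))$, tells you nothing about whether $q$ is still potentially optimal after them. Appending the same $y$ to both then gives no contradiction, since $y$ need not route the optimum through $q$ after the other prefix.

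Your abandoned ``short $y$'' justification is also off. In $\cA$ alone, $\minweight(q_0\runsto{xy'}Q)$ can jump up arbitrarily when the minimal runs are blocked, so a short $y$ does not by itself keep the gap open. What bounds it is the comparison with $\cD$.

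The missing idea is a cross-inequality; no pumping is needed. Take a witness $(x,y,q)$ with gap $g$, let $d=\delta(d_0,x)$, and write $\cD_d(z)$ for the weight of the unique run of $\cD$ from $d$ on $z$.
\begin{itemize}
\item Determinism and equivalence give $\minweight(q_0\runsto{xz}Q)-\minweight(q_0\runsto{x}Q)=\cD_d(z)$ for every $z$.
\item The left side is at most $g+\minweight(q\runsto{z}Q)$, with equality at $z=y$.
\item Hence $g=\max_z\bigl(\cD_d(z)-\minweight(q\runsto{z}Q)\bigr)$ depends only on the pair $(q,d)$.
\end{itemize}
So witness gaps take at most $|Q|\cdot|Q_\cD|$ values, and $B$ can be their maximum. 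Your two-prefix idea becomes this argument if you pigeonhole on $(q,d)$ rather than on gap values. Take two \emph{different} witnesses $(x_1,y_1,q)$ and $(x_2,y_2,q)$ with $\delta(d_0,x_1)=\delta(d_0,x_2)$, and apply the inequality in both directions. Citing the result, as the paper does, is legitimate, but then this direction is not proved.

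\textbf{The ($\Leftarrow$) direction.} The invariant ``the state is the reduced configuration restricted to the states of gap at most $B$'' is false. Suppose that after $x$ a state $p$ sits $B+1$ above the minimum, and $\sigma$ sends $p$ with weight $-1$ to a dead-end state $r$ reachable only from $p$, while the minimal run gains $1$. After $x\sigma$, the state $r$ has gap $B-1$, yet your automaton holds $\infty$ at $r$. No $B$-gap witness need exist here, e.g.\ for $B\ge 2$. So re-entry through a discarded state is exactly what breaks exact tracking; it is not something your invariant absorbs.

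The correct invariant is weaker. Call $q$ \emph{live} after $x$ if some $y$ makes $\minweight(q_0\runsto{xy}Q)=\minweight(q_0\runsto{x}q\runsto{y}Q)<\infty$. The invariant is that the tracked un-normalised values are pointwise $\ge\xconf(q_0,x)$ and equal to it on live states. It holds by induction for three reasons.
\begin{itemize}
\item Live states have gap at most $B$ by hypothesis, so they are never discarded.
\item A minimal run to a live state after $x\sigma$ passes through a state that is live after $x$.
\item The minimal state after $x$ is live (take $y=\epsilon$), so the normalisation, and hence the output, is exact.
\end{itemize}
With these two repairs the proof goes through.
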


\subsection{Baseline-Augmented Construction (\cref{sec:augmented construction})}
\label{sec:abs:baseline augmented}
The first step in~\cite{almagor2026determinization}, which we now briefly recall, is the introduction of the \emph{Baseline-Augmented Construction} $\augA$. 
Intuitively, $\augA$ reads a word $w$, but is also given a run $\rho$ of $\cA$ on $w$. It then follows the exact same runs as $\cA$, but normalises their weight relatively to $\rho$. In addition, it keeps track of the reachable states. 

This simple construction merely makes some information explicit, which is otherwise implicit in the runs of $\cA$. However, it turns out that this information greatly assists in manipulating and reasoning about runs. Thus, we put it forward as our starting point of this work.

Technically, the states $S$ of $\augA$ are tuples $(p,q,T)$ where $p\in Q$ is the current state of $\cA$, $q\in Q$ is the \emph{baseline state}, which is the state in the run that is being read (the \emph{baseline run}), and $T\subseteq Q$ is the reachable set of states. 
$\augA$ actually reads only a run (i.e., a sequence of transitions of $\cA$), and the word is extracted from it. Thus, upon reading a transition $(q,\sigma,c,q')$, the state is updated to $(r,q',T')$ as follows: $q'$ follows from $q$, as the transition dictates. $r$ can be any state such that $p\runsto{\sigma}r$, and $T'=\booltrans(T,\sigma)$. The weight of this transition is $d-c$, where $d=\minweight(p\runsto{\sigma}r)$, i.e., we normalise according to the given transition.
In particular, the run of $\augA$ that follows the baseline in the first two components always has weight $0$, and is called the \emph{baseline run}. We denote the alphabet of $\augA$ by $\Gamma$ and its initial state by $s_0$.

A key property of $\augA$ is the ability to \emph{shift the baseline}: instead of feeding $\augA$ a run $\rho:s_0\runsto{x}S$, one may choose any other run $\rho':s_0\runsto{x}S$, preserving the gaps between the runs of $\augA$, and changing only their concrete weights. This is called \emph{baseline shift}, and we heavily rely on it, as well as develop new tools for it in \cref{sec:abs:baseline shift,sec: baseline shift}.

Since $\augA$ mirrors the run structure of $\cA$, it is easily shown in~\cite{almagor2026determinization} that $\augA$ is determinisable if and only if $\cA$ is determinisable. Therefore, we can reason about $\augA$.

% Recall of the augmented construction from \cite{almagor2026determinization}.
% No new technical content here.

\section{Cactus Letters}
\label{sec:abs:effective cactus}
The standard approach in reasoning about automata is via ``pumping arguments'', which come in various forms. For the purpose of determinisability, pumping corresponds to increasing the gap between two runs to obtain unbounded $B$-gap witnesses. 
The main reason that decidability of determinisability has remained open for so long until~\cite{almagor2026determinization} is (arguably) that known pumping techniques do not seem to work for increasing gaps in this manner.

The first novelty in~\cite{almagor2026determinization} is the introduction of \emph{cactus letters}: these are letters built up from ``nesting'' words, when these words exhibit certain nice structural properties in $\augA$. This nesting then allows us to pump a word while keeping track of which runs can increase unboundedly, and which runs do not. In a way, this is a finer notion than the \emph{stabilisation monoid}~\cite{daviaud2023big,lombardy2006sequential}.

We recall the fundamentals of cactus letters, and then explain where they fall short for obtaining complexity bounds, and how we modify them. 
\subsection{Reflexive and Stable Cycles (\cref{sec:stable cycles and bounded behaviours})}
\label{sec:abs:reflexive and stable cycles}
A \emph{reflexive cycle} in $\augA$ is a pair $(S',w)$ where $S'\subseteq S$ is a subset of the states of $\augA$, and $w$ is a word such that $\booltrans(S',w)\subseteq S'$, i.e., $w$ causes a ``cycle'' on $S'$. We assume $S'$ induces a baseline run on $w$, which therefore has weight $0$. We then ask whether there is some $s\in S'$ such that $\minweight(s\runsto{w^k}s)<0$ for some $k\in \bbN$. If not, we call $(S',w)$ a \emph{stable cycle}. Thus, in a stable cycle the minimal cycles on $w^k$ for any $k$ are of weight $0$. 

The fundamental property of stable cycles is that for each pair of states $s,t\ S'$, either $\minweight(s\runsto{w^k}t)$ tends to $\infty$ as $k$ increases, or $\minweight(s\runsto{w^k}t)$ is bounded. In the latter case, we say that $(s,t)$ are a \emph{grounded pair}, and the value  $\minweight(s\runsto{w^{2k\bigM}}t)$ is constant for all $k$, where $\bigM=|S|\cdot |S|!$ is the \emph{stabilisation constant}. Moreover, we can exactly characterise the grounded pairs: $(s,t)$ are grounded if there is some state $r\in S'$ such that $s\runsto{x^\bigM}r\runsto{x^{\bigM}}t$ and there is a run $\tau:r\runsto{x^{\bigM}}r$ such that $\weight(\tau)=0$, i.e., $r$ is a \emph{minimal reflexive state}.

To gain some intuition on stable cycles, consider the runs of $\augA$ on some long word. Eventually, the subset $S'$ of states reachable by $\augA$ repeats on some infix $x$. Then, $(S',x)$ is certainly a \emph{reflexive} cycle. The question of whether it is stable depends on whether enough repetitions of $x$ yield any negative cycles in $S'$. If there are no negative cycles, and the minimal run has weight $0$, then it is a stable cycle. 
As we discuss in \cref{sec:abs:baseline shift}, $\augA$ allows us to manually change the baseline run in order to guarantee that the minimal run indeed has weight $0$, thus obtaining a stable cycle. 

\subsection{Original Cactus Letters (\cref{sec:cactus extension})}
\label{sec:abs:cactus letters}
Consider a stable cycle $(S',w)$. Since we understand the behaviour of runs upon reading $w^{2k\bigM}$ for any $k$, we can define a new letter $\alpha_{S',w}$ such that for every grounded pair $(s,t)$ we have $\minweight(s\runsto{\alpha_{S',w}}t)=\minweight(s\runsto{w^{2\bigM}}t)$, and otherwise we have $\minweight(s\runsto{\alpha_{S',w}}t)=\infty$. Intuitively, non-grounded pairs jump to $\infty$. Thus, in a way, cactus letters capture unboundedly many iterations of $w$.

Notice that in this way we potentially introduce infinitely many letters (as $w$ is unbounded). The next step, however, is even more dramatic: we again consider stable cycles over this new infinite alphabet, define the corresponding cactus letters (which now nest within them cactus letters), and continue in this fashion ad-infinitum, obtaining the \emph{cactus alphabet} $\Gamma_\infty$. We define $\depth(\alpha_{S',w})$ to be the depth of nesting of cactus letters in $\alpha_{S',w}$ (with ``standard'' letters being of depth $0$). This is depicted in \cref{fig:abs:cactus}.

\begin{figure}[H]
        \centering
        \includegraphics[width=0.95\linewidth]{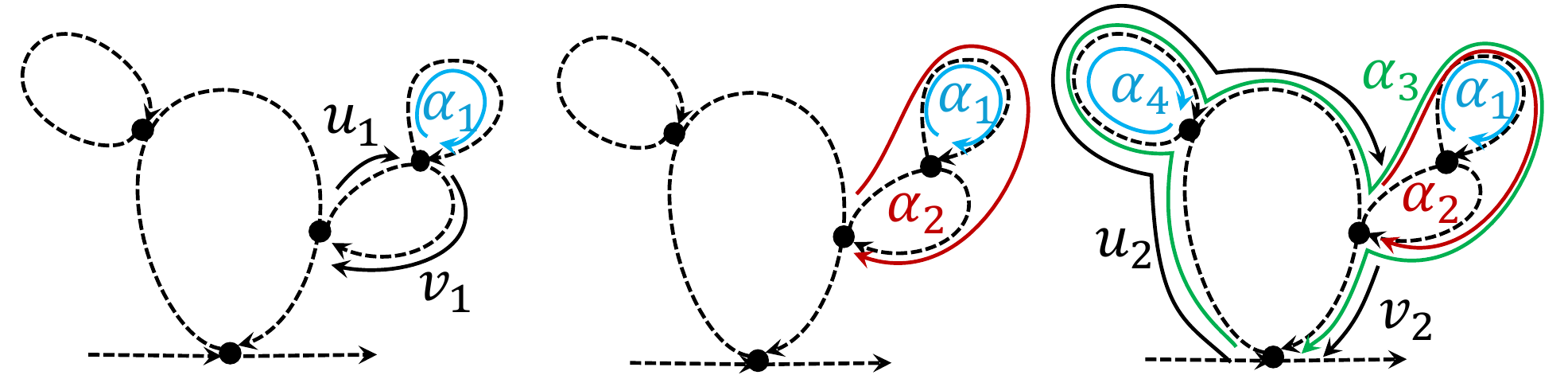}
        \caption{Cactus letters: $\alpha_1$ is of depth $1$ (and its inner word is of depth $0$). The word $u_1\alpha_1v_1$ (left) forms a cactus letter $\alpha_2$ of depth $2$ (center). Then, $u_2\alpha_2v_2$ forms a cactus letter $\alpha_3$ of depth $3$ (right). Notice that $u_2$ contains also the cactus letter $\alpha_4$.}
        \label{fig:abs:cactus}
    \end{figure}

A-priori, the alphabet $\Gamma_\infty$ is infinite both due to arbitrarily long words in cactus letters, and both due to unbounded depth. Fortunately, the latter is already handled (in a sense) in~\cite{almagor2026determinization}, by showing that for depth greater than $|S|$, any cactus letter contains a \emph{degenerate} cycle -- one whose reachable state pairs are all grounded, and therefore does not introduce any ``interesting'' behaviours. 
However, the words in the cactus letters may still be arbitrarily long.
This is the first place where the proof in~\cite{almagor2026determinization} becomes non-constructive, and where we begin our contribution.

\subsection{Effective Cactus Letters (\cref{sec:effective cactus})}
\label{sec:abs:effective cactus letters}
Our first goal is to limit the alphabet of cactus letters to a finite subset. Naturally, this should be done in a way that facilitates the remainder of the proof. 
Therefore, we need to bound not only the depth of the letters we consider, but also the length of their ``inner'' word. That is, bound $|w|$ for the letter $\alpha_{S',w}$.
The catch is what bound we can give so that we can still find some structure in the runs that can be used for some form of pumping. 
Since such a bound depends on the proof structure, we leave it unspecified at this point, and accumulate requirements on it as we progress through the proof, eventually building the concrete bound. 
At this stage, we only explain what this length bound depends on. 

The length of words we allow is not a fixed constant, but depends on the depth of the cactus letters being considered. 
The technical reason is the inductive nature of the proof, but it is also conceptually sensible: consider e.g., a letter $\alpha_{S_1,ab{\alpha_{S_2,w_2}cd}}$, then we would expect to be able to ``unfold'' the inner cactus $\alpha_{S_2,w_2}$ at least to its stable version $w_2^{2\bigM}$ and still have a legal cactus letter $\alpha_{S_1,abw^{2\bigM}cd}$. To do so, the length bound for the outer letter must be much higher than the inner letter.

In the following, we therefore have a function $L:\bbN\to \bbN$ such that for every cactus letter $\alpha_{S',w}$, if $\depth(\alpha_{S',w})=d$ then $|w|\le L(d)$. Moreover, we limit ourselves to letters of depth $|S|$. This renders the alphabet finite, as we required. We denote this alphabet by $\absCac$. 

Unfortunately, we recommend not getting too attached to $L$, since in 
\cref{sec:abs:effective functions} we actually need \emph{two} such bounding functions and two corresponding alphabets in order to obtain our results, but we defer this problem for now.

\subsection{Cactus Unfolding (\cref{sec: cactus unfolding})}
\label{sec:abs:unfolding}
Cactus letters are useful to capture runs on $w^{2\bigM k}$ for any $k$. However, we sometimes want to extract these concrete runs back. Intuitively, we can ``unfold'' a cactus letter $\alpha_{S',w}$ into $w^{2\bigM k}$ for any $k$. Due to the characterisation of grounded pairs, we know what this unfolding does: the grounded pairs maintain their weight, whereas ungrounded pairs increase with $k$.

We borrow this tool with only cosmetic changes from~\cite{almagor2026determinization}, and therefore bring only the essentials here.
We have two operators: the first is $\unfold(u\alpha_{S',w}v\wr F)=uw^{2\bigM M_0}v$, which unfolds a single cactus letter in such a way that all ungrounded pairs yield runs with weight greater than $F$ after the $w^{2\bigM M_0}$. 
The second is $\flatten(w\wr F)$ which recursively applies $\unfold$ to all cactus letters. The resulting word is in $\Gamma^*$. We remark that here we sweep under the rug many important details.

Due to unfolding a new entity is introduced, called \emph{ghost states}: an ungrounded pair $(s,t)$ of $\alpha_{S',w}$ does not yield runs over $\alpha_{S',w}$, but does yield them on $w^{2\bigM k}$ for every $k$. We think of such states as ``ghosts'' floating high above the other runs. When unfolding, we can increase their weight arbitrarily. The states that are reachable as ghosts upon reading $x$ are denoted $\ghostTrans(s_0,x)$. These play a significant role later on.

\subsection{Baseline Shifts}
\label{sec:abs:baseline shift}
As mentioned in \cref{sec:abs:baseline augmented}, $\augA$ offers a convenient way to shift perspective between runs, called \emph{baseline shifts}, by changing the word fed to $\augA$ to represent a different run of $\cA$. 
Once cactus letters are introduced, however, it is no longer clear what this means, since cactus letters are not transitions of $\cA$. Fortunately, in~\cite{almagor2026determinization} some tools are developed to support these shifts also in the presence of cactus letters. Technically, this involves the introduction of a new set of letters called \emph{rebase letters} (\cref{sec:rebase letters}), and a new hierarchy on top of the cactus hierarchy. Our contribution uses this framework, but does not develop it further, and therefore we sweep it under the rug in this abstract, leaving the details to the technical appendix. 

We thus assume there is an operator $\baseshift{\cdot}{\cdot}$ with the following usage and properties: given a word $w\in \absCac^*$ and some run $\rhobase:s_0\runsto{w}S$ of $\augA$, there is a word $w'=\baseshift{w}{\rhobase}$ whose baseline run is $\baseshift{\rhobase}{\rhobase}$, and for every two runs $\pi,\tau:s_0\runsto{w}S$ of $\augA$, we have corresponding runs $\pi'=\baseshift{\pi}{\rhobase}$ and $\tau'=\baseshift{\tau}{\rhobase}$ such that 
$\weight(\pi)-\weight(\tau)=\weight(\pi')-\weight(\tau')$.
Moreover, this holds for every prefix of the runs and the words. Intuitively, the entire run structure is maintained, but the baseline changes. This is illustrated in \cref{fig:abs:baseline shift}.
\begin{figure}
    \centering
    \includegraphics[width=0.9\linewidth]{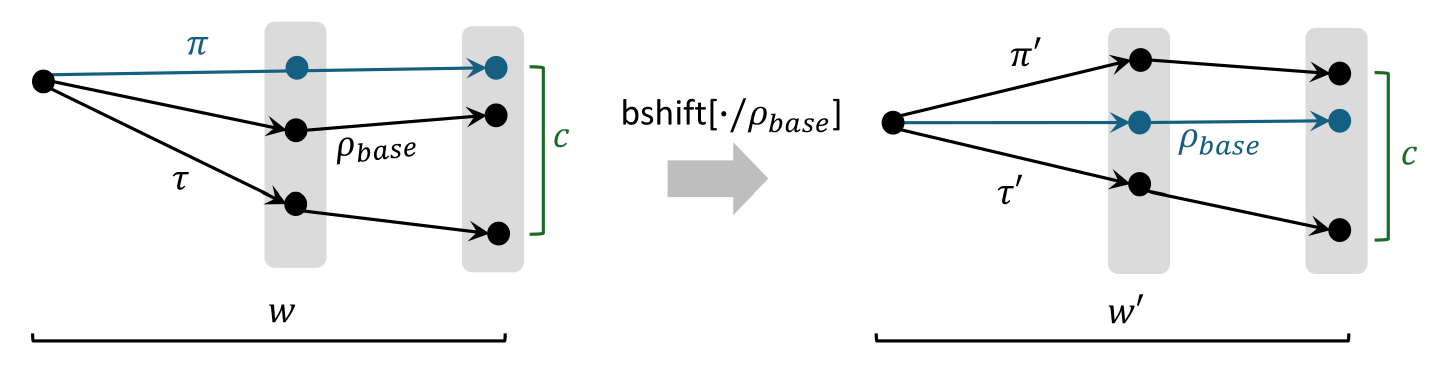}
    \caption{Shifting the baseline: $\rhobase$ becomes the baseline run instead of $\pi$; the word $w$ changes to $w'=\baseshift{w}{\rhobase}$, and the runs $\pi$ and $\tau$ change accordingly to $\pi' = \baseshift{\pi}{\rhobase}$ and $\tau'=\baseshift{\tau}{\rhobase}$, respectively. The shifted runs keep their relative distance from one another.}
    \label{fig:abs:baseline shift}
 \end{figure}

A minor contribution of this work is an extension of baseline shifts beyond runs and words to e.g., sets and cycles, so we can write $\baseshift{(S',w)}{\rhobase}$ to mean a shift of a reflexive cycle (\cref{def: baseline shift on states sets and config}). In addition, we establish new results about the invariance of certain structures to baseline shifts (\cref{rmk:baseline shift is right absorbing,prop: dominance invariant to baseline shifts}). We believe that this framework is extremely convenient, and may prove useful in other cases of reasoning about WFAs and Counter Automata.

\subsection{From Reflexive Cycles to Stable Cycles (\cref{sec:from reflexive to stable})}
\label{sec:abs:reflexive to stable cycles}
Finding stable cycles is a difficult task, due to their strict requirement on minimal cycles. This is a source of complication in~\cite{almagor2026determinization}, and leads to a lot of branching in the proof structure. Moreover, there a stable cycle is found by considering infinite families of words, which is a non-constructive technique we cannot afford. 

Our first contribution is that given a reflexive cycle, we can shift it to find a stable cycle, at the cost of possibly elongating the word by at most $|S|$ repetitions. We sketch the idea as a gentle introduction to baseline shifts.
\begin{lemma}
    \label{lem:abs:reflexive to stable}
    Consider a reflexive cycle $(S',w)$, then there exists $k\le |S|$ and a cycle $\rho:s\runsto{w^k}s$ with $s\in S'$ such that $\baseshift{(S',w)}{\rho}$ is a stable cycle.
\end{lemma}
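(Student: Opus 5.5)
The plan is to take $\rho$ to be a cycle of minimal mean weight among all simple cycles of $S'$ under $w$, and then shift the baseline to it. Consider the finite graph $G$ whose vertices are the states of $S'$, with an edge $s\to t$ of weight $\minweight(s\runsto{w}t)$ whenever this is finite. Since $\booltrans(S',w)\subseteq S'$, every state of $S'$ has an outgoing edge, so $G$ has cycles. Cycles in $G$ of length $k$ correspond to runs $s\runsto{w^k}s$ of weight at most the cycle weight, and conversely every run $s\runsto{w^k}s$ projects to a closed walk in $G$. Any closed walk decomposes into simple cycles, and simple cycles have length at most $|S'|\le|S|$.

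Let $\mu$ be the minimal mean weight $\weight(C)/|C|$ over simple cycles $C$ of $G$. Fix a simple cycle $C$ attaining $\mu$, of length $k\le|S|$, through a state $s$. Realise it as a run $\rho:s\runsto{w^k}s$ of $\augA$ by concatenating minimal runs along its edges. Then every closed walk of length $m$ in $G$ has weight at least $m\mu$. Hence $\minweight(t\runsto{w^m}t)\ge m\mu$ for every $t\in S'$ and every $m$, while $\rho^j$ attains exactly $jk\mu$ on $w^{jk}$.

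Now apply the baseline shift $\baseshift{\cdot}{\rho}$ to the reflexive cycle, read as $(S',w^k)$. By the stated property of $\baseshift{\cdot}{\cdot}$, all weight differences between runs are preserved, and $\rho$ becomes the baseline run, of weight $0$. Thus every run $\pi$ on $w^{kj}$ gets shifted weight $\weight(\pi)-\weight(\rho^j)\ge jk\mu-jk\mu=0$. So no cycle $t\runsto{(w')^{j}}t$ is negative, and the shifted pair is a stable cycle (the baseline $\rho$ itself gives a weight-$0$ cycle). The closure condition $\booltrans(\cdot,w')\subseteq\cdot$ transfers from $S'$ because the shift preserves the reachability structure.

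The main obstacle is making the shift precise for a cycle rather than a run from $s_0$, since $\baseshift{}{}$ was introduced for words read from $s_0$. I would use the extension to reflexive cycles referenced in the paper (\cref{def: baseline shift on states sets and config}). Alternatively, I would prepend a prefix reaching $S'$ and extend $\rho$ by a baseline run on that prefix. I also need to check that the shifted set is again closed under the new word, and that shifting along $\rho^j$ on $w^{jk}$ is consistent with iterating the single shift on $w^k$. This consistency amounts to the right-absorbing property of baseline shifts (\cref{rmk:baseline shift is right absorbing}), which I would invoke.
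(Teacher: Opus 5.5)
Your proposal is correct and follows essentially the paper's route: you take a minimal-slope (minimal mean-weight) cycle $\rho$ on $w^k$ with $k\le|S|$ and shift the baseline onto it, so that gap preservation gives every cycle $\pi$ on $w^{kj}$ shifted weight $\weight(\pi)-\weight(\rho^j)\ge 0$. Your simple-cycle decomposition is a cleaner version of the pigeonhole-and-excision argument in \cref{prop: reflexive cycle negative slope short}, and your direct computation replaces the paper's proof by contradiction.

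There are a few small fixes. First, $\booltrans(S',w)\subseteq S'$ does not give every state an outgoing edge, since some states may have no run on $w$; the existence of a cycle should instead come from the properness of the reflexive cycle (its baseline state is reflexive), which the paper assumes throughout. Second, $\baseshift{w^{kj}}{\rho^j}=(\baseshift{w^k}{\rho})^j$ follows from the letter-by-letter definition of the shift, not from right-absorption. Third, no prefix from $s_0$ is needed, because the shift is defined for an arbitrary run on the word.
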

\begin{proof}[Proof Sketch]
    Assume by way of contradiction that $(S',w)$ is not a stable cycle. Thus, there is some run $\rho:s\runsto{w^k}s$ with $s\in S'$ and $\weight(\rho)<0$. We look for a run whose \emph{slope} is minimal, i.e., it has the steepest descent. We show that this minimal slope is attained by $k\le |S|$.

    We then perform a baseline shift on this $\rho$, using the word $w^k$. We show that $\baseshift{(S',w^k)}{\rho}$ is a reflexive cycle, which has no negative-slope cycles, and is therefore a stable cycle.
\end{proof}

The implication of this result is that it is now fairly easy to find stable cycles: in long-enough words, we must reach a reflexive cycle by simple pigeonhole principle. Then, we can elongate the word somewhat and perform a baseline shift to obtain a stable cycle.

% New development: disciplined use of baseline shifts.
% Explain interaction with effective bounds and later constructions.

\section{(Effective) Dominance, Charge and Potential (\cref{sec:dominance and potential})}
\label{sec:abs:potential and charge}
Recall that our goal is to reason about gap witnesses (\cref{thm:abs:det iff bounded gap}). There, we compare a minimal run on a prefix $x$ with a higher run that can potentially become minimal by reading $y$. 
A key insight from~\cite{almagor2026determinization} (that is not made explicit there) is that the two attributes -- being minimal versus being \emph{potentially} minimal, are very different. As it turns out, separating them opens up new avenues of reasoning that ultimately pave the way to decidability of determinisation. Technically, they are separated by incorporating the baseline run as a separator between them.
To this end, the concepts of \emph{charge} ($\charge$) and \emph{potential} ($\pot$) are introduced. The charge simply measures the weight of the minimal run. Technically, it is negated to keep it positive (as the minimal run is always below the baseline, which is $0$). The potential is somewhat more involved, and measures how high a run can get, so that there is a suffix from which it reaches a finite weight, but all runs below it jump to $\infty$. 

In this work, the definition is adapted to account for the finite alphabet. Specifically, we no longer allow arbitrary suffixes in the definition of the potential, but rather very specific ones. We give an intuitive version of the definitions.
Consider a configuration $\vec{c}$. We define the \emph{charge} $\charge(\vec{c})=\min\{\vec{c}(q)\mid q\in S\}$ to be the minimal weight. 
Next, we fix a language $\absDomL$ of ``legal suffixes'' (see \cref{sec:dominance and potential} for details).
We say that a state $q\in S$ is \emph{dominant} in $\vec{c}$ if there is a suffix $z\in \absDomL$ such that $\minweight(q\runsto{z}S)<\infty$ but $\minweight(q'\runsto{z}S)=\infty$ for every $q'$ such that $\vec{c}(q')<\vec{c}(q)$. If $q$ has the highest weight in the configuration with this property, it is \emph{maximal dominant}, and we define $\pot(\vec{c})=\vec{c}(q)$. For a word $w$, we define $\pot(w)$ as the potential on the configuration reached after reading $w$.

The new notions come with some important tools.
First, as part of our development of baseline shifts, we show that dominance is maintained through baseline shift. That is, if $q$ is dominant in $\vec{c}$, then $\baseshift{q}{\rho}$ is dominant in $\baseshift{\vec{c}}{\rho}$. This may seem like a trivial result by shifting the suffix ``accordingly''. Unfortunately, there is no clear way to define this. The technicalities of this proof is what gives rise to the exact definition of $\absDomL$, and we refer the reader to \cref{prop: dominance invariant to baseline shifts} for the details.
Note that the potential and charge may change under baseline shift, due to the change in baseline. 

The potential satisfies a very nice property called \emph{bounded growth}, namely that $|\pot(w\cdot \sigma)-\pot(w)|$ is bounded by a constant depending on the alphabet $\absCac$. This renders the potential well-behaved, and is used extensively later on.
We remark that this is a useful simplification from~\cite{almagor2026determinization}, as there this property fails in general, due to the infinite alphabet. 

Unfortunately, this property does not hold for charge: the minimal run can increase abruptly by not being able to continue, and making a higher run suddenly minimal. This is where $\charge$ and $\pot$ significantly differ, and why separating them is so useful.
In order to overcome this, we slightly sharpen a result from~\cite{almagor2026determinization}, stating the following:
\begin{lemma}
\label{lem:abs:charge decrease to high potential}
    For every $P\in \bbN$ and $u\sigma\in \absCac^*$ with $\depth(\sigma)=d$, if $\charge(u)-\charge(u\sigma)>P+\generLfuncMaxW{d}$, then there is a word $w$ such that $\pot(w)>P$.
\end{lemma}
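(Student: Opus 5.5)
Write $\vec{c}=\xconf(s_0,u)$ and $\vec{c}'=\xconf(s_0,u\sigma)$. By definition $\charge$ is the negated minimal weight, so the hypothesis says the minimal run after $u\sigma$ lies more than $P+\generLfuncMaxW{d}$ above the minimal run after $u$; call them $\rho'$ and $\rho$, ending in $r'$ and $r$. The plan has three steps.

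\textbf{Step 1: the state $r$ dies on $\sigma$.} Every finite transition on $\sigma$ (a letter of depth $d$) has weight at most $\generLfuncMaxW{d}$ in absolute value. Hence any run from $r$ that survives $\sigma$ reaches weight at most $\vec{c}(r)+\generLfuncMaxW{d}$, which is below $\vec{c}'(r')$, contradicting minimality of $\rho'$. More generally, let $D=\{q\in S\mid \vec{c}(q)<\vec{c}'(r')-\generLfuncMaxW{d}\}$. Every $q\in D$ has $\minweight(q\runsto{\sigma}S)=\infty$, and $D$ contains all states of weight up to $\vec{c}(r)+P$.

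\textbf{Step 2: $\rho'$ witnesses a dominant state.} The state $r'$ is reached from some $p\in S$ with $\vec{c}(p)\ge \vec{c}'(r')-\generLfuncMaxW{d}>\vec{c}(r)+P$. All states strictly below $p$ in $\vec{c}$ lie in $D$ and therefore die on $\sigma$, while $p$ survives. If $\sigma$, or $\sigma$ followed by some fixed continuation, belongs to $\absDomL$, then $p$ is dominant in $\vec{c}$ with witness suffix $\sigma$. If some state above $p$ is also dominant, the maximal dominant state is even higher. Either way $\pot(\vec{c})\ge \vec{c}(p)$.

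\textbf{Step 3: normalise by a baseline shift.} The potential is measured relative to the baseline, which may be far from $\rho$. I will apply $\baseshift{u}{\rho}$, taking the minimal run $\rho$ itself as the new baseline, and set $w=\baseshift{u}{\rho}$. Differences between runs are preserved, so $p$ now sits at weight $\vec{c}(p)-\vec{c}(r)>P$. By \cref{prop: dominance invariant to baseline shifts}, $\baseshift{p}{\rho}$ is still dominant in the shifted configuration, so $\pot(w)>P$.

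\textbf{Main obstacle.} The delicate point is Step 2: I must check that the one-letter suffix $\sigma$, or a canonical suffix built from it, lies in $\absDomL$. Since $\absDomL$ is tailored to the baseline-shift proof, I expect it to contain single letters of $\absCac$, possibly after padding by a fixed word that keeps states alive. If it does not, I would instead take $w=\baseshift{u\sigma}{\rho}$, or absorb the padding into $w$. This is presumably where the slight sharpening over the cited result enters, and why the threshold uses the depth-dependent constant $\generLfuncMaxW{d}$ rather than a global one.
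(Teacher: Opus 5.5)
Your route is the paper's: Step~1 shows that every state far below $\vec{c}'(r')$ dies on $\sigma$. The letter $\sigma$ then separates a high state. A baseline shift onto the minimal run, together with \cref{prop: dominance invariant to baseline shifts}, turns this into potential above $P$.

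\textbf{Step~2 fails as written.} If $p$ is the $\sigma$-predecessor of $r'$ on $\rho'$, you only know $\vec{c}(p)\ge \vec{c}'(r')-\generLfuncMaxW{d}$. A state $q$ with $\vec{c}'(r')-\generLfuncMaxW{d}\le \vec{c}(q)<\vec{c}(p)$ lies outside $D$. It may read $\sigma$ via a transition landing at or above $\vec{c}'(r')$, which does not contradict the minimality of $\rho'$. In that case $\sigma$ does not separate $p$, and your claim that everything below $p$ lies in $D$ is false.

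The repair is the paper's choice of state: let $p$ be a lowest state of $\vec{c}$ with $\minweight(p\runsto{\sigma}S)<\infty$. Everything below $p$ dies on $\sigma$ by construction. Your Step~1 gives $p\notin D$, i.e.\ $\vec{c}(p)>\vec{c}(r)+P$.

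Your worry about the suffix language is unfounded for $d\ge 1$. Write $\sigma=\alpha_{S',x}$. Then $x\in\genCac^*\subseteq\genCacReb^*$ and $|x|\le\generL(\depth(x)+1)$, so $\sigma\in\genCacRebCac$ is itself a legal separating suffix. This is exactly what the paper uses. For $d=0$ we have $\sigma\notin\genCacRebCac$, and the paper's proof does not discuss this case either.

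\textbf{You also skip the flattening step.} The paper first replaces $u$ by $x_0=\flatten(u\wr F)$ for a large $F$. By \cref{lem:flattening essence}, the weights of the states in $\booltrans(s_0,u)$ are unchanged, and the newly reachable ghost states lie far above all of them. So $p$ is still the lowest $\sigma$-readable state and still dominant. Since $\genCac$ has no rebase letters, $x_0\in(\Gamma_0^0)^*$, and a baseline shift of a word over $\Gamma_0^0$ is again over $\Gamma_0^0$.

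Your $w=\baseshift{u}{\rho}$ instead contains cactus and rebase letters built over $\genCac$. For the statement exactly as written (some word $w$) this is harmless, since $\pot$ is defined for any jump-free word with a seamless baseline run. However, the paper actually uses the lemma in the form \cref{lem: charge decrease to high potential}, with $w\in(\Gamma_0^0)^*$. Its contrapositive is invoked in \cref{prop: charge high amplitude to charge decreasing decomposition}. Step~3 of the proof of \cref{thm:main large gap to undet} needs a high-potential word over $\simpCac$ in order to take a shortest one. A word over $\genCacReb$ serves neither purpose, so you should include the flattening step.
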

$\generLfuncMaxW{d}$ is a function that bounds the maximal weight over any transition on a letter from $\absCac$ of depth $d$. In order to define it, we first need to establish the length bound $L(d)$ which we are yet to do.
This important lemma gives a precise and effective connection between the potential and the charge, and is used to split our main proof into two conceptual parts: one dealing with the case where the charge has bounded growth, and one dealing with the case where the potential can be very high.

Another small observation is that $\charge$ plays nice with unfolding, in the sense that if $x=\unfold(u\alpha_{S',w}v\wr F)$ then $\charge(x)\ge \charge(u\alpha_{S',w}v)$. We remark that a dual property for $\pot$ does not hold, and this is of central importance in the following.

\section{Witnesses (\cref{sec:witness})}
\label{sec:abs:witnesses}
% Witness framework largely follows \cite{almagor2026determinization},
% adapted to the new definitions and effective constructions.
The surprising element in~\cite{almagor2026determinization} is that the characterisation of determinisability is not via unbounded gaps, but rather through a new notion called \emph{witness}. 
In this work, we restrict the notion of witnesses in a way that retains this characterisation, but works over our finite alphabet. In a nutshell, the changes in this definition are a direct result of the changes in the definition of dominance. We bring a semi-precise definition (see \cref{def:witness}).
\begin{definition}
    \label{def:abs:witness}
    A \emph{witness} consists of $w_1,\alpha_{S_1,w_2},w_3$ over $\absCac$ such that $S_1=\ghostTrans(s_0,w_1)$, and the following holds:
    \begin{itemize}
        \item $\minweight(s_0\runsto{w_1\alpha_{S_1,w_2}w_3}S)=\infty$, but
        \item $\minweight(s_0\runsto{w_1w_2^{2\bigM}w_3}S)<\infty$.
    \end{itemize}
\end{definition}
Intuitively, since $w_1\alpha_{S_1,w_2}w_3$ gets weight $\infty$, but $w_1w_2^{2\bigM}w_3$ get finite weight, it means that the minimal runs on the latter go through ghost states. More precisely, it means that the maximal dominant state after reading $w_1w_2^{2\bigM}$ is a ghost state in $\ghostTrans(s_0,w_1\alpha_{S_1,w_2})$. 

Since ghost states can be raised arbitrarily high using flattening, this serves to give the intuition that a witness implies unbounded gaps.
Indeed, the main result of~\cite{almagor2026determinization} is that a witness exists if and only if $\augA$ is undeterminisable, which readily leads to decidability via two semi algorithms (with the ``if'' direction being the bulk of the work). 

We bring a crucial result (heavily abridged, see~\cref{lem:unfolding maintains potential}) relating to witnesses, whose proof is similar (up to alphabet-based changes) to the analogue in~\cite{almagor2026determinization}.
\begin{lemma}
    \label{lem:abs:unfolding maintains potential}
        Either there is a witness, or the following holds.
        Consider a word $u \alpha_{B,x} v\in \absCac^*$ and let $ux^{2\bigM k}v=\unfold(u,\alpha_{B,x},v \wr F)$ for large-enough $F$.
        For every prefix $v'$ of $v$ we have $\pot(ux^{2\bigM k}v')=\pot(u\alpha_{B,x}v')$.
\end{lemma}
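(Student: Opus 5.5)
We argue by contraposition: assume that for some prefix $v'$ of $v$ we have $\pot(ux^{2\bigM k}v')\neq\pot(u\alpha_{B,x}v')$, and extract a witness in the sense of \cref{def:abs:witness}. Throughout we use the baseline-shift framework. This lets us assume that $B=\ghostTrans(s_0,u)$ (after normalising $u$) and that the baseline passes through a minimal reflexive state of the stable cycle $(B,x)$. Then the configurations after $u\alpha_{B,x}$ and after $ux^{2\bigM k}$ can be compared state by state.

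First we record this comparison. For every state $t$ reached through a grounded pair $(s,t)$ of $(B,x)$, we have
\[
\minweight(s\runsto{x^{2\bigM k}}t)=\minweight(s\runsto{\alpha_{B,x}}t).
\]
Hence the configurations $\xconf(s_0,ux^{2\bigM k})$ and $\xconf(s_0,u\alpha_{B,x})$ agree on all non-ghost states. The remaining states of $\xconf(s_0,ux^{2\bigM k})$ are exactly the ghost states in $\ghostTrans(s_0,u\alpha_{B,x})$. By the choice of $k$ in $\unfold(\cdot\wr F)$, each of them has weight above $F$ relative to every grounded run. Reading the common suffix $v'$ preserves this picture: since $v'$ has bounded length and letters of bounded weight, the non-ghost part evolves identically in both words. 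Any run that passes through a ghost state stays above $F-|v'|\cdot\generLfuncMaxW{|S|}$. Choosing $F$ large relative to $|v|$ and to the alphabet constants keeps these runs above every finite-weight non-ghost run.

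Next we split into two cases according to the maximal dominant state $q$ in $\xconf(s_0,ux^{2\bigM k}v')$.

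\emph{Case 1: $q$ is non-ghost.} Its dominance suffix $z\in\absDomL$ kills all states below $q$, and those are non-ghost and identical in both configurations. So $q$ is also dominant after $u\alpha_{B,x}v'$. Conversely, a dominant state after $u\alpha_{B,x}v'$ remains dominant after the unfolded word unless some ghost state below it survives $z$. No such ghost state exists, because all ghost states are higher. Hence the potentials agree, contradicting our assumption.

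\emph{Case 2: $q$ is a ghost state.} Then there is $z\in\absDomL$ such that $q$ reaches finite weight on $z$ while every lower state, in particular every non-ghost state, reaches $\infty$. Thus
\[
\minweight(s_0\runsto{u\alpha_{B,x}v'z}S)=\infty, \qquad \minweight(s_0\runsto{ux^{2\bigM}v'z}S)<\infty,
\]
where the second holds because ghost runs already exist on $x^{2\bigM}$. This gives a witness with $w_1=u$, $w_2=x$ and $w_3=v'z$.

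The main obstacle is making Case 2 legitimate within the finite alphabet. The suffix $w_3=v'z$ must be over $\absCac$, and the dominance suffix language $\absDomL$ must be compatible with the witness definition. Both conditions must survive the baseline shift used to normalise $B=\ghostTrans(s_0,u)$. I would handle this with the invariance of dominance under baseline shifts (\cref{prop: dominance invariant to baseline shifts}). I would also verify that $B$ equals the required ghost-reachable set after the shift, possibly by restricting $B$ to the reachable part.
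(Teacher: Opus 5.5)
Your overall structure matches the paper's: you compare the folded and unfolded configurations via \cref{lem:unfolding configuration characterisation}, and you split on whether the maximal dominant state is a ghost. In the non-ghost case you get equality of $\pot$. In the ghost case you build a witness from the separating suffix, and using $v'z$ as the suffix is exactly how the paper handles prefixes $v'$. The gap is the witness itself.

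You checked it against the semi-precise \cref{def:abs:witness}. The definition for which soundness (\cref{lem:witness implies nondet}) holds is \cref{def:witness}, and its Item~2 requires $\booltrans(s_0,w_1)=\booltrans(s_0,w_1w_2^{2\bigM})$. With $w_1=u$ this fails in general. When $u$ contains cactus letters, $\booltrans(s_0,u)$ can be a proper subset of $B$, and $x^{2\bigM}$ need not map it onto itself. For example, $x$ may send a reachable state $s$ to a ghost state $t$ with $t\runsto{x}t$, giving $t\in\booltrans(s_0,ux^{2\bigM})\setminus\booltrans(s_0,u)$. Your proposal never addresses this condition.

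The fix, which is what the paper does, is to take $w_1=ux^{2\bigM k}$, i.e.\ the witness $(ux^{2\bigM k},\alpha_{B,x},v'z)$. Item~2 then holds by \cref{prop: transitions stabilise at M}. Finiteness on $w_1x^{2\bigM}v'z$ follows as you argued, since reachability on $x^{2\bigM k}$ and on $x^{2\bigM(k+1)}$ coincide. The infinity part now needs an argument you do not have: reading $\alpha_{B,x}$ after the extra $x^{2\bigM k}$ must still reach only non-ghost states.
\begin{itemize}
\item If $s_0\runsto{u}s_1\runsto{x^{2\bigM k}}p'\runsto{\alpha_{B,x}}p$ has finite weight, then $(p',p)$ is grounded.
\item By \cref{prop:grounding states reachable with any bigM k}, with the same grounding state, $(s_1,p)$ is grounded too, so $p\in\booltrans(s_0,u\alpha_{B,x})$.
\item Hence every state reachable on $w_1\alpha_{B,x}v'$ lies in $\booltrans(s_0,u\alpha_{B,x}v')$. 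All of these lie below the ghost state $q$, so $z$ sends them to $\infty$.
\end{itemize}

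Separately, drop the initial baseline-shift normalisation. $B=\ghostTrans(s_0,u)$ is already forced by $\alpha_{B,x}$ being readable after $u$. The $0$-weight minimal reflexive baseline is part of the stable-cycle definition. Actually performing a shift would insert rebase letters into $w_1$ and turn $\alpha_{B,x}$ into a rebase letter, breaking the type-$0$ format $w_1\in\simpCac^*$, $\alpha_{B,x}\in\simpCac$. So the ``main obstacle'' you flag is not the real one. Also, $v'z\in(\simpCacRebJump)^*\genCacRebCac$ holds automatically because $v'\in\simpCac^*$.
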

The way this should be read is: ``either the potential unfolds nicely, or we are done''. Indeed, once we establish the existence of a witness, we know $\augA$ is undeterminisable, which is our goal. This lemma plays a central role in the main proof.

\subsection{On $\pot$, $\charge$ and two length functions}
\label{sec:abs:two length functions}
We can now make the following vague observation: from \cref{lem:abs:charge decrease to high potential} we know that in some cases, properties of $\charge$ can induce high potential. Similarly, from \cref{lem:abs:unfolding maintains potential} we know that in some cases, properties of the potential induce a witness. In \cref{sec:abs:proof outline} our proof takes us from high $\charge$ to high $\pot$, and from there to witnesses. However, these two steps have several technical differences. Henceforth, we often separate definitions and results to $\charge$ and $\pot$.

We start by separating the length functions mentioned in \cref{sec:abs:effective cactus letters}: we now have two length functions, \emph{simple} ($\simpL$) and \emph{general} ($\generL$), which give rise to two effective cactus alphabets $\simpAbsCac$ and $\genAbsCac$. The constraints on each function become clearer in \cref{sec:abs:SRI,sec:abs:proof outline}.

\section{Separated Repeating Infixes (\cref{sec:separated repeating infix})}
\label{sec:abs:SRI}
We can finally start describing our contribution in earnest. 
The first part of our technical contribution is the introduction of \emph{Separated Repeating Infixes (SRI)}. Intuitively, these are words whose run structure exhibits certain organised behaviours, that can be (carefully) used for pumping and for shortening words. The idea is based on the notion of separated increasing infixes from~\cite{almagor2026determinization}, but the technical details differ in many significant ways, and the tools we develop here are new. See \cref{rmk: SRI v.s. original separated increasing infix} for a detailed comparison.

We start with the basic definition.
An SRI is a word $uxyv$ with the following structure (see \cref{fig:separated repeating infix}):
\begin{itemize}
    \item Reading $u$ from $s_0$ reaches a configuration that is partitioned to ``sub-configurations'' that have a huge \emph{gap} $G$ between them, we call the state sets in these sub configurations $V_1,\ldots, V_{\ell}$. 
    \item Next, reading $x$ maintains the exact structure of these $V_j$ sets, and shifts the weight of the states in each set $V_j$ by a constant $k_{j,x}$. In particular, the entire configuration maintains its support after $x$.

    An additional property of $x$ is that it is quite short -- enough so that if needed, we can turn $x^k$ into a stable cycle as per~\cref{lem:abs:reflexive to stable} without exceeding the length bound $L(\depth(x))$.
    
    \item Reading $y$ has very similar behaviour to $x$, in the sense that the $V_j$ are maintained, and each is increased by a constant $k_{j,y}$, and the sign of $k_{j,y}$ is the same as that of $k_{j,x}$, so each $V_j$ is decreasing/increasing on both $x$ and $y$. 

    \item Finally, $v$ is just a harmless suffix.
\end{itemize}

\begin{figure}[H]
    \centering
    \includegraphics[width=1\linewidth]{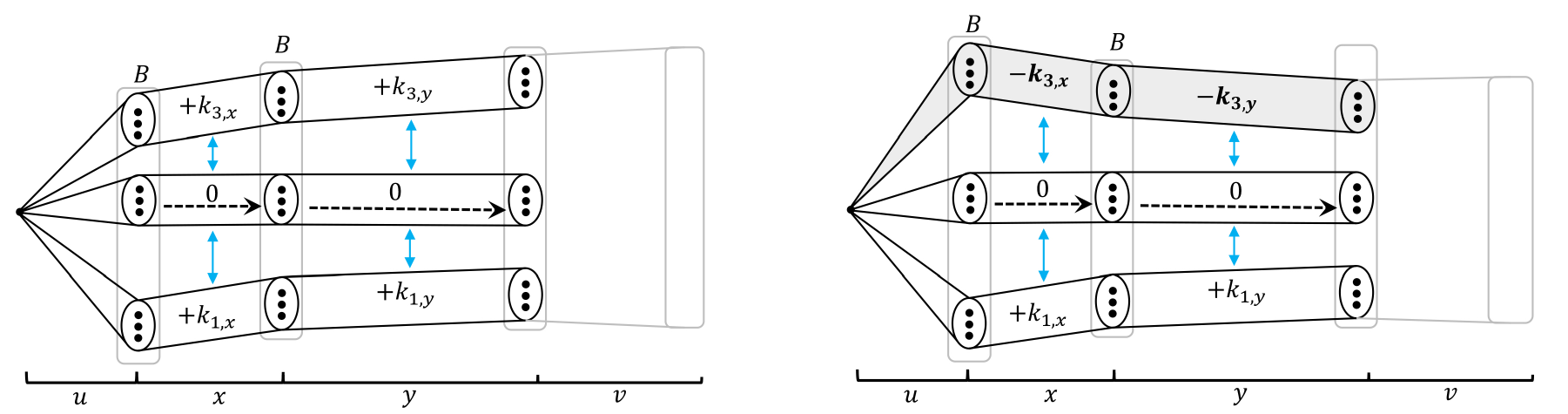}%[scale=0.5]
    \caption{An SRI. After reading $u$, the configuration is separated to sub-configurations corresponding to the $V_i$. Reading $x$ maintains this partition and the exact distances within each set. The same holds for $y$, but with different shifts. 
    The gaps between the sub-configurations are much larger than the effect of $x$ and $y$.
    On the left is a positive SRI: all sub-configurations are trending upwards. On the right is a negative SRI: There exists a sub-configuration (the upper one in the right figure) with a downwards slope.
    }
    \label{fig:separated repeating infix}
\end{figure}

An underlying theme of this work is that increasing and decreasing runs are inherently different. This manifests also in SRI, where sub-configurations with $k_{j,x}<0$ are different to those with $k_{j,x}\ge 0$. 
We formalise and extend this by defining different \emph{flavours} of SRI.
An SRI is:
\begin{itemize}
    \item \emph{Negative} if $k_{j,x}<0$ for some $j$.
    \item \emph{Positive} if $k_{j,x}\ge 0$ for all $j$.
    \item \emph{Stable} if $(\ghostTrans(s_0,u),x)$ is a stable cycle. In this case it is \emph{Degenerate} if $(\ghostTrans(s_0,u),x)$ is degenerate, and \emph{Non-degenerate} otherwise.
\end{itemize}
We also distinguish between SRIs that pertain to $\pot$ and to $\charge$, and specialise the definition for each one. The definitions here (and also in other places) make use of the \emph{maximal effect} of a word $w$, denoted $\maxeff{w}$, which is an upper bound on the change in weight upon reading $w$. Also, recall that $G$ is the gap size of the SRI, and that $\simpL,\generL$ are two length functions giving rise to two alphabets $\simpAbsCac,\genAbsCac$ which we do not elaborate on here.
\begin{itemize}
    \item A \emph{simple} SRI (SSRI) is over $\simpAbsCac$, has $G=4\bigM \maxeff{xy}$, and requires $|x|\le \frac{1}{\bigM}\simpL(\depth(x)+1)$ and $\pot(u)\le \pot(ux)\le \pot(uxy)$.
    \item A \emph{general} SRI (GSRI) is over $\genAbsCac$, has $G=4\bigM \maxeff{xy}+\budH$, and requires $|x|\le \frac{1}{\bigM}\generL(\depth(x)+1)$ and $\charge(u)\ge \charge(ux)\ge \charge(uxy)$, where $\budH$ is a constant explained later.
\end{itemize}
The bounds involved in the definition (including $\budH$) arise from the proof as we go along. 
In the following, we use SRI as a general term for either an SSRI or GSRI.

SRIs serve as a bridge between two parts of the proof: on the one hand, if a word can be decomposed as an SRI, we show that we can use its structure to obtain certain properties (mostly -- shortening the word while maintaining $\pot$ or $\charge$). Showing such properties is technically challenging, and makes extensive use of baseline shifts and dominance.
On the other hand, showing that SRIs can be found at all is challenging, and requires us to develop a \emph{quantitative zooming technique}. 

Didactically, the correct way to proceed with the proof is to ask how large a $B$-gap witness we need to guarantee the existence of an SRI, and then show how an SRI helps us once we find it. It is more concise, however, to develop our tools ``bottom-up'' and combine everything at the end. We therefore start by showing what we can do given an SRI.

\subsection{SRI -- A Toolbox (\cref{sec: basic results for SRI})}
Our first result identifies what kinds of runs are allowed when reading $x$ and $y$ in an SRI.  Intuitively, it shows that there are no runs from a ``low'' $V_j$ to a ``higher'' $V_{j'}$.
\begin{proposition}
\label{prop:abs:SRI run structure}
    Consider an SRI $uxyv$ and $\rho:p\runsto{x}q$ with $p\in V_j$, then $q\in V_{j'}$ with $j'\le j$. Moreover, there exists some $p'\in V_{j'}$ such that $p'\runsto{x}q$. 
    We also have $|k_{j,x}|\le \maxeff{x}$.

    The same holds for $y$.
\end{proposition}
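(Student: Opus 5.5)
The argument is a direct consequence of three ingredients: the gap $G$ between the sub-configurations, the fact that reading $x$ shifts each $V_j$ by the constant $k_{j,x}$ and preserves the support, and the definition of $\maxeff{x}$ as a bound on the weight change of any run on $x$. I order the sets so that $V_1$ is the highest and $V_\ell$ the lowest sub-configuration; ``$j'\le j$'' then means ``not lower''. Let $\vec{c}=\xconf(s_0,u)$ and $\vec{c}'=\xconf(s_0,ux)$. For $p\in V_j$ and $q\in V_{j'}$ we have $\vec{c}'(q)=\vec{c}(q)+k_{j',x}$.

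\textbf{Bound on the shift.} I prove $|k_{j,x}|\le \maxeff{x}$ first, since it feeds into the other claims. Take any $q\in V_j$ in the support of $\vec{c}'$. Since $\vec{c}'(q)$ is finite, it is attained by some run $\pi:p\runsto{x}q$ with $\vec{c}'(q)=\vec{c}(p)+\weight(\pi)$. Lower bound: the structure is preserved, so $p\in V_j$, and this is exactly what the second claim establishes (I prove both together). Then $\vec{c}(q)+k_{j,x}=\vec{c}(p)+\weight(\pi)$. Summing over a cycle of such minimal predecessors inside $V_j$, which exists because $V_j$ is finite and every state has a minimal predecessor, the differences $\vec{c}(q)-\vec{c}(p)$ telescope. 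Hence $k_{j,x}$ equals the average weight of runs on $x$, and so $|k_{j,x}|\le\maxeff{x}$.

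\textbf{Run structure.} Suppose $\rho:p\runsto{x}q$ with $p\in V_j$ and $q\in V_{j'}$. Then
\[
\vec{c}'(q)\le \vec{c}(p)+\weight(\rho)\le \vec{c}(p)+\maxeff{x}.
\]
If $V_{j'}$ were strictly above $V_j$, then $\vec{c}(q)\ge \vec{c}(p)+G$. With $|k_{j',x}|\le\maxeff{x}$ this gives
\[
\vec{c}'(q)\ge \vec{c}(p)+G-\maxeff{x}>\vec{c}(p)+\maxeff{x},
\]
because $G\ge 4\bigM\maxeff{xy}>2\maxeff{x}$. This is a contradiction.

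\textbf{Existence of $p'$.} Take $p'$ to be the source of a minimal run reaching $q$. The same gap computation, run in the opposite direction, shows that $p'$ cannot lie in a set strictly above $V_{j'}$: otherwise $\vec{c}'(q)$ would be about $G$ too high. It also cannot lie in a set strictly below $V_{j'}$: otherwise $\vec{c}'(q)$ would be about $G$ too low, contradicting the constant shift $k_{j',x}$.

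\textbf{Main obstacle.} The difficulty is the circularity between bounding $k_{j,x}$ and knowing that minimal runs stay inside $V_j$. I resolve it by first proving the $p'$-claim with a crude a priori bound $|k_{j,x}|<G/2$. This bound follows because minimal runs move weight by at most $\maxeff{x}$, and the support and gaps are preserved. I then bootstrap to $|k_{j,x}|\le\maxeff{x}$ via the telescoping argument above.

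The same reasoning applies verbatim to $y$, starting from $\xconf(s_0,ux)$. The gaps there are still at least $G-2\maxeff{x}$, which remains large relative to $\maxeff{y}$ since $G$ is defined via $\maxeff{xy}$.
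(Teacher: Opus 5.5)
Your proof has a genuine gap, and it sits exactly at the step you flag as the main obstacle.

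\textbf{What works.} Given some a priori bound on $|k_{j',x}|$ (your $G/2$ suffices), your argument that a minimal predecessor of $q\in V_{j'}$ lies in $V_{j'}$ is correct. Your telescoping over a cycle of minimal predecessors inside $V_j$ then gives $|k_{j,x}|\le\maxeff{x}$ as an average of run weights. That is a nice alternative to the paper's route.

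\textbf{What is missing.} The crude bound is the whole difficulty, and your justification (``minimal runs move weight by at most $\maxeff{x}$, and the support and gaps are preserved'') does not establish it. For a minimal predecessor $p$ of $q\in V_j$ it only yields $k_{j,x}=\vec{c}(p)-\vec{c}(q)\pm\maxeff{x}$. Without already knowing where $p$ lies, this is unbounded, which is the same circularity again.

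The missing idea is an extremal choice, which is how the paper obtains the bound directly.
\begin{itemize}
\item Suppose $k_{j,x}>\maxeff{x}$ for some $j$. Take $j$ to be the topmost such index, and take $s\in V_j$ with $\vec{c}(s)$ maximal.
\item Any run on $x$ into $s$ from a state $p$ with $\vec{c}(p)\le\vec{c}(s)$ yields weight at most $\vec{c}(s)+\maxeff{x}<\vec{c}'(s)$. Hence a minimal predecessor $s'$ of $s$ satisfies $\vec{c}(s')>\vec{c}(s)$, so it lies in a strictly higher set $V_{j'}$.
\item Then $\vec{c}'(s)\ge\vec{c}(s')-\maxeff{x}$. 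The choice of $j$ gives $k_{j',x}\le\maxeff{x}$, i.e.\ $\vec{c}'(s')\le\vec{c}(s')+\maxeff{x}$.
\item So $\vec{c}'(s')-\vec{c}'(s)\le 2\maxeff{x}\le G$, contradicting the strict gap required in $\xconf(s_0,ux)$.
\end{itemize}
The lower bound is dual: take the lowest index with $k_{j,x}<-\maxeff{x}$ and the minimal state of that $V_j$. This argument already gives the sharp bound, so once it is in place your telescoping step is redundant. If you want to keep your bootstrap, run the same argument with threshold $G/2$.

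\textbf{Smaller points.} Your re-indexing is backwards. In the SRI definition $V_{j+1}$ lies above $V_j$, so $V_1$ is the lowest set, and $j'\le j$ means $q$ is \emph{not higher} than $p$. With your convention ($V_1$ highest, $j'\le j$ meaning ``not lower''), the claim would say runs never go down. That is false in general and is not what you prove; your gap computation correctly shows $q$ is not strictly above $p$. Also, for $y$ you do not need the weakened gap $G-2\maxeff{x}$, since the definition already demands gaps larger than $G$ in $\xconf(s_0,ux)$.
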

\begin{proof}[Proof Sketch]
The idea is simple, and appears already in~\cite{almagor2026determinization}: if there is a run that goes from e.g., $V_2$ to $V_3$, then the gap between $V_2$ and $V_3$ (after reading $x$ or $y$) would be at most e.g., $2\maxeff{x}$. Indeed, two runs on $x$ stemming from the same state cannot diverge by more than $2\maxeff{x}$.

Note that in order to make the proof work, we must ensure that the gap between the sets is much higher than $2\maxeff{x}$, which is part of the reason behind the choice of $G$ above.
\end{proof}
We now consider Stable SRIs, which are the strictest flavour. We show in \cref{sec: basic results for SRI} that every Stable SRI is in particular Positive. This is not surprising, since stable cycles do not allow negative cycles.
For Degenerate Stable SRIs, we have that $x$ is a degenerate stable cycle. As the name suggests, this induces a degenerate behaviour. Concretely, we show that we can get rid of $x$ entirely.
\begin{proposition}
\label{prop:abs:degenerate SRI shorten}
    For a degenerate SRI $uxyv$ we have $\charge(uxyv)=\charge(uyv)$ and $\pot(uxyv)=\pot(uyv)$.
\end{proposition}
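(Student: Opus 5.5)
The plan is to compare the configurations reached after $u$, $ux$ and $uxy$, and to show that removing $x$ leaves the configuration after $uy$ equal to the one after $uxy$ on every state relevant to $\charge$ and $\pot$, up to a common additive shift that in fact vanishes. Since $uxyv$ is a Stable SRI, it is Positive, so $k_{j,x}\ge 0$ for every $j$.

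\textbf{Step 1: $x$ is the identity on the minimal part.} Let $S'=\ghostTrans(s_0,u)$. Degeneracy of $(S',x)$ means every reachable pair is grounded. By the characterisation of grounded pairs in \cref{sec:abs:reflexive and stable cycles}, every run over $x^{\bigM}$ that connects reachable states passes near a minimal reflexive state with a weight-$0$ cycle. I would combine this with the SRI property that $x$ shifts each $V_j$ rigidly by $k_{j,x}$, so that after $x^k$ the shift is $k\cdot k_{j,x}$. Boundedness of $\minweight(s\runsto{x^k}t)$ for grounded pairs then forces $k_{j,x}=0$ for every $j$. Thus $\xconf(s_0,ux)=\xconf(s_0,u)$ as configurations, since each sub-configuration keeps both its support and its internal distances.

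\textbf{Step 2: propagate.} The configuration reached depends only on the previous configuration and the word read. Hence $\xconf(s_0,uxyv')=\xconf(s_0,uyv')$ for every prefix $v'$ of $v$, and in particular for $v$ itself.

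\textbf{Step 3: conclude.} Both $\charge$ and $\pot$ are functions of the configuration reached. For $\pot$, dominance is defined via suffixes in $\absDomL$ applied to that configuration. So both equalities follow immediately.

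\textbf{Main obstacle.} The difficulty is Step 1, namely showing that equal support plus a rigid per-block shift plus degeneracy really yields the same configuration with zero shift, and not merely the same configuration up to ghost states. One source of trouble is that the configuration is over $\augA$ with cactus letters, where "reachable" refers to $\ghostTrans$ rather than to $\booltrans$. Another is that Step 1 uses minimal runs on $x$ itself, whereas grounding is phrased for $x^{2\bigM}$.

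To handle this, I would use \cref{prop:abs:SRI run structure}, which says runs on $x$ only go from $V_j$ to some $V_{j'}$ with $j'\le j$, and each target is hit from within its own block. That lets me argue block by block. In each $V_j$, iterating $x$ gives runs of weight $k\cdot k_{j,x}$ relative to the block. Stability forbids $k_{j,x}<0$, and groundedness of the pairs inside a block forbids $k_{j,x}>0$, since otherwise the weight would be unbounded. This gives $k_{j,x}=0$.

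If there is a residual concern about states that are ghosts rather than actually reached, I would note that $\charge$ and $\pot$ only consider finite-weight entries of the configuration, and that degeneracy ensures no new finite entries arise through $x$.
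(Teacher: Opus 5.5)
Your proposal is correct and follows essentially the paper's route. Stable implies Positive gives $k_{j,x}\ge 0$, and degeneracy together with the block structure of runs on $x$ rules out $k_{j,x}>0$; hence $\xconf(s_0,u)=\xconf(s_0,ux)$, and the equalities for $\charge$ and $\pot$ follow by reading $yv$ from the same configuration. The only difference is cosmetic: the paper picks a reflexive $q\in V_j$ by pigeonhole and gets its contradiction from the $0$-weight cycle at the grounding state $r\in V_j$, whereas you set the bounded weight of a grounded pair inside $V_j$ against the linear growth $k\cdot k_{j,x}$.
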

For non-degenerate SRIs, things are more involved. Fortunately, techniques for this fragment are developed in~\cite{almagor2026determinization}, and require only small adaptations. Intuitively, we show that we can replace $x$ with $\alpha_{S',x}$ and drop $y$ entirely, while keeping $\pot$ and $\charge$ well behaved.
We remark that the following result is what gives rise to $G\ge 4\bigM\maxeff{xy}$ for both SSRI and GSRI.
\begin{lemma}
\label{lem:abs:pumping stable SRI}
Consider a Stable Non-degenerate SRI $w=uxyv$ and let $w'=u\alpha_{S',x}v$, then 
\begin{itemize}
    \item $\charge(w)\ge \charge(w')$, and
    \item either there is a witness, or $\pot(w)\le \pot(w')$.
\end{itemize}
\end{lemma}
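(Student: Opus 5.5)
The plan is to compare the run structure of $w=uxyv$ with that of $w'=u\alpha_{S',x}v$, where $S'=\ghostTrans(s_0,u)$. The natural bridge is the intermediate word $ux^{2\bigM k}v$, which is exactly $\unfold(u\alpha_{S',x}v\wr F)$ for a suitable large $F$. We argue the two bullets separately.

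\emph{Charge.} Since $(S',x)$ is a stable cycle, the SRI is Positive, so $k_{j,x}\ge 0$ and $k_{j,y}\ge 0$ for all $j$. The key step is to show that every run of $\augA$ on $w'$ can be matched by a run on $w$ of weight at least as large, at least for the run that is minimal on $w$. Concretely:
\begin{enumerate}
\item Take a minimal run $\pi$ on $w=uxyv$. Write $q=\pi(u)\in V_j$ for the state after $u$.
\item By \cref{prop:abs:SRI run structure}, the part of $\pi$ over $xy$ descends only to sets $V_{j'}$ with $j'\le j$. Since every shift is non-negative, after $xy$ the run sits at a weight at least the corresponding weight after $u$, relative to the block structure.
\item Replace the $xy$-segment with a run over $\alpha_{S',x}$. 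For this we need a grounded pair $(q,q')$ with $\minweight(q\runsto{\alpha_{S',x}}q')$ no larger than the weight of $\pi$ on $xy$. A grounded pair passes through a minimal reflexive state $r$ with a zero-weight cycle on $x^{\bigM}$.
\item I expect to obtain such an $r$ from the within-block invariance of $x$: inside a block $V_j$ with $k_{j,x}=0$, the cycles are $0$-weight. If instead the run enters a block with $k_{j,x}>0$, its weight on $w$ only grows, so it cannot beat the run we build over $\alpha_{S',x}$.
\end{enumerate}
This yields a run on $w'$ whose weight is at most that of $\pi$ (as needed to bound the minimum), giving $\charge(w)\ge\charge(w')$. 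This is consistent with the observation that $\charge$ of an unfolding is at least $\charge$ of the folded word.

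\emph{Potential.} Assume there is no witness. Then \cref{lem:abs:unfolding maintains potential} gives $\pot(ux^{2\bigM k}v)=\pot(u\alpha_{S',x}v)$ for large enough $F$, so it suffices to show $\pot(uxyv)\le\pot(ux^{2\bigM k}v)$. Fix a maximal dominant state $t$ after $uxyv$, with suffix $z\in\absDomL$. The steps are:
\begin{enumerate}
\item Trace a run ending at $t$ back to a state $p$ in some block $V_j$ after $ux$.
\item Construct a corresponding state after $ux^{2\bigM k}$ by replacing $y$ with $x^{2\bigM k-1}$. Both words preserve the partition and shift each block by non-negative constants of the same sign.
\item Use the gap $G=4\bigM\maxeff{xy}$ to ensure the block ordering cannot be crossed. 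This preserves two things: the states below $t$ remain below, and no lower state becomes able to read $z$ while $t$ survives. Hence $t$'s counterpart remains dominant.
\item Compare weights. A block with $k_{j,x}>0$ only rises under more iterations of $x$, and in $x^{2\bigM k}$ it dominates its contribution in $y$. A block with $k_{j,x}=0$ is fixed. Either way, the potential after $ux^{2\bigM k}v$ is at least that after $uxyv$.
\end{enumerate}
Where needed, we apply a baseline shift via \cref{prop: dominance invariant to baseline shifts}, so that dominance is compared under a common baseline.

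\emph{Main obstacle.} The delicate step is transferring dominance when $y$ is replaced by powers of $x$. Because $x$ and $y$ shift blocks by different amounts, relative positions across blocks change, and a lower state might newly become able to read $z$. The factor $4\bigM$ in $G$ is what should absorb the $2\bigM$ stabilisation iterations on both sides. I would first try to prove that the set of states reachable within each $V_j$ is identical after $ux$, $uxy$ and $ux^{m}$, using that the supports coincide. Dominance would then depend only on the block ordering and on the within-block offsets, both of which are preserved.
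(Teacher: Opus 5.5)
Your reduction of the potential bullet to the unfolding $ux^{2\bigM M_0}v$ via \cref{lem:unfolding maintains potential} is the one the paper uses. However, there are genuine gaps in both bullets.

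\textbf{Charge.} Your argument runs in the wrong direction. Building, from a minimal run $\pi$ on $w$, a run on $w'$ of weight at most $\weight(\pi)$ gives $\minweight(s_0\runsto{w'}S)\le\minweight(s_0\runsto{w}S)$, i.e.\ $\charge(w')\ge\charge(w)$, which is the opposite of the claim. What is needed is that the minimal run on $w'$ is matched by a run on $w$ of no larger weight. The construction is also unavailable in general. The minimal run on $w$ may cross $xy$ along a non-grounded pair, which has no transition on $\alpha_{S',x}$. For grounded pairs, the weight on $\alpha_{S',x}$ is a weight over $x^{2\bigM}$ through a grounding state, with no a-priori relation to the weight over $xy$. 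So your step 4 does not produce the grounded pair you need. The correct route for charge also passes through the unfolding. Item~1 of \cref{lem:unfolding maintains potential} gives $\xconf(s_0,ux^{2\bigM M_0}v)\le\xconf(s_0,u\alpha_{S',x}v)$. Hence both bullets reduce to comparing $uxyv$ with $ux^{2\bigM M_0}v$.

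\textbf{Potential, and the missing idea.} That comparison is where your step 3 fails: the gap $G=4|S|\bigM\maxeff{xy}$ cannot prevent blocks from crossing.
\begin{itemize}
\item $M_0$ is forced by $F\ge 2\maxeff{u\alpha_{S',x}v}$, which depends on $u$ and $v$. So $2\bigM M_0k_{j,x}$ can dwarf $G$, and a block with $k_{j,x}>0$ can overtake a higher block with $k_{j',x}=0$.
\item The gap is only designed to control up to $|S|\bigM$ iterations, as in \cref{prop:run characteristics of SRI}.
\item Once blocks cross, neither the ordering nor the maximal dominant state survives, so transferring dominance state by state breaks down.
\end{itemize}
The paper never transfers dominance. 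Instead it proves the pointwise bound $\xconf(s_0,uxy)\le\xconf(s_0,ux^{2\bigM M_0})$, taking $M_0>\maxeff{xy}$. It does this by examining, for each $r'\in V_j$, a minimal run to $r'$ over $ux^{2\bigM M_0}$; such a run only descends through blocks, so there are two cases.
\begin{itemize}
\item If the run stays in $V_j$, its weight is $\vec{c_u}(r')+2\bigM M_0k_{j,x}$. This equals $\vec{c_{uxy}}(r')$ when $k_{j,x}=0$ and is larger when $k_{j,x}>0$.
\item If the run starts at some $p_0$ in a higher block, \cref{cor:increasing infix no very negative runs on x k} keeps it above $\vec{c_u}(p_0)-|S|\maxeff{x}$. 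By the gap, this exceeds $\vec{c_{uxy}}(r')\le\vec{c_u}(r')+\maxeff{xy}$.
\end{itemize}
So crossing may occur, but only harmlessly far above. The two configurations have the same support $B$ and share the zero baseline. Therefore \cref{lem:potential and charge are monotone}, applied to the suffix $v$, yields $\charge(uxyv)\ge\charge(ux^{2\bigM M_0}v)$ and $\pot(uxyv)\le\pot(ux^{2\bigM M_0}v)$ simultaneously. \cref{lem:unfolding maintains potential} then completes both bullets.
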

Naturally, the ``either there is a witness'' clause stems from \cref{lem:abs:unfolding maintains potential}.

Our first major roadblock is Negative SRIs. Indeed, suppose $V_3$ has $k_{3,x}<0$ and contains a maximal dominant state, but $k_{2,x}=0$ and $V_2$ contains the baseline run. Then, $\pot$ may \emph{decrease} with pumping. Dually, if a negative $V_j$ is below the baseline, this may cause $\charge$ to \emph{increase}. Such behaviours are very problematic and we would like to get rid of them. 
The problem goes even further: even in a Positive SRI, there may be negative \emph{ghost runs} on $x$ (c.f., \cref{sec:abs:unfolding}). These also cause problems if we want to turn $x$ to a stable cycle.

The following result (\cref{lem: shift on negative run in SRI kills positive sets}) handles these scenarios using careful baseline shifts, and identifies what happens to the resulting SRI. Intuitively, the lemma considers a setting where there is a negative cycle on some $x^k$ from a reachable or ghost-reachable state $s$. Then, by \cref{lem:abs:reflexive to stable} we can shift $x^k$ to obtain a reflexive cycle $(S'',x')$. 
We then ask what this stable cycle does to the states in various $V_j$'s. The lemma says that if $V_1,\ldots,V_{\ell'}$ (i.e., the lowest $\ell'$ sets, for some $\ell'\le \ell$) all have $k_{i,x}\ge 0$, then upon reading $\alpha_{S'',x'}$ they all jump to $\infty$. 
\begin{lemma}
    \label{lem:abs:shift on negative run in SRI kills positive sets}
     Consider an SRI $uxyv$ where $k_{j,x}\ge 0$ for all $0\le j\le \ell'$ for some $\ell'\le \ell$ (i.e., the first $\ell'$ $k_{j,x}$ are positive).
    Assume there is a minimal-slope run $\rho:s\runsto{x^k}s$ with $s\in \ghostTrans(s_0,u)$, $k\le |S|$ and $\weight(\rho)<0$. 
    Then $(S'',x')=\baseshift{(\ghostTrans(s_0,u),x^k)}{\rho}$ is a stable cycle, and for every state $s'\in V_j$ with $j\le \ell'$ we have $\minweight(\baseshift{s'}{\rho}\runsto{\alpha_{S'',x'}}S)=\infty$.
\end{lemma}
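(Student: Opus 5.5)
The plan has two parts. The first, that $(S'',x')$ is a stable cycle, is exactly \cref{lem:abs:reflexive to stable} applied to the reflexive cycle $(\ghostTrans(s_0,u),x)$. By the SRI structure, $x$ preserves the support of the configuration, including the ghost-reachable set, so $(\ghostTrans(s_0,u),x)$ is indeed reflexive. Since $\rho$ is assumed to be a minimal-slope cycle with $k\le|S|$, the argument in the sketch of \cref{lem:abs:reflexive to stable} applies to $\rho$ directly: after the shift, the baseline has slope $0$ and no cycle has negative slope. So the substance of the lemma is the second claim.

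For the second claim, fix $s'\in V_j$ with $j\le \ell'$ and set $\hat s=\baseshift{s'}{\rho}$. By the definition of cactus letters, $\minweight(\hat s\runsto{\alpha_{S'',x'}}S)<\infty$ holds iff there is some $t$ with $(\hat s,t)$ grounded. By the characterisation of grounded pairs, that requires a minimal reflexive state $r$ reachable from $\hat s$ via $x'^{\bigM}$, with a weight-$0$ cycle $r\runsto{x'^{\bigM}}r$ in the shifted automaton. I will argue by contradiction, so suppose such an $r$ exists. Unshifting with the invariance of weight differences under baseline shift, the weight-$0$ cycle on $r$ becomes a cycle $\tau:r_0\runsto{x^{k\bigM}}r_0$ in the original word whose weight equals the weight of $\rho^{\bigM}$. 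Hence $\tau$ has slope equal to the minimal slope, which is strictly negative.

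The key step is to locate $r_0$. Reachability from $s'\in V_j$ is preserved under the shift, so by \cref{prop:abs:SRI run structure} applied iteratively to $x$, $r_0$ lies in some $V_{j'}$ with $j'\le j\le\ell'$. Iterating the proposition along the cycle $\tau$ keeps it inside $V_{j'}$: it can never go up, and it must return to $r_0$. By the SRI property, reading $x$ shifts every state of $V_{j'}$ by exactly $k_{j',x}$, and the induced configuration on $V_{j'}$ is preserved. So the minimal run from $V_{j'}$ to $r_0$ on $x^{k\bigM}$ changes weight by $k\bigM\cdot k_{j',x}\ge 0$ relative to the baseline.

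I expect the main obstacle to be that $\tau$ is an arbitrary run, not necessarily a minimal one, and it might also be a ghost run, whereas the sub-configuration guarantees concern minimal weights of reachable states. The fix I would try first is to compare $\tau$ with configuration values. After $u x^{k\bigM}$ the value at $r_0$ is at most the value at $r_0$ after $u$ plus $\weight(\tau)$, which is strictly smaller than that value. On the other hand, the SRI invariant says this value equals the value after $u$ plus $k\bigM k_{j',x}\ge$ the value after $u$. This is a contradiction.

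For ghost states, I would use the unfolding view: $\ghostTrans(s_0,u)$ comes from runs on $\flatten(u\wr F)$. I would also use the requirement that the gap $G$ dominates $\maxeff{x}$ to argue that a ghost run entering $V_{j'}$ would contradict the separation. If that argument fails, the fallback is to restrict attention to genuinely reachable $r$, since grounding from a reachable $s'$ can only pass through reachable states.
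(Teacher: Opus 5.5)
Your proposal is correct and follows essentially the same route as the paper. Stability comes from the reflexive-to-stable lemma. A grounded pair from $\baseshift{s'}{\rho}$ then yields a $0$-weight cycle on $x'^{\bigM}$. That cycle unshifts to a negative cycle on $x^{k\bigM}$ from a state in some $V_{j'}$ with $j'\le j\le \ell'$, which contradicts $k_{j',x}\ge 0$.

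Your worry about ghost runs is moot, as your fallback already notes: since $s'\in B$ and $\booltrans(B,x)=B$, the state $r_0$ and the whole cycle lie in $B$. The only step that needs care is extending the one-step shift by $k_{j',x}$ to $k\bigM$ iterations. The paper does this using the gap $G$ and $k\le |S|$. A simpler alternative is the telescoping bound from the proof of \cref{prop: Positive SRI no negative cycles on x}, applied to the cycle you already showed stays inside $V_{j'}$.
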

\begin{proof}[Proof Sketch]
    Intuitively, we proceed in four steps as depicted in \cref{fig:abs:shift on neg kills pos sets proof}. First, we baseline shift $x^k$ on the negative run $\rho$, resulting in a stable cycle $(S'',x')$ as per \cref{lem:abs:reflexive to stable}. 
    
    We want to use the cactus letter $\alpha_{S'',x'}$. However, we need it to be effective, so as not to go beyond our effective alphabet. This is where the requirement on the length of $x$ in an SRI comes into play: since $k<|S|\ll \bigM$ and $|x|\le \frac{1}{\bigM}\simpL(\depth(x)+1)$ (or similarly for $\generL$), then $|x^k|\le \simpL(\depth(x)+1)$, so we can consider the corresponding cactus letter as per \cref{sec:abs:effective cactus letters}.
    
    We now consider a state $s_1$ in some $V_j$ with $j\le \ell'$, i.e., one of the ``low and non-negative'' $V_j$'s. 
    The shifted version of $s_1$ is $t_1=\baseshift{s_1}{\rho}$, and we examine what happens when it reads $\alpha_{S'',x'}$. 
    If, by way of contradiction, there is a (finite-weight) run $t_1\runsto{\alpha_{S'',x'}}t_2$, then $(t_1,t_2)$ is a grounded pair for $(S'',x')$ (as per \cref{sec:abs:reflexive and stable cycles}). This means that there exists a minimal reflexive state $t_3\in S''$ and a run $\tau:t_3\runsto{x'^{\bigM}}t_3$ with $\weight(\tau)=0$ such that $t_1\runsto{x'}t_3$.

    We now ``pull back'' $t_3$ from the baseline $\rho$ to our original baseline of $uxyv$ using another baseline shift, denoting the corresponding state $t'_3$. Thus, there is a run $s_1\runsto{x^k}t'_3$. By \cref{prop:abs:SRI run structure} we have that $t'_3\in V_{j'}$ for some $j'\le j$ (recall that runs cannot go to ``higher'' sub-configuration). 
    In particular, by the premise of the lemma we have $k_{j',x}\ge 0$, so there are no negative cycles on $x$ generated from $t'_3$.
    
    However, we now consider the run $\tau$, and similarly pull it back to a run $\tau':t'_3\runsto{x^k}t'_3$.
    Since baseline shifts preserve the run structure and since $\weight(\tau)=0$, then $\weight(\tau')$ must be ``$0$ with respect to $\weight(\rho)$''. More precisely, we have
    \[\weight(\rho)-\weight(\tau')=\weight(\baseshift{\rho}{\rho})-\weight(\tau)=0\]
    Thus, $\weight(\tau')=\weight(\rho)<0$. 
    This means that there is a negative cycle from $t'_3$ on $x^k$. It is now not difficult to reach a contradiction with the fact that $k_{j',x}\ge 0$ (see the full proof for details).

    \begin{figure}[H]
        \centering
        \includegraphics[width=1\linewidth]{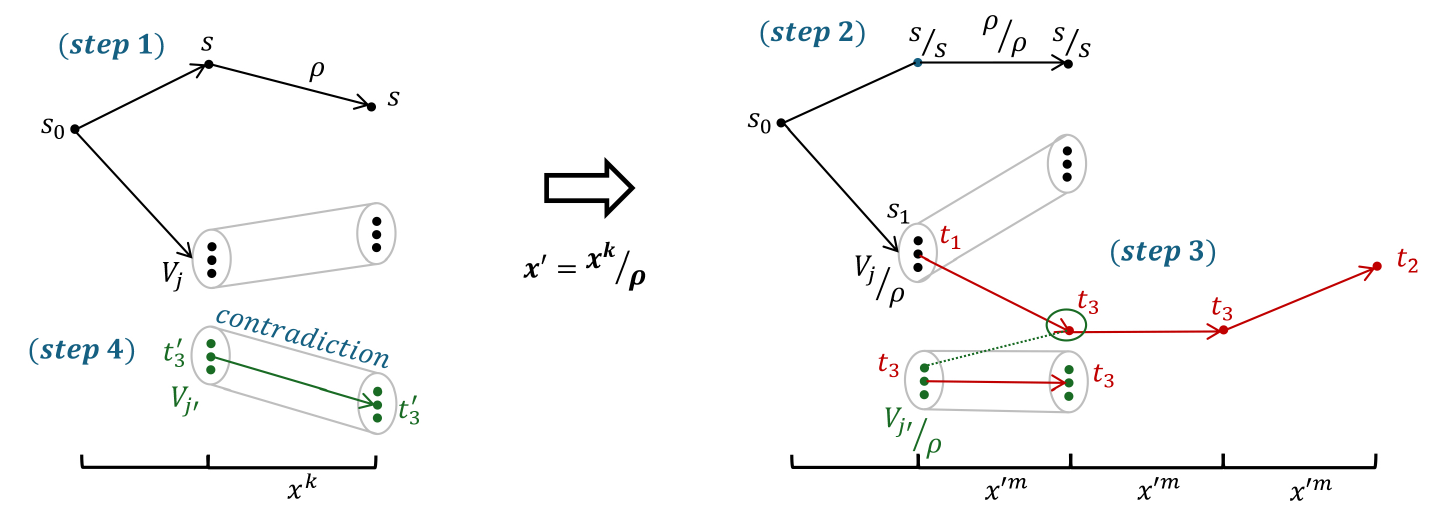}
        \caption{Proof of \cref{lem:abs:shift on negative run in SRI kills positive sets}. In step 1 we consider the decreasing run $\rho$ over $x^k$. We baseline shift to it to obtain the figure on the right. In step 2 we consider some increasing $V_j$, and assume there is a finite-weight run from it (the red run). We then identify a state $t_3$ with a $0$-weight run. In step 3 we pull $t_3$ back to the left drawing ($t'_3$) so it becomes a negative run. In step 4 we reach a contradiction, since this negative run is lower than $V_j$.}
        \label{fig:abs:shift on neg kills pos sets proof}

    \end{figure}
\end{proof}

Equipped with general tools for SRI, we proceed to develop tools specific for SSRI and GSRI.

\subsection{SSRI -- A Toolbox (\cref{sec:SSRI toolbox})}
Recall that SSRI pertain to $\pot$. In the main proof (\cref{sec:abs:proof outline}) when we reach an SSRI our goal is either to find a witness, or to shorten the SSRI while not decreasing $\pot$.
We present the relevant tools to achieve this.
Equipped with \cref{lem:abs:shift on negative run in SRI kills positive sets}, we start by tackling Negative SSRI.
\begin{lemma}
\label{lem:abs:negative SSRI to type 1 witness}
    If there exists a Negative SSRI, then there exists a witness.
\end{lemma}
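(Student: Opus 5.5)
Let $uxyv$ be a Negative SSRI. I would build the witness $w_1=u$, $\alpha_{S'',x'}$, $w_3$ (after a baseline shift), with the cactus letter playing the role of $\alpha_{S_1,w_2}$ in \cref{def:abs:witness}. The engine is \cref{lem:abs:shift on negative run in SRI kills positive sets}, and the SSRI condition $\pot(u)\le\pot(ux)\le\pot(uxy)$ is what forces the negative sets to be relevant to the potential.

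\textbf{Step 1: order the sets.} Let $V_1,\ldots,V_\ell$ be ordered from lowest to highest, and let $j^*$ be the least index with $k_{j^*,x}<0$. Take $\ell'=j^*-1$, so that $k_{j,x}\ge0$ for all $j\le\ell'$. Since $V_{j^*}$ is mapped into itself by $x$ with a uniform shift $k_{j^*,x}<0$, pigeonhole on states gives a cycle $s\runsto{x^{k'}}s$ inside $V_{j^*}$ with negative weight. Among all negative cycles from $\ghostTrans(s_0,u)$, I pick a minimal-slope one $\rho:s\runsto{x^k}s$ with $k\le|S|$; the bound $k\le|S|$ comes from the argument of \cref{lem:abs:reflexive to stable}. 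By \cref{lem:abs:shift on negative run in SRI kills positive sets}, $(S'',x')=\baseshift{(\ghostTrans(s_0,u),x^k)}{\rho}$ is a stable cycle, and every shifted state of $V_1,\ldots,V_{\ell'}$ goes to $\infty$ on $\alpha_{S'',x'}$. The SSRI length bound $|x|\le\frac1\bigM\simpL(\depth(x)+1)$ ensures that this cactus letter lies in $\simpAbsCac$.

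\textbf{Step 2: locate the maximal dominant state.} I want the maximal dominant state of $ux$ (or of $uxy$) to lie in some $V_j$ with $j\ge j^*$. Suppose instead that it lies in a non-negative low set. Since \cref{prop:abs:SRI run structure} shows runs cannot climb between sets, the negative set $V_{j^*}$ descends while the dominance witnesses sit at or below it. Using the gap $G$ and $|k_{j,x}|\le\maxeff{x}$, together with $\pot(u)\le\pot(ux)\le\pot(uxy)$, I would derive a contradiction.

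I expect Step 2 to be the main obstacle. Dominance is not local to a single set: it refers to states strictly below in weight, so the precise interplay between $\pot$-monotonicity and the negative shift needs care. The fallback, if the dominant state is in a positive set, is to iterate the argument on $y$ instead of $x$, since $k_{j,y}$ has the same sign as $k_{j,x}$.

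\textbf{Step 3: build the witness.} Let $q$ be the maximal dominant state and $z\in\absDomL$ its dominating suffix, so that $\minweight(q\runsto{z}S)<\infty$ while every state below $q$ dies on $z$. After shifting to the baseline $\rho$ (dominance is preserved by \cref{prop: dominance invariant to baseline shifts}), I consider $w_1\alpha_{S'',x'}w_3$ with $w_3$ built from the remaining $x$-iterations and $z$. Along $x'^{2\bigM}$ the run from $q$ survives and reaches finite weight via $z$. Under the cactus letter, however, the runs that could still finish are exactly the non-grounded (ghost) ones relative to $\rho$, because they lie above the descending minimal-slope cycle; the lower positive sets die by Step 1. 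Hence the word with the cactus letter has weight $\infty$ while the unfolded word $w_1x'^{2\bigM}w_3$ has finite weight, which is precisely \cref{def:abs:witness}.

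The remaining check is $S_1=\ghostTrans(s_0,w_1)$, which I obtain by taking $S''$ to be the shifted ghost-reachable set.
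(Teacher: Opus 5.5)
\textbf{There is a genuine gap, and it is in your Step 2.} That step aims at the wrong localisation. The SSRI condition gives the opposite information.

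Because $\vec{c_u}$ and $\vec{c_{ux}}$ induce the same weight ordering (the sets are far apart and each is shifted rigidly), the same state $s$ is maximal dominant in both. Since the seamless baseline run stays at weight $0$, the inequality $\pot(u)\le\pot(ux)$ then forces $k_{d,x}\ge 0$ for the set $V_d\ni s$. So the dominant state never lies in a negative set, and nothing forces it above $V_{j^*}$. Nothing in the definition prevents the maximal dominant state from sitting in $V_1$ together with the baseline ($k_{1,x}=0$, potential constantly $0$) while a far higher, non-dominant $V_2$ decreases. The contradiction you hope to derive in Step 2 therefore does not exist. Switching to $y$ changes nothing, since $\sign(k_{j,y})=\sign(k_{j,x})$.

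\textbf{Consequences for Step 3.} Without that localisation, Step 3 has no mechanism for excluding surviving runs. Its description is also inverted: non-grounded pairs have no transitions on $\alpha_{S'',x'}$ at all, whereas the shifted start of $\rho$ reads it with weight $0$.

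What works is not the kill lemma on the bottom sets $V_1,\dots,V_{\ell'}$, but its pull-back mechanism aimed at $V_d$:
\begin{itemize}
\item A finite run through $\alpha_{S'',x'}$ is grounded, so it passes through a state with a $0$-cycle on $(x')^{\bigM}$.
\item That cycle pulls back to a negative cycle on $x^{k\bigM}$, and by \cref{prop:abs:SRI run structure} the run's endpoint lies at or below that state's set.
\item An endpoint in $V_d$ contradicts $k_{d,x}\ge 0$.
\item An endpoint in a lower set dies on the dominating suffix, because dominance is a property of states, independent of their current weights.
\item Higher endpoints are excluded by the maximality of $s$.
\end{itemize}

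\textbf{Further missing steps.}
\begin{itemize}
\item The minimal-slope cycle may start at a ghost state of $u$, so $u$ cannot be shifted onto it directly. The paper first flattens $u$, via \cref{cor:flatttening maintains potential}, which may instead produce a type-0 witness.
\item It then checks, using the gaps, that $s$ remains maximal dominant after the flattened prefix, after it followed by $x$, and after it followed by $x^k$. Only then does it shift the flattened prefix along a run to the start of $\rho$.
\item Finally, $x'$ contains rebase letters, so $\alpha_{S'',x'}\in\simpCacRebCac$ rather than $\simpCac$. This is why the witness obtained is of type $1$.
\end{itemize}
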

\begin{proof}[Proof Sketch]
    Consider a Negative SSRI $w=uxyv$. The proof proceeds in four steps.
    
    Recall that a dominant state has a suffix it can read with finite weight, and that all states below it cannot. 
    In Step 1 we show that the maximal dominant states at $\vec{c_u}$ and $\vec{c_{ux}}$ are the same. 
    This is simple, since the ordering of the states induced by $\vec{c_u}$ and $\vec{c_{ux}}$ is maintained, due to the large gaps between sub-configurations, and the identical sub-configurations before and after reading $x$.

    In Step 2 we start our path towards a witness by flattening the prefix $u$ (as per \cref{sec:abs:unfolding}). We then ask what this does to $\pot$. Recall that flattening is repeated unfolding. We therefore call upon \cref{lem:abs:unfolding maintains potential}, and get that either there is a witness (in which case we are done), or that flattening does not affect the potential and the dominant states. 
    Note that after flattening, the ghost states are reachable.

    In Step 3 we consider the minimal-slope run on the $x$ infix (rather, on any $x^k$), and baseline shift on this run, thus inducing a stable cycle $x'$ as per \cref{lem:abs:reflexive to stable}. Note that since the SSRI is negative, this run has negative slope indeed.
    We also shift the (flattened) prefix $u$ so that it ``glues'' properly to the shifted cycle. We then keep track again of the maximal dominant state, and show (using our result that dominance is shift-invariant, described in \cref{sec:abs:potential and charge}) that it corresponds to the shift of the dominant state at $\vec{c_u}$. We also designate a suffix $z$ that shows it is dominant.

    Finally, in Step 4 we show that the resulting tuple of (prefix,stable cycle,suffix) forms a witness. Specifically, we consider $(u,\alpha_{S'',x'},z)$. By \cref{def:abs:witness} we want to show that $\minweight(s_0\runsto{u\alpha_{S'',x'}z})=\infty$, but that 
    $\minweight(s_0\runsto{ux'^{2\bigM}z})<\infty$.
    First, since in an SSRI $x$ does not change the reachable states, this also holds for $x'$. Thus, the dominant state $t$ in $\vec{c_u}$ is reachable also by $ux'$. By definition, $z$ can be read from $t$ (as it is the separating suffix showing that $t$ is dominant). We therefore have
    \[\minweight(s_0\runsto{ux'^{2\bigM}z})\le \minweight(s_0\runsto{ux'}t\runsto{z}S)<\infty\]
    Assume by way of contradiction that there is a finite weight run
    $\tau: s_0\runsto{u}s'_1\runsto{\alpha_{S'',x'}}s'_2\runsto{z}s'_3$.
    This means that $(s'_1,s'_2)$ is a grounded pairs. Using similar baseline shifts as in the proof of \cref{lem:abs:shift on negative run in SRI kills positive sets}, we show that this puts $s'_2$ in a set $V_j$ with negative $k_{j,x}$. 

    We then compare the weight of $s'_2$ to the weight of the maximal dominant $t$. Intuitively, we show the following:
    \begin{itemize}
        \item If $s'_2$ is below $t$, then it cannot read $z$ by the definition of dominance (which is a contradiction to the assumption).
        \item If $s'_2$ is in the same set $V_j$ as $t$, then since $k_{j,x}$ is negative, this would mean that the $\pot$ is \emph{decreasing} along $x$, in contradiction to the definition of SSRI.
        \item If $s'_2$ is way above $t$, then we can show it is maximal dominant, contradicting the maximal dominance of $t$.
    \end{itemize}
\end{proof}

Having handled Negative SSRI, we now turn our attention to Positive SSRI. 
\begin{lemma}
    \label{lem:abs:positive SSRI to Stable SSRI}
    Either there is a witness, or every Positive SSRI is a Stable SSRI.
\end{lemma}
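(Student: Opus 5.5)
We prove the contrapositive: given a Positive SSRI $w=uxyv$ that is \emph{not} a Stable SSRI, we construct a witness. Let $S_u=\ghostTrans(s_0,u)$. Since $x$ preserves the support of each $V_j$, the pair $(S_u,x)$ is a reflexive cycle, so failing to be stable means there is some $s\in S_u$ and $k$ with $\minweight(s\runsto{x^k}s)<0$. By the slope argument in the proof of \cref{lem:abs:reflexive to stable}, we may take a minimal-slope run $\rho:s\runsto{x^k}s$ with $k\le|S|$ and $\weight(\rho)<0$. Positivity gives $k_{j,x}\ge 0$ for every $j$, so every such negative cycle lives only among ghost states. We then apply \cref{lem:abs:shift on negative run in SRI kills positive sets} with $\ell'=\ell$. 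It tells us that $(S'',x')=\baseshift{(S_u,x^k)}{\rho}$ is a stable cycle, and that every shifted state $\baseshift{s'}{\rho}$ with $s'\in V_j$ (for any $j$) dies on $\alpha_{S'',x'}$. By the length requirement $|x|\le\frac{1}{\bigM}\simpL(\depth(x)+1)$, the letter $\alpha_{S'',x'}$ is in the effective alphabet $\simpAbsCac$.

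Next we assemble the witness, following Steps 2--4 of the proof of \cref{lem:abs:negative SSRI to type 1 witness}. First flatten $u$; by \cref{lem:abs:unfolding maintains potential}, either this already yields a witness, or the potential and the maximal dominant state $t\in V_{j_t}$ after $u$ are preserved, and the ghost states become reachable. Then shift the flattened prefix to $u'=\baseshift{u}{\rho}$ so that it glues onto $x'$. By shift-invariance of dominance (\cref{prop: dominance invariant to baseline shifts}), $t'=\baseshift{t}{\rho}$ is maximal dominant, with a separating suffix $z\in\absDomL$. Since $x$ maps $V_{j_t}$ into itself with a constant shift, $t'$ stays reachable after $u'x'^{2\bigM}$ at finite weight, and $z$ is readable from it. Hence $\minweight(s_0\runsto{u'x'^{2\bigM}z}S)<\infty$. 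In the other direction, any finite run through $u'\alpha_{S'',x'}z$ must pass through a grounded pair $(s'_1,s'_2)$ of $(S'',x')$. Since $s'_1$ is reachable after $u'$, it is the shift of a state in some $V_j$, and \cref{lem:abs:shift on negative run in SRI kills positive sets} says such states die on $\alpha_{S'',x'}$. Therefore $\minweight(s_0\runsto{u'\alpha_{S'',x'}z}S)=\infty$. Together with $S''=\ghostTrans(s_0,u')$, this gives a witness in the sense of \cref{def:abs:witness}.

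The main obstacle is justifying that the grounded source $s'_1$ really comes from some $V_j$, i.e., from a reachable state, rather than from a ghost state of $u'$. After flattening, ghost states are reachable at huge weight, so a run could in principle start at a ghost state, follow the negative cycle, and be grounded. To rule this out I would refine the choice of $z$ so that states far above $t$ cannot read it. This uses the maximality of $t$: if a high ghost run could read $z$, we would get a higher dominant state, contradicting maximality (the third case in \cref{lem:abs:negative SSRI to type 1 witness}). If instead that ghost run turns out to be genuinely dominant, then the potential is carried by a ghost state, and \cref{lem:abs:unfolding maintains potential} already produces a witness. So either way we get a witness.
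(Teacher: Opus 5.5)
\paragraph{Gap in the witness assembly.} Your first paragraph matches the paper's setup: a minimal-slope negative cycle $\rho$ from a ghost state $s$, followed by \cref{lem: shift on negative run in SRI kills positive sets} with $\ell'=\ell$. The witness you then assemble does not work, however, and the patch you sketch does not close the gap.

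The claim $\minweight(s_0\runsto{u'\alpha_{S'',x'}z}S)=\infty$ fails in general. After flattening, the shifted state $\baseshift{s}{\rho}$ is reachable after $u'$. It reads $\alpha_{S'',x'}$ back to itself with weight $0$, since it is the baseline of the stable cycle. So a finite run exists as soon as this state can read $z$, and nothing prevents that: $z$ only kills states \emph{below} $t'$.

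There is no mechanism to ``refine $z$'' so that it also kills higher states while keeping $t'$ alive. The inference ``a high ghost state reads $z$, hence there is a higher dominant state'' is also invalid. Since $t'$ is lower and reads $z$, no state above $t'$ can ever be dominant via $z$. Using $\alpha_{S'',x'}z$ as the separating suffix does not help either: $\alpha_{S'',x'}\in\simpCacRebCac$ is in general not in $\simpCacRebJump$, so $\alpha_{S'',x'}z\notin(\simpCacRebJump)^*\cdot\genCacRebCac$.

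\paragraph{The missing idea.} Your final ``fallback'' (a genuinely dominant ghost state, handled by the unfolding lemma) is not a sub-case. It always occurs, and it is the entire proof.

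Flatten $u$ to $u'$ with a large constant, so that every state of $\ghostTrans(s_0,u)\setminus B$ lies far above every state of $B$, and hence above $\pot(u)$. Now take the suffix $\jl_{s_b\to s}\cdot\alpha_{S'',x'}$, which is admissible in the sense of \cref{def:dominant state}. It is readable from $s$, via $\baseshift{\rho}{\rho}$. It is readable from no state of $B$, because the jump letter maps $B$ exactly onto $\baseshift{B}{\rho}$, all of whose states die on $\alpha_{S'',x'}$.

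Hence the lowest state of $\xconf(s_0,u')$ that reads this suffix is a ghost state. That state is dominant, so $\pot(u')>\pot(u)$, and \cref{cor:flatttening maintains potential} then yields a type-0 witness.

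Consequently, the branch in which $t\in V_{j_t}$ stays maximal dominant after flattening is vacuous. The baseline shift of the prefix, the suffix $z$, and the explicit witness assembly are all unnecessary.
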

\begin{proof}[Proof Sketch]
    Consider a Positive SSRI $uxyv$. If $(\ghostTrans(s_0,u),x)$ is a stable cycle, then $uxyv$ is a Stable SSRI and we are done. 
    Assume $(\ghostTrans(s_0,u),x)$ is not a stable cycle, then there is some negative-slope run $\rho:s\runsto{x^k}s$ for some $k$ and $s\in \ghostTrans(s_0,u)$. 
    However, since $x$ is part of a Positive SSRI, then no reachable state has such a cycle, i.e., $s\in \ghostTrans(s_0,u)\setminus\booltrans(s_0,u)$ is a proper ghost state.

    We are now within the conditions of \cref{lem:abs:shift on negative run in SRI kills positive sets}: $k_{j,x}\ge 0$ for all $j$ (since we are in a Positive SSRI), and there is a negative run. We apply it and get that $(S'',x')=\baseshift{(\ghostTrans(s_0,u),x^k)}{\rho}$ is a stable cycle, and that $\minweight(\booltrans(s_0,u)\runsto{\alpha_{S'',x'}}S)=\infty$ (since now all reachable states jump to $\infty$ upon reading $\alpha_{S'',x'}$)\footnote{The diligent reader may notice that we cannot actually read $\alpha_{S'',x'}$ after $u$, due to a mismatch in baselines. This is handled carefully in the full proof.}. 
    On the other hand, $\minweight(\baseshift{s}{\rho}\runsto{x'}S)<\infty$, since we shift on $\rho$. This is exactly the structure of a witness (\cref{def:abs:witness}), so we conclude the claim.

    We remark that the precise definition of witness (\cref{def:witness}) is somewhat more elaborate. Thus, the full proof has some additional steps that we omit here.
\end{proof}
We are now equipped to handle SSRI on all their flavours (where Stable SSRI are handled as SRI). 

\subsection{GSRI -- A Toolbox (\cref{sec:GSRI toolbox})}
We proceed to develop a single tool for GSRI. Recall that now our main focus is $\charge$, i.e., the distance of the minimal run from the baseline ($=0$). Also recall that in a GSRI we have $\charge(u)\ge \charge(ux)\ge \charge(uxy)$, implying that the minimal run increases. It is not hard to show that this implies $k_{1,x}\ge 0$.
Further recall that the gaps in GSRI have an additional $+\budH$ on top of their SSRI counterparts, where $\budH$ is yet to be defined, but should be thought of as a very large constant.

The next lemma is the central property of GSRI, and is the counterpart of \cref{lem:abs:positive SSRI to Stable SSRI}. Conceptually, it is what links the charge to the potential.
\begin{lemma}[From GSRI to Stable GSRI or High Potential]
\label{lem:abs:GSRI to stable or high potential} 
    Consider a GSRI $w=uxyv$, then it is a Stable GSRI, or there is a word $w'$ such that $\pot(w') \ge \budH$.
\end{lemma}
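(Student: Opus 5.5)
The proof follows the template of \cref{lem:abs:positive SSRI to Stable SSRI}, with the role of a witness now played by a word of high potential. Throughout, fix a GSRI $w=uxyv$ and suppose $(\ghostTrans(s_0,u),x)$ is not a stable cycle; we must produce $w'$ with $\pot(w')\ge\budH$. The first step is to locate the lowest sub-configuration. Since $\charge(u)\ge\charge(ux)$ and the minimal run lies in $V_1$, the gaps of size more than $\budH$ force $k_{1,x}\ge 0$. More generally, let $\ell'$ be maximal such that $k_{j,x}\ge 0$ for all $j\le \ell'$, so $\ell'\ge 1$. As $(\ghostTrans(s_0,u),x)$ is not stable, by \cref{lem:abs:reflexive to stable} there is a minimal-slope cycle $\rho:s\runsto{x^k}s$ with $k\le|S|$, $s\in\ghostTrans(s_0,u)$ and $\weight(\rho)<0$. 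The length requirement $|x|\le\frac{1}{\bigM}\generL(\depth(x)+1)$ guarantees that $x^k$ and its shift yield a letter of $\genAbsCac$.

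Next, apply \cref{lem:abs:shift on negative run in SRI kills positive sets}. It gives that $(S'',x')=\baseshift{(\ghostTrans(s_0,u),x^k)}{\rho}$ is stable and that every shifted state of $V_1,\ldots,V_{\ell'}$ jumps to $\infty$ on $\alpha_{S'',x'}$. The run along $\rho$ is itself grounded, since it is a zero-weight reflexive run after the shift. Hence, after reading $\baseshift{u}{\rho}\alpha_{S'',x'}$, some state survives, and every surviving state originates in $\rho$'s set or in a set $V_j$ with $j>\ell'$. We then split into cases according to where $\rho$ sits.

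\textbf{Case 1: $s$ is a genuine reachable state.} Then $s\in V_j$ with $k_{j,x}<0$, so $j>\ell'\ge 1$ and $s$ lies at least $G\ge\budH$ above the minimal run of $ux$. Take the word $\baseshift{u}{\rho}x'$, or equivalently compare $ux$ with its extension by $\alpha_{S'',x'}$. The state of $s$ admits a suffix in $\absDomL$ (namely $\alpha_{S'',x'}$ followed by a continuation) on which every state of $V_1$ dies. Hence some state at height at least $G$ above the minimum is dominant. Since dominance is invariant under baseline shift (\cref{prop: dominance invariant to baseline shifts}), $\pot$ measured against a baseline placed at the minimal run is at least roughly $G-\maxeff{xy}\ge\budH$.

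\textbf{Case 2: $s$ is a proper ghost.} First flatten $u$, so that ghosts become reachable and can be raised arbitrarily high. By \cref{lem:abs:unfolding maintains potential}, this either preserves potential or yields a witness. A witness is also acceptable, since flattening it gives arbitrarily high potential; this should be made precise via unfolding with large $F$. Once ghosts are reachable, we are back in the situation of Case 1, with the lifted ghost set lying above $V_1$ by at least $\budH$.

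Alternatively, we can pass from charge to potential through \cref{lem:abs:charge decrease to high potential}. Appending $\alpha_{S'',x'}$ kills all low sets, so the charge jumps by more than $\budH+\generLfuncMaxW{d}$, and that lemma directly yields $w'$ with $\pot(w')>\budH$. I would try this route first, since it is cleanest. It requires checking that $\alpha_{S'',x'}$ can legally follow the appropriately shifted prefix and that the surviving runs are at least $G$ higher.

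The main obstacle is the baseline mismatch. The letter $\alpha_{S'',x'}$ lives in the shifted baseline of $\rho$, while the charge is measured relative to the original baseline. We must therefore shift the whole prefix $u$, note that charge is not shift-invariant, and recompute the drop in terms of relative gaps, which are invariant. This relative-gap bookkeeping, together with handling ghosts in Case 2, is where $\budH$ must be chosen larger than the charge-decrease threshold plus $\generLfuncMaxW{\depth(x)+1}$.
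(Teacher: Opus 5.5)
Your skeleton is the paper's: take the negative minimal-slope cycle $\rho:s\runsto{x^k}s$, use \cref{lem:abs:shift on negative run in SRI kills positive sets} (only $V_1$ is needed) to get $\alpha_{S'',x'}\in\genCacRebCac$ on which $V_1$ dies while the shifted $s$ survives, conclude that a state outside $V_1$, hence more than $\budH$ above it, is dominant, and shift the baseline onto $V_1$. There is a genuine gap, however. The decisive step is the one you leave open as ``the main obstacle'', and the fixes you propose do not work.

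The transitions of $\alpha_{S'',x'}$ are defined only on states whose baseline component is that of $s$. So in general nothing reads $\alpha_{S'',x'}$ from $\vec{c_u}$ or $\vec{c_{ux}}$, and ``$\alpha_{S'',x'}$ followed by a continuation'' is not a separating suffix. A continuation after the $\genCacRebCac$ letter is not even allowed in $\absDomL$. Your $\baseshift{u}{\rho}$ is undefined, since $\rho$ is a run on $x^k$, not on $u$. When $s$ is a ghost there is no run on $u$ to $s$ at all.

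The route through \cref{lem:abs:charge decrease to high potential} also fails. That lemma needs $\sigma\in\genCac$, but $\alpha_{S'',x'}$ is built over rebase letters. The prefix that can read it must contain a jump, so $\charge$ is undefined, or be a shift of $u$, which leaves $(\genCac)^*$ unless $u$ is first flattened. Moreover, its threshold $\budH+2\generLfuncMaxW{d}$ need not lie below the GSRI gap. Finally, $\budH$ is a fixed constant, not something you may choose.

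The missing idea is the jump letter. The paper first flattens to $u'=\flatten(u\wr 2\budH\cdot 4\bigM\cdot\maxeff{w})\in(\Gamma_0^0)^*$. This makes all of $\ghostTrans(s_0,u)$ reachable, keeps the weights on $B$, and puts the former ghosts more than $\budH$ above $V_1$; the other $V_j$ are above $V_1$ by the GSRI gap. It then uses the suffix $\jl_{r_b\to s}\,\alpha_{S'',x'}\in(\simpCacRebJump)^*\cdot\genCacRebCac$. This suffix kills $V_1$ and is readable from $s$ along $\baseshift{\rho}{\rho}$. Hence the lowest surviving state is dominant in $\vec{c_{u'}}$ and lies more than $\budH$ above $V_1$. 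Shifting $u'$ onto a seamless run to the top of $V_1$ then finishes, by \cref{prop: dominance invariant to baseline shifts}.

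This treats reachable and ghost $s$ uniformly, so no case split is needed. It also yields $w'\in(\Gamma_0^0)^*$, which is the form used later, since Step 3 minimises over $(\simpCac)^*$. Your Case 1 word, obtained by shifting the unflattened $u$, would lie over $\genCacReb$ instead.

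Your Case 2 appeal to \cref{lem:abs:unfolding maintains potential} is both unnecessary and inapplicable. It is unnecessary because you are building a new high-potential word, not comparing potentials. It is inapplicable because that lemma is proved only for words over $\simpCac$, while $u\in(\genCac)^*$. The ``witness'' alternative it introduces is not part of the statement, and ``witness implies high potential'' is left unproved.
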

\begin{proof}[Proof Sketch]
    If $(\ghostTrans(s_0,u),x)$ is a stable cycle then $uxyv$ is a Stable GSRI and we are done.
    
    If $(\ghostTrans(s_0,u),x)$ is not a stable cycle, then there is some minimal negative-slope run $\rho$ on some $x^k$. Since $k_{1,x}\ge 0$, this cycle cannot stem from a $V_1$ state, so it stems from a higher (possibly ghost) state $s$. 
    Using $V_1$ as the ``low'' sub-configurations, we can again apply \cref{lem:abs:shift on negative run in SRI kills positive sets} to obtain a cactus letter $\alpha_{S'',x'}$ that has no finite-weight transition from $V_1$.
    By the gap property of GSRI, $s$ is more than $\budH$ higher than $V_1$.

    Intuitively, this can be used to show that the maximal dominant state $s'$ after reading $u$ is not in $V_1$, and is therefore at least $\budH$ above it (the precise details are not so straightforward, but we omit them here).

    We are now almost where we want: a very high dominant state. In order to make $\pot$ high, we need to make sure this dominant state is also high above the baseline. 
    Fortunately, this can be done by baseline-shifting $u$ to a word $u'$ so that $V_1$ is on the baseline, rendering $\pot(u')>\budH$.
    We depict the proof only in~\cref{fig:pos GSRI to high potential}, since the depiction uses elements that are glossed over in this sketch.
\end{proof}

\section{Quantitative Zooming (\cref{sec:existence of SSRI,sec:existence of GSRI})}
\label{sec:abs:quantitative zooming}
% Introduce zooming as a method to find SRIs effectively.
% Explain how constants are chosen and how they interact.
We now come to the second challenge of showing that under certain conditions, an SSRI or GSRI exists. The technical contribution here is the \emph{quantitative zooming technique}. We start with a rough intuition.
Consider a long word $w$, and suppose (unreasonably) there is some infix $w_2$ of $w$ where every state in $S$ is reachable, and the gap between each two states is very large. We refer to this setting as $|S|$ \emph{independent runs}. Moreover, suppose $w_2$ is very long with this property. 
By the pigeonhole principle, at some point (i.e., after $2|S|!+1$ steps) the reachable states repeat three times in the same order. The infixes $x$ and $y$ between these three repetitions can almost serve as an SRI, except we also need to track for every run whether it is increasing, decreasing, or $0$, in order to have $\sign(k_{j,x})=\sign(k_{j,y})$. Adding this information, we can find an SRI in this simplest of cases. Note that this does not give any guarantees about $\pot$ or $\charge$. Also note that in this case, each $V_j$ is just a single state.

We now ask what happens when there are only $|S|-1$ states that are far away from each other, and another reachable state $s$. In this case there are two options: if $s$ is always close up to a constant to a single independent run, we say that $s$ is \emph{covered}. We can then track the exact distance from $s$ to its covering run, and using the pigeonhole principle we can again find an SRI, this time with one $V_j$ capturing $s$ and its covering run.

The second option is that $s$ gets far away from every independent run (depicted in \cref{fig:abs:zooming}). In this case, we observe that in order to get \emph{really} far away, $s$ must be \emph{somewhat} far away for a long while. We can then ``zoom in'' on this part of the run, and we see that now $s$ essentially generates its own independent run. This brings us back to the $|S|$ independent runs case. Thus, we have an inductive scheme. 
Note that this still does not track the $\pot$ and $\charge$, which we need to add somehow. However, this is a small issue compared to the following.

\begin{figure}
  \begin{center}
    \includegraphics[width=1\textwidth]{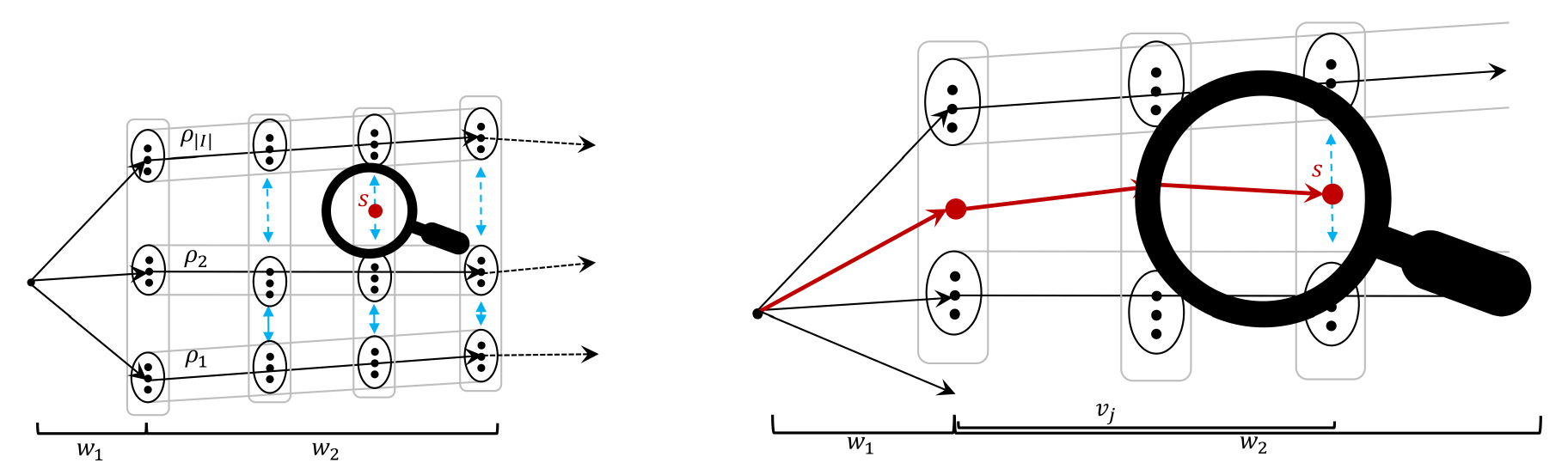}
  \end{center}
  \caption{The state $s$ is not near any of the independent runs. We can then zoom in on $s$, and construct a new independent run that reaches it.}
  \label{fig:abs:zooming}
\end{figure}

\subparagraph*{A Major Quantitative Problem.}
We are now facing a significant problem: when zooming in we need to modify the definition of ``far away'', to a smaller gap. This needs to be done in a way such that at the end of the induction we indeed have sufficient gaps for an SRI.
It is tempting to now say that we can define the gap arbitrarily large, and just take long enough words. However, to do so we need to allow long words in our cactus alphabet (inductively). This, in turn, increases the \emph{weight} of nested cactus letters (longer inner word $\implies$ higher weight). But this defeats the purpose: if the weight of letters is higher, we need the gap to be even larger, creating a cyclic dependence. 

One way to overcome this is to lose the effectiveness. Indeed, the approach taken in~\cite{almagor2026determinization} assumes infinitely many words of increasing lengths and gaps. Then all these problems go away. However, this renders the proof non-constructive.

Here we carefully tailor the length function together with: the gap sizes, cover sizes, maximal weight, and the numbers required to apply the pigeonhole principle in such a way that all the quantitative arguments fall into place.
Crucially, all of these functions are now parametrised not just by the depth of the word, but also by the number of \emph{independent runs}, to facilitate the induction.

\subsection{Recursive Functions (\cref{sec:effective cactus})}
\label{sec:abs:effective functions}
In order to reason about the various properties required for zooming, we introduce two families of recursive functions. We bring here their intuitive description and dependencies, without the explicit definitions. The parameters $d$ and $i$ represent the \emph{depth} of a word and the number of \emph{independent runs}, respectively. The inductions are always decreasing with $d$ (smaller depth is simpler) and increasing with $i$ (more independent runs is simpler).

$\bullet\ \simpLfuncLength{d}{i}$: the maximal length we allow for a word $w$ in a cactus letter $\alpha_{S',w}$ of depth $d$ with $i$ independent runs. Depends on: $\simpLfuncLength{d}{i+1}$, $\simpLfuncTypes{d}{i}$, and $\simpLfuncAmp{d}{i}$

$\bullet\ \simpLfuncCover{d}{i}$:  the maximal distance allowed between a run and its closest independent run (i.e., the \emph{cover}). Depends on: $\simpLfuncMaxW{d-1}$, $\simpLfuncLength{d}{i+1}$ and $\simpLfuncCover{d}{i+1}$.

$\bullet\ \simpLfuncMaxW{d}$: the maximal weight of a letter of depth $d$ (does not depend on $i$). Depends on: $\simpLfuncMaxW{d-1}$ and $\simpLfuncLength{d}{1}$ (note that $1$ is the ``worst'' $i$).

$\bullet\ \simpLfuncAmp{d}{i}$: A lower bound on what is considered ``high values'' of $\pot$, dubbed \emph{amplitude}. Depends on: $\simpLfuncMaxW{d-1}$, $\simpLfuncTypes{d}{i}$, and $\simpLfuncLength{d}{i+1}$.

$\bullet\ \simpLfuncTypes{d}{i}$: An upper bound on the number of configurations representable with $i$ runs, such that all other runs are within $\simpLfuncCover{d}{i}$ from an independent run. These are the ``pigeonholes''.
Depends on $\simpLfuncCover{d}{i}$. 

We can now define some well-awaited terms: the function $\simpL(d)=\simpLfuncLength{d}{1}$ for all $d$, and the constant $\budH=\simpLfuncAmp{|S|}{0}$.

The functions above pertain to $\pot$. A similar set of functions (prefixed by $\texttt{G}$ for ``general'') pertains to $\charge$. Their definitions, however, depend also on $\budH$. In particular, we can define $\generL(d)=\generLfuncLength{d}{1}$.

\subsection{From Increasing $\pot$ to SSRI (\cref{sec:existence of SSRI})}
\label{sec:abs:inc pot to SSRI}
Consider a word $w=w_1w_2$ with $\pot(w_1)\le \pot(w_2)$, and such that along $w_2$ we identify a set $I$ of runs that maintain some gaps between them. Denote $i=|I|$ and $d=\depth(w_2)$.
Assume that $w_2$ is fairly long, specifically, $|w_2|=\beta\cdot  \simpLfuncLength{d+1}{i}$ for some small fixed $0<\beta<1$. We now distinguish between two cases: the case where $\pot(w_1w_2)>\pot(w_1)+\simpLfuncAmp{d+1}{i}$ (\emph{high-amplitude}), and when $\pot(w_1z)$ remains within $\pm \simpLfuncAmp{d+1}{i}$ for every prefix $z$ of $w_2$ (\emph{bounded amplitude}). These do not cover all possible cases, but they suffice.

\begin{figure}
  \begin{center}
    \includegraphics[width=0.8\textwidth]{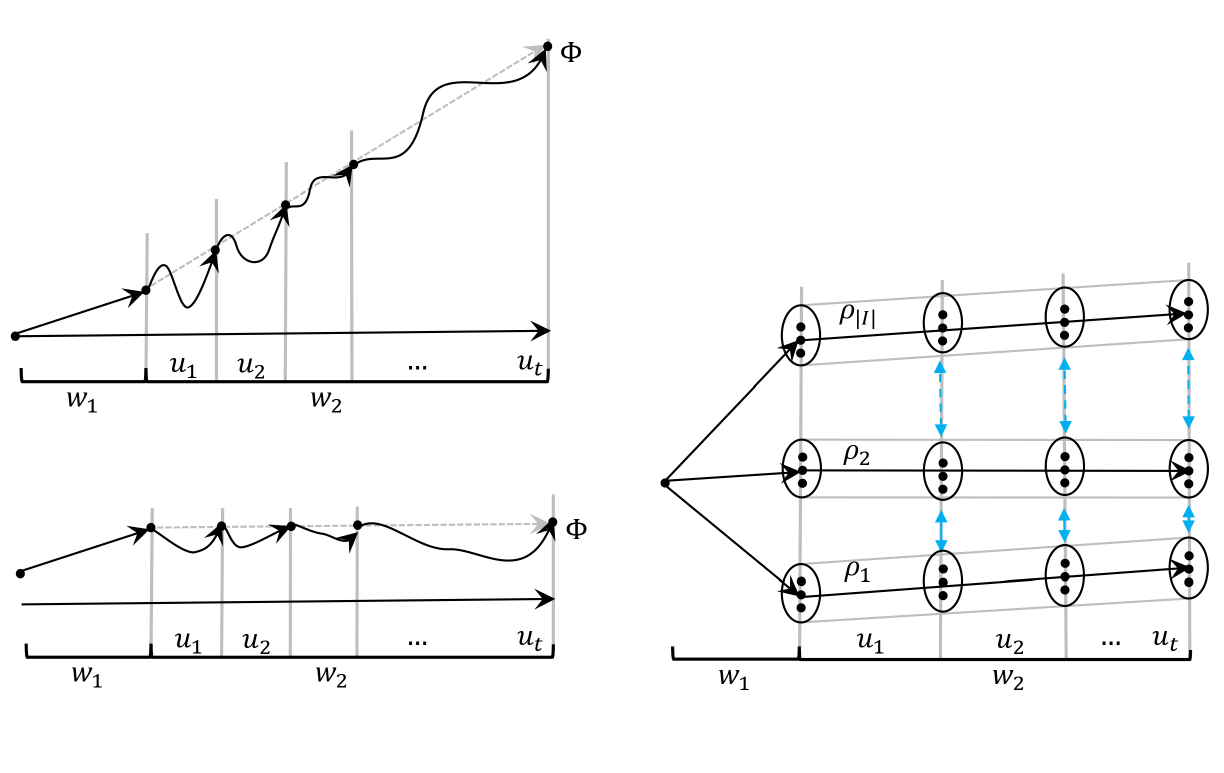}
  \end{center}
  \caption{In a $\pot$-increasing decomposition (top left), the potential gradually increases, and within each segment, it never surpasses the potential at its end. In a $\pot$-bounded decomposition (bottom left), the potential keeps its value at the end of the segments, and certain points within the segments never surpass it. In a cover decomposition (right), the states within the sub-configurations at the end of each segment are close to one of the independent runs.}
  \label{fig:abs:decompositions}
\end{figure}

We then proceed to define two types of \emph{decompositions} (see \cref{sec:existence of SSRI}): in the high-amplitude case, we divide the word to many infixes, such that $\pot$ strictly increases between each two infixes, and never exceeds $\pot$ of the next segment. In addition, we require that these segments are not too short (at least $\simpLfuncLength{d}{1}$), and that there are many of them -- so many that $\simpLfuncTypes{d}{i}$ can serve as pigeonholes for a certain argument (see \cref{fig:abs:decompositions}).

In the bounded-amplitude case we also show a decomposition, still requiring long segments, and that the potential is equal after each segment. We remark that showing these decompositions is technically challenging (see \cref{prop: pot high amplitude to pot increasing decomposition,prop: pot bounded amp to bounded decomposition}) and accounts for much in the recursive definitions.

Next, we call a decomposition a \emph{cover}, if after each segment, all states are within $\simpLfuncCover{d+1}{i}$ from some independent run. 
Crucially, the number of segments in the decomposition is chosen so that if it is also a cover, then we can apply the pigeonhole principle and find $3$ repetitions of a configurations, using which we can obtain an SSRI. 

With these definitions in place, we can apply the zooming technique described above: start with a single run in $I$, and decompose the word. If its decomposition is a cover, we find an SSRI and we are done. Otherwise, there is some state that goes outside the cover. Zooming in on it and the run that leads to it, we can induce a new independent run. An important point is that the carefully-chosen lengths, gaps and amplitudes actually yield all the correct assumptions for $i+1$, and we can proceed. 

A caveat to the above is that while $\depth(w_2)=d$, it may be the case that the infix $u$ we zoom on has $\depth(u)<d$. This is a problem since $u$ is now too \emph{long} to serve as an effective cactus letter. To this end, we tailor the functions even further so that in such cases we can use induction with $d-1$ rather than $i+1$.

Proceeding with the induction up to depth $|S|$, we establish the following (see \cref{cor: potential increasing to SSRI} for the precise formulation).
\begin{lemma}
    \label{lem:pot increase to SSRI}
    Consider a word $w_1w_2$ with $\pot(w_1)\le \pot(w_1w_2)$ and $|w_2|=\beta\cdot \simpLfuncLength{|S|}{1}$, then $w_1w_2=uxyv$ where $uxyv$ is an SSRI.
\end{lemma}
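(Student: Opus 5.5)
We prove the statement by a double induction that mirrors the recursive definitions of \cref{sec:abs:effective functions}: an outer induction on the depth $d$, decreasing, and an inner induction on the number $i$ of independent runs, increasing from $1$ up to at most $|S|$. We strengthen the claim to the following form. Let $w_1w_2$ satisfy $\pot(w_1)\le\pot(w_1w_2)$ and $\depth(w_2)\le d$, let $I$ be a set of $i$ runs along $w_2$ that are pairwise separated by more than the gap required at level $(d,i)$, and suppose $|w_2|\ge\beta\cdot\simpLfuncLength{d+1}{i}$. Then $w_1w_2$ contains an SSRI $uxyv$ whose separated sets $V_j$ are formed around runs of $I$. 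The statement is the instance $d=|S|$, $i=1$, where $I$ consists of the baseline run alone and the separation requirement is vacuous.

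For a fixed $(d,i)$ we first split according to amplitude. In the high-amplitude case, where $\pot(w_1w_2)>\pot(w_1)+\simpLfuncAmp{d+1}{i}$, we use \cref{prop: pot high amplitude to pot increasing decomposition} to obtain a $\pot$-increasing decomposition. Its segments have length at least $\simpLfuncLength{d}{1}$, and there are more than $2\simpLfuncTypes{d+1}{i}$ of them. Otherwise the potential might dip and then recover, so we pass to the last maximal-potential prefix and apply \cref{prop: pot bounded amp to bounded decomposition}, obtaining a $\pot$-bounded decomposition with equally many segments. The bounded growth of $\pot$, with per-letter change at most $\simpLfuncMaxW{d}$, is what makes these decompositions exist once the amplitude thresholds are chosen as in the definitions. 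In both cases, taking consecutive segment endpoints $u$, $ux$, $uxy$ gives $\pot(u)\le\pot(ux)\le\pot(uxy)$, which is the monotonicity an SSRI requires.

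Next we ask whether the decomposition is a cover, meaning that at every segment endpoint each reachable and ghost state lies within $\simpLfuncCover{d+1}{i}$ of some run in $I$. Suppose it is. We record, at each endpoint, the type of the configuration: the grouping of states around the independent runs, the exact offsets inside each group (these are bounded by the cover), and the sign of each group's shift. There are at most $\simpLfuncTypes{d+1}{i}$ types, so by pigeonhole three endpoints share a type, and these determine $x$ and $y$. Inside each group the sub-configuration is preserved up to a translation by $k_{j,x}$, and the signs agree. The separation between groups is at least the gap minus twice the cover, and this must exceed $4\bigM\maxeff{xy}$. That requirement is exactly what the choice of gap function has to guarantee. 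For the length condition $|x|\le\frac1\bigM\simpL(\depth(x)+1)$, we choose the pigeonhole triple as close together as possible, so that $|x|$ is bounded by roughly $\simpLfuncTypes{}{}$ segments of bounded length, and then check this against the definition of $\simpLfuncLength{}{}$.

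Suppose instead the decomposition is not a cover. Then some state $s$ at some endpoint is more than $\simpLfuncCover{d+1}{i}$ away from every run of $I$. Because each step changes weights by at most $\simpLfuncMaxW{d}$, the run leading to $s$ stays at a smaller but still large distance from $I$ for a window of length at least $\beta\cdot\simpLfuncLength{d+1}{i+1}$. This is the role of the dependence of $\simpLfuncCover{}{}$ on $\simpLfuncMaxW{}{}$ and $\simpLfuncLength{}{}$. We zoom into that window $w'_2$, adding the run to $s$ to $I$, and restore the potential hypothesis on the prefix by choosing the window's start at a prefix of minimal potential inside it, using the decomposition. This lets us apply induction at $(d,i+1)$. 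If $\depth(w'_2)<d$, we apply the hypothesis at depth $d-1$ instead; this is legitimate because the window is long compared with $\simpLfuncLength{d}{1}$. The induction terminates because $i\le|S|$: with $|S|$ independent runs every state is trivially covered.

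The main obstacle is the quantitative bookkeeping in the zooming step. We have to show that the reduced gap, the window length, and the re-established potential monotonicity inside the window all meet the hypotheses at level $i+1$ (or $d-1$), and simultaneously that in the cover case the pigeonhole $x$ is short enough. Both constraints have to be met by the same recursive definitions, and we expect most of the effort to go into verifying them against those definitions.
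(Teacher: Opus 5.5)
Your architecture matches the paper: decreasing induction on $d$, increasing induction on $i$, the split into $\pot$-increasing and $\pot$-bounded decompositions, and cover versus zooming. The cover branch, however, does not go through as written.

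\textbf{The sign condition needs Ramsey, not pigeonhole.} The condition $\sign(k_{j,x})=\sign(k_{j,y})$ of \cref{def:separated repeating infix} cannot come from pigeonhole over types attached to single endpoints. The sign of a group's shift is a property of a \emph{pair} of endpoints. Three endpoints with identical near-configurations can still have some independent run rising on $x$ and falling on $y$. The missing idea is the finite Ramsey argument. The paper colours every pair $\{j,j'\}$ of segment indices by $\difftype_b(w_1v_{j-1},u_{j,j'})$. This records the near-configurations at both ends together with the sign of every run of $I$ over the infix. A monochromatic triangle $j<j'<j''$ then yields $x=u_{j,j'}$ and $y=u_{j',j''}$ with identical group shapes at $u,ux,uxy$ and identical sign vectors. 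This is why $\simpLfuncTypes{d+1}{i}=3^i\cdot(2\simpLfuncCover{d+1}{i}+2)^{2|S|i}$ counts two near-configurations and a sign vector per run. It is also why the decompositions have $t=\ramsey(\simpLfuncTypes{d+1}{i},3)$ segments; your roughly $2\simpLfuncTypes{d+1}{i}$ segments are far too few.

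\textbf{The depth drop is in the wrong branch.} Your hypothesis only bounds $\depth(w_2)\le d$ from above, so a zoomed window of smaller depth is handled at $(d,i+1)$ directly. Depth really matters in the cover branch. If the extracted $x$ has $\depth(x)<d$, the requirement $|x|\le\frac{1}{\bigM}\simpL(\depth(x)+1)\le\frac{1}{\bigM}\simpLfuncLength{d}{1}$ generally fails. This is because $x$ contains a whole segment, and segments are calibrated at length about $\simpLfuncLength{d}{1}$ (at least $\simpLfuncLength{d}{1}$ in the increasing decomposition). Choosing the triple close together does not help, since segments are only bounded from below. The paper instead takes the suffix of $x$ of length $\frac{1}{\extraSize}\simpLfuncLength{d}{1}$ and recurses at $(d-1,1)$ with a single run (\cref{prop:pot cover decomp induces SRI or lower depth}). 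This is precisely what the lower bounds on segment lengths are for.

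\textbf{Your length invariant points the wrong way.} Your invariant $|w_2|\ge\beta\cdot\simpLfuncLength{d+1}{i}$ should be an upper bound. The bound $|w_2|\le\frac{1}{\extraSize}\simpLfuncLength{d+1}{i}$ of \cref{def:d i fit}, with equality in the lemma, is what gives $\maxeff{xy}\le\simpLfuncMaxW{d}\simpLfuncLength{d+1}{i}$. That quantity is dominated by the gap $\frac12\simpLfuncCover{d+1}{i-1}$. The same bound gives $|x|\le\frac{1}{\bigM}\simpLfuncLength{d+1}{1}$ when $\depth(x)=d$.

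\textbf{The zoomed window.} The window should have length exactly $\frac{1}{\extraSize}\simpLfuncLength{d+1}{i+1}$ and end at the far endpoint $w_1v_j$. For the increasing decomposition, every prefix of $v_j$ already has potential at most $\pot(w_1v_j)$. Moving the window's start to a point of minimal potential only shortens it below what the induction needs.
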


\subsection{From Decreasing $\charge$ to GSRI (\cref{sec:existence of GSRI})}
\label{sec:abs:decreasing charge to GSRI}
Using similar techniques to \cref{sec:abs:inc pot to SSRI}, we define decompositions and cover of $\charge$-decreasing words. 
Here, however, we cannot assume that $\charge$ has bounded growth, so \cref{cor: charge decrease to GSRI or high potential} looms over us. 

We condition on $\charge$ having bounded growth, and choose a constant depending on $\budH$ for this bound. This allows us on the one hand to obtain a GSRI, by relying on the definition of the General recursive functions, which account for this $\budH$ factor, and on the other hand if the conditioning fails, we obtain a word with high enough $\pot$ amplitude. We show the following.
\begin{lemma}
    \label{lem:charge decrease to GSRI}
    Either there is a word $w\in \simpAbsCac^*$ with $\pot(w)> \budH$, or for every
    word $w_1w_2\in \genAbsCac^*$ with $\charge(w_1)\ge \charge(w_1w_2)$ and $|w_2|=\beta\cdot \generLfuncLength{|S|}{1}$, we have $w_1w_2=uxyv$ where $uxyv$ is a GSRI.
\end{lemma}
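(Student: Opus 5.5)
We argue by a dichotomy on the growth of $\charge$. Fix the bound $C_d=\budH+\generLfuncMaxW{d}$ for letters of depth $d$; this is the threshold dictated by \cref{lem:abs:charge decrease to high potential} with $P=\budH$. Suppose some word $u\sigma$ over the alphabet has a single-letter drop $\charge(u)-\charge(u\sigma)>C_{\depth(\sigma)}$. Then \cref{lem:abs:charge decrease to high potential} yields a word $w$ with $\pot(w)>\budH$. We must also check that $w$ can be taken over $\simpAbsCac$; for this we rely on $\simpL\le\generL$ and on the construction in that lemma, which only uses suffixes from $\absDomL$ and baseline shifts. Otherwise, $\charge$ has bounded growth: every letter of depth $d$ changes $\charge$ by at most $C_d$. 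For the rest of the plan we assume this second case and show that every long $\charge$-decreasing word decomposes as a GSRI.

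The core is a quantitative zooming induction that mirrors \cref{lem:pot increase to SSRI}, with $\pot$ replaced by $-\charge$. We prove, by induction on $(d,i)$, the following statement. Decreasing in $d$ and increasing in $i$ (down to $i=|S|$), consider a word $w_1w_2$ with $\depth(w_2)\le d$, $\charge(w_1)\ge\charge(w_1w_2)$, $|w_2|=\beta\cdot\generLfuncLength{d+1}{i}$, and a set $I$ of $i$ runs along $w_2$ that are pairwise separated by the gap attached to $(d,i)$. Then $w_1w_2$ contains a GSRI.

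Given such a word, we split into two cases. In the high-amplitude case $\charge$ drops by more than $\generLfuncAmp{d+1}{i}$. Using bounded growth, we cut $w_2$ into many segments, each of length at least $\generLfuncLength{d}{1}$, such that $\charge$ is monotone non-increasing at the segment endpoints and no interior point lies below the next endpoint. This is the analogue of \cref{prop: pot high amplitude to pot increasing decomposition}, and it is exactly where the additive $C_d$, and hence $\budH$, enters $\generLfuncAmp{}{}$. In the bounded-amplitude case we instead build a decomposition in which $\charge$ is equal at all endpoints. If the decomposition is a cover at every endpoint, meaning each reachable state lies within $\generLfuncCover{d+1}{i}$ of some run in $I$, we record for each endpoint its type: the cover offsets, the support, the ghost set, and the sign of each sub-configuration's shift. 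The number of segments exceeds $2\generLfuncTypes{d+1}{i}$, so the pigeonhole principle gives three endpoints with the same type. The infixes between them are $x$ and $y$, with equal supports and sub-configurations that are shifted rigidly with matching signs. The separation of $I$ supplies the gap $4\bigM\maxeff{xy}+\budH$, and the monotonicity of $\charge$ at endpoints gives $\charge(u)\ge\charge(ux)\ge\charge(uxy)$. We can also take $x$ to be a single segment, so $|x|\le\frac1\bigM\generL(\depth(x)+1)$. If the decomposition is not a cover, some state escapes every run in $I$ by more than the cover bound. Because growth is bounded per letter, the escaping run stays moderately far from $I$ over a long stretch. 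We zoom into that stretch, add the run as a new independent run to form $I\cup\{\rho\}$, and apply the induction for $i+1$. If the zoomed infix has smaller depth, we apply the induction for $d-1$ instead, which is the reason the length function is tailored by depth.

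We expect the main obstacle to be the quantitative bookkeeping in the zooming step. The gap, cover, amplitude and length parameters at $(d,i+1)$ or $(d-1,\cdot)$ must fit inside what is available at $(d,i)$, the $\budH$ term must survive each zoom, and the charge-monotone decomposition must be preserved on the zoomed infix, where $\charge$ is measured against a sub-baseline. To handle the last point, we would first baseline-shift the zoomed prefix so that the lowest run of $I$ becomes the baseline, using the shift-invariance results. This makes $\charge$ on the infix a relative quantity that still satisfies bounded growth. Taking $d=|S|$ and $i=1$ then gives the lemma.
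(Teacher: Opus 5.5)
Your architecture matches the paper's: the dichotomy via the charge-drop lemma, then high- and bounded-amplitude decompositions, cover $\Rightarrow$ GSRI, and non-cover $\Rightarrow$ zoom to $i+1$. (That lemma already yields $w\in(\Gamma_0^0)^*\subseteq\simpCac^*$, so no comparison of $\simpL$ and $\generL$ is needed.) However, the cover case has two genuine gaps. \textbf{First, pigeonhole on per-endpoint types cannot give the GSRI condition $\sign(k_{j,x})=\sign(k_{j,y})$.} Equal supports and cover offsets at three endpoints $a<b<c$ do give rigid shifts, with $k_{j,x}=\weight(\rho_j[x])$. But that sign belongs to the \emph{pair} $(a,b)$, not to a single endpoint, so it cannot be part of a per-endpoint type. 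Apart from $V_1$, whose sign is controlled by the monotonicity of $\charge$, nothing prevents an independent run from rising on $x$ and falling on $y$. The paper instead colours each pair $\{j,j'\}$ of endpoints by $\difftype_b$ of the infix $u_j\cdots u_{j'-1}$. This records, for every $\rho\in I$, both end-offsets and $\sign(\weight(\rho[u_{j,j'}]))$. Ramsey's theorem then gives a monochromatic triangle. This is why the decompositions need $\ramsey(\generLfuncTypes{d+1}{i},3)$ segments rather than $2\generLfuncTypes{d+1}{i}+1$, and why $\ramsey$ appears in $\generLfuncAmp{d+1}{i}$ and $\generLfuncLength{d+1}{i}$.

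\textbf{Second, the bound $|x|\le\frac1\bigM\generL(\depth(x)+1)$ refers to the actual depth of $x$, and your fix does not work.} The three endpoints need not be consecutive, so you cannot choose $x$ to be a single segment. Even a single segment would not help: your high-amplitude segments have length at least $\generLfuncLength{d}{1}$, which exceeds $\frac1\bigM\generL(\depth(x)+1)$ whenever $\depth(x)<d$. When $\depth(x)=d$ the bound is immediate from $|x|\le|w_2|\le\frac1\bigM\generLfuncLength{d+1}{1}$. The case $\depth(x)<d$ is exactly where the paper descends to $(d-1,1)$: it restarts on the suffix of $x$ of length $\frac{1}{\extraSize}\generLfuncLength{d}{1}$. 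That suffix is $\charge$-decreasing by the interior condition of the decreasing decomposition. In the bounded case it is $\charge$-decreasing by Item~5 of the $\charge$-bounded decomposition: segment lengths are multiples of $E$, and at interior $E$-aligned points the charge is at least that of the next endpoint. Your equal-endpoints decomposition lacks this condition. You attach the descent to the zoom instead, where it is unnecessary, since $\depth\le d$ passes to infixes. The zoom itself does need a long first segment, $|u_0|\ge\generLfuncLength{d+1}{i+1}$, so that a stretch of length $\frac{1}{\extraSize}\generLfuncLength{d+1}{i+1}$ fits before the escape point. Finally, drop the proposed baseline shift. The zoomed word is just a re-decomposition $z_1z_2$ of the prefix $w_1v_j$, with the same seamless baseline, so there is no sub-baseline to correct for. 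A shift would also introduce rebase letters outside $\genCac$. It would no longer exhibit $w_1w_2$ itself as $uxyv$, which the minimality argument of the main theorem relies on.
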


\section{Outline of the Proof}
\label{sec:abs:proof outline}
We are finally ready to put all the components together. The proof outline is depicted in \cref{fig:proof flow}.
\begin{figure}[H]
    \centering
    \includegraphics[width=0.95\linewidth]{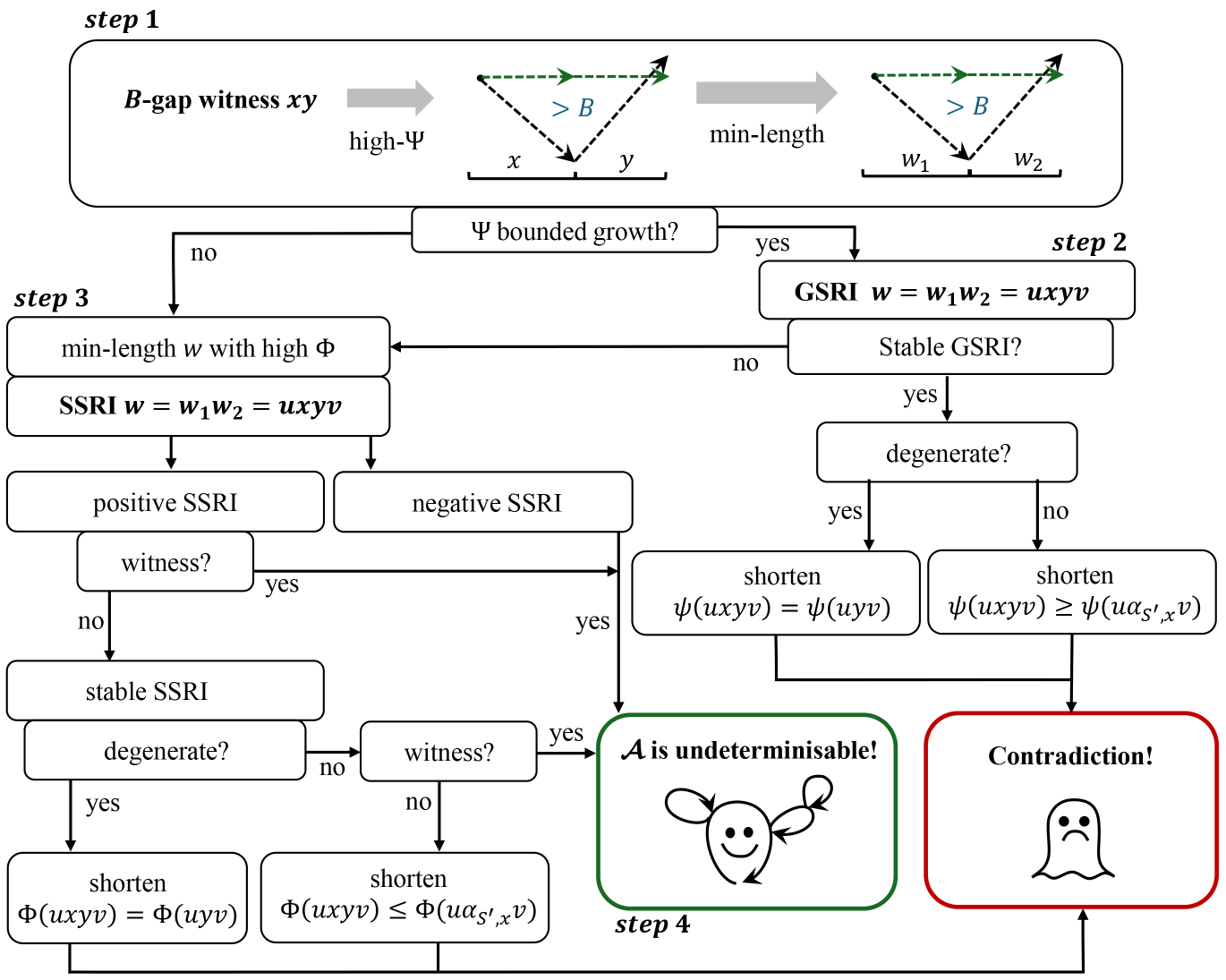}
    \caption{A summary of our proof flow.}
    \label{fig:proof flow}
\end{figure}
The precise main result is the following.
\begin{theorem}
    \label{thm:abs:main large gap to undet}
    Consider a WFA $\cA$, then $\cA$ is undeterminisable if and only if there is a $B$-gap witness with $B=\generLfuncAmp{|S|}{0}$ where $|S|$ is the size of $\augA$.
\end{theorem}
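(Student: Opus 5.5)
The \emph{only if} direction is immediate from \cref{thm:abs:det iff bounded gap}: if $\cA$ is undeterminisable, then for every $B$ there is a $B$-gap witness, in particular for $B=\generLfuncAmp{|S|}{0}$. Since $\augA$ is determinisable iff $\cA$ is, I would work in $\augA$ throughout. The substance is the \emph{if} direction. There I would show that a $B$-gap witness with $B=\generLfuncAmp{|S|}{0}$ yields a witness in the sense of \cref{def:abs:witness}, and then invoke the main result of~\cite{almagor2026determinization} (adapted to our effective alphabet) that a witness implies undeterminisability.

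I would argue by minimal counterexample. Take a $B$-gap witness $(x,y,q)$, translate it into $\augA$, and baseline-shift (\cref{sec:abs:baseline shift}) so that the run reaching $q$ and later becoming minimal is the baseline. Then the minimal run after $x$ lies more than $B$ below the baseline, i.e.\ $\charge$ is large, while the baseline remains potentially minimal. Among all words over $\genAbsCac$ with $\charge>B$, choose one of minimal length (with the cactus structure as in \cref{sec:abs:effective cactus letters}). Write it as $w_1w_2$, where $w_2$ is a suffix along which $\charge$ is non-increasing and $|w_2|=\beta\cdot\generLfuncLength{|S|}{1}$. Such a suffix exists because $\charge$ must climb from $0$ to $B=\generLfuncAmp{|S|}{0}$, which by the definition of the amplitude functions is large enough to guarantee a long segment with this property.

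Now apply \cref{lem:charge decrease to GSRI}. If it gives a word with $\pot>\budH$, go to the potential branch below. Otherwise $w_1w_2=uxyv$ is a GSRI, and \cref{lem:abs:GSRI to stable or high potential} says it is either Stable or again produces a word with $\pot\ge\budH$. If it is Stable and degenerate, \cref{prop:abs:degenerate SRI shorten} removes $x$ without changing $\charge$. If it is Stable and non-degenerate, \cref{lem:abs:pumping stable SRI} replaces $xy$ by $\alpha_{S',x}$, which is legal by the length requirement on $x$ and does not decrease $\charge$. Either way we obtain a strictly shorter word with $\charge>B$, contradicting minimality.

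The potential branch is handled analogously with $\budH=\simpLfuncAmp{|S|}{0}$. Take a shortest word over $\simpAbsCac$ with $\pot>\budH$. Using bounded growth of $\pot$, extract a $\pot$-nondecreasing suffix of length $\beta\cdot\simpLfuncLength{|S|}{1}$, and apply \cref{lem:pot increase to SSRI} to get an SSRI. By \cref{lem:abs:negative SSRI to type 1 witness} a Negative SSRI gives a witness. By \cref{lem:abs:positive SSRI to Stable SSRI} a Positive SSRI either gives a witness or is Stable. A Stable SSRI is shortened by \cref{prop:abs:degenerate SRI shorten} or \cref{lem:abs:pumping stable SRI}, or else a witness appears. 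Minimality therefore forces a witness. The main obstacle is the length-extraction step in both branches: showing that exceeding the amplitude threshold forces a monotone segment of exactly the length the zooming lemmas require. This must be checked against the recursive definitions of $\texttt{Amp}$ and $\texttt{GAmp}$, and in the $\charge$ branch it must account for the non-bounded jumps of $\charge$ via \cref{lem:abs:charge decrease to high potential}.
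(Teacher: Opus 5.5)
Your potential branch matches the paper's Step 3. Your charge branch, however, minimises the wrong quantity and fails as written.

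You take a shortest word over $\genAbsCac$ with $\charge>B$. In such a word every proper prefix has $\charge\le B$. So for every split $w_1w_2$ with $w_2$ nonempty you get $\charge(w_1)<\charge(w_1w_2)$. No suffix is $\charge$-decreasing, so \cref{lem:charge decrease to GSRI} never applies. Neither does the GSRI definition, which requires $\charge(u)\ge\charge(ux)\ge\charge(uxy)$.

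The shortening step also points the wrong way. \cref{lem:abs:pumping stable SRI} gives $\charge(w)\ge\charge(w')$, i.e.\ budding may \emph{lower} the charge, so it need not preserve $\charge>B$. You wrote ``does not decrease'', which reverses the inequality.

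No repair within this invariant is possible, because large charge alone is compatible with determinisability. Take $p\xrightarrow{a,0}p$, $p\xrightarrow{a,-1}q$, $q\xrightarrow{a,-1}q$ with initial state $p$. The baseline staying in $p$ is seamless and has charge $n$ after $a^n$. Yet the automaton is equivalent to a one-state deterministic WFA. What you discarded is exactly the gap-witness information that the baseline becomes minimal again after $y$.

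The missing idea is to minimise over charge \emph{drops}. With the minimal run on $xy$ as baseline, the gap witness gives $\charge(x')>B$ and $\charge(x'y')=0$. The paper then takes $w_1w_2$ over $\genAbsCac$ of minimal length with $\charge(w_1)-\charge(w_1w_2)>B$, the split point being part of the data. The argument then runs as follows.

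1. Assuming no word has $\pot>\budH$, \cref{lem:abs:charge decrease to high potential} bounds the per-letter \emph{decrease} of $\charge$, so $|w_2|$ is large.
2. Minimality gives $\charge(w_1z)>\charge(w_1w_2)$ for every strict prefix $z$ of $w_2$. Hence the suffix of the required length is $\charge$-decreasing.
3. \cref{lem:charge decrease to GSRI} and \cref{lem:abs:GSRI to stable or high potential} then yield a Stable GSRI $uxyv$ with $xy$ inside $w_2$.
4. The shortenings keep $w_1$ intact and do not increase the final charge. So the drop stays above $B$, contradicting minimality.

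This is also why the length extraction is not about $\charge$ ``climbing'' from $0$ to $B$. It comes from bounded \emph{decrease}, which holds only conditionally on the absence of a high-potential word.
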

Note that the direction $\cA$ is undeterminisable $\implies$ $B$-gap witness is trivial from \cref{thm:abs:det iff bounded gap}. We handle the converse in several steps.

\subparagraph*{Step 1: From a Gap witness to a $\charge$-Decreasing Word.}
Consider a $B$-gap witness $xy$ for $B=\generLfuncAmp{|S|}{0}$ as per \cref{thm:det iff bounded gap}. By choosing the minimal run on $xy$ as baseline, we have another run that goes far below it on $x$, yielding $\charge(x)>B$ but $\charge(xy)=0$. That is, a significant decrease in $\charge$ upon reading $y$.

\subparagraph*{Step 2: From $\charge$-Decreasing to $\pot$-Increasing}
Thus far, $xy$ do not have cactus letters. Let $w_1w_2$ be a \emph{minimal length} word over $\genAbsCac^*$ such that $\charge(w_1)-\charge(w_1w_2)>\generLfuncAmp{|S|}{0}$. Note that we now allow cactus letters, and also require length minimality. There exists such a word because $xy$ satisfies this condition. By the construction of $\genAbsCac$, we have $\depth(w_1w_2)\le |S|-1$.

We now ask whether there is a word $w'\in \simpAbsCac^*$ such that $\pot(w')> \budH$. If there is, we proceed to Step 3.
Otherwise, we assume by way of contradiction that there is no such word.

By \cref{lem:abs:charge decrease to high potential}, this assumption tells us that the $\charge$ has bounded decrease over $\genAbsCac$. Since $w_1w_2$ has a huge charge drop, then $w_2$ must be very long. We use this with \cref{lem:charge decrease to GSRI} to show that $w_1w_2$ is a GSRI. 
Then, by \cref{lem:abs:GSRI to stable or high potential} and the assumption that there is no word with $\pot(w')> \budH$, we get that this GSRI is stable. In particular, it is a Stable SRI $w_1w_2=uxyv$. 
Then, we have the following:
\begin{itemize}
    \item If it is a Degenerate SRI then by \cref{prop:abs:degenerate SRI shorten} we can shorten $w_1w_2$ while maintaining $\charge$, contradicting the minimality.
    \item If it is Nondegenerate, then by \cref{lem:abs:pumping stable SRI} we can shorten it to $u\alpha_{S',x}v$ (which is indeed shorter, since $xy$ is replaced with a single letter), while decreasing $\charge$ on the suffix. This is again a contradiction to the minimality of $w_1w_2$.
\end{itemize}
Thus, either way we reach a contradiction. So there must exist a word $w'\in \simpAbsCac^*$ with $\pot(w')> \budH$.

\subparagraph*{Step 3: Increasing Potential to Witness}
Intuitively, we now repeat Step 2 with $\pot$ and SSRI instead of $\charge$ and GSRI, and this results in a witness, as follows.

By the existence of $w'$, let $w\in \simpAbsCac^*$ be the shortest word for which $\pot(w)>\budH$. Note that we again require minimality, now over $\simpAbsCac$.
By the bounded growth of $\pot$ (which is unconditional, see \cref{sec:abs:potential and charge}) we can show that $w$ is long enough to apply  \cref{sec:abs:inc pot to SSRI}. 
We can therefore write $w_1w_2=uxyv$ where $uxyv$ is an SSRI.

The case split now is somewhat more intricate than in GSRI. In each case, if we reach a witness then we proceed to Step 4.
\begin{itemize}
    \item If $uxyv$ is a Negative SSRI, then by \cref{lem:abs:negative SSRI to type 1 witness} there is a witness.
    \item If $uxyv$ is not a Negative SSRI, then it is a Positive SRI. By \cref{lem:abs:positive SSRI to Stable SSRI}, either there is a witness or $uxyv$ is a Stable SSRI, and we proceed to the next item.
    \item If $uxyv$ is a Stable SSRI, we split to cases again.
    \begin{itemize}
        \item If $uxyv$ is a Degenerate SSRI, then by \cref{prop:abs:degenerate SRI shorten} we have $\pot(uxyv)=\pot(uyv)$, contradicting the minimality of $w$.
        \item If $uxyv$ is a Non-degenerate SSRI, then by \cref{lem:abs:pumping stable SRI}, either there is a witness or $\pot(uxyv)\le \pot(u\alpha_{S',x}v)$, again contradicting the minimality of $w$.
    \end{itemize}
\end{itemize}
We conclude that in all cases we either reach a contradiction, or we have a witness.

\subparagraph*{Step 4: Witness to Undeterminisability}
Recall that if there is a witness, then $\cA$ is undeterminisable. This concludes the proof of \cref{thm:abs:main large gap to undet}.
\hfill \qed

\subsection{Algorithm and Complexity}
\label{sec:abs:algorithm and complexity}
The proof of the gap characterisation of \cref{thm:abs:det iff bounded gap} is constructive~\cite{filiot2017delay,almagor2026determinization}, in the following sense:
\begin{theorem}
\label{thm:abs:det construction for B bounded gap}
    Consider a WFA $\cA$ and $B\in \bbN$. We can construct a deterministic WFA $\cA|_B$ such that $\cA$ is equivalent to $\cA|_B$ if and only if $\cA$ has $B$-bounded gaps.
    Moreover, $\cA|_B$ can be computed in time $B^{O(|\cA|)}$.
\end{theorem}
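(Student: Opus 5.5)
We construct $\cA|_B$ as the standard \emph{gap-bounded subset construction}, in the spirit of the classical Mohri/Kirsten delay construction~\cite{filiot2017delay}. A state of $\cA|_B$ is a normalised configuration: a partial vector $\vec{c}\in(\{0,\ldots,B\}\cup\{\infty\})^Q$ with $\min_q\vec{c}(q)=0$ (unless all entries are $\infty$). The initial state assigns $0$ to $q_0$ and $\infty$ elsewhere.

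On letter $\sigma$ from state $\vec{c}$, we proceed as follows.
\begin{enumerate}
\item Compute the unnormalised successor $\vec{c}'(r)=\min_p(\vec{c}(p)+\minweight(p\runsto{\sigma}r))$.
\item Let $m=\min_r\vec{c}'(r)$. The transition outputs weight $m$ and moves to $\vec{c}'-m$.
\item Truncate: every entry of $\vec{c}'-m$ exceeding $B$ is set to $\infty$.
\end{enumerate}
The automaton is deterministic by construction. Its number of states is at most $(B+2)^{|Q|}$, and each transition is computed in polynomial time in $|\cA|$. This gives the $B^{O(|\cA|)}$ bound for $B\ge 2$; smaller $B$ are absorbed into the constant.

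\textbf{Invariant.} We first establish the following by induction on $|w|$. Let $\vec{d}$ be the true configuration $\xconf(q_0,w)$, and let $m(w)$ be the accumulated output of $\cA|_B$ on $w$. Then $m(w)\ge\minweight(q_0\runsto{w}Q)$, and the state reached is pointwise $\ge\vec{d}-m(w)$. Equality holds at all coordinates provided no truncation ever dropped a run that later matters. The point is that truncation only \emph{removes} runs. Hence the truncated automaton computes the minimum over a subset of the runs of $\cA$, so $L_{\cA|_B}(w)\ge L_\cA(w)$ always.

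\textbf{($\Leftarrow$) Bounded gaps imply equivalence.} Assume there is no $B$-gap witness, and fix $w$ with its optimal run $\rho$ of $\cA$. Write $w=xy$ at any cut point, and let $q$ be the state of $\rho$ after $x$. Since $\rho$ is optimal, $\minweight(q_0\runsto{xy}Q)=\minweight(q_0\runsto{x}q\runsto{y}Q)$. The absence of a $B$-gap witness then gives $\minweight(q_0\runsto{x}q)-\minweight(q_0\runsto{x}Q)\le B$.

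The main obstacle is that truncation is applied to the \emph{normalised} automaton state, not the true configuration. So we must show by a simultaneous induction that, along the prefix $x$, the truncated state coincides with $\xconf(q_0,x)-\minweight(q_0\runsto{x}Q)$ on every coordinate visited by $\rho$. The minimum coordinate itself is never truncated, since it is $0$. Hence the normalising constant $m$ agrees with the true minimum as long as the minimal run is never cut. Here one must be careful: a true minimal run at $x$ may later die, yet it is never truncated, because truncation only affects entries more than $B$ above the minimum. So the normalisation stays exact, and $\rho$'s prefix value is preserved. At the end, $\rho$ survives and $L_{\cA|_B}(w)\le\weight(\rho)$. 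Combined with the general inequality this gives equality.

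\textbf{($\Rightarrow$) Equivalence implies bounded gaps.} Conversely, suppose $(x,y,q)$ is a $B$-gap witness. After reading $x$, the coordinate $q$ is truncated, so every run through $q$ after $x$ is lost. I would argue that equivalence on all words then forces the value $\minweight(q_0\runsto{xy}Q)$ to be achieved by some run not truncated at $x$, and iterate this along the prefixes of $y$. The cleanest way is to use minimality: choose the witness with $y$ minimal and show directly that $L_{\cA|_B}(xy)>L_\cA(xy)$. This needs the witness definition to be taken with strict optimality through $q$. If that fails, I would instead show the contrapositive through \cref{thm:abs:det iff bounded gap}: if $\cA\equiv\cA|_B$ then $\cA$ is determinisable, and the standard delay argument bounds gaps by $B$ plus the maximal transition weight. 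In that case one adjusts the threshold in the construction to a fixed multiple of $B$, keeping the $B^{O(|\cA|)}$ size.
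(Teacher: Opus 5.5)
\textbf{The converse direction has a genuine gap, and it cannot be closed.} Your automaton is the folklore construction the paper has in mind; its proof is only the sketch ``keep the minimal run and all runs within $B$ of it, forget the rest''. As you half-suspected, the paper's definition of a $B$-gap witness allows ties in $\minweight(q_0\runsto{xy}Q)=\minweight(q_0\runsto{x}q\runsto{y}Q)$. With that definition, the implication ``$\cA\equiv\cA|_B$ implies no $B$-gap witness'' is false.

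\emph{Counterexample.} Take only the transitions $(q_0,a,0,p)$, $(q_0,a,B+1,r)$, $(p,b,0,f)$, $(r,b,-B-1,f')$. Then $(a,b,r)$ is a $B$-gap witness. Yet $\cA|_B$, which discards $r$ after $a$, agrees with $\cA$ on every word.

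\emph{Your first route.} It works only for a strict variant of the definition, namely that every optimal run on $xy$ passes through $q$ after $x$. It also needs the observation that equivalence on the prefix $x$ forces $m(x)=\minweight(q_0\runsto{x}Q)$, so that $q$ really is discarded.

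\emph{Your fallback is false.} Put a branch that stays at weight $0$ against a branch that climbs by $+1$ on $K$ letters $a$ and then descends by $-1$ on $K$ letters $b$, using fresh states throughout. This gives $\cA\equiv\cA|_0$, with $(K-1)$-gap witnesses for every $K$, while all weights lie in $\{-1,0,1\}$. Moreover, changing the threshold changes the statement.

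\emph{What to say instead.} Only the direction ``no $B$-gap witness implies $\cA\equiv\cA|_B$'' is ever used: if $\cA\not\equiv\cA|_B$ there is a $B$-gap witness, and \cref{thm:main large gap to undet} gives undeterminisability. A genuine ``iff'' does hold if ``$B$-bounded gaps'' is read as: every word has an optimal run that stays within $B$ of the minimum at every prefix. Indeed, equivalence on prefixes makes $m(x)$ exact, and the run traced back through the automaton's argmins is then such a run.

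\textbf{Your justification in the $(\Leftarrow)$ direction does not hold up.} You argue that the normalisation ``stays exact'' because the current minimal entry is $0$. That does not prevent the run that becomes minimal at $x\sigma$ from having been more than $B$ above the minimum at an earlier prefix, and hence already discarded. This is exactly the failure mode of gap witnesses, and you never apply the hypothesis to such runs.

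Exactness is in fact true, but only by applying the no-witness hypothesis to each prefix $x\sigma$ as a word in its own right, in a strong induction. It is also unnecessary. The absolute minimum $m(x)$ of $\cA|_B$ always satisfies $m(x)\ge\minweight(q_0\runsto{x}Q)$. Choose $\rho$ seamless, and say so explicitly, since the hypothesis bounds $\minweight(q_0\runsto{x}q)$ and not $\weight(\rho[x])$. Then, inductively, the entry of the state $q$ of $\rho$ after $x$ is at most $\minweight(q_0\runsto{x}q)-\minweight(q_0\runsto{x}Q)\le B$, whatever drift $m$ has undergone. So $\rho$ is never truncated, and $\cA|_B(w)\le\weight(\rho)$.
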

A determinisability algorithm is now simple: given a WFA $\cA$, compute $B=\generLfuncAmp{|S|}{0}$, construct $\cA|_B$, and check whether $\cA$ is equivalent to $\cA|_B$ (which is decidable by \cite{Almagor2020Whatsdecidableweighted}). 
If it is, it is determinisable, and $\cA|_B$ is a deterministic equivalent. Otherwise, it is not determinisable by \cref{thm:main large gap to undet}.

The complexity of this algorithm is dominated by the computation of $B$. In \cref{sec:complexity} we analyse the recursive structure of our functions, placing it in the 6th
level of the fast-growing hierarchy~\cite{grzegorczyk1953some, schmitz2016complexity,tourlakis1984computability}, i.e., non-elementary (elementary is the 3rd level of this hierarchy), but primitive recursive (well below the Ackermann function). We do not bring the ideas of the analysis here. In broad terms, the bound is obtained by formulating upper bounds for our functions in terms that are easy to manage using standard operations of the fast-growing functions, namely substitutions and bounded primitive recursion.

\section{Discussion and Future Research}
\label{sec:abs:discussion}
The decidability of determinisation of WFAs has been out of reach for over 30 years until recently, when it was shown to be decidable. This raises the hope that this problem might not be as intractable as its open status may have suggested.

In this paper, we make the first step towards establishing the complexity of this problem, by showing that it is primitive recursive. This is particularly good news, as it is already far below some problems in related counter-based models, specifically reachability in Vector Addition Systems and Universality of One Counter Nets, which are both Ackermann-complete~\cite{czerwinski2022reachability,leroux2022reachability,hofman2018trace}.

A natural question is what machinery is needed to improve the complexity. To this end, our work restores the faith in the original approach: improve the bound on the maximal gap $B$, and get a better algorithm. On the negative side, our approach seems limited for further significant improvements. Indeed, the presence of cactus letters quickly leads to recursive Ramsey-style arguments, which lead to a tower of exponents. A possible direction would be to use only constant-depth cactus letters. This, however, is already dangerously close to working with the original alphabet, and approach that was unsuccessful for decades.

Going beyond determinisability, our work introduces simpler and more accessible tools for analysing the run-tree of a WFA (namely Zooming and SRI). We believe these can be reused and adapted for other counter-based methods, in particular certain open problems for One-Counter Nets and One-Counter Automata.

\bibliography{main}

\pagebreak
\section*{Appendix}
\section{Preliminaries}
\label{sec:prelim}
For an alphabet $\Sigma$, we denote by $\Sigma^*$ 
%(resp. $\Sigma^+$) 
the set of finite words 
%(resp. non-empty finite words) 
over $\Sigma$. 
For a word $w\in \Sigma^*$, we denote its length by $|w|$ and the set of its prefixes by $\pref(w)$. We write $w[i,j]=\sigma_i\cdots \sigma_j$ for the infix of $w$ corresponding to $1<i<j\le |w|$.

We denote by  $\bbNinf$ and $\bbZinf$ the sets $\bbN\cup \{\infty\}$ and $\bbZ\cup \{\infty\}$, respectively. We extend the addition and $\min$ operations to $\infty$ in the natural way: $a+\infty=\infty$ and $\min\{a,\infty\}=a$ for all $a\in \bbZinf$. By $\arg\min\{f(x)\mid x\in A\}$ we mean the set of elements in $A$ for which $f(x)$ is minimal for some function $f$ and set $A$.

\paragraph*{Weighted Automata}
A \emph{$(\min,+)$ Weighted Finite Automaton} (WFA for short) is a tuple $\cA= \tup{Q,\Sigma, q_0, \Delta}$ with the following components:
\begin{itemize}
    \item $Q$ is a finite set of \emph{states}.
    \item $q_0\in Q$ is the \emph{initial} state.\footnote{Having a set of initial states does not add expressiveness, as it can be replaced by a single initial state that simulates the first transition from the entire set.}
    \item $\Sigma$ is an \emph{alphabet}. We remark that we sometimes use an \emph{infinite alphabet} $\Sigma$, hence we do not require it to be finite. 
    \item $\Delta\subseteq Q\times \Sigma\times \bbZinf\times Q$ is a transition relation such that for every $p,q\in Q$ and $\sigma\in \Sigma$ there exists exactly\footnote{This is without loss of generality: if there are two transitions with different weights, the higher weight can always be ignored in the $(\min,+)$ semantics. Missing transitions can be introduced with weight $\infty$.} one weight $c\in \bbZinf$ such that $(p,\sigma,c,q)\in \Delta$.
    %is relation $\Delta \subseteq Q \times \Sigma \times \bbZ \times Q$.
\end{itemize}
If for every $p\in Q$ and $\sigma\in \Sigma$ there exists at most one transition $(p,\sigma,c,q)$ with $c\neq \infty$, then $\cA$ is called \emph{deterministic}.

\begin{remark}[On initial and final weights, and accepting states]
    \label{rmk: initial and final weights}
    Weighted automata are often defined with initial and final weights, and with accepting states. In \cite{almagor2026determinization} it is shown that the determinisation problem is logspace-equivalent to our model.
\end{remark}

\paragraph*{Runs}
A \emph{run} of $\cA$ is a sequence of transitions $\rho=t_1,t_2,\ldots,t_m$ where $t_i=(p_i,\sigma_i,c_i,q_i)$ such that $q_i=p_{i+1}$ for all $1\le i<m$ and $c_i <\infty$ for all $1\le i \le m$.
We say that $\rho$ is a run \emph{on the word $w=\sigma_1\cdots\sigma_m$ from $p_1$ to $q_m$}, and we denote $\rho:p_1\runsto{w}q_m$. 
For an infix $x=w[i,j]$ we denote the corresponding infix of $\rho$ by $\rho[i,j]=t_i,\ldots,t_j$ (and sometimes by $\rho(x)$, if this clarifies the indices).
The \emph{weight} of the run $\rho$ is $\weight(\rho)=\sum_{i=1}^m c_i$. 

For a word $w$, the weight assigned by $\cA$ to $w$, denoted $\weight_{\cA}(w)$ is the minimal weight of a run of $\cA$ on $w$. For convenience, we introduce some auxiliary notations.

For a word $w\in \Sigma^*$ and sets of states $Q_1,Q_2\subseteq Q$, denote
\[\minweight_\cA(Q_1\runsto{w} Q_2)=\min\{\weight(\rho)\mid \exists q_1\in Q_1,q_2\in Q_2,\ \rho:q_1\runsto{w}q_2\}\]
If $Q_1$ or $Q_2$ are singletons, we denote them by a single state (e.g., $\minweight_\cA(P\runsto{w} q)$ for some set $P\subseteq Q$ and state $q$).
Then, we can define
$\cA(w)=\minweight_\cA(q_0\runsto{w} F)$.
If there are no accepting runs on $w$, then $\cA(w)=\infty$.
The function $\weight_\cA:\Sigma^*\to \bbZinf$ can be seen as the weighted analogue of the \emph{language} of an automaton.
We omit the subscript $\cA$ when clear from context.

For a word $w\in \Sigma^*$ and a state $q\in Q$, of particular interest are runs from $q_0$ to $q$ on $w$ that remain minimal throughout. Indeed, other runs are essentially ``cut short'' by some lower run. Formally, let $w=\sigma_1\cdots \sigma_n$ and consider a run $\rho:q_0\runsto{w}q$ with $\rho=t_1,t_2,\ldots,t_n$ and $t_i=(q_i,\sigma_i,e_i,q_{i+1})$. We say that $\rho$ is \emph{seamless} if for every $1\le j\le n$ it holds that $\weight(t_1,\ldots,t_j)=\minweight(\sigma_1\cdots \sigma_j,q_0\to q_{j+1})$.

We define $\wmax{w}$ to be the maximal weight (in absolute value) occurring in any possible transition on the letters of $w$, i.e., 
\[\wmax{w}=\max\{|c|<\infty \mid \exists p,q\in Q, 1\le i\le n,\  (p,\sigma_i,c,q)\in \Delta\}\] 
Additionally, the \emph{maximal effect} of $w$ is an upper bound on the change in weight that can happen upon reading $w$, defined as $\maxeff{w}=\sum_{i=1}^n\wmax{\sigma_i}$.
Thus, any finite-weight run $\rho$ on $w$ satisfies $|\weight(\rho)|\le \maxeff{w}\le \wmax{w}|w|$.

We write $p\runsto{w}q$ when there exists some run $\rho$ %with $\weight(\rho)<\infty$ 
such that $\rho:p\runsto{w}q$.
We lift this notation to concatenations of runs, e.g., $\rho: p\runsto{x}q\runsto{y}r$ means that $\rho$ is a run on $xy$ from $p$ to $r$ that reaches $q$ after the prefix $x$. We also incorporate this to $\minweight$ by writing e.g., $\minweight(q\runsto{x}p\runsto{y}r)$ to mean the minimal weight of a run $\rho:q\runsto{x}p\runsto{y}r$.

For a set of states $Q'\subseteq Q$ and a word $w\in \Sigma^*$, we define the {\em reachable set of states from $Q'$ upon reading $w$} as $\delta_{\bbB}(Q',w)=\{q\in Q\mid \exists p\in Q',\ p\runsto{w}q\}$ (where $\bbB$ stands for ``$\bbB$oolean'').

A WFA is \emph{trim} if every state is reachable from $q_0$ by some run. Note that states that do not satisfy this can be found in polynomial time (by simple graph search), and can be removed from the WFA without changing the weight of any accepted word. Throughout this paper, we assume that all WFAs are trim.

\paragraph*{Determinisability}
We say that WFAs $\cA$ and $\cB$ are \emph{equivalent} if $\weight_\cA\equiv \weight_\cB$. We say that $\cA$ is \emph{determinisable} if it is equivalent to some deterministic WFA. Our central object of study is the following problem, shows to be decidable in~\cite{almagor2026determinization}.
\begin{problem}[WFA determinisability]
\label{prob:determinisability}
    Given a WFA $\cA$ over a finite alphabet $\Sigma$, decide whether $\cA$ is determinisable.
\end{problem}

\paragraph*{Configurations}
A \emph{configuration} of $\cA$ is a vector $\vec{c}\in \bbZinf^Q$ which, intuitively, describes for each $q\in Q$ the weight $\vec{c}(q)$ of a minimal run to $q$ thus far (assuming some partial word has already been read). 
For a state $q$ we define the configuration $\vec{c_q}$ that assigns $0$ to $q$ and $\infty$ to $Q\setminus\{q_0\}$.
%that assigns $0$ to $q_0$ and $\infty$ to $Q\setminus\{q_0\}$. 

We adapt our notations to include a given starting configuration $\vec{c}$, as follows.
Given a configuration $\vec{c}$ and a word $w$, they induce a new configuration $\vec{c'}$ by assigning each state the minimal weight with which it is reachable via $w$ from $\vec{c}$. We denote this by 
$\xconf(\vec{c},w)(q)=\minweight_{\vec{c}}(Q\runsto{w}q)$ for every $q\in Q$.
In particular, $\xconf(\vec{c}_{\init},w)$ is the configuration that $\cA$ reaches by reading $w$ along a minimal run. For a state $q$ we denote $\xconf(q,w)=\xconf(\vec{c_q},w)$.

We use the natural component-wise partial order on configurations. That is, for two configuration $\vec{c},\vec{d}$, we say that $\vec{d}$ is \emph{superior} to $\vec{c}$, denoted $\vec{c}\le \vec{d}$, if $\vec{c}(q)\le \vec{d}(q)$ for every $q\in Q$. We also denote the \emph{support} of a configuration by $\supp(\vec{c})=\{q\in Q\mid \vec{c}(q)<\infty\}$.

For a run $\rho:p\runsto{w}q$ and configuration $\vec{c}$, we define the \emph{weight of $\rho$ from $\vec{c}$} to be $\weight_{\vec{c}}(\rho)=\vec{c}(p)+\weight(\rho)$. We similarly extend the notation $\minweight$ to include configurations, by writing
\[
\minweight_{\vec{c}}(Q_1\runsto{w} Q_2)=\min\{\weight_{\vec{c}}(\rho) \mid \exists q_1\in Q_1,q_2\in Q_2,\ \rho:q_1\runsto{w}q_2\}
\]
(we remark that we never need both the subscript $\cA$ and $\vec{c}$ together).
We say that $\rho$ is \emph{seamless from $\vec{c}$} if for every $1\le j\le n$ it holds that $\weight_{\vec{c}}(t_1,\ldots,t_j)=\minweight_{\vec{c}}(T\runsto{\sigma_1\cdots \sigma_j} q_{j+1})$ where $T=\supp(\vec{c})$.
%%%%%We use ``Seamless from \vec{c}'' in the monotone potential/charge lemma. 

\subsection{A Characterisation of determinisability}
\label{sec:determinisability equiv characterisation}
The following theorem gives an equivalent characterisation of determinisability by means of runs, and of how far two potentially-minimal runs can get away from one another, which we commonly refer to as a \emph{gap}. 
Intuitively, we show that $\cA$ is determinisable if and only if there is some bound $B\in \bbN$ such that if two runs on a word $x$ obtain weights that are far from each other, but the ``upper'' run can still become minimal over some suffix, 
%and the minimal run can become accepting on some suffix, 
then the runs are at most $B$ apart. This is depicted in \cref{fig:gap witness}.
\begin{figure}
    \centering
    \includegraphics[width=0.4\linewidth]{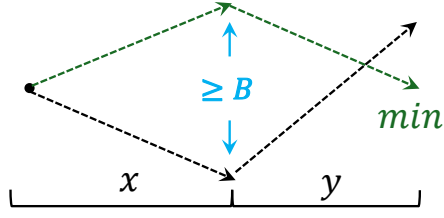}
    \caption{A $B$-gap witness: upon reading $x$, the minimal run on $x$ is at least $B$ below another run, but after reading $y$, the upper run becomes minimal.}
    \label{fig:gap witness}
\end{figure}

\begin{definition}[$B$-Gap Witness]
\label{def: B gap witness}
    For $B\in \bbN$, a \emph{$B$-gap witness over alphabet $\Sigma'$} consists of a pair of words $x,y\in \Sigma'^*$ and a state $q\in Q$ such that
    \begin{itemize}
        \item $\minweight(q_0\runsto{x\cdot y} Q)=\minweight(q_0\runsto{x}q\runsto{y}Q)<\infty$, but 
        \item $\minweight(q_0\runsto{x} q)-\minweight(q_0\runsto{x} Q)>B$.
    \end{itemize}
\end{definition}
\noindent Then, the characterisation is as follows.
\begin{theorem}[\cite{filiot2017delay,almagor2026determinization}]
    \label{thm:det iff bounded gap}
    Consider a WFA $\cA$, then the following hold.
    \begin{enumerate}
        \item If there exists $B\in \bbN$ such that there is no $B$-gap witness over $\Sigma$, then $\cA$ is determinisable.
        \item If there is a finite alphabet $\Sigma'\subseteq \Sigma$ such that for every $B\in \bbN$ there is a $B$-gap witness over $\Sigma'$, then $\cA$ is undeterminisable.
    \end{enumerate}
 \end{theorem}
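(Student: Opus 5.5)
We prove the two items separately, following the standard gap-based argument of~\cite{filiot2017delay}.

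\textbf{Item 1: bounded gaps give a deterministic automaton.} Assume there is no $B$-gap witness over $\Sigma$. We build a deterministic WFA $\cD$ whose states are normalised configurations. These are vectors $\vec{c}\in(\{0,\dots,B\}\cup\{\infty\})^Q$ with $\min_q \vec{c}(q)=0$.
\begin{itemize}
\item From a state $\vec{c}$ on letter $\sigma$, compute $\vec{c'}=\xconf(\vec{c},\sigma)$ and let $m=\min_q\vec{c'}(q)$.
\item The transition outputs weight $m$.
\item It moves to $\vec{c'}-m$, where every entry exceeding $B$ is replaced by $\infty$.
\end{itemize}
Since the state space is finite, $\cD$ is a deterministic WFA.

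\textbf{Correctness of $\cD$.} By induction on $|w|$, we show two invariants:
\begin{itemize}
\item the accumulated output of $\cD$ equals $\minweight(q_0\runsto{w}Q)$;
\item every state of $\cD$ that is truncated to $\infty$ is one that no longer matters.
\end{itemize}
For the second invariant, suppose a state $q$ was truncated after some prefix $x$ of the input, so its normalised weight exceeded $B$. If some continuation $y$ had its minimum attained through $q$, then $(x,y,q)$ would be a $B$-gap witness. Hence the truncated states never contribute to the minimum.

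There is a subtlety: the truncation may drop a run that later re-enters the state space through another state. Since we keep only the minimum per state, dropping $q$ does not affect other states' values unless the minimum runs through $q$. This is exactly excluded by the absence of a gap witness applied to the prefix at which truncation happened. Therefore $\weight_\cD\equiv\weight_\cA$. Finally, with all states accepting and no final weights, $\cD$ outputs $\minweight$ over all of $Q$, as required.

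\textbf{Item 2: unbounded gaps over a finite alphabet give undeterminisability.} Suppose, for contradiction, that $\cD$ is a deterministic WFA equivalent to $\cA$, and let $N$ be its number of states. Let $W$ bound the absolute transition weights of $\cA$ and $\cD$ on letters of $\Sigma'$; this is finite because $\Sigma'$ is finite.

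Take a $B$-gap witness $(x,y,q)$ with $B$ very large compared to $N$ and $W$. The plan is a pumping/delay argument, and we take the argument of~\cite{filiot2017delay} as the route. The idea is that $\cD$ after reading $x$ is in a single state $d$ with a single value $v$. This $v$ must simultaneously satisfy two conditions:
\begin{itemize}
\item it tracks $\minweight(q_0\runsto{x}Q)$ up to a bounded offset, because $\cD$ must be correct on extensions of $x$ that are short;
\item it tracks $\minweight(q_0\runsto{x}q)$ minus the future contribution, because the run through $q$ becomes minimal on the continuation $y$.
\end{itemize}
The standard way to make this precise uses the \emph{delay} between $\cD$'s output and the minimal weight of each $\cA$-state. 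If $\cD$ exists, then for every reachable state $q$ that can still become minimal, the delay is bounded by a constant depending only on $N$, $W$ and $|Q|$. This follows by the twinning-type pumping argument: short distinguishing suffixes exist of length at most $N\cdot 2^{|Q|}$, and they can change weights by at most $W$ times their length.

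\textbf{Main obstacle.} The main obstacle is that the continuation $y$ in the witness may be long. We must replace it with a short one, of length at most polynomial in $N\cdot 2^{|Q|}$. I would do this by a pigeonhole argument on the pairs consisting of a $\cD$-state and the support of $\cA$'s configuration, removing cycles in $y$. The difficulty is keeping the run through $q$ minimal while removing those cycles. To handle it, I would first reduce to a two-state product tracking the difference between the $q$-run and the global minimum. Cutting cycles in which that difference does not favour $q$ preserves the property that the $q$-run wins. Once $y$ is short, the bounded-delay claim contradicts a gap exceeding the bound, which completes Item 2.
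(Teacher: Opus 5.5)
The paper does not prove this theorem; it imports it from the cited works. Your Item~1 is the folklore truncation construction the paper recalls in \cref{thm: det construction for B bounded gap}, and it is sound once the invariant is stated correctly. The invariant should say: the accumulated output plus $\cD$'s entry for $p$ is always at least $\minweight(q_0\runsto{w}p)$, with equality whenever $p$ can still become minimal (some continuation has its global minimum through $p$). Your sentence that dropping $q$ ``does not affect other states' values unless the minimum runs through $q$'', and that this is ``exactly excluded'', is not right as written. Dropping $q$ can corrupt later entries. What the absence of witnesses excludes is only that a corrupted entry ever lies on a globally minimal run. You also need exactness to turn a truncation, which is decided on $\cD$'s possibly inflated values, into a genuine $B$-gap witness. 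With this invariant the induction goes through, because the minimal run to a still-relevant state comes from a still-relevant predecessor.

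Item~2 has a genuine gap. Your claim that ``the delay of every state that can still become minimal is bounded'' is precisely the contrapositive of what you must prove. You support it only by an unproved assertion about short distinguishing suffixes of length $N\cdot 2^{|Q|}$. Your shortening plan does not close the gap:
\begin{itemize}
\item Excising cycles of ($\cD$-state, support) pairs gives no control over whether the run through $q$ stays globally minimal.
\item A ``product tracking the difference between the $q$-run and the global minimum'' is not finite-state, since that difference is unbounded.
\item You do not show that a cycle ``not favouring $q$'' exists, nor that cutting it cannot make some other run of $\cA$ cheaper.
\end{itemize}

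The missing idea is to let $\cD$ stand in for the global minimum, which is then a single deterministic run.
\begin{itemize}
\item Since all states are accepting, $\weight_\cD(w)=\minweight(q_0\runsto{w}Q)$ exactly for every $w$, not up to an offset. So every run of $\cA$ on a word weighs at least $\weight_\cD$ of that word.
\item Fix minimal-weight runs $\rho:q_0\runsto{x}q$ and $\rho':q\runsto{y}Q$, and let $\tau$ be $\cD$'s run on $xy$. Let $D_j$ be the weight of $\rho\rho'$ minus that of $\tau$ after $x$ and the first $j$ letters of $y$. Then $D_0=g>B$, $D_{|y|}=0$, and $|D_{j+1}-D_j|\le 2W$.
\item If $|y|\ge |Q|N$, some pair (state of $\rho'$, state of $\tau$) repeats at positions $j_1<j_2$. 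Pump the infix between them $n$ times; by determinism, $\cD$'s run on the new word is the pumped $\tau$. The final delay is $(n-1)(D_{j_2}-D_{j_1})$, and it must be nonnegative for both $n=0$ and $n=2$. This forces $D_{j_2}=D_{j_1}$.
\item Excising the infix therefore keeps the final delay at $0$. So the cut run is globally minimal on $xy'$ and still passes through $q$ via the unchanged $\rho$. Hence $(x,y',q)$ is again a witness with the same gap $g$.
\item Iterating gives $|y'|<|Q|N$, so $g<2W|Q|N$. This contradicts $g>B$ once $B\ge 2W|Q|N$.
\end{itemize}
No supports and no $2^{|Q|}$ factor are needed.
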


Observe that for WFAs over a finite alphabet, \cref{thm:det iff bounded gap} provides an exact characterisation of determinisability.

\section{The Baseline-Augmented Subset Construction}
\label{sec:augmented construction}
We recall a construction from~\cite{almagor2026determinization} of a WFA $\augA$ that augments $\cA$ with additional information regarding its runs. This information is kept in both the states and in the alphabet.
Technically, the alphabet of $\augA$ consists of the \emph{transitions} of $\cA$, i.e. $\Delta$. Each state of $\augA$ contains information about the current state, the current reachable subset of states, as well as a special \emph{baseline} state. 
Intuitively, instead of only reading a word $w$, we read $w$ as well as a ``suggested'' run of $\cA$ on $w$. Then, we normalise all transitions so that this suggested run has constant weight $0$, and all other runs are shifted accordingly. Thus, we do not change the \emph{relative} structure of the run-tree. In particular, this does not affect determinisability. It does, however, offer great convenience in shifting the point of view to a specific run. We recall the construction and some of its important properties. See~\cite{almagor2026determinization} for further intuition.

Let $\cA= \tup{Q,\Sigma, q_0, \Delta}$. We define $\augA=\tup{\augStates,\Delta,\augInitState,\augTrans}$ with the following components:
\begin{itemize}
    \item The states are 
    $\augStates=\{(q,p,T) \in Q \times Q \times 2^Q \mid q,p \in T\}$.
    We refer to the components of a state $(q,p,T)\in S$ as the \emph{inner state} $q$, the \emph{baseline component} $p$, and the \emph{reachable subset} $T$.
    \item The alphabet of $\augA$ is the set of transitions $\Delta$ of $\cA$. 
    \item The initial state is $(q_0,q_0,\{q_0\})$.
    \item The transitions $\augTrans$ are defined as follows. 
    Consider states $(p_1,q_1,T_1)$ and $(p_2,q_2,T_2)$ in $\augStates$ and a letter $(q_1,\sigma,c,q_2)\in \Delta$ with $c\neq \infty$
    (i.e., the transition in $\cA$ from $q_1$ to $q_2$ on some $\sigma\in \Sigma$). 
    If $T_2=\booltrans(T_1,\sigma)$, then consider the unique transition $(p_1,\sigma,c',p_2)\in \Delta$. We add to $\augTrans$ the transition $((p_1,q_1,T_1),(q_1,\sigma,c,q_2),c'-c,(p_2,q_2,T_2))$.

    Finally, we complete any undefined transitions with $\infty$. 
\end{itemize}
We briefly outline the intuition behind $\augA$. Starting in $(s_0,s_0,\{s_0\})$, $\augA$ reads a word $x$ which is a run of $\cA$ on some word $w$. Note that $w$ can be inferred from the run. Then, the second component of $\augA$ simply tracks this run, whereas the second component acts identically to $\cA$ on $w$ (i.e., may nondeterministically choose any run on $w$), but the weights prescribed by $\cA$ are shifted by the prescribed run. Specifically, if both components track the same run, the weight is $0$. The third component tracks the reachable states. Note that the second and third components are deterministic.

The motivation behind $\augA$ is to simply allow us to change our ``point of view'' of the run tree of $\cA$ by fixing a certain run $\rho$ and looking at all other runs as if $\rho$ is constantly $0$. This is formalised and extended in \cref{sec: baseline shift}.

Of particular importance are states of the form $(q_1,q_1,T_1)$, which are reachable and can read $(q_1,\sigma,c,q_2)$. Then, $\augA$ may move to $(q_2,q_2,T_2)$ (as well as possibly to other states). Due to the weight normalisation, the weight of this transition in $\augA$ is $c-c=0$. Such states are central to our analysis.
\begin{definition}[Baseline States and Baseline Runs]
    \label{def:baseline states}
    A state in $S$ is a \emph{baseline} if it is of the form $(p,p,T)$ for some $p\in T$. 
    %States in $S_0$ are also baseline.
    A run $\rho=t_1\cdots t_m$ of $\augA$ with $t_i=(s_i,\tau_i,d,s_{i+1})$ is called a \emph{baseline} if $s_i$ is a baseline for all $i$.
\end{definition}

\begin{observation}
    \label{obs:baseline runs have weight 0}
    Recall that the weight of a transition in $\augA$ on letter $(p,\sigma,c,q)$ is normalised by reducing $c$ from any transition. In particular, a transition between baseline states $(p,p,T)\runsto{\sigma}(q,q,T)$ has weight $0$. Consequently, if $\rho$ is a baseline run, then $\weight(\rho)=0$. Note that there may still be other runs of both positive and negative weights.
\end{observation}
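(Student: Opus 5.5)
The claim is that if $\rho=t_1\cdots t_m$ is a baseline run of $\augA$, then $\weight(\rho)=0$. The plan is to show that every individual transition of $\rho$ has weight $0$ and then sum.

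Fix $i$ and write $t_i=(s_i,\tau_i,d_i,s_{i+1})$. By \cref{def:baseline states}, both $s_i$ and $s_{i+1}$ are baseline states, so $s_i=(p,p,T_1)$ and $s_{i+1}=(q,q,T_2)$. Let $\tau_i=(p',\sigma,c,q')$ be the letter read.

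By the definition of $\augTrans$, a finite-weight transition on $\tau_i$ exists only when three conditions hold. First, the baseline component of the source equals the source of the letter, so $p'=p$. Second, the baseline component of the target equals the target of the letter, so $q'=q$. Third, $T_2=\booltrans(T_1,\sigma)$.

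The weight of such a transition is $c'-c$, where $(p,\sigma,c',q)$ is the unique transition of $\cA$ from the inner state $p$ to the inner state $q$ on $\sigma$. Here the inner states coincide with the letter's endpoints $p$ and $q$. Since $\Delta$ assigns exactly one weight to each triple $(p,\sigma,q)$, we get $c'=c$. Hence $d_i=c-c=0$, and $c<\infty$ because $\tau_i$ is a transition of the run.

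Summing over $i$ gives $\weight(\rho)=\sum_i d_i=0$. The only point requiring care is the uniqueness of weights in $\Delta$. This is guaranteed by the paper's convention in the preliminaries that every triple $(p,\sigma,q)$ has exactly one weight, so there is no real obstacle.
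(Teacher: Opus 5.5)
Your proof is correct and follows the same reasoning the paper gives inline. A finite transition on the letter $(p,\sigma,c,q)$ between baseline states $(p,p,T_1)$ and $(q,q,T_2)$ has weight $c'-c$, where $c'$ is the unique weight attached to $(p,\sigma,q)$ in $\Delta$; hence $c'=c$, each step has weight $0$, and summing gives $\weight(\rho)=0$. The one thing you use implicitly is that \emph{every} state of $\rho$, including the final target, is a baseline state. This is the intended reading of the definition, and it matters: a last step from $(p,p,T)$ to some $(r,q,T')$ with $r\neq q$ could have nonzero weight.
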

% The intuition behind the baseline run is that it forms a ``point of view'' of the various runs of $\cA$ on a given word, in the sense that the weights of the transitions in $\cA$ are shifted so that the baseline run has weight $0$, and all other runs are shifted accordingly. 

Baseline runs can be chosen arbitrarily. However, a natural choice is to take a seamless run as baseline. This is a natural assumption in many results in the paper. We say that a word $w$ \emph{has a seamless baseline run} if its baseline run is seamless.

This intuition behind the baseline augmented construction is formalised in~\cite{almagor2026determinization}, and is used to show the following.
\begin{lemma}[\cite{almagor2026determinization}]
    \label{lem:A det iff augA det}
    $\cA$ is determinisable if and only if $\augA$ is determinisable.
\end{lemma}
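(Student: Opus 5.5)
We prove \cref{lem:A det iff augA det} through the gap characterisation of \cref{thm:det iff bounded gap}. Both $\cA$ and $\augA$ work over finite alphabets ($\Sigma$ and $\Delta$), so by the observation after that theorem, each is determinisable iff its gaps are bounded. It therefore suffices to prove two transfers: a $B$-gap witness of $\augA$ yields a $B'$-gap witness of $\cA$, and conversely, with $B'$ depending only on $B$ (up to an additive constant).

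\textbf{Structural correspondence.} First we record how runs of the two automata relate. Let $x=t_1\cdots t_m\in\Delta^*$ be a word of $\augA$, i.e.\ a run $\rho_x:q_0\runsto{w}q$ of $\cA$ on the underlying word $w$. By induction on $m$, runs of $\augA$ from $\augInitState$ on $x$ ending in $(p,q,T)$ correspond bijectively to runs $\pi:q_0\runsto{w}p$ of $\cA$. The weight of the $\augA$-run equals $\weight(\pi)-\weight(\rho_x)$, since each transition weight is $c'-c$ and the sum telescopes. Moreover $T=\booltrans(q_0,w)$, and the baseline component is forced to be $q$. Consequently, after reading $x$, every configuration entry of $\augA$ at $(p,q,T)$ equals $\minweight(q_0\runsto{w}p)-\weight(\rho_x)$, and all other entries are $\infty$. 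Differences between entries are therefore exactly the differences in $\cA$.

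\textbf{From $\augA$-gaps to $\cA$-gaps.} Take a $B$-gap witness $(x,y,s)$ for $\augA$ with $s=(p,q,T)$. Let $w,v$ be the underlying words of $x$ and $y$. By the correspondence, $\minweight(q_0\runsto{w}p)-\minweight(q_0\runsto{w}Q)>B$. The minimal run on $xy$ passes through $s$. Its suffix reads $y$, which is a continuation of the baseline run from $q$; projecting to $\cA$ and adding back the weight of the baseline run (which is common to all runs on $xy$) shows that $\minweight(q_0\runsto{wv}Q)$ is attained via $p$. Hence $(w,v,p)$ is a $B$-gap witness of $\cA$.

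\textbf{From $\cA$-gaps to $\augA$-gaps.} Take a witness $(w,v,p)$ for $\cA$. Fix any run $\rho$ of $\cA$ on $wv$; to have $\augA$-words we need one, and it exists because $\minweight(q_0\runsto{wv}Q)<\infty$. Let $x,y$ be the parts of $\rho$ on $w$ and $v$. Take $s=(p,q,T)$, where $q$ is the state of $\rho$ after $w$. In $\augA$, every run on $xy$ ends in a state whose baseline component is the endpoint of $\rho$ and whose subset is forced. So minimal weights of $\augA$ on $xy$ are exactly the $\cA$ minimal weights shifted by $-\weight(\rho)$, and the gap after $x$ is preserved. Thus $(x,y,s)$ is a $B$-gap witness for $\augA$.

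The main obstacle is bookkeeping in the correspondence. The $\augA$-state also carries the baseline and subset components, so we must check that minimising over all states of $\augA$ agrees with minimising over $Q$ in $\cA$. This holds precisely because those components are determined by the prefix of $x$. With that in place, both directions follow immediately, and \cref{thm:det iff bounded gap} concludes.
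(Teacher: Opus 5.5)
Your proposal is correct. The paper does not prove this lemma itself. It imports it from \cite{almagor2026determinization}, remarking only that $\augA$ mirrors the run structure of $\cA$. Your argument is a sound way to make that remark precise via \cref{thm:det iff bounded gap}, which applies to $\augA$ because $\Delta$ is finite. Your transfer in fact preserves $B$ exactly, not merely up to an additive constant.

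One step is left implicit and deserves a sentence. A word of $\Delta^*$ that is not a finite-weight run of $\cA$ from $q_0$ has no run of $\augA$ from $s_0$ at all. This is because each letter can only be read from a state whose baseline component equals that letter's source state, and the baseline starts at $q_0$. So the finiteness condition in an $\augA$-gap witness forces $xy$ to be such a run. This is what justifies your phrase ``a word of $\augA$, i.e.\ a run''.
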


\section{Stable Cycles and Bounded behaviours}
\label{sec:stable cycles and bounded behaviours}
The (first) crux of the decidability proof of~\cite{almagor2026determinization} is the introduction of \emph{Stable Cycles}. Intuitively, these are a refined notion of more standard stabilisation operators, which allows a form of pumping. We recall the definitions and their basic properties, and tweak some definitions slightly to fit this work.

The basic structure that allows some form of pumping is that of \emph{stable cycles}. Intuitively a stable cycle is a pair $(S',w)$ where $S'$ is a set of states and $w$ is a word such that reading $w$ any number of times from the states in $S'$ leads back to $S'$, and the weighted behaviours of runs are ``stable'', in the sense that we can identify which states generate minimal runs, and which tend to unbounded weights, regardless of the starting weights. 

\subsection{Stable Cycles}
\label{sec:stable cycles}
The definition of Stable Cycle is made up of two components, as follows.
\begin{definition}[Reflexive Cycle]
    \label{def:reflexive cycle and state}
    Consider a set of states $S'\subseteq S$ and $w\in \Delta^*$. We say that $(S',w)$ is a \emph{reflexive cycle} if there exists $q\in Q$ and $T\subseteq Q$ such that $S'=\{(p,q,T)\mid p\in T\}$ and $\booltrans(S',w)\subseteq S'$.

    A state $s\in S'$ is a \emph{reflexive state} if $s\runsto{w}s$. We denote the set of reflexive states of $(S',w)$ as $\RefStates(S',w)=\{s\in S'\mid s\runsto{w}s\}$.

    $(S',w)$ is \emph{proper} if its baseline state is reflexive, i.e., $(q,q,T)\runsto{w}(q,q,T)$.
\end{definition}

That is, the states in $S'$ share the last two components, and $T$ is ``saturated'' in the sense that every state $p\in T$ represented in $S'$ as $(p,q,T)$. Reflexivity requires that reading $w$ from $S'$ leads back to $S'$, or a subset thereof.
Note that if $(S',w)$ is a reflexive cycle, then so is $(S',w^n)$ for every $n\in \bbN$ (this is immediate by the requirement $\booltrans(S',w)\subseteq S'$).
In the following, we always consider proper reflexive cycles.
Of particular importance are reflexive cycles whose cycles have minimal weight:
\begin{definition}
    \label{def:minimal reflexive states}
    Consider a reflexive cycle $(S',w)$. We say that a state $s\in \RefStates(S',w)$ is \emph{minimal reflexive} if $\minweight(s\runsto{w} s)=\min\{\minweight(r\runsto{w} r)\mid r\in \RefStates\}$.    
    We then define 
    \[\MinRefStates(S',w)=\{s\in S'\mid s\text{ is minimal reflexive.}\}\]
\end{definition}
That is, the cycle on $w$ from $s$ to $s$ has minimal weight among all reflexive cycles on $w$.
Recall that $S'=\{(p,q,T)\mid p\in T\}$  for some $T\subseteq Q$. In particular, $(q,q,T)\in S'$ is the unique baseline state in $S'$. 
We now ask whether $(q,q,T)$ is a minimal reflexive state, and whether its minimal reflexive run is a baseline (and therefore by \cref{obs:baseline runs have weight 0} has weight $0$). Moreover, we ask whether this holds for $w^n$ for every $n\in \bbN$. In case this holds, we say that $(S',w)$ is a \emph{Stable Cycle}. 
\begin{definition}[Stable Cycle]
\label{def:stable cycle}
A reflexive cycle $(S',w)$ with baseline state $s=(q,q,T)$ is a \emph{stable cycle} if for every $n\in \bbN$ it holds that $s\in \MinRefStates(S',w^n)$ and $\minweight(s\runsto{w^n}s)=0$.
\end{definition}

Note that if $(S',w)$ is a stable cycle, then the baseline run $\rho:s\runsto{w^n} s$ is a cycle of minimal weight on any state $s'\in S'$, and its weight is $0$.
There could be other runs from $S'$ on some $w^n$ that are negative, but not cycles.

\subsection{Stable Cycles and Bounded behaviours}
\label{sec:bounded states}
Having defined stable cycles, we now recall some results about the behaviour of runs in their presence. The proofs are all in~\cite{almagor2026determinization}.

A-priori, some states in a reflexive/stable cycle $(S',w)$ might not be reflexive, but become so for $w^n$ with $n$ large enough. The relevant constant is used extensively throughout the paper.
\begin{definition}[Stabilisation Constant]
    \label{def:bigM}
    Denote 
    %$\bigN= |S|!$ and 
    $\bigM=|S| \cdot |S|!$.
\end{definition}
\begin{proposition}
    \label{prop: transitions stabilise at M}
    For every word $w$, states $s,r\in S$ and $n\in \bbN$ we have $s\runsto{w^\bigM}r$ if and only if $s\runsto{w^{\bigM n}}r$.
\end{proposition}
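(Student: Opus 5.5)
This is a purely Boolean statement about the relation $R_w\subseteq S\times S$, where $(s,r)\in R_w$ iff $s\runsto{w}r$. Reachability on $w^k$ is exactly the $k$-th Boolean matrix power $R_w^k$. So the plan is to show that the sequence $(R_w^k)_{k\ge 1}$ becomes periodic, with both the threshold and the period controlled by $|S|$. Then we check that $\bigM=|S|\cdot|S|!$ is a common multiple of every possible period and is at least the threshold. Once that holds, $R_w^{\bigM n}=R_w^{\bigM}$ for all $n\ge 1$, which is the claimed equivalence. (For $n=0$ the statement presumably means $n\ge 1$, and I would note that convention.)

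\textbf{Step 1.} Fix a target $r$. For each source $s$, consider the set $K_{s,r}=\{k\ge 1\mid (s,r)\in R_w^k\}$. Equivalently, it is the set of lengths of walks from $s$ to $r$ in the directed graph $G$ on $S$ whose edges are $R_w$. I would use the standard structure of walk lengths in a graph with $N=|S|$ vertices. Any walk of length $\ge N$ from $s$ to $r$ passes through some strongly connected component $C$ containing a cycle. Its length is congruent, modulo the period $p_C$ of $C$, to a value determined by the entry and exit points. Moreover, every sufficiently large length in that residue class is realisable, using the cycles of $C$, whose lengths are all at most $N$.

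\textbf{Step 2.} Rather than estimating Frobenius numbers, I would argue directly at multiples of $\bigM$. Suppose there is a walk $s\to r$ of length $\bigM n$ for some $n\ge 1$.
\begin{itemize}
\item Downward direction: since $\bigM n\ge N$, the walk repeats a vertex $v$. Cutting out simple cycles, we can write the walk as a walk of length $\ell<N$ (more carefully, at most $N^2$) plus a multiset of simple cycles, each of length at most $N$, all attached at vertices of the walk.
\item The total cycle contribution is $\bigM n-\ell$. Group the cycles by length $c\le N$. Since $c$ divides $|S|!$, the number $|S|!\cdot (\text{anything})$ can be redistributed among lengths.
\item A cleaner device: pick a single vertex $v$ on the walk that lies on a simple cycle of length $c\le N$. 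The walk of length $\bigM n$ and the walk of length $\bigM$ differ by $\bigM(n-1)$, which is a multiple of $c$, since $c\mid |S|!\mid\bigM$.
\item So it suffices to show that the walk of length $\bigM n$ can be shortened to one of length $\bigM$ by removing whole copies of cycles of a single length, and that a walk of length $\bigM$ can be lengthened by inserting $(n-1)\bigM/c$ copies of a cycle at $v$.
\end{itemize}

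\textbf{Step 3.} The upward direction is the easy one. A walk of length $\bigM\ge N$ visits some vertex twice, so it contains a simple cycle at some $v$, of length $c\le N$. Inserting $(n-1)\bigM/c$ extra copies of that cycle gives a walk of length $\bigM n$.

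\textbf{Step 4.} The downward direction is the main obstacle. I would use pigeonhole on cycle lengths. Decompose the length-$\bigM n$ walk into a simple path of length $<N$ plus simple cycles. Since the number of cycles times $N$ is at least $\bigM n-N\ge |S|\cdot|S|!-N$, some length $c$ must occur at least $(|S|!-1)$ times by pigeonhole over the $N$ possible lengths. Roughly, that gives cycles of length $c$ totalling at least about $\bigM n/N$ in length, hence at least $\bigM(n-1)$ once we account for the slack.

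The residue bookkeeping is the delicate part. We must remove exactly $\bigM(n-1)$ worth of cycles, all of the same length $c$ (possible since $c\mid\bigM$), while keeping the walk connected. Removing cycles from a cycle decomposition keeps the remaining sequence a valid walk if we remove them in reverse order of their insertion. If a single length class is not large enough, I would fall back on the observation that all cycle lengths divide $|S|!$: then any combination summing to a multiple of $|S|!$ can be trimmed down to exactly $\bigM(n-1)$. This fallback is the part I expect to need care, and the factor $|S|$ in $\bigM$ is exactly the slack that makes the counting work. Since the proposition is cited from \cite{almagor2026determinization}, I could alternatively just invoke it, but the above is the self-contained route.
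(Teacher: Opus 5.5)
Your reduction to powers of the Boolean relation $R_w$ is right. The upward direction (Step 3) is also fine: a walk of length $\bigM\ge|S|$ contains a simple cycle of length $c\le|S|$, so $c\mid|S|!\mid\bigM$, and pumping that cycle $(n-1)\bigM/c$ more times gives length $n\bigM$. You are also right that $n\ge 1$ is needed.

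The downward direction is where the proof has a gap, for two reasons.

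First, the count in Step 4 loses a factor of $|S|$. Pigeonhole over the at most $|S|$ cycle lengths only guarantees one length class of total length about $n\bigM/|S|=n|S|!$; it does not give $|S|!-1$ cycles of a single length. That total is in general far below the $(n-1)\bigM$ you must remove, so deleting cycles of a single length cannot suffice. The arithmetic of your fallback can be made precise. Split the cycles of each length $c$ into bunches of $|S|!/c$ cycles, each bunch of total length $|S|!$. Each of the at most $|S|$ length classes then leaves a remainder below $|S|!$, and this is exactly the slack that the factor $|S|$ in $\bigM$ absorbs.

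Second, and more seriously, you never justify that a sub-multiset of cycles chosen by length can be deleted from a ``simple path plus simple cycles'' decomposition while leaving a walk from $s$ to $r$. In general it cannot, because a retained cycle may be attached only at vertices lying on deleted cycles. For example, the walk $s,a,b,c,a,s$ decomposes into the cycles $(s,a,s)$ and $(a,b,c,a)$ over the empty path, and deleting the first leaves the second detached. Removing cycles in reverse order of insertion only lets you delete an initial segment of the cutting sequence, not a set you select.

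The missing idea is to keep a skeleton that visits every vertex of the walk, instead of a bare simple path.
\begin{itemize}
\item Order the vertices of the length-$n\bigM$ walk by first occurrence. Shorten each segment between consecutive first occurrences, and the final segment to $r$, to a simple path. This gives a sub-walk $W_0$ from $s$ to $r$ of length $\ell_0\le|S|(|S|-1)$ that passes through every vertex of the original walk.
\item The leftover edges form a multigraph with equal in- and out-degrees at every vertex. Hence they split into simple cycles, and every sub-multiset of these cycles can be re-inserted into $W_0$.
\item The number of bunches is then at least $(n-1)|S|-(\ell_0-|S|)/|S|!$. Since $\ell_0<|S|!+|S|$ and the count is an integer, it is at least $(n-1)|S|$.
\item Deleting exactly $(n-1)|S|$ bunches removes total length $(n-1)\bigM$ and yields a walk of length $\bigM$.
\end{itemize}

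Alternatively, you could complete your Step 1 plan by quoting standard facts about powers of an $N\times N$ Boolean matrix. They become periodic from an index that is at most quadratic in $N$ (classically $(N-1)^2+1$), hence below $\bigM$. Their period divides $\mathrm{lcm}(1,\dots,N)$, which divides $|S|!$ and hence $\bigM$.

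The paper itself does not prove this proposition; it defers to \cite{almagor2026determinization}.
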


% Recall that our motivation in this section is to establish a ``pumpable'' structure. Let $(S',w)$ be a stable cycle, and consider the long-term behaviour of runs on $w^m$ for large $m$, starting from states in $S'$.
Stable cycles encapsulate two types of runs on $w^n$ for large $n$: those that can go through a minimal cycle (and are therefore bounded), and those that do not. The following captures this intuition.

\begin{definition}[Grounded Pairs]
    \label{def:grounded pairs}
    For a stable cycle $(S',w)$ the \emph{grounded pairs} are
        \begin{align*}
        \GroundPairs(S',w)= & \{(s,r)\in S'\times S'\mid \exists g\in \MinRefStates(S',w^{\bigM})\\ \ST 
        & s\runsto{w^{\bigM}}g\runsto{w^{\bigM}}r\}
        \end{align*}
        For $(s,r)\in \GroundPairs(S',w)$, the state $g\in \MinRefStates(S',w^{\bigM})$ for which $\weight(s\runsto{w^{\bigM}}g\runsto{w^{\bigM}}r)$ is minimal (among all states in $\MinRefStates(S',w^\bigM)$) is the \emph{grounding state of $(s,r)$.}
\end{definition}

\begin{proposition}
    \label{prop:grounding states reachable with any bigM k}
    Consider a stable cycle $(S',w)$. For every $s,r,g\in S'$ with $g\in \MinRefStates(S',w^{\bigM})$, if $s\runsto{w^{\bigM i}}g\runsto{w^{\bigM j}}r$ for some $i,j\ge 1$, then $(s,r)\in \GroundPairs(S',w)$.
\end{proposition}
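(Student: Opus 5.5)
The plan is to reduce the claim to the definition of $\GroundPairs(S',w)$ by ``trimming'' the given run down to exactly $\bigM$ iterations before and after $g$, using \cref{prop: transitions stabilise at M} and the reflexivity of $g$. We are given $s\runsto{w^{\bigM i}}g\runsto{w^{\bigM j}}r$ with $i,j\ge 1$ and $g\in\MinRefStates(S',w^{\bigM})$. We must produce a state $g'\in \MinRefStates(S',w^{\bigM})$ with $s\runsto{w^{\bigM}}g'\runsto{w^{\bigM}}r$. The natural choice is $g'=g$ itself.

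\textbf{First half.} From $s\runsto{w^{\bigM i}}g$ and \cref{prop: transitions stabile at M} applied with $n=i$, we get $s\runsto{w^{\bigM}}g$ directly. Here we use that reachability on $w^{\bigM n}$ coincides with reachability on $w^{\bigM}$, read in the direction ``$w^{\bigM n}$ implies $w^{\bigM}$''.

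\textbf{Second half.} Symmetrically, $g\runsto{w^{\bigM j}}r$ gives $g\runsto{w^{\bigM}}r$ by the same proposition with $n=j$. Concatenating the two halves yields $s\runsto{w^{\bigM}}g\runsto{w^{\bigM}}r$. Since $g\in\MinRefStates(S',w^{\bigM})$, this is exactly the condition in \cref{def:grounded pairs}, so $(s,r)\in\GroundPairs(S',w)$. We also need $s,r\in S'$; this holds by hypothesis, and in any case follows from $\booltrans(S',w)\subseteq S'$.

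\textbf{Expected obstacle.} The only potential subtlety is whether \cref{prop: transitions stabilise at M} is indeed an equivalence for arbitrary $n\ge 1$, as stated, rather than only one direction. If only the direction $w^{\bigM}\Rightarrow w^{\bigM n}$ were available, I would instead use reflexivity of $g$: $g\runsto{w^{\bigM}}g$, since $g$ is a reflexive state of $(S',w^{\bigM})$. I would then argue via the Boolean transition monoid that the relation of $w^{\bigM}$ is idempotent. This holds because $\bigM=|S|\cdot|S|!$ is a multiple of the index and the period of every Boolean matrix on $|S|$ states, so the relations of $w^{\bigM i}$ and $w^{\bigM}$ coincide. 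That standard fact is what underlies \cref{prop: transitions stabilise at M} anyway, so either route closes the argument.
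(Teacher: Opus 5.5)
Your argument is correct and is the intended one; the paper itself defers this proof to \cite{almagor2026determinization}. Since \cref{prop: transitions stabilise at M} is an equivalence, it shortens both halves to $s\runsto{w^{\bigM}}g\runsto{w^{\bigM}}r$, and together with $g\in\MinRefStates(S',w^{\bigM})$ this is verbatim the defining condition of $\GroundPairs(S',w)$.

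Your fallback paragraph is unnecessary, and as written it is slightly off. The property needed is that $\bigM$ is at least the index and a multiple of the period of the Boolean relation of $w$; it need not be a multiple of the index. Also, the reflexivity of $g$ does not help shorten $s\runsto{w^{\bigM i}}g$.
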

A key result is that with enough iterations of $w$ (namely $w^{2\bigM}$), the grounded states determine the asymptotic behaviour of runs, as follows.

\begin{lemma}[Pumping Grounded Pairs]
\label{lem:pumping grounded pairs}
Consider a stable cycle $(S',w)$. For every $n\in \bbN$ there exists $M_0\in \bbN$ (efficiently computable) such that for all $m\ge M_0$ the following hold.\footnote{Note that Item 2 does not depend on $n$.}  
\begin{enumerate}
    \item If $(s,r)\notin \GroundPairs(S',w)$ then $\minweight(s\runsto{w^{m\cdot 2\bigM},} r)>n$.
    
    \item If $(s,r)\in \GroundPairs(S',w)$ with grounding state $g$, then 
    \[
    \minweight(s\runsto{w^{m\cdot 2\bigM}} r)=\minweight(s\runsto{w^{M_0\cdot 2\bigM}}r)=\minweight(s\runsto{w^{\bigM}} g\runsto{w^{\bigM}}  r)
    \]
    Moreover, we have $\minweight(s\runsto{w^{m'\cdot 2\bigM}} r)\le \minweight(s\runsto{w^{\bigM}} g\runsto{w^{\bigM}}  r)$ for every $m'\in \bbN$.
\end{enumerate}
\end{lemma}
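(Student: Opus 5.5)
The plan is a standard min-plus matrix argument over the finitely many states of $S'$, using the stable-cycle hypothesis and the periodicity of the Boolean structure at $\bigM$ (\cref{prop: transitions stabilise at M}). Let $N = 2\bigM m$ for $m \ge 1$. Any run $s\runsto{w^{N}}r$ visits $S'$ at the $N+1$ boundaries between copies of $w$. Since $N$ is much larger than $|S|$, the run repeats a state many times. Its repeated segments are cycles on some $w^{j}$. By stability (\cref{def:stable cycle}), each such cycle has weight at least $\minweight$ of the baseline cycle, which is $0$. A cycle has weight exactly $0$ only if its state is minimal reflexive for the corresponding power of $w$.

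For Item 2 and the ``moreover'' clause, I start from the upper bound. Fix a grounded pair $(s,r)$ with grounding state $g$. Since $g\in\MinRefStates(S',w^{\bigM})$ has a $0$-weight cycle on $w^{\bigM}$, I insert $2(m'-1)$ copies of this cycle into the run $s\runsto{w^{\bigM}}g\runsto{w^{\bigM}}r$. This gives $\minweight(s\runsto{w^{m'2\bigM}}r)\le \minweight(s\runsto{w^\bigM}g\runsto{w^\bigM}r)$ for every $m'\ge1$, which is the ``moreover'' part.

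For the matching lower bound when $m$ is large, I take an optimal run on $w^{m2\bigM}$ and look at the states at block boundaries of length $\bigM$. There are $2m+1$ such boundaries, so by pigeonhole some state repeats. Cutting out a repeated cycle does not increase the weight, since all cycles have nonnegative weight. The catch is that cutting must preserve both the length class (a multiple of $\bigM$) and the existence of a grounding state. To handle this, I decompose the run into a cycle-free part of length at most $|S|$ blocks together with cycles. Each cycle has weight $0$ or at least $1$, since weights are integers. If some positive cycle occurs more than a bounded number of times, the run is not optimal: one could instead route through a $0$-cycle at $g$, provided the pair is grounded. If instead all cycles have weight $0$, then their states are minimal reflexive for $w^{j\bigM}$. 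Such a state is then also in $\MinRefStates(S',w^{\bigM})$, using the $\bigM$-periodicity and the fact that $0$ is the minimum. By \cref{prop:grounding states reachable with any bigM k}, the run passes through a grounding-type state, so its weight is at least $\minweight(s\runsto{w^\bigM}g'\runsto{w^\bigM}r)$ for some $g'$, after shortening the surrounding segments. This establishes the equality with $M_0$.

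For Item 1, suppose $(s,r)$ is not grounded. Then no run $s\runsto{w^{m2\bigM}}r$ can pass, at a $\bigM$-boundary, through a state of $\MinRefStates(S',w^{\bigM})$ with $\bigM$-multiples on both sides, again by \cref{prop:grounding states reachable with any bigM k}. So every cycle in the decomposition has weight at least $1$. The number of such cycles grows linearly in $m$, while the cycle-free part has weight bounded below by $-|S|\maxeff{w^{\bigM}}$. Choosing $M_0 > (n+|S|\maxeff{w^{\bigM}})+|S|$ therefore gives weight greater than $n$. The same $M_0$ works for Item 2, since there the stabilisation point depends only on $|S|$ and $\maxeff{w}$, not on $n$.

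The main obstacle is the cycle-surgery step: making sure that cycles removed or inserted respect the $\bigM$-block granularity, and that zero-weight cycles on $w^{j\bigM}$ really certify membership in $\MinRefStates(S',w^{\bigM})$. I would handle this by working in the automaton whose letter is $w^{\bigM}$, and by invoking \cref{prop: transitions stabilise at M} to pass between powers.
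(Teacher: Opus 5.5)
The paper does not prove this lemma itself; it imports it from \cite{almagor2026determinization}. So I am judging your argument on its own terms.

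\textbf{What is correct.} Your upper bound and the ``moreover'' clause are fine, and restricting to $m'\ge 1$ is the right reading.

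\textbf{The gap.} The lower bound in Item 2 breaks at the phrase ``after shortening the surrounding segments''.

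Suppose you know the optimal run is at some $t\in\MinRefStates(S',w^{\bigM})$ after $w^{i\bigM}$, with $1\le i\le 2m-1$. Then \cref{prop:grounding states reachable with any bigM k} only tells you that $(s,r)$ is grounded. That proposition is purely Boolean and says nothing about weights. What you actually need is $\minweight(s\runsto{w^{i\bigM}}t)\ge\minweight(s\runsto{w^{\bigM}}t)$, and symmetrically for $t\runsto{w^{(2m-i)\bigM}}r$. The zero loop at $t$ gives the reverse inequality for free, so the real work is showing this sequence is constant in $i$.

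Cutting cycles at the granularity of $\bigM$-blocks cannot deliver this. It leaves a segment of some $k\le|S|$ blocks, padding at $t$ only lengthens, and the same problem reappears with $k$ in place of $i$. In fact no argument using only block-level facts can work. Take a letter with states $s,x,t$, self-loops of weights $1,1,0$, edges $s\to x$ of weight $-5$, and edges $x\to t$ and $s\to t$ of weight $0$. It has only nonnegative cycles, a zero loop at $t$, and the same reachability relation for every number of steps. Yet two steps from $s$ to $t$ cost $-5$, whereas the best run sitting at a zero-loop state after one step costs $0$.

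The same weakness affects your claim that states on zero-weight block-cycles lie in $\MinRefStates(S',w^{\bigM})$. \cref{prop: transitions stabilise at M} gives $t\runsto{w^{\bigM}}t$, but not that this loop has weight $0$.

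\textbf{The missing idea.} Go down to single copies of $w$ and use the arithmetic of $\bigM$.

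First, suppose $t$ lies on a closed run of weight $0$. Split that run repeatedly at a repeated state, keeping the piece through $t$. Both pieces are closed runs on powers of $w$ inside $S'$, so by stability both have nonnegative weight; since they sum to $0$, both have weight $0$. This yields a zero-weight cycle $t\runsto{w^{\ell}}t$ with $\ell\le|S|$. Since $\ell$ divides $|S|!$, it divides $\bigM$, which correctly gives $t\in\MinRefStates(S',w^{\bigM})$.

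Next, take a run $s\runsto{w^{i\bigM}}t$. Among any $|S|\ell+1$ positions, two carry the same state and are congruent modulo $\ell$. Excise the closed sub-run between them; it has length divisible by $\ell$ and nonnegative weight. Repeat until the length is at most $|S|\ell\le\bigM$. The length stays divisible by $\ell$, so padding with the zero $\ell$-cycle at $t$ brings it to exactly $\bigM$ without increasing the weight. Doing the same with the suffix shows the run weighs at least $\minweight(s\runsto{w^{\bigM}}t\runsto{w^{\bigM}}r)$, hence at least the value at the grounding state.

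\textbf{Two smaller repairs.}
\begin{itemize}
\item Your case split should read ``for $m$ large, at least one block-cycle has weight $0$'', not ``all cycles have weight $0$''. This follows because the upper bound limits the number of positive cycles.
\item In Item 1, a cycle can span up to $|S|$ blocks, so there are only about $(2m-|S|)/|S|$ cycles. Your $M_0$ therefore needs an extra factor of $|S|$.
\end{itemize}
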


\section{The Cactus Extension}
\label{sec:cactus extension}
Intuitively, when reaching a stable cycle, we wish to mimic its long-term behaviour with a single letter. This is the motivation behind the \emph{cactus letters} introduced in~\cite{almagor2026determinization}. In the following we recall the definition. Note, however, that a crucial step in our effective procedure is to restrict the general cactus letters to a smaller fragment. This is done in \cref{sec:effective cactus}.

Consider a baseline-augmented subset construction $\augA=\tup{S,\Gamma,\augInitState,\augTrans}$. Note that we now denote the alphabet by $\Gamma$; we no longer assume $\Gamma$ is finite. 
More precisely, we do not assume the precise transitions defined in \cref{sec:augmented construction}, but the following:
\begin{itemize}
    \item The states are of the form $(q,p,T)$ where $q,p\in T$.
    \item The transitions are deterministic with respect to the second and third components (referred to as \emph{baseline component} and \emph{reachable set}, respectively).
\end{itemize}
We repeatedly extend the alphabet and transitions of $\augA$ to obtain a sequence of automata over the same state space with the properties above. In a nutshell, this is done by converting a stable cycle to a single letter, and then finding stable cycles on the new alphabet and repeating this ad-infinitum, thus obtaining the \emph{cactus letters}.

A crucial tool we need (which is at the heart of the baseline augmented construction) is the ability to \emph{shift} between baseline runs, to reason about runs under different baselines. This is easy in $\augA$, but is not natively supported by cactus letters. 
In order to maintain the ability to shift the baseline, we introduce some additional types of letters (namely \emph{rebase} and \emph{jump}).

\subsection{Cactus Letters}
\label{sec:cactus letters}
Cactus letters are defined as the infinite union of an inductively defined set. The induction steps are defined by adding new letters and transitions with the following \emph{stabilisation} operator.
%At the base of the induction, for every stable cycle $(S',w)$ in $\augA$ we introduce a new letter and transitions, as follows.
\begin{definition}[Stabilisation]
\label{def:stabilisation}
    For each stable cycle $(S',w)$ we introduce a new letter $\alpha_{S',w}$ with the following transitions: for every $(s,r)\in \GroundPairs(S',w)$ with grounding state $g$ we have the transition $(s,\alpha_{S',w},\minweight(s\runsto{w^\bigM}g\runsto{w^\bigM} r),r)$. We refer to this operation as \emph{stabilisation}, and denote the new letters by $\stab_{\augA}(\Gamma)$ and the new transitions by $\stab_{\augA}(\augTrans)$.
\end{definition}

We now apply stabilisation inductively, and take the infinite union, as follows.
\begin{definition}[stabilisation Closure]
\label{def:stab closure}
For every $k\in \bbNinf$ we define  
$\stab_k(\augA) = \augA_k =\tup{S,\Gamma_k,\augInitState,\augTrans_k}$ inductively as follows.
\begin{itemize}
    \item For $k=0$ we define $\Gamma_0=\Gamma$, $\augTrans_0=\augTrans$.
    \item For $k>0$ we define $\Gamma_k=\Gamma_{k-1}\cup \stab_{\augA_{k-1}}(\Gamma_{k-1})$ and $\augTrans_k=\augTrans_{k-1}\cup \stab_{\augA_{k-1}}(\augTrans_{k-1})$ 
    \item For $k=\infty$ we define $\Gamma_\infty=\bigcup_{k\in \bbN}\Gamma_k$ and $\augTrans_\infty=\bigcup_{k\in \bbN}\augTrans_k$.
\end{itemize}
\end{definition}
We refer to the letters in $\Gamma_\infty$ as \emph{cactus letters of $\Gamma$} and we denote 
$\augA_\infty = \stab_\infty(\augA) = \bigcup_{k\in \bbN}\augA_k$. Notice that while the sets of letters and transitions of $\augA_\infty$ are infinite, its set of states is $S$.  
\begin{example}[Cactus Letters]
    \label{xmp:cactus letters}
    Suppose $\Gamma=\{a,b\}$, we construct some cactus letters (See \cref{fig:cactus}). 
    For the sake of the example we do not have a concrete WFA, as we just illustrate the syntax of cactus letters. 
    Let ${\color{Cerulean}w_1=aaa}$ and ${\color{Cerulean}w_4=bbb}$, we define ${\color{Cerulean}\alpha_1=\alpha_{S_1,w_1}}$ and ${\color{Cerulean}\alpha_4=\alpha_{S_4,w_4}}$. 
    We then recursively define ${\color{red} w_2=ab{\color{Cerulean}\alpha_1 }ab}$ with ${\color{red}\alpha_2=\alpha_{S_2, w_2}}$. 
    We go up another level: ${\color{Green} w_3=baa{\color{Cerulean}\alpha_4}abba{\color{red}\alpha_2}ba}$ with ${\color{Green}\alpha_3=\alpha_{S_3, w_3}}$. We can consider another word outside, e.g., $ab{\color{Green}\alpha_3} aa$.
    Note that we could write the entire word as a single expression, but it is very cumbersome:
    \[ 
    ab \cdot {\color{Green}\alpha_{S_3,baa\cdot{\color{Cerulean}\alpha_{S_4,bbb}}\cdot abba\cdot{\color{red}\alpha_{S_2,ab\cdot{\color{Cerulean}\alpha_{S_1,aaa}}\cdot ab}}\cdot ba}}\cdot aa
    \]
\end{example}

\begin{figure}[H]
        \centering
        \includegraphics[width=0.95\linewidth]{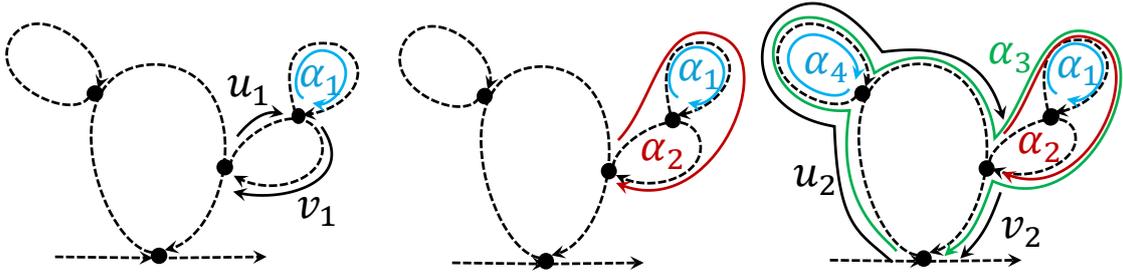}
        \caption{Cactus letters and cactus chains: $\alpha_1$ is of depth $1$ (and its inner word is of depth $0$). The word $u_1\alpha_1v_1$ (left) forms a cactus letter $\alpha_2$ of depth $2$ (center). Then, $u_2\alpha_2v_2$ forms a cactus letter $\alpha_3$ of depth $3$ (right). Notice that $u_2$ contains the cactus letter $\alpha_4$. The sequence $\alpha_3,\alpha_2 ,\alpha_1$ is a cactus chain, and so is $\alpha_3,\alpha_4$.}
        \label{fig:cactus}
    \end{figure}
    
\begin{remark}
\label{rmk:baseline transitions with cactus are zero}
Recall that in $\augA$, transitions between baseline states have weight 0. We remark that this is preserved in $\augA_\infty$. Indeed, a transition on $\alpha_{S',w}$ between baseline states $(p,p,T)$ and $(q,q,T')$ implies that $p=q$, and that $w$ prescribes a $0$-weight self loop on $p$, since $S'$ is a stable cycle and $p\in \MinRefStates(S',w^{\bigM})$ (see \cref{def:stable cycle}). 
\end{remark}

Consider a word $w\in \Gamma_\infty^*$, then $w$ is a concatenation of letters from various $\Gamma_k$'s. This gives a natural measure of \emph{depth} as the maximal $k$ used in the letters of $w$. This is illustrated in \cref{fig:cactus}. We remark that this definition  differs from that of~\cite{almagor2026determinization}, which becomes significant later on.
\begin{definition}
\label{def:depth of cactus letter}
For a word $w\in \Gamma_\infty$ we define its \emph{depth} $\depth(w)$ inductively on the structure of $\Gamma_\infty^*$ as follows:
\begin{itemize}
    \item If $w=a\in \Gamma_0=\Gamma$ then $\depth(w)=0$.
    \item If $w=\alpha_{S',x}\in \Gamma_\infty$ then $\depth(w)=1+\depth(x)$.
    \item If $w = \sigma_1 \cdots \sigma_m \in \Gamma_\infty^*$ for $m\ge 2$ then \\$\depth(w)=\max_{1\le i\le m}(\depth(\sigma_i))$.
\end{itemize}
\end{definition}

\subsection{Rebase Letters}
\label{sec:rebase letters}
In order to support baseline shifts (\cref{sec: baseline shift}), we need to introduce a new type of letters, called \emph{rebase letters}. Intuitively, these are cactus letters that also change the baseline component. Their introduction comes at a cost that becomes apparent in \cref{lem:flattening essence}.

\begin{definition}[Rebase]
\label{def:rebase}
    Let $\augA_\infty$ be the stabilisation closure of $\augA$.
    Consider a cactus letter $\alpha_{S',w}\in \Gamma_\infty$ and write $S'=\{(q,p,T)\mid q\in T\}$. 
    For each grounded pair $(s,r)\in \GroundPairs(S',w)$ where $s=(q_1,p,T)$ and $r=(q_2,p,T)$ with the transition $(s,\alpha_{S',w},c,r)\in \augTrans_\infty$, we add a letter $\beta_{S',w,s\to r}$ with the following transitions:
 \[ \begin{split}
    & \{((q',q_1,T),\beta_{S',w,s\to r},d-c,(q'',q_2,T))\mid \\&((q',p,T),\alpha_{S',w},d,(q'',p,T))\in \augTrans_\infty\}
 \end{split}
 \]
We refer to the resulting WFA as $\rebase(\augA_\infty)$. 
%, and denote the new letters by $\rebase_{\augA}(\Gamma)$ and the new transitions by $\rebase_{\augA}(\augTrans)$.
\end{definition}
\begin{figure}[H]
    \centering
    \includegraphics[width=0.85\linewidth]{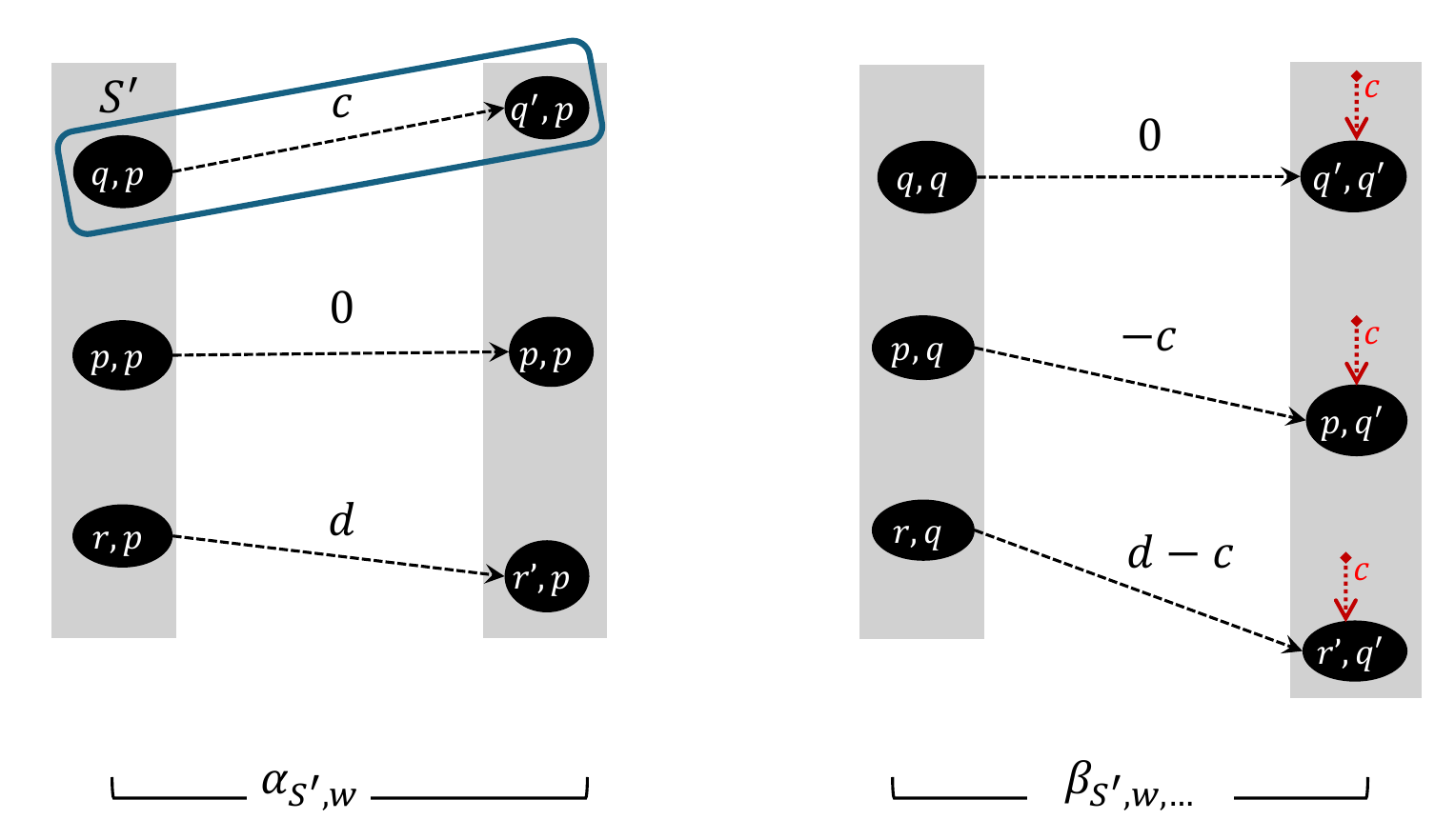}
    \caption{Rebase letter $\beta_{S',w,s\to r}$ with $s=(q,p,T)$ and $r=(q',p,T)$ induced by the $\alpha_{S',w}$ transition from $(q,p)$ to $(q',p)$. The weight of the resulting transitions, as well as the baseline run, shifts down by $c$, in accordance with the new base transition from $q$ to $q'$. }
    \label{fig:rebase}
\end{figure}
We illustrate the behaviour of a rebase letter in \cref{fig:rebase}.

\begin{remark}[Cactus as Rebase]
\label{rmk:cactus as rebase}
Consider a cactus letter $\alpha_{S',w}$ where $S'=\{(q,p,T)\mid q\in T\}$. We observe that for the baseline state $s_p=(p,p,T)$, the transitions defined on the letter $\beta_{S',w,s_p\to s_p}$ are identical to those on $\alpha_{S',w}$.

Indeed, $(s_p,\alpha_{S',w},0,s_p)\in \augTrans_\infty$, so plugging in $0$ for $c$ in \cref{def:rebase} gives identical weights to those of $\alpha_{S',w}$.
\end{remark}

We now alternate the stabilisation closure and rebase ad-infinitum. We remark that we only need one such alternation in practice, but we keep the definition to make it easier citing the relevant results from~\cite{almagor2026determinization}.
\begin{definition}[Cactus Extension]
\label{def:cactus extension}
Define $\augA^0_0=\augA$. 
For every $k,j\in \bbNinf$ we define 
$\augA^j_k=\tup{S,\Gamma^j_k,\augInitState,\augTrans^j_k}$ inductively as follows.
\begin{itemize}
    \item We define $\augA^0_k=\stab_k(\augA)$ as per \cref{def:stab closure}.
    \item For $j>0$ we define $\augA^j_0=\rebase(\augA^{j-1}_\infty)$.
    \item For $j>0$ and $k>0$ (including $k=\infty$) we define $\augA^j_k = \stab_k(\augA^j_0)$.
    \item For $j=\infty$ and every $k$ we define $\augA^\infty_k$ as the union (of alphabets and transitions) of $\augA^j_\infty$ for $j\in\bbN$. 
\end{itemize}
\end{definition}
\begin{remark}
\label{rmk:rebase not relevant in infinite level}
    Observe that by definition $\augA^\infty_k=\augA^\infty_\infty$ for all $k$. Intuitively, this reflects that fact that applying stabilisation on the infinite union does not add new letters (since stabilisation is applied to words of some finite depth anyway, and is therefore already in the union).
\end{remark}

We lift the definition of depth to rebase letters by defining $\depth(\beta_{S',w,s\to r})=1+\depth(w)$.

\subsection{Jump Letters}
\label{sec:jump letters}
Our final type of letters is a technical addition that is needed for \cref{lem:flattening essence}, which enables a central technique called \emph{flattening}.
These letters, referred to as \emph{jump letters}, allow us to change the baseline state without other effect. 
\begin{definition}[Jump Letters, $\jlset$]
    \label{def:jump letters}
    Consider a cactus extension $\augA^j_k$. For states $s=(q,p_1,T), r=(q,p_2,T)\in S$ and $T\subseteq Q$ (recall that $Q$ is the state space of the original WFA $\cA$) we introduce a \emph{jump letter} $\jl_{s\to r}$ with the following transitions. For states $t=(q',p_1,T)$ and $g=(q',p_2,T)$ (note the same \emph{first} and third component), we add the transition $(t,\jl_{s\to r},0,g)$.
    For states not of the form $(\cdot,p_1,T)$, the transitions have weight $\infty$.
    Note that the transitions on $\jl_{s\to r}$ are deterministic.
    We denote the set of Jump letters by $\jlset$.
\end{definition}

Note that since the baseline component is deterministic, then upon reading $\jl_{s\to r,T}$, all reachable states change their baseline. 
Also observe that jump can only change the baseline component to a state in $T$. However, as we discuss in the following (\cref{sec:reachable ghost states}), some states in $T$ may be unreachable. And indeed, $\jl$ letters may allow the baseline to be at an unreachable state, thus making the baseline run not seamless.

\begin{remark}
    \label{rmk:jump letters first component}
    Note that in \cref{def:jump letters}, the component $q$ in $s$ and $r$ is ignored (indeed, the same transitions are induced regardless of $q$). Thus, we sometimes denote the states $s,r$ of $\jl_{s\to r}$ as e.g., $s=(\cdot,p_1,T)$, where $\cdot$ stands for some arbitrary state.
\end{remark}

Finally, due to the ``violent'' nature of jump letters, we emphasise whenever they are present in a run, and in particular in a seamless run. We say that a run is \emph{jump free} if its transitions do not contain a jump letter.
%We do not annotate whether an alphabet contains jump letters or not, but we state so explicitly when relevant.

\subsection{Ghost Reachable States}
\label{sec:reachable ghost states}
Consider a word $w$. Recall that $\booltrans(s_0,w)$ is the set of states reachable from $s_0$ by reading $w$. 
In $\augA$, before the introduction of cactus letters, it holds that 
$\booltrans(s_0,w)=\{(p,q,T)\mid p\in T\}$ for some set $T$. That is, the reachable first components are exactly captured by the third component $T$.
After introducing cactus letters, however, this no longer holds. Indeed, \cref{def:cactus extension,def:rebase} limit the transitions to certain states (namely those obtained from grounded pairs). In particular, we have that
$\booltrans(s_0,w)=\{(p,q,T)\mid p\in T'\}$ for some $T'\subseteq T$, but equality does not necessarily hold.
Intuitively, the states in $T\setminus T'$ are 
those for which repetitions of cactus infixes lead to unbounded increase in weight, and are therefore abstracted to $\infty$ in the stabilisation.

Nonetheless, our analysis often requires us to reason about these states, which we term \emph{ghost states}. 
\begin{definition}[Ghost-Reachable States $\protect\mathghost$]
    \label{def:ghost states}
    Consider a state $s=(p_1,q_1,T_1)$ and word $w$ with $\booltrans(s,w)=\{(p_2,q_2,T'_1)\mid p_2\in T_2\}$ for some $T_2\subseteq T'_1$. 
    We define the set \emph{ghost-reachable states of $w$ from $s$} as
$\ghostTrans(s,w)=\{(p_2,q_2,T'_1)\mid p_2\in T'_1\}$.
\end{definition}
%We therefore define for  
\begin{remark}[Ghost States are implied by Reachable States]
\label{rmk:booltrans determines ghosttrans}
Note that $\ghostTrans(s,w)$ includes the ghost states as well as the standard reachable states. That is, $\booltrans(s,w)\subseteq \ghostTrans(s,w)$.
Moreover, observe that $\ghostTrans(s,w)$ is determined by $\booltrans(s,w)$, and therefore if $\booltrans(s,w)=\booltrans(s',w')$ then $\ghostTrans(s,w)=\ghostTrans(s',w')$. 
\end{remark}

\section{A Toolbox for Cacti}
\label{sec:cactus toolbox}
In~\cite{almagor2026determinization}, a set of tools is developed for using cactus, rebase and jump letters. The overall gist of the tools is to allow 
 ``folding'' parts of words into a cactus, ``unfolding'' other parts, possibly in a nested fashion, and shifting the baseline to a certain run as a ``point-of-view change''. These operations demonstrate the versatility of the cactus framework. 
 
In this section, we recall much of this toolbox (often succinctly, citing only the tools we require). More importantly, we start laying down the foundations for our results, building upon these tools and extending them.

\subsection{Degeneracy and A Bound on Cactus Depth}
\label{sec:degeneracy and cactus depth bound}
Consider words built up by repeatedly nesting cactus letters, as the following definition captures (see \cref{fig:cactus}).
\begin{definition}[Cactus Chain]
\label{def:cactus chain}
    A \emph{cactus chain} is a finite sequence of letters $\alpha_{S'_1,w_1},\ldots,\alpha_{S'_n,w_n}$ such that for every $1\le i<n$ we have $w_i=u_i \alpha_{S'_{i+1},w_{i+1}}v_i$ for some $u_i,v_i\in \Gamma_\infty^0$.
\end{definition}
Intuitively, a cactus chain singles out a ``branch'' within a cactus (see \cref{xmp:cactus letters,fig:cactus}). 
A-priori, cactus chains can be arbitrarily deep. However, a nontrivial result of~\cite{almagor2026determinization} is that deep-enough chains contain degenerate behaviours, and are therefore ``not interesting'', in a sense.
% Our goal in this section is to show that while this is the case, only a bounded number of ``layers'' can include interesting behaviours. In order to formalise this, we introduce the notions of \emph{degenerate} and \emph{non-degenerate} stable cycles, as follows.
\begin{definition}[(Non-)Degenerate Stable Cycle]
\label{def:degenerate stable cycle}
    A stable cycle $(S',w)$ is \emph{degenerate} if for every $s\in S'$, $t\in \RefStates(S',w^{2\bigM})$, if there is a run $s\runsto{w^{2\bigM k}}t$ for some $k\ge 1$, then $(s,t)\in \GroundPairs(S',w)$.

    Otherwise, $(S',w)$ is \emph{non-degenerate}, and a state $s\in S'$ such that there exists $t\in \RefStates(S',w^{2\bigM})$ and $k\ge 1$ for which $s\runsto{w^{2\bigM k}}t$ and $(s,t)\notin \GroundPairs(S',w)$ is called a \emph{non-degenerate state}.
    %we define degenerate stable cycle $(S',w)$ as stable cycle in which for every $s \in S', t \in \RefStates(S',w^{2 \bigM})$ if there is a run $s \runsto{w^{2\bigM i}} t$ (for some $i$) than $(s,t) \in \GroundPairs(S',w)$.
\end{definition}
Intuitively, the ``canonical'' degenerate cycle is one where all the reflexive states are also minimal-reflexive (i.e., have a $0$-cycle on $w^{2\bigM}$). The formal definition generalises this slightly.
The central result about degeneracy is the following, which serves as a basis for our effective alphabet in \cref{sec:effective cactus}.
\begin{lemma}
    \label{lem:deep cactus chain has degenerate cycle}
    Consider a cactus chain $\alpha_{S'_1,w_1},\ldots,\alpha_{S'_n,w_n}$. 
    If $n\ge |S|$ then\footnote{In~\cite{almagor2026determinization} this is proved for $n>|S|$. However, the proof actually works already for $n=|S|$.}
    %\shtodo{If reviewers ask, the reason it works for $n=|S|$ is that in this case, the last cycle has $n$ SCCs, so all states are minimal, and it is therefore degenerate.}
    there exists $1\le i\le n$ such that $\alpha_{S'_i,w_i}$ is a degenerate stable cycle.
\end{lemma}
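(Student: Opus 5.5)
The plan is a potential argument over the chain. I would attach to each level $i$ a quantity bounded by $|S|$ and show that it strictly decreases whenever $\alpha_{S'_i,w_i}$ is non-degenerate. Suppose, towards a contradiction, that all of $\alpha_{S'_1,w_1},\ldots,\alpha_{S'_n,w_n}$ are non-degenerate stable cycles with $n\ge |S|$.

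\textbf{Step 1 (the measure).} For each $i$, let $R_i=\RefStates(S'_i,w_i^{2\bigM})$. Split it into the states that are minimal reflexive, i.e.\ in $\MinRefStates(S'_i,w_i^{\bigM})$, and the rest. A cleaner variant is to count strongly connected components of the reachability graph of $w_i^{2\bigM}$ restricted to $R_i$. By \cref{prop: transitions stabilise at M}, the relations induced by $w_i^{2\bigM k}$ are the same for all $k\ge 1$, so this graph is well defined. Let $c_i$ be the number of SCCs that contain no minimal reflexive state; this is bounded by $|S|$. By \cref{def:degenerate stable cycle}, non-degeneracy of level $i$ gives a state $s$ and a $t\in R_i$ with $s\runsto{w_i^{2\bigM k}}t$ and $(s,t)\notin\GroundPairs(S'_i,w_i)$. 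By \cref{prop:grounding states reachable with any bigM k}, such a $t$ cannot be reached through any minimal reflexive state. Hence the SCC of $t$ is ``ungrounded'', and $c_i\ge 1$.

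\textbf{Step 2 (the letter kills ungrounded components).} By \cref{def:stabilisation}, $\alpha_{S'_{i},w_{i}}$ only has transitions $s\to r$ for grounded pairs, and each of these factors through a grounding state $g\in\MinRefStates(S'_i,w_i^{\bigM})$. Now move one level up. The word $w_{i-1}=u_{i-1}\alpha_{S'_i,w_i}v_{i-1}$ contains this letter, so every run on $w_{i-1}^{2\bigM}$ passes through transitions of $\alpha_{S'_i,w_i}$, and therefore through grounding states of level $i$. I would use this to show that every reflexive SCC of level $i-1$ ``contains a copy'' of a level-$i$ SCC that has a minimal reflexive state. Together with the fact that level $i$ already has an ungrounded SCC that is lost, this should give a strictly decreasing quantity. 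The cleanest form I expect is: the number of SCCs of the reflexive graph, over the states of the relevant sets, strictly decreases from level $i$ to level $i-1$. Concretely, each level-$(i-1)$ component projects injectively onto grounded components of level $i$, and the ungrounded component witnessing non-degeneracy of level $i$ has no preimage.

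\textbf{Step 3 (counting).} A chain of length $n\ge|S|$ would yield $|S|$ strictly decreasing positive integers, each at most $|S|$ and each at least $1$ (the lower bound coming from Step 1 at every non-degenerate level). This is impossible. The footnote's boundary case is exactly the point where, at the innermost level, all $|S|$ components would be singletons. At that point every reflexive state is forced to be minimal, via \cref{def:stable cycle} and the $0$-weight baseline of \cref{rmk:baseline transitions with cactus are zero}, so that level is degenerate.

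\textbf{Main obstacle.} I expect Step 2 to be the hard part, namely making the strict decrease precise. There are two difficulties. First, the state sets $S'_i$ differ between levels, because the baseline and reachable components change along $u_{i-1}$. The comparison therefore has to go through the first components, together with the deterministic baseline and reachable-set tracking. Second, I must argue about weights as well as reachability: a component of level $i-1$ might become minimal only because the cycles through $\alpha_{S'_i,w_i}$ carry weight $0$. Here I would first try \cref{lem:pumping grounded pairs}, Item 2, to transfer the weights of grounded runs exactly, and stability (minimal cycle weight $0$ at the baseline) to conclude that grounded states of level $i$ stay minimal at level $i-1$.
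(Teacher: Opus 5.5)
Your overall strategy (count the reflexive SCCs of $w_i^{2\bigM}$, and show that this count drops by at least $c_i$ from level $i$ to level $i-1$) is the right one, and Step~1 is correct. However, Step~2 carries the whole proof and you only announce it. The route you sketch for it, transferring weights via \cref{lem:pumping grounded pairs} and arguing that grounded states stay minimal one level up, is a detour: the monotonicity you need is purely Boolean.

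Let $E_i$ be the Boolean matrix of $w_i^{\bigM}$ over all of $S$; it is idempotent by \cref{prop: transitions stabilise at M}. Let $A_i$ be the Boolean matrix of $\alpha_{S'_i,w_i}$. By \cref{def:stabilisation}, $A_i(s,r)=1$ iff $E_i(s,g)=E_i(g,r)=1$ for some $g\in\MinRefStates(S'_i,w_i^{\bigM})$. Hence $A_i$ is idempotent, and its reflexive SCCs are exactly the SCCs of $E_i$ that contain a minimal reflexive state. Writing $N(\cdot)$ for the number of reflexive SCCs, this gives $N(A_i)=N_i-c_i$.

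Now $E_{i-1}=(UA_iV)^{\bigM}=UA_iY$, where $U,V$ are the matrices of $u_{i-1},v_{i-1}$ and $Y=(VUA_i)^{\bigM-1}V$. So it suffices to show: if $A$ and $F=XAY$ are idempotent, then $N(F)\le N(A)$.
\begin{itemize}
\item For a reflexive $t$ of $F$, pick $a,b$ with $X(t,a)=A(a,b)=Y(b,t)=1$.
\item Since $A=A^{|S|+1}$, pigeonhole gives a reflexive $e$ of $A$ with $A(a,e)=A(e,b)=1$.
\item If reflexive $t,t'$ of $F$ are assigned $e,e'$ in the same SCC of $A$, then $F(t,t')=(XAAAY)(t,t')=1$, and symmetrically $F(t',t)=1$. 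So $t,t'$ lie in the same SCC of $F$.
\end{itemize}
This is exactly the injection you wanted, and it yields $N_{i-1}\le N_i-c_i$. The differing state sets cause no trouble, because you work over all of $S$ and $S'_{i-1}$ is closed under $w_{i-1}$.

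Step~3 is then off by one, and your patch for the boundary case does not hold. There \emph{are} $|S|$ strictly decreasing integers in $\{1,\ldots,|S|\}$, so no contradiction arises yet. Moreover, ``all reflexive SCCs are singletons, hence every reflexive state is minimal'' does not follow from \cref{def:stable cycle}. A singleton reflexive state can carry a strictly positive self-loop, which is precisely an ungrounded SCC. For example, two states with disjoint self-loops of weights $0$ and $1$ already form a non-degenerate stable cycle whose SCCs are singletons.

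The missing unit comes from the outermost level. You lose it because you conflate $c_i$ with the total count $N_i$. A non-degenerate level has $N_i\ge c_i+1\ge 2$: the extra SCC is the one containing the baseline state, which lies in $\MinRefStates$ by \cref{def:stable cycle}. Hence $N_1\ge 2$ and $N_i\ge N_{i-1}+1$, so $N_n\ge n+1$. This contradicts $N_n\le|S'_n|\le|S|$ whenever $n\ge|S|$.
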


\subsection{Baseline Shift}
\label{sec: baseline shift}
Perhaps the most versatile feature of the cactus framework is the ability to shift from one baseline to another. We bring the main tools from~\cite{almagor2026determinization}, and develop some new tools that make this even more flexible.

The intuition behind baseline shifts is simple: we consider some word/run/state with some baseline component, and we change the baseline. To account for this change, we re-normalise the weights according to the new baseline. 

We bring the precise definition, the proof that it is well-defined is in~\cite{almagor2026determinization}.
Consider a word $w=\tau_1\cdots\tau_k$ where for every $1\le i\le k$ we have that 
\[\tau_i\in \{(p_{i-1},\sigma_i,c_i,p_i),\alpha_{S'_i,w_i},\beta_{S'_i,w_i,s_i\to r_i}\}\]
i.e., each letter is either in $\Delta$, a cactus, or a rebase.
Let $\rho_0=t_1,\ldots,t_k$ be a run of $\augA_{\infty}^\infty$ on $w$, and denote 
$t_i=((q_{i-1},p_{i-1},T_{i-1}),\tau_i,d_i,(q_i,p_i,T_i))$ (where the $d_i$ are the weights).

We define the word $w'=\gamma_1\cdots \gamma_k$, where we split the definition of $\gamma_i$ to cases according to whether $\tau_i$ is a standard letter or a cactus/rebase.
\begin{itemize}
    \item If $\tau_i=(p_{i-1},\sigma_i,c_i,p_i)$, then $d_i=c'_i-c_i$ where $c'_i$ is the weight of the transition $(q_{i-1},\sigma_i,c'_i,q_i)\in \augTrans$ (by \cref{sec:augmented construction}).
    In this case, we define $\gamma_i=(q_{i-1},\sigma_i,c'_i,q_i)$.
    \item If $\tau_i=\alpha_{S'_i,w_i}$ or $\tau_i=\beta_{S'_i,w_i,r_i\to s_i}$, denote $S'_i=\{(q,p,T)\mid q\in T\}$ for some $p,T$. We define $\gamma_i=\beta_{S'_i,w_i,(q_{i-1},p,T)\to (q_i,p,T)}$. In case $q_{i-1}=q_i=p$, then we identify $\gamma_i=\alpha_{S'_i,w_i}$ as per \cref{rmk:cactus as rebase}.
\end{itemize}
Having defined $w'$, we now have a bijection between runs on $w$ and on $w'$.
\begin{proposition}
    \label{prop:baseline shift run bijection}
    In the notations above, for every sequence of states $f_1,\ldots,f_k\in Q$ (states of $\cA$), the following are equivalent.
    \begin{itemize}
        \item $\eta=x_1,\ldots,x_k$ with $x_i=((f_{i-1},p_{i-1},T_{i-1}),\tau_i,a_i,(f_i,p_i,T_i))$ is a run of $\augA_\infty^\infty$ on $w$.
        \item $\mu=y_1,\ldots,y_k$ with $y_i=((f_{i-1},q_{i-1},T_{i-1}),\gamma_i,a_i-d_i,(f_i,q_i,T_i))$ is a run of $\augA_\infty^\infty$ on $w'$ (recall that $d_i$ are the weights in $\rho_0$).
    \end{itemize}
\end{proposition}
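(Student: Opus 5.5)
We argue letter by letter: the equivalence is local, and we show that for each $i$ the transition $x_i$ exists in $\augA_\infty^\infty$ (with weight $a_i$) if and only if $y_i$ exists (with weight $a_i-d_i$). Concatenating these equivalences over $i=1,\ldots,k$ gives the claim. The states match by construction, since $\eta$ and $\mu$ share the first components $f_i$ and the reachable sets $T_i$; only the baseline component changes from $p_i$ to $q_i$. So it suffices to check transition existence and weights. We split into cases according to the type of $\tau_i$, as in the definition of $w'$.

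\emph{Standard letters.} Here $\tau_i=(p_{i-1},\sigma_i,c_i,p_i)$ and $\gamma_i=(q_{i-1},\sigma_i,c'_i,q_i)$. By the construction of $\augA$ in \cref{sec:augmented construction}, $x_i$ exists exactly when $T_i=\booltrans(T_{i-1},\sigma_i)$ and there is a transition $(f_{i-1},\sigma_i,e,f_i)\in\Delta$ with $e<\infty$, in which case $a_i=e-c_i$. The same condition gives $y_i$, with weight $e-c'_i$. Since $d_i=c'_i-c_i$, we get $e-c'_i=a_i-d_i$, as needed.

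\emph{Cactus and rebase letters.} Write $S'_i=\{(q,p,T)\mid q\in T\}$. First reduce to the case where $\tau_i$ is a cactus $\alpha_{S'_i,w_i}$. If $\tau_i=\beta_{S'_i,w_i,r\to s}$, then by \cref{def:rebase} its transitions are those of $\alpha_{S'_i,w_i}$ with the baseline component relabelled and the weights shifted by the weight $c$ of $\alpha_{S'_i,w_i}$ from $r$ to $s$. Every transition on $\tau_i$ therefore corresponds to an $\alpha$-transition, and both $a_i$ and $d_i$ are shifted by the same $c$; this shift cancels in $a_i-d_i$. For $\tau_i=\alpha_{S'_i,w_i}$, the run $\rho_0$ supplies the transition $((q_{i-1},p,T),\alpha,d_i,(q_i,p,T))$, so $(q_{i-1},p,T)\to(q_i,p,T)$ is a grounded pair, and $\gamma_i=\beta_{S'_i,w_i,(q_{i-1},p,T)\to(q_i,p,T)}$ is a well-defined letter with $c=d_i$. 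By \cref{def:rebase}, $\gamma_i$ has a transition $(f_{i-1},q_{i-1},T)\to(f_i,q_i,T)$ of weight $a-d_i$ exactly when $\alpha$ has a transition $(f_{i-1},p,T)\to(f_i,p,T)$ of weight $a$. This is precisely the equivalence for this letter. In the degenerate case $q_{i-1}=q_i=p$, the identification of $\gamma_i$ with $\alpha_{S'_i,w_i}$ is consistent by \cref{rmk:cactus as rebase}, since then $d_i=0$.

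\emph{Expected obstacle.} The main difficulty is the bookkeeping for rebase letters inside the extension $\augA^j_k$. The letter $\beta_{S'_i,w_i,\cdot}$ must actually belong to the alphabet, which requires the underlying cactus letter and its grounded pair to exist; $\rho_0$ guarantees both. We also need the third components $T_{i-1},T_i$ to agree between the two runs, which holds because the reachable-set component is deterministic and is left unchanged by the shift. Once these are checked, the weight identity $a_i-d_i$ follows directly from the normalisation in the definitions, and the induction over $i$ is immediate.
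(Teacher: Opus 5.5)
Your proposal is correct. The paper gives no proof of this proposition and refers to \cite{almagor2026determinization} instead, so there is no in-paper route to compare against.

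Your letter-by-letter case analysis checks out as a direct verification from the definitions:
\begin{itemize}
\item Standard letters: the normalisation $c'-c$ in $\augA$ gives $e-c'_i=a_i-d_i$.
\item Rebase letters: you reduce to the underlying cactus letter, using that the common shift $c$ cancels in $a_i-d_i$.
\item Cactus letters: \cref{def:rebase} with shift constant $d_i$ gives the correspondence. The run $\rho_0$ guarantees that the grounded pair exists, so $\gamma_i$ is a legitimate letter.
\end{itemize}

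One small precision about the identification case $q_{i-1}=q_i=p$. When $\tau_i$ is a rebase letter $\beta_{S'_i,w_i,r\to s}$, the actual $d_i$ equals $-c$ and need not be $0$. What vanishes, by \cref{rmk:baseline transitions with cactus are zero}, is the $\alpha$-weight $d_i+c$ of the baseline self-loop. Since you first reduce to the cactus normalisation, this is the quantity you actually use, so the argument stands.
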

We now wrap the definitions above in a concise baseline-shift operator, and overload it to runs, states, sets and configurations. Note that some of these extensions are new, and make shifting more convenient in the following proofs.

%We wrap up this section by introducing the baseline-shift operator, and providing a concise formalism for using it.
For a word $w$ and $\rho_0$ as above, we denote the constructed word $w'$ by 
$\baseshift{w}{\rho_0}$.
We overload this notation to runs: for a run $\eta=x_1,\ldots,x_k$ with $x_i=((f_{i-1},p_{i-1},T_{i-1}),\gamma_i,a_i,(f_i,p_i,T_i))$ we denote by $\baseshift{\eta}{\rho_0}$ the run $\mu=y_1,\ldots,y_k$ with $y_i=((f_{i-1},q_{i-1},T_{i-1}),\gamma_i, \\ a_i-d_i,(f_i,q_i,T_i))$. Note that a run already encapsulates the information about the word, and therefore we do not need to specify $w$ in the latter notation.

\cref{prop:baseline shift run bijection} readily implies the following (see~\cite{almagor2026determinization}):
%section, namely that the baseline shift of $\rho_0$ by itself is a seamless run, and that baseline shifts maintain the gaps between runs.
\begin{corollary}
    \label{cor:baseline shift to seamless run}
    Consider a word $w$ and a run $\rho_0$ on $w$, then $\baseshift{\rho_0}{\rho_0}$ is a seamless run (and in particular gains $0$ weight in every transition).
\end{corollary}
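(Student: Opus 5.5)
Apply \cref{prop:baseline shift run bijection} with $\eta=\rho_0$ itself. Write $\rho_0=t_1,\ldots,t_k$ with $t_i=((q_{i-1},p_{i-1},T_{i-1}),\tau_i,d_i,(q_i,p_i,T_i))$. The inner states of $\rho_0$ are $f_i=q_i$ and its weights are $a_i=d_i$. The proposition then says that
\[\baseshift{\rho_0}{\rho_0}=y_1,\ldots,y_k \quad\text{with}\quad y_i=((q_{i-1},q_{i-1},T_{i-1}),\gamma_i,0,(q_i,q_i,T_i))\]
is a run of $\augA_\infty^\infty$ on $w'=\baseshift{w}{\rho_0}$. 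Every state along it has equal inner and baseline components, so it is a baseline run in the sense of \cref{def:baseline states}. Every transition has weight $d_i-d_i=0$, which gives the parenthetical claim immediately.

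\emph{Seamlessness.} I must show that each prefix $y_1\cdots y_j$, which has weight $0$, attains $\minweight(\gamma_1\cdots\gamma_j, s\to (q_j,q_j,T_j))$. My plan is to argue that no run of $\augA_\infty^\infty$ on the prefix $\gamma_1\cdots\gamma_j$ ending in the state $(q_j,q_j,T_j)$ has negative weight. Any such run lies in the bijection of \cref{prop:baseline shift run bijection} (applied to the prefix, which the construction respects letter by letter). Hence it has the form $\baseshift{\eta}{\rho_0}$ for a run $\eta$ on $\tau_1\cdots\tau_j$ with the same inner states, ending at inner state $q_j$. Its weight is $\weight(\eta)-\weight(\rho_0[1,j])$.

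\emph{Comparison.} It therefore suffices that $\weight(\eta)\ge\weight(\rho_0[1,j])$ for all runs $\eta$ on the prefix ending in inner state $q_j$, i.e.\ that $\rho_0$ is itself seamless. This is where I expect the obstacle: the corollary is stated for an arbitrary $\rho_0$. If seamlessness is meant in the weak, per-state sense, I would instead fall back on the following argument. Transitions between baseline states always carry weight $0$, by \cref{obs:baseline runs have weight 0} and \cref{rmk:baseline transitions with cactus are zero}. A lower run ending at $(q_j,q_j,T_j)$ corresponds exactly to a cheaper run of $\cA$ (unfolding cactus and rebase letters) to $q_j$. I would then use the freedom in choosing the letters $\gamma_i$ to absorb that difference.

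\emph{Fallback if this is insufficient.} If that fallback does not work, I would adopt the reading used in~\cite{almagor2026determinization}: ``seamless'' there refers to the baseline run tracking the minimal weight relative to its own normalisation. Under that reading, the first paragraph (a $0$-weight baseline run at every step) already completes the proof.
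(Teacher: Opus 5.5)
Your first paragraph is correct, and it is exactly the argument behind the paper's remark that the corollary follows ``readily'' from \cref{prop:baseline shift run bijection}. Taking $\eta=\rho_0$, every shifted transition has weight $d_i-d_i=0$ and every state is a baseline state, so the parenthetical claim is fully proved.

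The seamlessness part, however, is not proved, and it cannot be proved for an arbitrary $\rho_0$. Your reduction is in fact an equivalence. The bijection is exact and letter-by-letter, so for every $j$ the minimal weight of a run from $s_0$ on $\gamma_1\cdots\gamma_j$ into $(q_j,q_j,T_j)$ equals the minimal weight of a run on $\tau_1\cdots\tau_j$ into $(q_j,p_j,T_j)$, minus $\weight(\rho_0[1,j])$. Hence $\baseshift{\rho_0}{\rho_0}$ is seamless if and only if $\rho_0$ is.

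Here is a concrete counterexample. Let $\cA$ have transitions $(q_0,a,0,q_1)$, $(q_0,a,0,q_2)$, $(q_1,a,0,q_3)$ and $(q_2,a,5,q_3)$, with all other weights $\infty$. Let $w=(q_0,a,0,q_1)(q_1,a,0,q_3)$. Let $\rho_0$ be the run of $\augA$ through $(q_2,q_1,\{q_1,q_2\})$; it reaches $(q_3,q_3,\{q_3\})$ with weight $5$, whereas the baseline reaches that state with weight $0$. Then $\baseshift{w}{\rho_0}=(q_0,a,0,q_2)(q_2,a,5,q_3)$. The shift of the old baseline reaches $(q_3,q_3,\{q_3\})$ via $(q_1,q_2,\{q_1,q_2\})$ with weight $0-5=-5<0$. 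So $\baseshift{\rho_0}{\rho_0}$, which has weight $0$ there, is not seamless.

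Neither fallback repairs this.
\begin{itemize}
\item The first has nothing to exploit. $\baseshift{w}{\rho_0}$ is uniquely determined by $\rho_0$, so there is no freedom in the $\gamma_i$, and the cheaper run is a genuine run of $\augA_\infty^\infty$ on that word. The fact that baseline-to-baseline transitions weigh $0$ is irrelevant, because the offending run enters the baseline state from a non-baseline state.
\item The second replaces the paper's definition of seamless (every prefix is minimal among all runs into the same state) with ``the baseline has weight $0$''. That only restates the parenthetical claim.
\end{itemize}

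What you should have concluded is this: for every $\rho_0$, $\baseshift{\rho_0}{\rho_0}$ is a baseline run of constant weight $0$, and it is seamless exactly when $\rho_0$ is seamless. The corollary has to be read with that implicit hypothesis. This matches how the paper uses it. In \cref{lem: shift reflexive cycle to stable cycle} only the zero weight is needed. Where seamlessness matters, for instance the shift on the minimal seamless run $\mu_1$ in \cref{lem: charge decrease to high potential}, the run being shifted on is itself seamless.
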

\begin{corollary}
    \label{cor:baseline shift maintains gaps}
    Consider a word $w$ and a run $\rho_0$ on $w$. Let $\rho_1,\rho_2$ be runs on $w$, and consider the shifted runs $\mu_1=\baseshift{\rho_1}{\rho_0}$ and $\mu_2=\baseshift{\rho_2}{\rho_0}$ then for every $0\le i\le |w|$ it holds that $\weight(\rho_1[1,i])-\weight(\rho_2[1,i])=\weight(\mu_1[1,i])-\weight(\mu_2[1,i])$.
\end{corollary}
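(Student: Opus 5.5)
The proof follows directly from \cref{prop:baseline shift run bijection}, applied to the three runs $\rho_0,\rho_1,\rho_2$ on $w$. Write $\rho_0=t_1,\ldots,t_k$ with transition weights $d_1,\ldots,d_k$, as in the construction of $w'=\baseshift{w}{\rho_0}$. For $j\in\{1,2\}$ write $\rho_j=x^j_1,\ldots,x^j_k$ with weights $a^j_1,\ldots,a^j_k$. By the proposition, since $\rho_j$ is a run of $\augA_\infty^\infty$ on $w$ (with inner-state sequence $f^j_0,\ldots,f^j_k$), the shifted sequence $\mu_j=\baseshift{\rho_j}{\rho_0}$ is a run on $w'$. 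Its $i$-th transition has weight exactly $a^j_i-d_i$. The key observation is that the correction $d_i$ depends only on $\rho_0$ and on the position $i$, not on which run is being shifted.

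We then compute prefix weights. For every $0\le i\le k$,
\[
\weight(\mu_j[1,i])=\sum_{l=1}^{i}(a^j_l-d_l)=\weight(\rho_j[1,i])-\weight(\rho_0[1,i]).
\]
Subtracting the identity for $j=2$ from the one for $j=1$ cancels the common term $\weight(\rho_0[1,i])$. This yields $\weight(\mu_1[1,i])-\weight(\mu_2[1,i])=\weight(\rho_1[1,i])-\weight(\rho_2[1,i])$, as required. For $i=0$ both sides are the empty sum, hence $0$.

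The only point needing care is that the three runs must use the same word $w$, so that the corrections $d_i$ line up position by position. This holds by hypothesis. All weights involved are finite, since runs consist of finite-weight transitions, so there are no issues with $\infty-\infty$. Strictly speaking, one should also note that the same word $w'$ is produced for both shifts, since $w'$ depends only on $\rho_0$. This means $\mu_1$ and $\mu_2$ are runs on a common word, so comparing their prefixes is meaningful. I expect no genuine obstacle: the argument is a telescoping of the per-transition identity from \cref{prop:baseline shift run bijection}.
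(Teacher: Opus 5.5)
Your proof is correct and is exactly the argument the paper intends: the paper states the corollary as an immediate consequence of \cref{prop:baseline shift run bijection}, where each shifted transition has weight $a_i-d_i$ with $d_i$ depending only on $\rho_0$, so the correction sums to $\weight(\rho_0[1,i])$ and cancels in the difference. One implicit point you could make explicit is that $\rho_1,\rho_2$ carry the same baseline and reachable-set components as $\rho_0$ (these components are deterministic in the word), which is what makes the proposition, and hence $\baseshift{\rho_j}{\rho_0}$, applicable.
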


We further notice that in a sequence of baseline shifts, only the last one matters:
\begin{remark}[Baseline Shift is Right-Absorbing]
\label{rmk:baseline shift is right absorbing}
Baseline shift is a right-absorbing operator in the following sense: 
\[\baseshift{\baseshift{w}{\rho_1}}{\rho_2}=\baseshift{w}{\rho_2}\]
(and similarly for runs). 
This follows immediately from the definition.  
\end{remark}

Finally, we lift the baseline shift notation also to states, sets and configurations as follows. Note that these are exactly the shifts that occur in the states in \cref{prop:baseline shift run bijection}. 
\begin{definition}
    \label{def: baseline shift on states sets and config}
    For states $s=(p,q,T)$ and $s'=(r,q,T)$ we define $\baseshift{s}{s'}=(p,r,T)$ (i.e., the first component of $s'$ indicates the new baseline component).\\ 
    For a set $R\subseteq S$ we lift this to $\baseshift{R}{s'}=\{\baseshift{r}{s'}\mid r\in R\}$.\\
    For a configuration $\vec{c}$ we lift this to $\vec{c'}=\baseshift{\vec{c}}{s'}$ where $\vec{c'}(q')=\vec{c}(q)$ where $q'=\baseshift{q}{s'}$.

    Lastly, we sometimes specify $s'$ as the first state in some run. That is, let $\rho:s'\runsto{w}s''$ for some $w,s''$, then we denote $\baseshift{\cdot}{\rho}=\baseshift{\cdot}{s'}$.
Finally, 
\end{definition}

\subsection{From Reflexive Cycles to Stable Cycles}
\label{sec:from reflexive to stable}
Recall from \cref{def:stable cycle} that the difference between a (proper) reflexive cycle and a stable cycle is that in the latter there are no negative cycles. 
Equipped with baseline shifts (\cref{sec: baseline shift}) we can now convert any reflexive cycle into a stable cycle by changing the baseline to the one with the ``steepest'' negative self-cycles. We formalise this intuition in this section. We remark that this is where our novel contribution begins in earnest. 

Consider a reflexive cycle $(S',w)$ with baseline run $\rhobase$, and let $\rho:r\runsto{w}r$ be a reflexive run (i.e., $r\in \RefStates(S',w)$). 
We define $\baseshift{(S',w)}{\rho}=(\baseshift{S'}{\rho},\baseshift{w}{\rho})$.  
Note that $\baseshift{(S',w)}{\rho}$ is then also a reflexive cycle, with baseline run $\baseshift{\rhobase}{\rho}$.

We now claim that by choosing the right $\rho$, we can actually induce a stable cycle (possibly by iterating $w$ several times).
We start with an auxiliary lemma.

For every $n\in \bbN$ and run $\rho:s\runsto{w^n}s$ for $s\in S$ we define the \emph{slope} of $\rho$ to be\footnote{Strictly speaking, the slope should be with respect to $\rho$ and $w$ and $n$ (e.g., if $\rho:s\runsto{(abab)^3}s$ it is not clear if $w=abab$ and $n=3$ or $w=ab$ and $n=6$), but we omit this and assume $w$ and $n$ are clear from $\rho$.} 
$\slope(\rho)=\frac{1}{n}\weight(\rho)$. We also define $\minslope(S',w)=\min \{\slope(\rho)\mid \rho:s\runsto{w^n}s,\ s\in S,n\in \bbN\}$. We show that this minimum is  attained by using at most $|S|$ repetitions of $w$. This is a fairly standard argument, but we give the proof for completeness.
\begin{proposition}
\label{prop: reflexive cycle negative slope short}
Consider a reflexive cycle $(S',w)$. There exists $k\le |S|$, a state $s\in S'$ and a run $\rho:s\runsto{w^k}s$ such that $\slope(\rho)=\minslope(S',w)$.
%for every $n\in \bbN$, $s'\in S$ and run $\pi:s'\runsto{w^n}s'$ it holds that $\slope(\rho)\le \slope(\pi)$.
\end{proposition}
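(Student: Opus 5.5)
The plan is the standard cycle-decomposition argument on the finite graph obtained by collapsing each reading of $w$ into a single edge. Define a weighted directed graph $G_w$ on the vertex set $S'$ (or all of $S$; this does not matter, since $\booltrans(S',w)\subseteq S'$ keeps runs from $S'$ inside $S'$). Put an edge $s\to r$ of weight $\minweight(s\runsto{w}r)$ whenever this value is finite. A run $\rho:s\runsto{w^n}s$ projects to a closed walk of length $n$ in $G_w$, namely the sequence of states at the $w$-boundaries, and $\weight(\rho)$ is at least the weight of that walk. Conversely, every closed walk of length $n$ is realised by a run on $w^n$ of exactly the walk's weight, by concatenating minimal runs on each copy of $w$. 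Hence $\minslope(S',w)$ equals the minimum mean weight of a closed walk in $G_w$, where the mean is weight divided by length.

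The key step is to show that this minimum is attained by a \emph{simple} cycle. Take any closed walk $C$ of length $n$. If some vertex repeats at two positions other than the two endpoints, split $C$ at the repetition into two shorter closed walks $C_1,C_2$, with $|C_1|+|C_2|=n$ and $\weight(C_1)+\weight(C_2)=\weight(C)$. Then $\weight(C)/n$ is a convex combination of $\weight(C_1)/|C_1|$ and $\weight(C_2)/|C_2|$, so one of them has mean at most that of $C$. Induct on length until a simple cycle remains. A simple cycle visits at most $|S|$ distinct vertices, so its length $k$ satisfies $k\le |S|$.

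Two points need care. First, the infimum must really be a minimum. This is immediate once we know that every closed walk is dominated by a simple cycle, because there are only finitely many simple cycles. Second, the base point: the simple cycle passes through some state $s$, and rotating it to start at $s$ gives a closed walk from $s$ of the same weight and length. Realising it by minimal runs on each copy of $w$ yields $\rho:s\runsto{w^k}s$ with $\slope(\rho)=\minslope(S',w)$. Finally, $s\in S'$ because the cycle lives in $S'$.

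The main obstacle is only the bookkeeping in the reduction from runs to walks in $G_w$, namely checking that replacing each $w$-segment by a minimal run preserves being a run. I expect this to be routine, since runs concatenate freely at matching states.
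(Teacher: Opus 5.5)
Your proof is correct and is essentially the paper's argument: the paper also takes a cyclic run on $w^n$ with $n>|S|$, splits it at a repeated state into an inner cycle and the remaining cycle, and shows one of the two has slope at most that of the original (your convex-combination step), iterating until the length is at most $|S|$. One nit: contrary to your parenthetical, the choice of vertex set does matter, since the conclusion $s\in S'$ needs the graph on $S'$ (that is, $\minslope$ taken over cycles from $S'$, as the paper intends). Your explicit remark that finitely many simple cycles make the infimum a minimum is a point the paper leaves implicit.
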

\begin{proof}
    Consider a run $\pi:s'\runsto{w^n}s'$ with $n>|S|$.
    We show that there is a run $\rho:s\runsto{w^k}s$ for some $k\le |S|$ and $s\in S'$ with $\slope(\rho)\le \slope(\pi)$.

    Indeed, since $n>|S|$ by the pigeonhole principle we can write $\pi:s'\runsto{w^{k_1}}s\runsto{w^{k_2}}s\runsto{w^{k_3}}s'$ for some $s\in S'$ with $k_2\le |S|$ (note that $s\in S'$ since $(S',w)$ is a reflexive cycle).
    There are now two cases: if the infix $\pi[w^{k_2}]:s\runsto{w^{k_2}}s$ satisfies $\slope(\pi[w^{k_2}])\le \slope(\pi)$ then we are done.

    Otherwise, $\slope(\pi[w^{k_2}])> \slope(\pi)$. We claim that we can then cut out the $w^{k_2}$ infix to obtain a shorter run with lower slope. Indeed, denote $\pi[w^{k_1}w^{k_3}]:s'\runsto{w^{k_1}}s\runsto{w^{k_3}}s'$, then expanding $\slope(\pi[w^{k_2}])> \slope(\pi)$, we have
    \begin{align*}
    &\frac{\weight(\pi[w^{k_2}])}{k_2}>\frac{\weight(\pi)}{n}
    \;\Longrightarrow\;
    \weight(\pi[w^{k_2}])\cdot n > \weight(\pi)\cdot k_2
    && \\
    &\;\xRightarrow{-\weight(\pi)\cdot n}\;
    \weight(\pi[w^{k_2}])\cdot n - \weight(\pi)\cdot n
    >
    \weight(\pi)\cdot k_2 - \weight(\pi)\cdot n
    &&  \\
    &\;\Longrightarrow\;
    n\bigl(\weight(\pi[w^{k_2}])-\weight(\pi)\bigr)
    >
    \weight(\pi)\,(k_2-n)
    &&  \\
    &\;\xRightarrow{\text{multiply by } -1}\;
    n\bigl(\weight(\pi)-\weight(\pi[w^{k_2}])\bigr)
    <
    \weight(\pi)\,(n-k_2)
    &&  \\
    &\;\xRightarrow{\text{since } n=k_1+k_2+k_3}\;
    n\,\weight(\pi[w^{k_1}w^{k_3}])
    <
    \weight(\pi)\,(k_1+k_3)
    &&  \\
    &\;\Longrightarrow\;
    \frac{\weight(\pi[w^{k_1}w^{k_3}])}{k_1+k_3}
    <
    \frac{\weight(\pi)}{n}
    &&  \\
    &\;\xRightarrow{\text{definition of slope.}}\;
    \slope(\pi[w^{k_1}w^{k_3}])<\slope(\pi)
    && 
    \end{align*}
We can now repeat this argument until either finding an infix shorter than $|S|$ with a smaller slope, or by repeatedly cutting infixes until the remaining cycle is shorter than $|S|$.
\end{proof}

We turn to show that with a small number of repetitions of $w$, we can shift $(S',w)$ to become a stable cycle.
\begin{lemma}
    \label{lem: shift reflexive cycle to stable cycle}
    Consider a reflexive cycle $(S',w)$, and let $\rho:s\runsto{w^k}s$ with $\slope(\rho)=\minslope(S',w)$, $s\in S'$ and $k\le |S|$. Then $(S'',u)=\baseshift{(S',w^k)}{\rho}$ is a stable cycle. Moreover, if $\slope(\rho)\ge 0$ then $(S',w)$ is already a stable cycle.
\end{lemma}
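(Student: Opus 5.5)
We plan to verify the two conditions of \cref{def:stable cycle} directly for $(S'',u)=\baseshift{(S',w^k)}{\rho}$, using that baseline shifts preserve differences of run weights (\cref{cor:baseline shift maintains gaps}). Write $\rho:s\runsto{w^k}s$ and $\lambda=\slope(\rho)=\minslope(S',w)$, so $\weight(\rho)=k\lambda$. By \cref{cor:baseline shift to seamless run}, $\baseshift{\rho}{\rho}$ is a baseline run of weight $0$, and the new baseline state of $S''$ is $\baseshift{s}{\rho}$. Since $\rho$ is a cycle on $s$, this state is reflexive. Hence $(S'',u)$ is a proper reflexive cycle: reflexivity of the set is inherited from $(S',w^k)$ because $\booltrans$ acts on the first and third components exactly as before, via the bijection of \cref{prop:baseline shift run bijection}.

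The key step is to compare weights of cycles on $u^n$ before and after the shift. Fix $n\in\bbN$ and a cycle $\pi':t'\runsto{u^n}t'$ in the shifted automaton. By \cref{prop:baseline shift run bijection} it corresponds to a cycle $\pi:t\runsto{w^{kn}}t$ over the original baseline. The shift of the word $w^{kn}$ is taken with respect to $\rho^n$, the $n$-fold concatenation of $\rho$, which is itself a cycle on $s$ and is the run used to shift $u^n$. Applying \cref{cor:baseline shift maintains gaps} to $\pi$ and $\rho^n$, and noting that $\baseshift{\rho^n}{\rho^n}$ has weight $0$, we get
\[\weight(\pi')=\weight(\pi)-\weight(\rho^n)=\weight(\pi)-nk\lambda.\]
By minimality of the slope, $\weight(\pi)\ge kn\cdot\lambda$, so $\weight(\pi')\ge 0$. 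Thus every reflexive cycle on $u^n$ has nonnegative weight, while the baseline cycle on $u^n$ has weight exactly $0$. Therefore the baseline state lies in $\MinRefStates(S'',u^n)$ with $\minweight=0$ for every $n$, which is exactly \cref{def:stable cycle}.

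For the ``moreover'' part, assume $\lambda\ge 0$ and apply the same argument without shifting. The original cycle $(S',w)$ is proper, which is the standing assumption, so its baseline cycle on $w^n$ has weight $0$ by \cref{obs:baseline runs have weight 0}. Every cycle on $w^n$ has weight at least $n\lambda\ge 0$. Hence the baseline state is minimal reflexive with weight $0$ for all $n$, and $(S',w)$ is already stable.

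The main obstacle is the bookkeeping in the middle step. We must justify that shifting $w^{kn}$ along $\rho^n$ yields exactly $u^n$, and that cycles map to cycles. The first needs the letterwise definition of $\baseshift{\cdot}{\cdot}$, which acts position by position, so concatenating $\rho$ gives the concatenated shifted word. The second needs \cref{prop:baseline shift run bijection}: returning to the same state before and after the shift is preserved, because the baseline components at the start and end of $\rho^n$ coincide. Cactus and rebase letters inside $w$ are handled by the same proposition, so no separate case analysis is needed.
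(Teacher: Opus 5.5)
Your proof is correct and matches the paper's argument: both transfer cycles on $u^n$ back to cycles on $w^{kn}$ through the baseline shift, then combine gap preservation (\cref{cor:baseline shift maintains gaps}) with minimality of the slope. The only difference is presentation: the paper argues by contradiction from a hypothetical negative cycle, while you compute $\weight(\pi')=\weight(\pi)-\weight(\rho^n)\ge 0$ directly and also spell out the properness and cycle-to-cycle bookkeeping that the paper leaves implicit.
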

\begin{proof}
    Denote $S'=\{(p,q,T) \mid p \in T\}$ and let $\rho:s\runsto{w^k}s$ be a minimal-slope run as per \cref{prop: reflexive cycle negative slope short}. In particular $k\le |S|$. 
    We start with the ``Moreover'' part: if $\slope(\rho)\ge 0$ then for every run $\rho':s'\runsto{w^n}s'$ with $s'\in S$ we have $\slope(\rho')\ge 0$, so $\weight(\rho')\ge 0$. Thus, $(S',w)$ is a stable cycle (note that in this case we actually have $\slope(\rho)=0$, due to the baseline run).

    We proceed to the general case. 
    Recall that $(S',w^k)$ is also a reflexive cycle (see \cref{def:reflexive cycle and state}), and denote its baseline run by with baseline run $\rhobase:(q,q,T)\runsto{w^k}(q,q,T)$. 
    Let $(S'',u)=\baseshift{(S',w^k)}{\rho}$ be the shifted reflexive cycle. We claim that $(S'',u)$ is a stable cycle. 
    To this end, we need to show that there are no negative cycles on $u^{m}$ for any $m\in \bbN$. 
    
    Assume by way of contradiction that $\tau:s\runsto{u^m}s$ satisfies $\weight(\tau)<0$ for some $s\in S''$. The proof now proceeds in two ``realms'': that of $(S'',u^m)$ (whose baseline is $\rho^m$, i.e., $\rho$ repeated $m$ times) and that of $(S',w^{km})$ (whose baseline is $\rhobase^m$). 
    We use baseline shifts to move between these realms.
    
    Denote $\tau'=\baseshift{\tau}{\rhobase^m}$. By \cref{rmk:baseline shift is right absorbing} we have that $\baseshift{u}{\rhobase}=\baseshift{w^k}{\rhobase}=w^k$. We therefore have that $\tau':s'\runsto{w^{km}}s'$ for some $s'\in S'$ (in the $(S',w^{km})$ realm). We now consider the two runs $\tau'$ and $\rho^m$ (both on $w^{km}$) and their shifted counterparts $\tau$ and $\baseshift{\rho^m}{\rho^m}$ (both on $u^m$) in the $(S'',u^m)$ realm.

    By \cref{cor:baseline shift maintains gaps}, we have
    \[
    \weight(\baseshift{\rho^m}{\rho^m})-\weight(\tau)=\weight(\rho^m)-\weight(\tau')
    \]
    By \cref{cor:baseline shift to seamless run} we have $\weight(\baseshift{\rho^m}{\rho^m})=0$. Recall our assumption that $\weight(\tau)<0$, we then have that the left-hand side of the equation is positive. Rearranging, we get $\weight(\tau')<\weight(\rho^m)$. Both $\tau'$ and $\rho^m$ are runs on $w^{km}$, and therefore we can divide by $km$ to obtain their slope, giving $\slope(\tau')<\slope(\rho^m)$.
    However, we have 
    \[\slope(\rho^m)=\frac{\weight(\rho^m)}{km}=\frac{m\cdot \weight(\rho)}{km}=\frac{\weight(\rho)}{k}=\slope(\rho)\]
    Thus, $\slope(\tau')<\slope(\rho)$, in contradiction to the choice of $\rho$ as the minimal-slope run.

    We conclude that there are no negative cycles on any $u^m$ from $S''$, and therefore $(S'',u)$ is a stable cycle, as required.
\end{proof}

\subsection{Cactus Flattening}
\label{sec: cactus unfolding}
Intuitively, cactus (and rebase) letters encapsulate many repetitions of the same word, where rebase also allows a change in the baseline. Naturally, we wish to consider the corresponding words that are obtained by repeatedly ``unfolding'' cactus and rebase letters, in a process we refer to as \emph{flattening}. 
We make extensive use of flattening throughout the paper.

Recall that for a word $u$ the value $\maxeff{u}$ is an upper-bound on the absolute value of the change in weight that a run can incur upon reading $u$.

\begin{definition}[Cactus-Unfolding]
    \label{def:unfolding function}
    Consider a cactus letter $\alpha_{S',w}$, a word $u=x \cdot \alpha_{S',w} \cdot y$ and $F\in \bbN$.

    By \cref{lem:pumping grounded pairs} there exists $M_0\in \bbN$ such that for every $k\ge M_0$ and every $s,r\in S'$ with $(s,r)\notin\GroundPairs(S',w)$ it holds that $\minweight(s\runsto{w^{k\cdot 2\bigM}}r)>F$.
    
    The \emph{Unfolding of $u$ with constant $F$} is then $\unfold(x,\alpha_{S',w},y \wr F)=xw^{2\bigM M_0}y$, and any word $xw^{2\bigM m}y$ with $m\ge M_0$ is \emph{an unfolding of $u$ with constant $F$.}
\end{definition}
\begin{remark}[``The'' v.s. ``an'' unfolding]
\label{rmk:increasing repetitions in unfolding}
We remark that the distinction between ``the'' unfolding and ``an'' unfolding merely means that we can add repetitions of $w$ (in multiples of $2\bigM$), and still be referred to as an unfolding.

In the following, we mostly refer to $M_0$ as the number of repetitions, but we sometimes need to increase this number, in which case we remark that all the results still hold.
\end{remark}

Intuitively, while $\alpha_{S',w}$ makes transitions $s\runsto{\alpha_{S',w}}r$ attain value $\infty$ if $(s,r)\notin \GroundPairs(S',w)$, the unfolding repeats $w$ enough times such that these runs may exist (and reach ghost states), but attain a value as high as we like (namely $F$). 
We always choose $F$ large enough such that any continuation of such runs is irrelevant for the value of the word. Therefore, the values of runs after unfolding are the same as those before unfolding (see~\cref{fig:unfolding}). We capture this as follows.

\begin{proposition}
    \label{prop:unfolding cactus maintains seamlesss gaps}
    Consider a cactus letter $\alpha_{S',w}$, a word $x \cdot \alpha_{S',w} \cdot y$ and $F> 2\maxeff{x \cdot \alpha_{S',w} \cdot y}$.
    
    Let $\rho$ be a seamless run on $x \cdot \alpha_{S',w} \cdot y$, then there exists a \emph{seamless} run $\mu$ on $\unfold(x,\alpha_{S',w},y \wr F)=xw^{2\bigM M_0}y$ such that 
    $\mu[1,|x|]=\rho[1,|x|]$, and for every $i\ge 0$, if $z=x\cdot\alpha_{S',w}\cdot y[1,i]$ is a prefix of $x \cdot \alpha_{S',w} \cdot y$ and the corresponding prefix of $\unfold(x,\alpha_{S',w},y \wr F)$ is $z'=x\cdot w^{2\bigM M_0}\cdot  y[1,i]$, then it holds that 
    $\weight(\rho(z))=\weight(\mu(z'))$.
\end{proposition}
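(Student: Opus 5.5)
We construct $\mu$ in three segments and then check seamlessness and the weight equalities. On the prefix $x$ we take $\mu[1,|x|]=\rho[1,|x|]$. Since the prefix $x$ is identical in both words and $\rho$ is seamless there, this segment is seamless and has the same weights.

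Next, let $\rho$ traverse the cactus letter by the transition $s\runsto{\alpha_{S',w}}r$. By \cref{def:stabilisation}, $(s,r)\in\GroundPairs(S',w)$, and the weight of this transition is $\minweight(s\runsto{w^\bigM}g\runsto{w^\bigM}r)$ for the grounding state $g$. \cref{lem:pumping grounded pairs} (Item 2) gives $\minweight(s\runsto{w^{2\bigM M_0}}r)$ equal to this same value. We therefore let $\mu$ follow a minimal run $s\runsto{w^{2\bigM M_0}}r$ on the middle block. On the suffix $y$ we copy $\rho$'s transitions. Then the weight after each prefix $x w^{2\bigM M_0}y[1,i]$ equals the weight of $\rho$ after $x\alpha_{S',w}y[1,i]$, which is the required equality.

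The remaining task, and the main one, is seamlessness of $\mu$. At each point we must show that no run on the unfolded prefix reaches the current state with strictly lower weight. Here the relevant prefixes are $z'=xw^{2\bigM M_0}y[1,i]$ at the letter positions of $y$, together with the endpoint of the block. I would argue by contradiction: take a run $\eta$ on $z'$ ending at the same state with smaller weight, and decompose it at the block as $p\runsto{x}s'\runsto{w^{2\bigM M_0}}r'\runsto{y[1,i]}\cdot$. There are two cases.

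\textbf{Grounded case.} If $(s',r')$ is grounded, Item 2 of \cref{lem:pumping grounded pairs} says the block segment costs exactly the cactus transition weight, minimised over grounding states. Hence $\eta$ corresponds to a run on $x\alpha_{S',w}y[1,i]$ of the same weight. That run would be below $\rho$, contradicting seamlessness of $\rho$.

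\textbf{Ungrounded case.} If $(s',r')$ is ungrounded, the block segment has weight $>F$ by the choice of $M_0$ (\cref{def:unfolding function}). The remaining parts of $\eta$ contribute at least $-\maxeff{x}-\maxeff{y}$. On the other side, $\rho$'s prefix weight is at most $\maxeff{x\alpha_{S',w}y}$ in absolute value. Since $F>2\maxeff{x\alpha_{S',w}y}$, the weight of $\eta$ exceeds that of $\mu$, again a contradiction.

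The delicate point I expect is the bookkeeping of $\maxeff{}$: the inequality must absorb both the prefix of $x$ and the suffix $y[1,i]$. It must also handle intermediate positions inside the unfolded block, where $\mu$ need only be seamless up to the block positions the statement refers to. I would restrict seamlessness claims to letters of $x$ and $y$ and the block endpoint, choosing, if needed, the minimal run through the block so that its own prefixes are minimal among runs from $\supp$ reaching them.
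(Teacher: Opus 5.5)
Your construction and case analysis are correct. You copy $\rho$ outside the block and route $\mu$ through a minimal run $s\runsto{w^{2\bigM M_0}}r$, whose weight equals the cactus weight by Item 2 of the pumping lemma. You then split on grounded versus ungrounded pairs. In the ungrounded case the bookkeeping closes: $\weight(\eta)>F-\maxeff{x}-\maxeff{y}>\maxeff{x\cdot\alpha_{S',w}\cdot y}\ge\weight(\mu(z'))$.

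The paper does not reprove this proposition; it imports it from earlier work, so there is no alternative route to compare against.

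Two small points to make explicit:
\begin{itemize}
\item In your decomposition of $\eta$, the states $s',r'$ lie in $S'$. Every state reached after $x$ shares its baseline component and reachable set with $\rho$'s state $s\in S'$, and $\booltrans(S',w)\subseteq S'$.
\item In the grounded case, the replacement run on $x\alpha_{S',w}y[1,i]$ has weight at most $\weight(\eta)$, not equal to it.
\end{itemize}

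The real flaw is your last paragraph. You cannot weaken the conclusion to seamlessness at the letters of $x$ and $y$ and the block endpoint. The statement requires $\mu$ to be seamless, which means minimal at every position, including those inside $w^{2\bigM M_0}$.

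This needs no special choice of block run, because a run from $s_0$ is seamless if and only if it is minimal to its final state. Suppose some prefix $\mu[1,j]$ ending in a state $t$ were beaten by a cheaper run from $s_0$ to $t$ on the same prefix word. Then that run followed by the remainder of $\mu$ would be a cheaper run to $\mu$'s final state. Hence your contradiction argument, applied once at $z'=x\,w^{2\bigM M_0}\,y$, already gives seamlessness at all positions, for any minimal-weight run through the block.
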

 
      \begin{figure}[H]
        \centering
        \begin{subfigure}{0.45\textwidth}
        \includegraphics[width=1.0\linewidth]{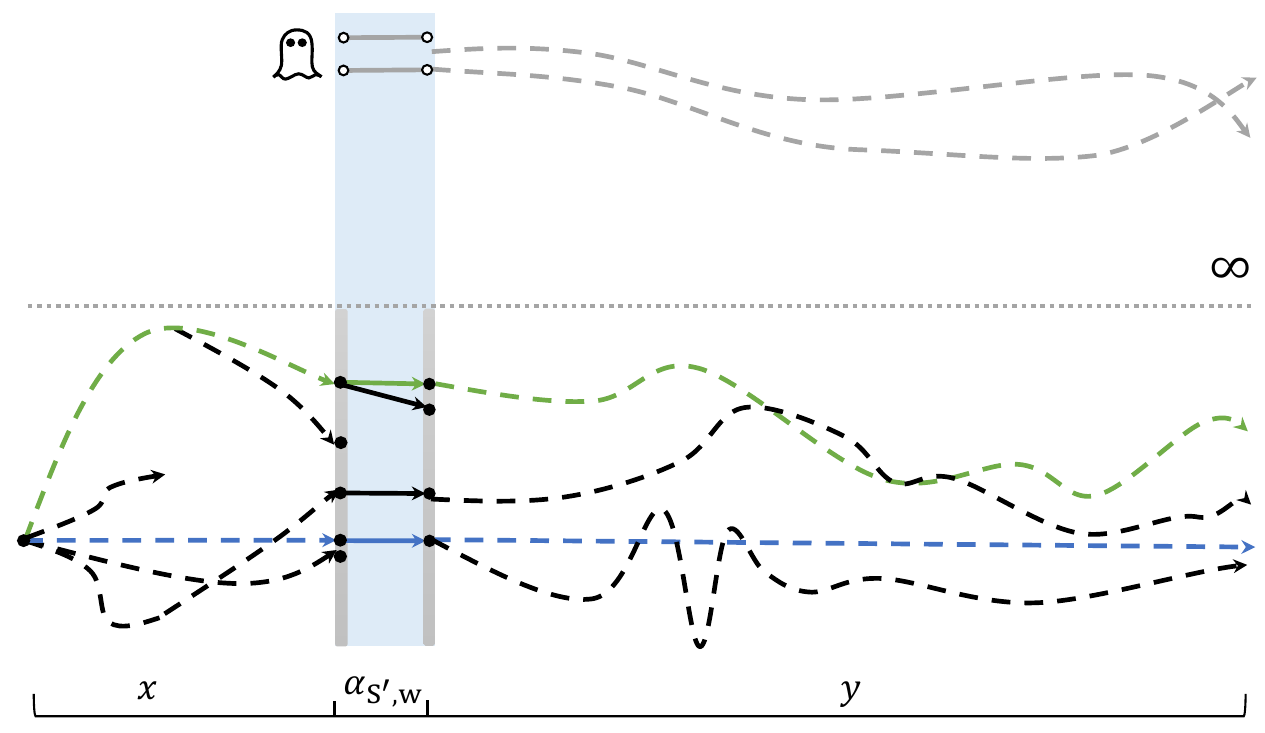}
        \caption{The cactus letter $\alpha_{S',w}$ induces ghost states and runs. The baseline run is in blue, and the green run is seamless.}
        \label{fig:unfolding before}
        \end{subfigure}
        \qquad
        \begin{subfigure}{0.45\textwidth}
        \includegraphics[width=1.0\linewidth]{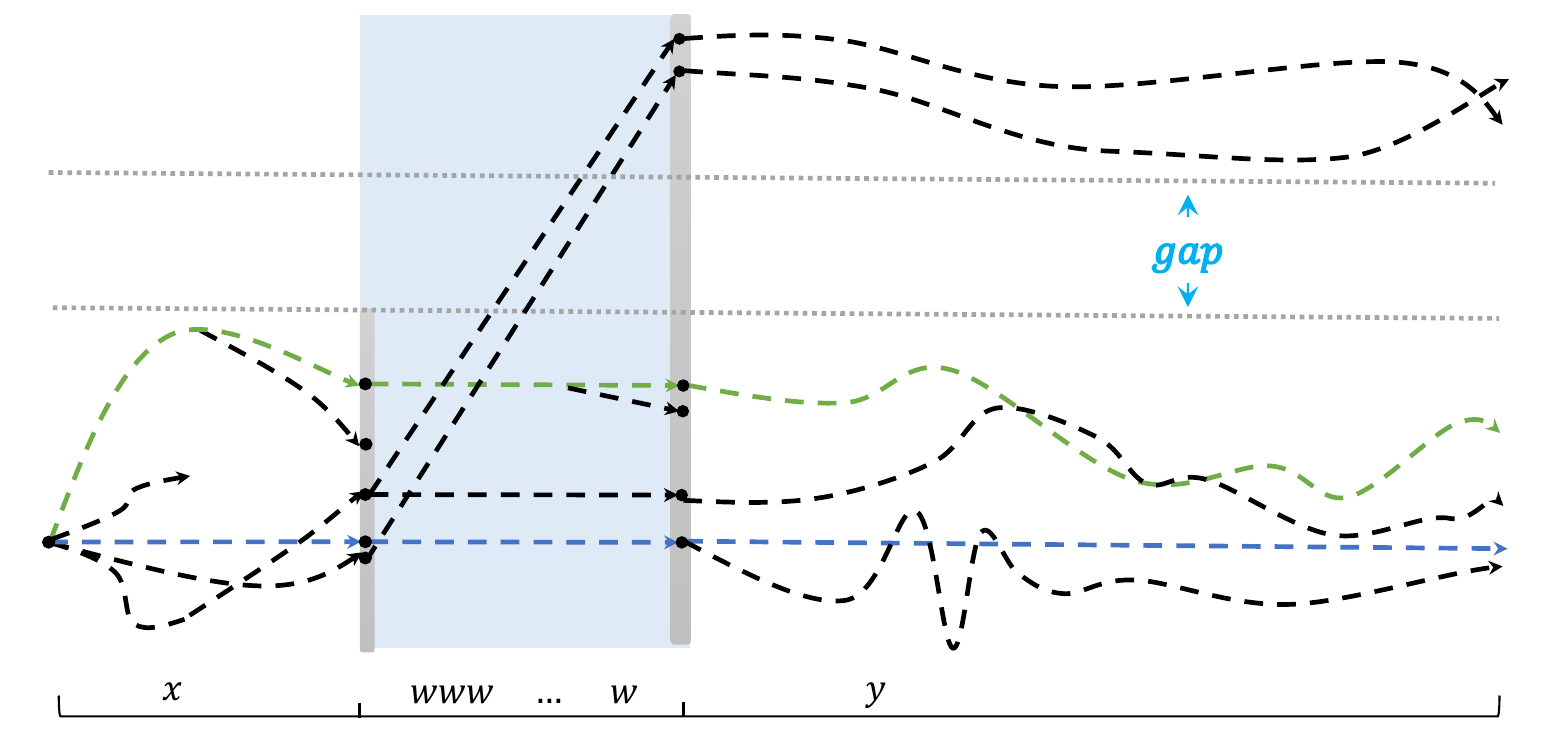}
        \caption{After unfolding, $\alpha_{S',w}$ is replaced by repetitions of $w$. The ghost runs now become concrete, and are far above the original runs. The updated green run is still seamless.}
        \label{fig:unfolding after}
        \end{subfigure}
        \caption{Cactus letter unfolding, before (\cref{fig:unfolding before}) and after (\cref{fig:unfolding after}).}
        \label{fig:unfolding}
    \end{figure}

The main result from~\cite{almagor2026determinization} concerning unfolding is the following.
\begin{lemma}[The Effect of Unfolding]
    \label{lem:unfolding configuration characterisation}
    Consider a cactus letter $\alpha_{S',w}$, a word $x\cdot \alpha_{S',w}\cdot y$ and $F>2\maxeff{x\cdot \alpha_{S',w}\cdot y}$, and let $\unfold(x,\alpha_{S',w},y \wr F)=xw^{2\bigM M_0}y$.
    Let $y'$ be a prefix of $y$, and consider the configurations $\vec{c_1}=\xconf(s_0,x\cdot \alpha_{S',w}\cdot y')$ and $\vec{c_2}=\xconf(s_0,xw^{2\bigM M_0}y')$, then $\supp(\vec{c_1})\subseteq \supp(\vec{c_2})$ and for every state $q\in S$ we have the following.
    \begin{enumerate}
        \item If $q\in \supp(\vec{c_1})$ then $\vec{c_2}(q)=\vec{c_1}(q)$.
        \item If $q\notin \supp(\vec{c_1})$ then $\vec{c_2}(q)>F-\maxeff{x\alpha_{S',w}y}$ (and could be $\infty$).
    \end{enumerate}
\end{lemma}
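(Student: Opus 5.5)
We compare the configurations $\vec{c_1}$ and $\vec{c_2}$ by a two-sided argument on runs: an upper bound $\vec{c_2}\le\vec{c_1}$ on $\supp(\vec{c_1})$ by lifting runs, a matching lower bound by projecting minimal runs back, and a separate lower bound for states outside $\supp(\vec{c_1})$. Throughout, write $z_1=x\alpha_{S',w}y'$ and $z_2=xw^{2\bigM M_0}y'$.

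\emph{Upper bound and support inclusion.} Let $q\in\supp(\vec{c_1})$ and take a minimal run $\rho:s_0\runsto{x}s\runsto{\alpha_{S',w}}r\runsto{y'}q$. By \cref{def:stabilisation}, $(s,r)\in\GroundPairs(S',w)$, and the weight of the $\alpha_{S',w}$-transition is $\minweight(s\runsto{w^\bigM}g\runsto{w^\bigM}r)$ for the grounding state $g$. By \cref{lem:pumping grounded pairs}(2), this equals $\minweight(s\runsto{w^{2\bigM M_0}}r)$. Replacing the cactus transition by such a run gives a run on $z_2$ to $q$ of the same weight. Hence $\vec{c_2}(q)\le\vec{c_1}(q)$ and $\supp(\vec{c_1})\subseteq\supp(\vec{c_2})$.

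\emph{Lower bound via decomposition of runs on $z_2$.} Take any run $\mu:s_0\runsto{x}s\runsto{w^{2\bigM M_0}}r\runsto{y'}q$. There are two cases for the pair $(s,r)$.
\begin{itemize}
\item If $(s,r)\in\GroundPairs(S',w)$, then \cref{lem:pumping grounded pairs}(2) again gives $\weight(\mu[w^{2\bigM M_0}])\ge\minweight(s\runsto{\alpha_{S',w}}r)$. Hence $\mu$ projects to a run on $z_1$ of weight at most $\weight(\mu)$, and $q\in\supp(\vec{c_1})$.
\item If $(s,r)\notin\GroundPairs(S',w)$, then by the choice of $M_0$ in \cref{def:unfolding function}, the middle part has weight $>F$. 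The prefix on $x$ and the suffix on $y'$ each have weight at least $-\maxeff{x}$ and $-\maxeff{y'}$, respectively. Therefore $\weight(\mu)>F-\maxeff{x}-\maxeff{y'}\ge F-\maxeff{x\alpha_{S',w}y}$.
\end{itemize}

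\emph{Conclusion.} For Item 2, every run to $q\notin\supp(\vec{c_1})$ must fall into the second case, which gives the bound (and the value may be $\infty$ if no run exists). For Item 1, minimal runs in the first case give $\vec{c_2}(q)\ge\vec{c_1}(q)$. Minimal runs in the second case weigh more than $F-\maxeff{x\alpha_{S',w}y}>\maxeff{x\alpha_{S',w}y}\ge\vec{c_1}(q)$, using $F>2\maxeff{x\alpha_{S',w}y}$ and the fact that finite weights of runs on $z_1$ are bounded by $\maxeff{z_1}\le\maxeff{x\alpha_{S',w}y}$. So they cannot be minimal, and equality $\vec{c_2}(q)=\vec{c_1}(q)$ follows.

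\emph{Main obstacle.} The delicate point is making the "$-\maxeff{}$" bookkeeping rigorous for cactus letters nested inside $x$ and $y'$. This relies on $\maxeff{}$ summing $\wmax{\sigma}$ over the letters, with $\wmax{\alpha_{S',w}}$ covering the cactus transition, so that any finite run on a sub-word is bounded in absolute value by its $\maxeff{}$. The other point to check is that \cref{lem:pumping grounded pairs} applies with the exact exponent $2\bigM M_0$, which holds since $m=M_0\ge M_0$.
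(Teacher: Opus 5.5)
The paper does not prove this lemma; it imports it from \cite{almagor2026determinization}, so there is no in-paper argument to compare against. Your two-sided argument is the natural direct proof and is correct. Lifting a minimal run through the grounded $\alpha_{S',w}$-transition gives $\vec{c_2}\le\vec{c_1}$ on $\supp(\vec{c_1})$. Splitting an arbitrary run on $xw^{2\bigM M_0}y'$ according to whether the pair $(s,r)$ around the pumped block is grounded then gives both the matching lower bound and Item~2. The two points you flag as delicate are actually routine. The bound $|\weight(\rho)|\le\maxeff{\cdot}$ for finite-weight runs is stated in the preliminaries. The $M_0$ of \cref{def:unfolding function} is the one supplied by \cref{lem:pumping grounded pairs}, so Item~2 of that lemma holds at exponent $2\bigM M_0$.

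The step that does need care is the non-grounded case. The choice of $M_0$ only controls $\minweight(s\runsto{w^{2\bigM M_0}}r)$ for $s,r\in S'$, so you need the state $s$ reached after $x$ to lie in $S'$. Once it does, $r\in S'$ follows from $\booltrans(S',w)\subseteq S'$. This is an implicit hypothesis of the statement as printed, not a consequence of it.

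The baseline and reachable-set components are deterministic and $S'$ is saturated, so either $\booltrans(s_0,x)\subseteq S'$ or $\booltrans(s_0,x)\cap S'=\emptyset$. In the latter case $\alpha_{S',w}$ kills every run, so $\vec{c_1}\equiv\infty$. Now suppose the baseline component after $x$ agrees with that of $S'$ but the reachable set differs, and $w$ consists of letters of $\Gamma$. Then the baseline run still reads $w^{2\bigM M_0}$ with weight $0$, so some state gets weight at most $\maxeff{x}<F-\maxeff{x\alpha_{S',w}y}$, and Item~2 fails.

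So state $\booltrans(s_0,x)\subseteq S'$ explicitly as the standing assumption. It holds in every application in the paper, where $S'=\ghostTrans(s_0,x)$. With it, your proof is complete.
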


Unfolding cactus letters is fairly ``safe'': $\alpha_{S',w}$ is replaced with many iterations of $w$. 
However, we also want to unfold rebase letters. This turns out to be more involved. We refer to~\cite{almagor2026determinization} for the details, as we do not need them here. Having the ability to unfold, we can apply it recursively until there are no cactus or rebase letters left. We bring only the necessary essentials of this operation.
\begin{lemma}[Flattening]
    \label{lem:flattening essence}
    Consider a word $u\in (\Gamma_\infty^\infty)^*$ and $F> 2\maxeff{u}$. There exists a word denoted $\flatten(u\wr F)$ with the following properties:
    \begin{enumerate}
        \item If $u\in (\Gamma^0_\infty)^*$ then $\flatten(u\wr F)\in (\Gamma_0^0)^*$ (i.e., if $u$ has only cactus letters, the flattening has only $\Gamma_0^0$ letters).
        \item If $u$ has rebase letters, then $\flatten(u\wr F)\in (\Gamma_0^0\cup \jlset)^*$ (i.e., flattening rebase letters introduces jump letters).
    \end{enumerate}
    The following holds: let $\vec{c}=\xconf(s_0,u)$ and $\vec{d}=\xconf(s_0,\flatten(u \wr F))$.
    Then 
    \begin{align*}
    \booltrans(s_0,u)=\supp(\vec{c})\subseteq \supp(\vec{d})=\booltrans(s_0,\flatten(u \wr F))\\=\ghostTrans(s_0,\flatten(u \wr F))=\ghostTrans(s_0,u) 
      \end{align*} 
    
    and for every $q\in \supp(\vec{d})$, if $q\in \supp(\vec{c})$ then $\vec{c}(q)=\vec{d}(q)$, and otherwise $\vec{d}(q)\ge \max\{\vec{c}(p)\mid p\in \supp(\vec{c})\}+F$.
\end{lemma}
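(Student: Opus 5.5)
The plan is an induction on the structure of $u$, namely on the number of cactus and rebase letters, taking innermost letters first. Each step replaces one letter by its unfolding. The statement is already in hand for a single cactus letter: \cref{lem:unfolding configuration characterisation} gives exactly the support inclusion, preservation of finite values, and the lower bound on new states. For a rebase letter $\beta_{S',w,s\to r}$ I would unfold it as $\jl\cdot w^{2\bigM M_0}\cdot \jl$. The first jump moves the baseline component from the current baseline to the baseline component $p$ of $S'$. Then $w$ is repeated, and the second jump returns the baseline. By \cref{def:rebase}, the weights of $\beta$ are those of $\alpha_{S',w}$ shifted by the constant $c$. Jump letters have weight $0$ and do not move the first component. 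So after re-normalising by $c$, which is the baseline-shift identity of \cref{cor:baseline shift maintains gaps}, the single-letter analysis carries over. This is where jump letters enter, giving items 1 and 2: the output lies in $\Gamma_0^0$ if there were no rebase letters, and in $\Gamma_0^0\cup\jlset$ otherwise.

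For the composition I would keep one invariant along the induction. At every step, all states that were reachable before stay at exactly the same values, and all newly created states sit at least $F$ above the maximum of the old finite values. To keep this invariant, I choose the unfolding constant at each step as $F'=F+2\maxeff{u}$ plus the accumulated effect, using the freedom of \cref{rmk:increasing repetitions in unfolding} to add repetitions. This guarantees that newly concrete ghost runs never undercut real ones later in the word. The finite-value claim on the old support then follows as in \cref{prop:unfolding cactus maintains seamlesss gaps}: any run through a new state has weight above $F'-\maxeff{\cdot}>\max \vec{c}$, so minimal values are unchanged.

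The support equalities come next. $\booltrans(s_0,\flatten(u\wr F))=\ghostTrans(s_0,\flatten(u\wr F))$ holds because the flattened word has no cactus letters. In $\augA$ together with jump letters, the reachable first components coincide with the third component $T$. For jumps this holds because they only relabel the baseline inside $T$. Next, $\ghostTrans(s_0,\flatten(u\wr F))=\ghostTrans(s_0,u)$ holds because the third component is deterministic. Reading $w^{2\bigM M_0}$ yields the same $T$ as reading $\alpha_{S',w}$, since $T$ is shared by all of $S'$ and $w$ is a cycle on it. Jumps preserve $T$, and the baseline component ends in the same place. By \cref{rmk:booltrans determines ghosttrans}, ghost sets are determined by these components.

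I expect the main obstacle to be the rebase unfolding: checking that $\jl$, $w$-powers, and $\jl$ reproduce the rebase transitions, with matching weights, on the old support. In particular, the middle part runs with baseline at $p$, which may be unreachable. I would handle it by writing the rebase transition explicitly as the shift in \cref{def:rebase} and comparing weights term by term with the $\alpha$ unfolding.
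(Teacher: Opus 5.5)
\textbf{Gap: the rebase unfolding is off by $c$.} Your treatment of rebase letters breaks exactly at ``re-normalising by $c$''. Write $s=(q_1,p,T)$ and $r=(q_2,p,T)$, with $(s,\alpha_{S',w},c,r)\in\augTrans_\infty$. On your replacement $\jl\cdot w^{2\bigM M_0}\cdot\jl$, the minimal weight from $(q',q_1,T)$ to $(q'',q_2,T)$ is $0+\minweight((q',p,T)\runsto{w^{2\bigM M_0}}(q'',p,T))+0$. On grounded pairs this equals the $\alpha_{S',w}$-weight $d$ by \cref{lem:pumping grounded pairs}, whereas \cref{def:rebase} prescribes $d-c$.

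So right after this segment every old value is exactly $c$ larger than after $\beta_{S',w,s\to r}$. Later letters act additively, so the offset survives to the end, and $\vec{c}(q)=\vec{d}(q)$ fails whenever $c\neq 0$. Already the baseline state $(q_1,q_1,T)$ reaches $(q_2,q_2,T)$ with weight $c$ instead of $0$.

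No re-normalisation step is available afterwards. $\vec{d}$ is the configuration of a concrete word, and \cref{cor:baseline shift maintains gaps} only preserves differences between runs. Jump letters have weight $0$ (\cref{def:jump letters}), so no placement of them yields a uniform shift by $-c$. In an actual word, that shift can only come from the baseline component itself travelling along a run of weight $c$.

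\textbf{Repair.} Since $(s,r)$ is grounded, \cref{lem:pumping grounded pairs} gives a run $\rho:s\runsto{w^{2\bigM M_0}}r$ with $\weight(\rho)=c$. Unfold $\beta_{S',w,s\to r}$ as $\baseshift{w^{2\bigM M_0}}{\rho}$ instead.
\begin{itemize}
\item Its baseline component moves from $q_1$ to $q_2$, and its third component from $T$ to $T$.
\item By \cref{prop:baseline shift run bijection}, runs from $(q',q_1,T)$ to $(q'',q_2,T)$ on it correspond to runs from $(q',p,T)$ to $(q'',p,T)$ on $w^{2\bigM M_0}$, with weight lowered by exactly $c$.
\item This gives $d-c$ on grounded pairs and more than $F'-c$ on the others, so the argument of \cref{lem:unfolding configuration characterisation} goes through.
\item Baseline shifts of $\Delta$-letters are $\Delta$-letters, so this route creates no jump letters, and item 2 holds trivially.
\end{itemize}

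\textbf{Two smaller points.}
\begin{itemize}
\item ``Innermost first'' is not available. The letters of $w$ are part of the name of $\alpha_{S',w}$, and they only appear in the word once the outer letter has been unfolded.
\item The number of cactus and rebase letters is not a decreasing measure. One unfolding inserts $2\bigM M_0$ copies of every letter of $w$, and the shift above turns inner cactus letters into rebase letters.
\end{itemize}
Unfold outermost first, and induct on the maximal depth together with the number of letters of that depth. With this order your invariant works without circularity, because the current word is fixed when you choose the next constant. Take that constant large relative to $F+2\maxeff{u}$ and the maximal effect of the current word; note that $|c|$ is itself bounded by the latter, since $(p,q_1,T)\runsto{\beta_{S',w,s\to r}}(p,q_2,T)$ has weight $-c$. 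Then \cref{lem:unfolding configuration characterisation} keeps every value on the current support, including ghosts created earlier, and puts every new state above $F+\max\{\vec{c}(p)\mid p\in\supp(\vec{c})\}$.

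Your support and ghost-set argument is fine. So for cactus-only words, which is all this paper's own arguments invoke, your proof is correct once the ordering is fixed.
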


Flattening is used in~\cite{almagor2026determinization} to show the following.
\begin{corollary}
    \label{cor:aug inf inf is det iff A is det}
     $\augA_\infty^\infty$ is determinisable if and only if $\cA$ is determinisable.
\end{corollary}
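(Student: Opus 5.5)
The plan is to move between $\augA^\infty_\infty$ and $\cA$ through $\augA$, using \cref{lem:A det iff augA det} for the step between $\augA$ and $\cA$. Throughout I use the gap characterisation: \cref{thm:abs:det iff bounded gap} (and its refined form \cref{thm:det iff bounded gap}) for $\augA$ over its finite alphabet $\Gamma$, and item 1 of \cref{thm:det iff bounded gap} for $\augA^\infty_\infty$ over its infinite alphabet. Thus it suffices to show that $\augA$ has bounded gaps if and only if $\augA^\infty_\infty$ has bounded gaps.

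\emph{($\cA$ undeterminisable $\Rightarrow$ $\augA^\infty_\infty$ undeterminisable).}
By \cref{lem:A det iff augA det}, $\augA$ is undeterminisable. Since $\Gamma$ is finite, \cref{thm:abs:det iff bounded gap} gives a $B$-gap witness over $\Gamma$ for every $B$.

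Now $\Gamma=\Gamma^0_0\subseteq\Gamma^\infty_\infty$, and the extension only adds letters and transitions on new letters. Hence runs of $\augA^\infty_\infty$ on words in $\Gamma^*$ are exactly the runs of $\augA$, with the same weights. So every $B$-gap witness of $\augA$ is a $B$-gap witness of $\augA^\infty_\infty$ over the finite sub-alphabet $\Gamma$. Item 2 of \cref{thm:det iff bounded gap} then shows $\augA^\infty_\infty$ is undeterminisable.

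\emph{($\cA$ determinisable $\Rightarrow$ $\augA^\infty_\infty$ determinisable).}
Here $\augA$ is determinisable, so there is $B$ such that $\augA$ has no $B$-gap witness over $\Gamma$. I claim $\augA^\infty_\infty$ has no $B$-gap witness either, which gives the result by item 1.

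Suppose $x,y,q$ is a $B$-gap witness over $\Gamma^\infty_\infty$. Pick $F>2\maxeff{xy}+B$ and flatten letter by letter, so that $\flatten(xy\wr F)=x'y'$ with $x'=\flatten(x\wr F)$. I expect compositionality to follow from the construction, since unfolding acts on each letter separately; alternatively, apply \cref{lem:flattening essence} to both $x$ and $xy$.

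By \cref{lem:flattening essence}, on $\supp(\xconf(s_0,x))$ the configurations after $x$ and after $x'$ agree, and every new (ghost) state lies at least $F$ above them. Hence the minimum after $x'$ equals the minimum after $x$, and $q$ keeps its weight, so the gap $>B$ survives.

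Similarly, the configuration after $x'y'$ agrees with the one after $xy$ on its support, and ghosts sit higher. So the minimal weight on $x'y'$ equals that on $xy$. The minimal run through $q$ on $xy$ transfers, via \cref{prop:unfolding cactus maintains seamlesss gaps} applied inductively, to a run through $q$ on $x'y'$ of the same weight. Therefore $x',y',q$ is a $B$-gap witness over $\Gamma^0_0\cup\jlset$.

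\emph{Main obstacle: jump letters.}
If $xy$ contains rebase letters, the flattening contains jump letters, which are not letters of $\augA$. My first attempt is to eliminate each jump by a baseline shift. A jump $\jl_{s\to r}$ only changes the baseline component and has weight $0$ on every transition. So I would reinterpret the whole flattened word as a single run of $\cA$ read with a baseline that follows the jumps, and re-express it as a word over $\Delta$ via $\baseshift{\cdot}{\cdot}$. By \cref{cor:baseline shift maintains gaps}, differences between runs are preserved on every prefix, so gap witnesses are preserved.

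The delicate point is that after a jump the baseline may sit at an unreachable state, so it does not correspond to a genuine run of $\cA$. To handle this, I would shift instead to the seamless minimal run itself, which always exists, and compare all other runs to it. This yields a $B$-gap witness of $\augA$ over $\Gamma$, contradicting the choice of $B$.
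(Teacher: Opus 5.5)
Your proposal is correct and follows essentially the route the paper intends: the paper gives no argument of its own beyond attributing the result to the prior work and saying it is obtained via flattening, which is exactly what your hard direction rests on, while the easy direction goes through as you describe. For the jump letters you do not need $\baseshift{\cdot}{\cdot}$ at all (the paper only defines it for $\Delta$, cactus and rebase letters): on a word over $\Gamma^0_0\cup\jlset$ every run of $\augA^\infty_\infty$ is offset by the same prefix-dependent amount and jumps leave the first component unchanged, so projecting runs to their first components and deleting the jumps turns your flattened witness directly into a $B$-gap witness of $\cA$ over $\Sigma$, and you may simply take $B$ to be a gap bound for $\cA$.
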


\section{Effective Cactus Letters}
\label{sec:effective cactus}
The cactus letters defined in \cref{sec:cactus letters} are built of words of arbitrary length. In particular, this leads to an infinite alphabet which is the source of much technical difficulty, and is a central reason that the decidability result of \cite{almagor2026determinization} does not provide a complexity bound. 
We now take the first step towards circumventing this problem, by restricting the cactus alphabet to a finite set.

Intuitively, we are given a function $L:\{0,\ldots,|S|\}\to \bbN$ that prescribes for every depth $d$ up to $|S|$, the maximal length $L(d)$ of a word $w$ such that $w$ can appear in a cactus letter of depth $d$. We also restrict cactus letter to be non-degenerate, thus bounding the depth of cactus nesting. 

We then proceed to instantiate this type of hierarchy with two different $L$ functions, which we refer to as \emph{simple} ($\simpL$) and \emph{general} $\generL$, that are used in distinct places in the proof.

\begin{definition}[$L$-Bounded Cactus Letters]
\label{def:L bounded cactus letters}
Consider a function $L:\{0,\ldots,|S|\}\to \bbN$. For every $k\in \{0,\ldots, |S|\}$ we define $\boundedCacti{L,k}$ inductively as follows.
\begin{itemize}
    \item $\boundedCacti{L,0}=\Gamma^0_0$ (i.e., standard letters from $\augA$).
    \item 
    \begin{align*}
    \boundedCacti{L,k}= &\{\alpha_{S',w}\mid \depth(\alpha_{S',w})=k,\ |w|\le L(k), \text{ and } \alpha_{S',w} \\&\text{ is non-degenerate}\}.
    \end{align*}
\end{itemize}
We then define four classes of letters
\begin{itemize}
    \item $\boundedCac{L}=\bigcup_{k=0}^{|S|-1}\boundedCacti{L,k}$ are the \emph{bounded cactus letters}.
    \item $\boundedCacReb{L}=\rebase(\boundedCacti{L})$ adds rebase letters to $\boundedCac{L}$.
    \item $\boundedCacRebCac{L}=\{\alpha_{S',w}\mid w\in \boundedCacReb{L}^*\wedge |w|\le L(\depth(w)+1)\}$. 
    These are bounded cactus letters built from a single stabilisation of a letter in $\boundedCacReb{L}$.
    \item $\boundedCacRebJump{L}=\boundedCacReb{L}\cup \jlset$ adds jump letters to $\boundedCacReb{L}$.
\end{itemize}
\end{definition}
An immediate observation is that all classes above are finite, which is a crucial simplification of the set of all cactus letters. Note that the $1$ and $0$ sub/superscripts correspond to those of \cref{def:cactus extension}, e.g., $\boundedCacRebCac{L}\subseteq \Gamma^1_1$, as these letters are composed of a depth-$1$ stabilisation over rebase letters, which are in turn over cactus letters without further rebase.
The specific roles of the classes is explained as they are used. Intuitively, the main objective is to keep the alphabets finite, and as restricted as possible. 
Also note that by \cref{lem:deep cactus chain has degenerate cycle}, considering letters up to depth $|S|-1$ is exhaustive, since any deeper words contain a degenerate stable cycle. 

\begin{remark}[Depth v.s. Length Bound]
\label{rmk:depth vs length bound}
Note that according to \cref{def:L bounded cactus letters}, for word $w\in (\boundedCac{L})^*$ with $\depth(w)=d$, each cactus letter $\alpha_{S',x}$ in $w$ of depth $k\le d$ satisfies $|x|\le L(k)$, but $\depth(x)=k-1$. This may lead to some confusion, since when we construct words that are turned into cactus letters, we maintain $|x|\le L(\depth(x)+1)$. We recommend keeping this in mind throughout the proof.    
\end{remark}

We now turn to define two concrete instantiations of such $L$ functions. In broad terms, the \emph{simple} bound function pertains to analysis of the potential $\pot$ (e.g., \cref{sec:SSRI toolbox}), whereas the \emph{general} function is for the charge $\charge$ (e.g., \cref{sec:GSRI toolbox}). For the definitions of potential and charge, see \cref{sec:dominance and potential}. For now, we recommend viewing them as fairly arbitrary functions, and their definitions are demystified in the proof.

These concrete instances depend inductively on certain parameters. We explain the intuitive meaning behind them, although this is not necessary in order to follow the definition. When we reason about these functions, we elaborate on the role of each parameter.
\begin{itemize}
    \item The parameter $d$ represents the depth of a word.
    \item The parameter $i$ represents the number of \emph{independent runs} over a certain word. Intuitively, these are runs whose weights are very far from each other (see \cref{def:independent run}). 
\end{itemize}
The inductive definitions decrease with $d$ and increase with $i$.

In \cite{almagor2026determinization}, a fundamental (non-constructive) step in the proof is the usage of Ramsey's theorem on colouring of an infinite sequence. In this constructive version, we replace this with finitary Ramsey argument, as follows. 
For a set $S$, denote by ${S \choose 2}=\{\{i,j\}\mid i\neq j\in S\}$ the set of size $2$ subsets of $S$.
\begin{definition}[The Ramsey Function]
    \label{def:ramsey function}
    The function $\ramsey:\bbN\times \bbN\to \bbN$ gives for every $k,n\in \bbN$ the number $R=\ramsey(k,n)$ such that for every colouring of $\{1,\ldots, R\}\choose 2$ with $k$ colours there exists a set $S\subseteq \{1,\ldots, R\}$ with $|S|\ge n$ such that the colouring restricted to $S\choose 2$ is constant.
    A folklore upper bound~\cite{erdos1935combinatorial} is $\ramsey(k,n)\le k^{kn}$.
\end{definition}

\subsection{Simple Length Bound $\simpL$}
\label{sec: simple length bound function}
The Simple Length Bound function $\simpL$ is defined inductively by means of several functions, which are used in their own right in later sections. We start by defining the functions, and then illustrate their inductive dependency to show there is no cyclic definition in \cref{fig:simple length inductive dependency graph}. 
In the following we always have $d,i\in\{0,\ldots, |S|+1\}$ and $i\ge 1$ (where the $|S|+1$ case only serves as a base for the induction). Note that some functions do not depend directly on themselves, and are therefore not given a base.
\begin{definition}[$\simpLfuncLength{d}{i}$]
    \label{def: simpL func Length}
    \textbf{Intuition:} The number $\simpLfuncLength{d}{i}$ is the maximal length we allow for a word $w$ in a cactus letter $\alpha_{S',w}$ of depth $d$, assuming there are at least $i$ independent runs on $w$.
    \emph{
    \begin{itemize}
        \item \textbf{Base:} $\simpLfuncLength{0}{i}=1$ for every $i$ and $\simpLfuncLength{d}{|S|+1}=0$ for every $d$.
        \item \textbf{Step:} $\simpLfuncLength{d+1}{i}=\extraSize\cdot\simpLfuncLength{d}{1}\cdot\simpLfuncLength{d+1}{i+1}\cdot (\ramsey(\simpLfuncTypes{d+1}{i},3)+2)^{2\simpLfuncAmp{d+1}{i}+1}$.
    \end{itemize}
    }
\end{definition}
\begin{definition}[$\simpLfuncCover{d}{i}$]
    \label{def: simpL func Cover}
    %USED to be LH
    \textbf{Intuition:} The number $\simpLfuncCover{d}{i}$ represents the maximal distance we allow between a run and its closest independent run, i.e., the ``cover'' of an independent run.
    \emph{
    \begin{itemize}
        \item \textbf{Base:} $\simpLfuncCover{d}{|S|}=1$ for all $d$ (note that the induction only increases $i$).
        \item \textbf{Step:} $\simpLfuncCover{d+1}{i}=8\bigM^2(\simpLfuncMaxW{d}\simpLfuncLength{d}{1}\simpLfuncLength{d+1}{i+1}+\simpLfuncCover{d+1}{i+1})$.
    \end{itemize}
    }
\end{definition}
\begin{definition}[$\simpLfuncMaxW{d}$]
    \label{def: simpL func max weight}
    \textbf{Intuition:} The number $\simpLfuncMaxW{d}$ represents the maximal weight of a simple cactus letter of depth $d$.
    \emph{
    \begin{itemize}
        \item \textbf{Base:} $\simpLfuncMaxW{0}=\max_{\gamma\in \Gamma_0}\wmax{\gamma}$ (maximal weight of a transition in $\augA$).
        \item \textbf{Step:} $\simpLfuncMaxW{d+1}=2\bigM \simpLfuncMaxW{d}\simpLfuncLength{d+1}{1}$.
    \end{itemize}
    }
\end{definition}
\begin{definition}[$\simpLfuncAmp{d}{i}$]
    \label{def: simpL func Amp}
    %Used to be Up
    \textbf{Intuition:} The number $\simpLfuncAmp{d}{i}$ is a bound on the \emph{amplitude} -- an increase in the potential, to be defined in \cref{sec:dominance and potential}. 
    \emph{
    \begin{align*}
    \simpLfuncAmp{d+1}{i} =& \extraSize(2\simpLfuncMaxW{d})(\ramsey(\simpLfuncTypes{d+1}{i},3)\\ &+\simpLfuncLength{d+1}{i+1}+\simpLfuncLength{d}{1})
    \end{align*}}
\end{definition}
\begin{definition}[$\simpLfuncTypes{d}{i}$]
    \label{def: simpL func Types}
    \textbf{Intuition:} The number $\simpLfuncTypes{d}{i}$ represents the number of possible configurations of the states, where we keep track only of the weights within a window of $\simpLfuncCover{d}{i}$ from each independent run.
\emph{
\[\simpLfuncTypes{d+1}{i}=3^i\cdot (2\simpLfuncCover{d+1}{i}+2)^{2|S|i}\]
}
\end{definition}

With these functions in place, we can define the overall length-bound function.
\begin{definition}[$\simpL$]
\label{def: simpL length bound}
    The function $\simpL$ is defined as \emph{$\simpL(d)=\simpLfuncLength{d}{1}$}.
\end{definition}

We can then instantiate the bounded cactus letters of \cref{def:L bounded cactus letters} with the function $\simpL$ to obtain two concrete classes of bounded cactus letters.
\begin{definition}[$\simpL$-Bounded Cactus Letters]
    \label{def: simpL bounded cactus letters}
    Denote $\simpCac=\boundedCac{\simpL}$, $\simpCacReb=\boundedCacReb{\simpL}$, $\simpCacRebCac=\boundedCacRebCac{\simpL}$ and $\simpCacRebJump=\boundedCacRebJump{\simpL}$. 
\end{definition}

\begin{figure}[H]
    \centering

  \begin{tikzpicture}[
  >=latex,
  node distance=12mm and 18mm,
  every node/.style={draw, rounded corners, align=center, font=\small},
  dep/.style={->, thick}
]

% =======================
% Left column
% =======================
\node (Len_d0) {\texttt{Len}$(d,1)$};

% =======================
% Top layer: depth d+1
% =======================
\node (Len_di)   [right=of Len_d0] {\texttt{Len}$(d\!+\!1,i)$};
\node (Len_dip1) [right=of Len_di] {\texttt{Len}$(d\!+\!1,i\!+\!1)$};

\node (Typ_di)   [below=of Len_di] {\texttt{Typ}$(d\!+\!1,i)$};
\node (Amp_di)   [below=of Len_dip1] {\texttt{Amp}$(d\!+\!1,i)$};

\node (Cov_di)   [below=of Typ_di] {\texttt{Cov}$(d\!+\!1,i)$};
\node (Cov_dip1) [below=of Cov_di] {\texttt{Cov}$(d\!+\!1,i\!+\!1)$};

% =======================
% Right column: MaxWt
% =======================
\node (Max_d)   [below=of Amp_di] {\texttt{MaxWt}$(d)$};
\node (Max_dm1) [below=of Max_d] {\texttt{MaxWt}$(d\!-\!1)$};

% =====================================================
% Edges: Len(d+1,i)
\draw[dep] (Len_di) -- (Typ_di);
\draw[dep] (Len_di) -- (Amp_di);
\draw[dep] (Len_di) -- (Len_dip1);
\draw[dep] (Len_di) -- (Len_d0);

% Edges: Amp(d+1,i)
\draw[dep] (Amp_di) -- (Typ_di);
\draw[dep] (Amp_di) -- (Len_dip1);
\draw[dep] (Amp_di) -- (Max_d);
\draw[dep] (Amp_di) -- (Len_d0);

% Edges: Typ(d+1,i)
\draw[dep] (Typ_di) -- (Cov_di);

% Edges: Cov(d+1,i)
\draw[dep] (Cov_di) -- (Cov_dip1);
\draw[dep] (Cov_di) to[bend left=12] (Len_dip1);
\draw[dep] (Cov_di) -- (Max_d);
\draw[dep] (Cov_di) -- (Len_d0);

% Edges: MaxWt
\draw[dep] (Max_d) -- (Max_dm1);

\end{tikzpicture}
    
    \caption{The inductive dependency graph for \cref{sec: simple length bound function}, demonstrating that the dependency is indeed acyclic.
    }
    \label{fig:simple length inductive dependency graph}
\end{figure}
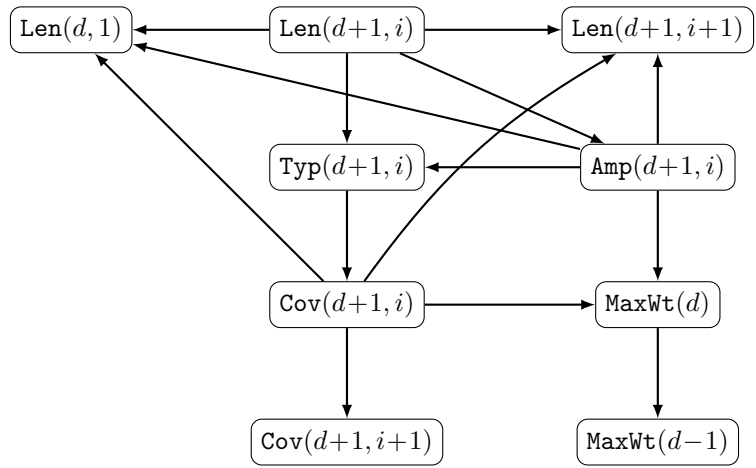

\subsection{General Length Bound $\generL$}
\label{sec: general length bound function}
We now present a similar set of functions as \cref{sec: simple length bound function}, but that are based on a constant $\budH$ defined using the ``simple'' functions of \cref{sec: simple length bound function}. 
Technically, the need for these function arises when considering the charge ($\charge$) instead of the potential ($\pot$). We remark that we could define both sets of functions uniformly with an additional parameter, but choose to separate them for clarity, and since $\budH$ is not actually a parameter, but a constant.
Specifically, denote $\budH=\simpLfuncAmp{|S|}{0}$.
\begin{definition}[$\generLfuncLength{d}{i}$]
    \label{def: genL func Length}
    \textbf{Intuition:} The number $\generLfuncLength{d}{i}$ is the maximal length we allow for a word $w$ in a cactus letter $\alpha_{S',w}$ of depth $d$, assuming there are at least $i$ independent runs on $w$.
    \emph{
    \begin{itemize}
        \item \textbf{Base:} $\generLfuncLength{0}{i}=1$ for every $i$ and $\generLfuncLength{d}{|S|+1}=0$ for every $d$.
        \item \textbf{Step:} $\generLfuncLength{d+1}{i}=\extraSize\cdot\generLfuncLength{d}{1}\cdot\generLfuncLength{d+1}{i+1}\cdot (\ramsey(\generLfuncTypes{d+1}{i},3)+2)^{2\generLfuncAmp{d+1}{i}+1}$.
    \end{itemize}
    }
\end{definition}
\begin{definition}[$\generLfuncCover{d}{i}$]
    \label{def: genL func Cover}
    %USED to be LH
    \textbf{Intuition:} The number $\generLfuncCover{d}{i}$ represents the maximal distance we allow between a run and its closest independent run, i.e., the ``cover'' of an independent run.
    \emph{
    \begin{itemize}
        \item \textbf{Base:} $\generLfuncCover{d}{|S|}=1$ for all $d$ (note that the induction only increases $i$).
        \item \textbf{Step:} $\generLfuncCover{d+1}{i}=8\bigM^2(\generLfuncMaxW{d}\generLfuncLength{d}{1}\generLfuncLength{d+1}{i+1}+\generLfuncCover{d+1}{i+1} + \budH)$.
    \end{itemize}
    }
\end{definition}
\begin{definition}[$\generLfuncMaxW{d}$]
    \label{def: genL func max weight}
    \textbf{Intuition:} The number $\generLfuncMaxW{d}$ represents the maximal weight of a general cactus letter of depth $d$.
    \emph{
    \begin{itemize}
        \item \textbf{Base:} $\generLfuncMaxW{0}=\max_{\gamma\in \Gamma_0}\wmax{\gamma}$ (maximal weight of a transition in $\augA$).
        \item \textbf{Step:} $\generLfuncMaxW{d+1}=2\bigM \generLfuncMaxW{d}\generLfuncLength{d+1}{0}$.
    \end{itemize}
    }
\end{definition}
\begin{definition}[$\generLfuncAmp{d}{i}$]
    \label{def: genL func Amp}
    \textbf{Intuition:} The number $\generLfuncAmp{d}{i}$ is a lower bound on the \emph{amplitude} -- an increase in the charge, to be defined in \cref{sec:dominance and potential}. 
    \emph{
           \begin{align*}
           \generLfuncAmp{d+1}{i}= &\extraSize(\budH+ 2\generLfuncMaxW{d})( \ramsey(\generLfuncTypes{d+1}{i},3)\\&+\generLfuncLength{d+1}{i+1}+\generLfuncLength{d}{1})
           \end{align*}
    }
\end{definition}
\begin{definition}[$\generLfuncTypes{d}{i}$]
    \label{def: genL func Types}
    \textbf{Intuition:} The number $\generLfuncTypes{d}{i}$ represents the number of possible configurations of the states, where we keep track only of the weights within a window of $\generLfuncCover{d}{i}$ from each independent run.
    \emph{
    \[\generLfuncTypes{d+1}{i}=3^i\cdot (2\generLfuncCover{d+1}{i}+2)^{2|S|i}\]
    }
\end{definition}
\begin{definition}[$\generL$]
The function $\generL$ is defined as $\generL(d)=\generLfuncLength{d}{1}$.
\end{definition}

\begin{definition}[$\generL$-Bounded Cactus Letters]
    \label{def: generL bounded cactus letters}
    Denote $\genCac=\boundedCac{\generL}$, $\genCacReb=\boundedCacReb{\generL}$, $\genCacRebCac=\boundedCacRebCac{\generL}$ and $\genCacRebJump=\boundedCacRebJump{\generL}$. 
\end{definition}

\subsection{Properties of Effective Cactus Letters}
\label{sec:properties of effective cactus}
We establish some basic properties of the functions defined above.
The first is that $\simpLfuncMaxW{d}$ and $\generLfuncMaxW{d}$ are indeed bounds on the effect of a letter.
\begin{proposition}
    \label{prop:simp maxw bounds maxeff}
    \label{prop:gen maxw bounds maxeff}
    For every $\sigma\in \simpCac$ (resp. $\sigma\in \genCac$) with $\depth(\sigma)=d$ we have $\maxeff{\sigma}\le \simpLfuncMaxW{d}$ (resp. $\maxeff{\sigma}\le \generLfuncMaxW{d}$).
\end{proposition}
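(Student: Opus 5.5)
I prove the statement by induction on the depth $d$, simultaneously for the simple and general alphabets; the two arguments are identical up to renaming the functions. Note first that a cactus letter has weight $\infty$ on non-grounded pairs, and these are ignored by $\wmax{\cdot}$. Hence $\maxeff{\sigma}=\wmax{\sigma}$ is the largest absolute value of a finite weight $c$ with $(s,\sigma,c,r)\in\augTrans_\infty$. So it suffices to bound the finite transition weights of $\sigma$ in absolute value.

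\textbf{Base case.} For $d=0$ the letter $\sigma\in\Gamma_0$ is a standard letter of $\augA$. By definition $\maxeff{\sigma}=\wmax{\sigma}\le\max_{\gamma\in\Gamma_0}\wmax{\gamma}=\simpLfuncMaxW{0}$, and likewise $\maxeff{\sigma}\le\generLfuncMaxW{0}$.

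\textbf{Inductive step.} Let $\sigma=\alpha_{S',w}$ have depth $d+1$. Then $\depth(w)=d$, so every letter $\gamma$ of $w$ has depth at most $d$. It also lies in the same bounded alphabet: the cactus letters nested in $w$ are themselves in $\simpCac$, resp.\ $\genCac$, because the definition of $\boundedCacti{L,k}$ is recursive through depths. Moreover $|w|\le\simpL(d+1)=\simpLfuncLength{d+1}{1}$.

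I need the functions to be monotone in $d$, namely $\simpLfuncMaxW{d'}\le\simpLfuncMaxW{d}$ for $d'\le d$. This is immediate from the recursion, since $2\bigM\,\simpLfuncLength{d+1}{1}\ge 1$. The induction hypothesis then gives $\maxeff{\gamma}\le\simpLfuncMaxW{d}$ for every letter $\gamma$ of $w$. Summing over the letters,
\[\maxeff{w^{2\bigM}}\le 2\bigM\,|w|\,\simpLfuncMaxW{d}\le 2\bigM\,\simpLfuncMaxW{d}\,\simpLfuncLength{d+1}{1}=\simpLfuncMaxW{d+1}.\]
By \cref{def:stabilisation}, each finite transition of $\alpha_{S',w}$ has weight $\minweight(s\runsto{w^\bigM}g\runsto{w^\bigM}r)$. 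This is the weight of an actual run on $w^{2\bigM}$, so its absolute value is at most $\maxeff{w^{2\bigM}}$. This proves the simple case.

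\textbf{General case and the main obstacle.} The same computation works verbatim for the general alphabet, except that the recursion is stated as $\generLfuncMaxW{d+1}=2\bigM\,\generLfuncMaxW{d}\,\generLfuncLength{d+1}{0}$ with second argument $0$ rather than $1$. The length bound on $w$, however, is $\generL(d+1)=\generLfuncLength{d+1}{1}$. The only real obstacle is therefore to check that $\generLfuncLength{d+1}{0}\ge\generLfuncLength{d+1}{1}$, i.e.\ that the length functions are non-increasing in $i$. I would prove this by downward induction on $i$ from the base value $0$ at $i=|S|+1$. The step formula multiplies $\generLfuncLength{d+1}{i+1}$ by factors that are all at least $1$, so $\generLfuncLength{d+1}{i}\ge\generLfuncLength{d+1}{i+1}$ whenever the formula applies. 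I would take the step formula to also define the value at $i=0$, which is how $\budH=\simpLfuncAmp{|S|}{0}$ already uses it. If that interpretation fails, I would read the $0$ in the recursion as a typo for $1$; the bound above then goes through unchanged.
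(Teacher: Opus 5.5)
Your proof is correct and follows the paper's argument: induction on depth, with each finite transition weight of $\alpha_{S',w}$ bounded by the weight of an actual run on $w^{2\bigM}$, which gives $2\bigM\,|w|\,\simpLfuncMaxW{d}\le\simpLfuncMaxW{d+1}$. You also check two points the paper glosses over with ``identical by inserting $\mathtt{G}$'': that $\texttt{MaxWt}$ is monotone in $d$ (needed because inner letters may have smaller depth), and that $\generLfuncLength{d+1}{0}\ge\generLfuncLength{d+1}{1}$, which reconciles the $i=0$ in the general recursion with the length bound $\generL(d+1)=\generLfuncLength{d+1}{1}$.
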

\begin{proof}
We prove the case of $\simpCac$, the proof of $\genCac$ is identical by inserting $\mathtt{G}$ before the relevant functions.
    By \cref{def:L bounded cactus letters} we have $\sigma\in \boundedCacti{\simpL,d}$. 
    We prove the claim by induction on $d$.
    For $d=0$ we have $\sigma\in \Gamma_0$ (by \cref{def:L bounded cactus letters}). Then by \cref{def: simpL func max weight} we have $\simpLfuncMaxW{0}=\max_{\gamma\in \Gamma_0}\wmax{\gamma}\ge \maxeff{\gamma}$.

    For $d>0$ we have $\sigma=\alpha_{S',w}$ for $|w|\le \simpL(d)=\simpLfuncLength{d}{1}$ and $\depth(w)= d-1$. Recall that the weight of a transition on cactus letter is at most that of a run on $w^{2\bigM}$ (by \cref{def:stabilisation}). 
    By the inductive hypothesis we therefore have 
    \[\begin{split}
    &\maxeff{\sigma}\le 2\bigM \maxeff{w}\le 2\bigM |w|\maxeff{\boundedCacti{\simpL,d-1}}\le \\
    &2\bigM|w|\simpLfuncMaxW{d-1}\le 2\bigM \simpLfuncLength{d}{1}\simpLfuncMaxW{d-1}\le \simpLfuncMaxW{d}
    \end{split}
    \]
    and we are done.    
\end{proof}
Next, we establish a relationship between cover and length. This may seem unmotivated, but is at the technical heart of our \emph{Zooming Technique} in \cref{lem: d i fit to SSRI or lower depth,lem: charge d i fit to GSRI or lower depth or high potential}.
\begin{proposition}
    \label{prop: relating cover length and maxW} for every $d,i$ we have
    \[\begin{split}
    &\simpLfuncCover{d+1}{i}-\frac{1}{\extraSize}\simpLfuncLength{d+1}{i+1}\simpLfuncMaxW{d}\cdot 2 \\ & \ge \frac12\simpLfuncCover{d+1}{i+1}\ 
    \text{, and}\\   
    &\generLfuncCover{d+1}{i}-\frac{1}{\extraSize}\generLfuncLength{d+1}{i+1}\generLfuncMaxW{d}\cdot 2 \\& \ge \frac12\generLfuncCover{d+1}{i+1}
    \end{split}
    \]
\end{proposition}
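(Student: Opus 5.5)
The plan is a direct unfolding of the recursive definition of $\texttt{Cov}$ (and of $\texttt{GCov}$), followed by discarding nonnegative terms. I treat the simple case; the general case is identical, with each function replaced by its $\mathtt{G}$-prefixed counterpart. In the general case there is an extra summand $\budH\ge 0$ inside the parentheses, and it can only help.

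First I substitute \cref{def: simpL func Cover}:
\[
\simpLfuncCover{d+1}{i}=8\bigM^2\simpLfuncMaxW{d}\simpLfuncLength{d}{1}\simpLfuncLength{d+1}{i+1}+8\bigM^2\simpLfuncCover{d+1}{i+1}.
\]
Recall $\extraSize=32\bigM^2$. The subtracted term is therefore $\frac{2}{32\bigM^2}\simpLfuncLength{d+1}{i+1}\simpLfuncMaxW{d}$, and I compare it to the first summand above. All quantities involved are nonnegative: $\texttt{Len}$, $\texttt{MaxWt}$ and $\texttt{Cov}$ are defined by products and sums of nonnegative numbers. Moreover $\simpLfuncLength{d}{1}\ge 1$, by a trivial induction on $d$ from the base $\simpLfuncLength{0}{i}=1$. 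The step for $i=1$ multiplies $\simpLfuncLength{d}{1}$ by $\simpLfuncLength{d+1}{2}$ and other factors, so I should double-check that this stays positive; if some factor can be $0$, the inequality must still hold in that case. Also $\bigM\ge 1$. Hence
\[
8\bigM^2\simpLfuncMaxW{d}\simpLfuncLength{d}{1}\simpLfuncLength{d+1}{i+1}\ \ge\ \tfrac{1}{16\bigM^2}\simpLfuncLength{d+1}{i+1}\simpLfuncMaxW{d},
\]
because $8\bigM^2\cdot 16\bigM^2\ge 1$ and $\simpLfuncLength{d}{1}\ge 1$. If $\simpLfuncLength{d}{1}$ were $0$, I would instead note that then $\simpLfuncLength{d+1}{i+1}$ is itself $0$ or the product is, so both sides vanish. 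After this step the difference on the left-hand side is at least $8\bigM^2\simpLfuncCover{d+1}{i+1}$. Since $\texttt{Cov}\ge 0$ and $8\bigM^2\ge\frac12$, this is at least $\frac12\simpLfuncCover{d+1}{i+1}$, which is the claim.

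The only real obstacle is bookkeeping at the boundary of the index ranges, where the recursion is not literally the step case. One such boundary is $i+1=|S|$, where the base gives $\texttt{Cov}=1$. Another is $i+1=|S|+1$, where $\texttt{Len}=0$ and $\texttt{Cov}$ is referenced by the step formula but not explicitly defined. I would handle these by reading the step formula as the definition whenever $i<|S|$. At $i+1=|S|+1$ the subtracted term vanishes because $\texttt{Len}(d+1,|S|+1)=0$, and the inequality reduces to $\texttt{Cov}(d+1,|S|)=1\ge\frac12\texttt{Cov}(d+1,|S|+1)$ under any nonnegative convention at most $2$. Beyond these edge cases and the nonnegativity and positivity checks, no further ingredient is needed.
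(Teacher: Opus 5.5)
Your proof is correct and follows the paper's own argument. You expand $\simpLfuncCover{d+1}{i}$ by its step formula, absorb the subtracted term into the summand $8\bigM^2\simpLfuncMaxW{d}\simpLfuncLength{d}{1}\simpLfuncLength{d+1}{i+1}$, and bound the remaining $8\bigM^2\simpLfuncCover{d+1}{i+1}$ below by $\frac12\simpLfuncCover{d+1}{i+1}$, with $\budH\ge 0$ only helping in the general case. Your case split on $\simpLfuncLength{d}{1}=0$ is not vacuous: with the literal base $\simpLfuncLength{d}{|S|+1}=0$, the product recursion forces $\simpLfuncLength{d}{1}=0$ for every $d\ge 1$, so the coefficient the paper marks as ``$\gg 0$'' is actually negative there, and your observation that $\simpLfuncLength{d+1}{i+1}$ then vanishes as well is what keeps the expansion valid.
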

\begin{proof}
We prove for $\simpLfuncCover{d+1}{i}$, where the case of $\generLfuncCover{d+1}{i}$ is identical by inserting $\mathtt{G}$ before the relevant functions.
    Expanding from the left-hand side, we have
\[
\begin{split}
    &\simpLfuncCover{d+1}{i}-\frac{1}{\extraSize}\simpLfuncLength{d+1}{i+1}\simpLfuncMaxW{d}\cdot 2\\
    &=\underbrace{8\bigM^2(\simpLfuncMaxW{d}\simpLfuncLength{d}{1}\simpLfuncLength{d+1}{i+1}+\simpLfuncCover{d+1}{i+1})}_{\simpLfuncCover{d+1}{i}}- \\&\frac{1}{\extraSize}\simpLfuncLength{d+1}{i+1}\simpLfuncMaxW{d}\cdot 2
    =8\bigM^2\simpLfuncCover{d+1}{i+1}+ \\
    &\underbrace{(8\bigM^2\simpLfuncLength{d}{1}-\frac{2}{32\bigM^2})}_{\gg 0}\simpLfuncLength{d+1}{i+1}\simpLfuncMaxW{d}\\
    &\ge \frac12\simpLfuncCover{d+1}{i+1}
\end{split}
\]
\end{proof}

\section{Dominance, Potential and Charge}
\label{sec:dominance and potential}
A central tool introduced in~\cite{almagor2026determinization} is a means to separate gaps (as per \cref{def: B gap witness}) to two parts: above the baseline and below it. Technically, this is captured by two concepts -- the \emph{potential} ($\pot$), which measures how high a state can be above the baseline and still induce a finite-value run while all states below it jump to $\infty$, and the \emph{charge} ($\charge$), which measures how far below the baseline run a state can be.

As it turns out, studying these concepts independently is a key component in obtaining decidability of determinisation.
In this work, we need to refine these concepts so that they can be applied to effective cactus letters. In the following, we introduce these notions and results pertaining to them. As usual, we remark when the new proofs are close to their analogues in~\cite{almagor2026determinization}, and when they are novel.

\subsection{Dominant States, Potential and Charge -- Definitions}
\label{sec:dominant states}
Consider a configuration $\vec{c}$ of $\augA^\infty_\infty$. 
The first definition (\emph{dominance}) describes when a state $q$ is still ``relevant'' in $\vec{c}$, in the sense that some suffix read from $\vec{c}$ may make the run starting from $q$ minimal, or at least better (i.e., lower) than states that are currently lower. 
A crucial difference from the analogous definition in~\cite{almagor2026determinization} is that the suffixes we consider are not from the general alphabet, but rather from a certain fragment of effective cactus letters.  
We make this precise as follows. 
% \gatodo{change to suffix $w\in \frgAlphabetSmall^* \cdot \frgAlphabetBigRebase$}
% \gatodo{update definition of potential}
\begin{definition}[Dominant State]
    \label{def:dominant state}
    Consider a configuration $\vec{c}\in \bbZinf^S$. We say that a state $q\in S$ is \emph{dominant} if there exists $w\in (\simpCacRebJump)^*\cdot \genCacRebCac$
    such that 
    $\minweight(q\runsto{w} S)<\infty$ 
    and for every $p\in S$, if $\vec{c}(p)<\vec{c}(q)$ then 
    $\minweight(p\runsto{w} S)=\infty$.

    We say that $q$ is \emph{maximal dominant} if $$\vec{c}(q)=\max\{\vec{c}(p)\mid p\text{ is dominant}\}$$. We then denote this maximal value by $\domval(\vec{c})=\vec{c}(q)$ and the set of maximal dominant states by $\maxdom(\vec{c})$.
\end{definition}
Intuitively, $q$ is dominant if it can yield a run with finite weight, whereas all states below $q$ only yield infinite weight runs (i.e, they cannot complete a valid run). 

Note that the suffix alphabet is very specific, consisting of a word over the simple alphabet, possibly with rebase and jumps, and then a single letter from the general alphabet, which is a cactus letter possibly containing rebase letters. The specificity is used to allow our proofs to carry through on the one hand, and to allow the effective bounds using the recursive functions on the other. 

Note that states $q'$ with $\vec{c}(q')\ge \vec{c}(q)$ are not considered in the definition, and may yield even lower runs than those starting from $q$. Thus, we distinguish the maximal dominant state as well.
Also note that the potential depends on the baseline run, and is therefore a very different notion to gap witness.
The precise connection between dominance and gap witnesses/determinisability is made in \cref{sec:witness}.

Next, recall from \cref{sec:augmented construction} that a word $w$ has a seamless baseline run if for every state $q$ visited by the baseline run after a prefix, the baseline run is also minimal to that state (although there may be other, lower runs, that go through other states). Moreover, the baseline run has constant weight $0$. We use this to capture the growth rate above and below the baseline, as follows.

\begin{definition}[Potential and Charge]
    \label{def:potential}
    \label{def:charge}
    Consider a jump-free word $w$ that has a seamless baseline run and let $\vec{c}_w=\xconf(s_0,w)$.
    \begin{enumerate}
        \item The \emph{potential} of $w$ is  $\pot(w)=\domval(\vec{c}_w)$.
        \item The \emph{charge} of $w$ is  $\charge(w)=-\min\{\vec{c}_w(q)\mid q\in Q\}$.
    \end{enumerate}
\end{definition}
Note that $\pot(w)$ and $\charge(w)$ are always nonnegative (when they are defined), since the minimal run is of weight at most $0$, due to the seamless baseline run. The notions of dominant state, potential and charge are depicted in \cref{fig:dom charge potential}

 \begin{figure}[H]
        \centering
        \includegraphics[width=0.75\linewidth]{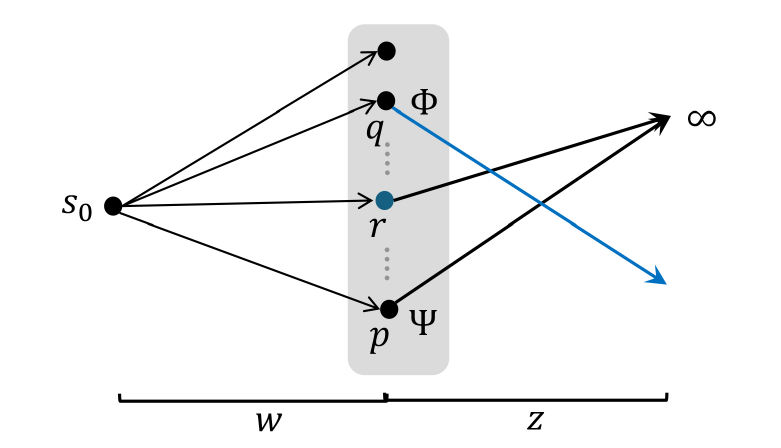}
        \caption{The state $q$ reached after reading $w$ is dominant: there exists a finite-run on a suitable suffix $z$ from $q$, whereas all runs on $z$ from states below $q$ are infinite-weight. Assuming $q$ is also maximal-dominant, its weight determines the potential $\pot(w)$. The weight of the lowest state $p$ determines the charge $\charge(w)$}
        \label{fig:dom charge potential}
    \end{figure}

\subsection{A Toolbox for Potential and Charge}
\label{sec:growth of potential and charge}
We turn to give some fundamental results about potential and charge, as well as link them to the tools of \cref{sec:cactus toolbox} -- specifically to baseline shift and unfolding.

The first result is that dominance is invariant to baseline shifts.
\begin{proposition}
    \label{prop: dominance invariant to baseline shifts}
    Consider a configuration $\vec{c}$ and a dominant state $s$, and let $s'$ be a state. Denote $\vec{c'}=\baseshift{\vec{c}}{s'}$, then for every pair of states $q,q'$ we have $\vec{c}(q)-\vec{c}(q')=\vec{c'}(\baseshift{q}{s'})-\vec{c'}(\baseshift{q'}{s'})$. 
    Moreover, $\baseshift{s}{s'}$ is a dominant state in $\vec{c'}$.
\end{proposition}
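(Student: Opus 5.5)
The first claim follows directly from the definition of $\baseshift{\vec{c}}{s'}$ in \cref{def: baseline shift on states sets and config}. The map $q\mapsto \baseshift{q}{s'}$ only rewrites the baseline component, sending $(p,r,T)$ to $(p,r',T)$. On the relevant states, which share the baseline component and reachable set $T$ with $s'$, this map is a bijection, and by definition $\vec{c'}(\baseshift{q}{s'})=\vec{c}(q)$. Hence all pairwise differences are preserved, indeed with equality of the values themselves. So the order of states by weight in $\vec{c'}$ is the order in $\vec{c}$ transported along the bijection. In particular, $\vec{c'}(\baseshift{p}{s'})<\vec{c'}(\baseshift{s}{s'})$ if and only if $\vec{c}(p)<\vec{c}(s)$.

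For the ``moreover'' part, let $w\in(\simpCacRebJump)^*\cdot\genCacRebCac$ witness that $s$ is dominant in $\vec{c}$. I would construct a new suffix $w'$ of the same form witnessing dominance of $\baseshift{s}{s'}$. The natural choice is to prepend a jump letter: write $s=(q,p_1,T)$ and $\baseshift{s}{s'}=(q,p_2,T)$, and set $w'=\jl_{(\cdot,p_2,T)\to(\cdot,p_1,T)}\cdot w$. By \cref{def:jump letters}, this letter sends each $(q',p_2,T)$ to $(q',p_1,T)$ with weight $0$, and it is deterministic. Since $\jlset\subseteq\simpCacRebJump$, the word $w'$ still lies in $(\simpCacRebJump)^*\cdot\genCacRebCac$. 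Then for each state $t$ we get $\minweight(\baseshift{t}{s'}\runsto{w'}S)=\minweight(t\runsto{w}S)$. Finiteness from $\baseshift{s}{s'}$ and infiniteness from every lower state $\baseshift{p}{s'}$ therefore transfer directly, using the order-preservation from the first paragraph. Dominance only involves finiteness of weights, so the weight shift of the runs does not matter.

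The main obstacle is that the jump trick works only if the shifted configuration lives on states with a single common baseline component, which I expect holds since configurations reachable from $s_0$ have deterministic second and third components. It also requires that the first letter of $w$ does not need a restricted alphabet: if $w$ consists of a single letter from $\genCacRebCac$, the prefix $\jl\cdot$ must count as part of the $(\simpCacRebJump)^*$ component, which it does. If jump letters were disallowed here, I would instead fall back on \cref{prop:baseline shift run bijection}. I would take any run $\eta$ witnessing finiteness from $s$ on $w$ and shift $w$ along the baseline-changed run, turning cactus letters into rebase letters. Finiteness is preserved by the bijection, and the letters stay in the rebase-closed classes $\simpCacReb$ and $\genCacRebCac$. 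The delicate point in that route is checking that the final letter remains in $\genCacRebCac$.
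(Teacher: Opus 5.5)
Your proof is correct and follows the paper's argument: the first claim is read off the definition of a shifted configuration, and dominance is transferred by prepending a weight-$0$ jump letter to the separating suffix, which keeps it in $(\simpCacRebJump)^*\cdot\genCacRebCac$. Your choice of the jump from baseline $p_2$ back to $p_1$ is stated more carefully than the paper's $\jl_{s'\to s}$. Your fallback would fail, though: the paper notes that $\genCacRebCac$ is not closed under baseline shifts, so shifting the suffix would push its last letter out of the allowed class. The jump letter is therefore necessary, not merely convenient.
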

\begin{proof}
    \cref{def: baseline shift on states sets and config} immediately implies that 
    $\vec{c}(q)-\vec{c}(q')=\vec{c'}(\baseshift{q}{s'})-\vec{c'}(\baseshift{q'}{s'})$. 
    Then, if $s$ is dominant, there exists a suffix $w\in \simpCacRebJump^*\cdot \genCacRebCac$ such that $\minweight(s,w)<\infty$ but for every $r$ such that $\vec{c}(r)<\vec{c}(s)$ we have $\minweight(r,w)=\infty$.

    It is tempting to claim that we can somehow baseline-shift $w$ to get a separating run from $\baseshift{s}{s'}$. However, there is no natural run to shift $w$ on, and moreover -- the last letter in $w$ is from $\genCacRebCac$, which is not closed under baseline shifts.
    
    Instead, we ``violently'' insert a jump letter before $w$, as follows. Consider $w'=\jl_{s'\to s}w$, and note that $\xconf(\vec{c'},\jl_{s'\to s})=\vec{c}$ and that $w'\in \simpCacRebJump^*\cdot \genCacRebCac$. 
    Then we readily have \\ $\minweight(\baseshift{s}{s'},w')=\minweight(s,w)<\infty$ and for every $q''$ such that $\vec{c'}(q'')<\vec{c'}(\baseshift{s}{s'})$ we can write $q''=\baseshift{s''}{s'}$ for some $s''$ such that $\vec{c}(s'')<\vec{c}(s)$, so 
    $\minweight(q'',w')=\minweight(s'',w)=\infty$.
\end{proof}

In order to analyse the behaviour of $\pot$ and $\charge$ along words, it would be convenient to bound their change upon reading a letter. For the potential, such a bound is already established in~\cite{almagor2026determinization}, but in a non-uniform way due to the infinite alphabet. Here, we specialise it to our effective setting.
\begin{lemma}[Non-Uniform Potential Bounded Growth]
    \label{lem:general bounded growth potential}
    For every finite alphabet $\Gamma\subseteq \Gamma^\infty_\infty$ let $k=2\maxeff{\Gamma}$ then for every $w\in \Gamma^*$ and $\sigma\in \Gamma$, if $w\cdot \sigma$ has a seamless baseline run then $\pot(w\cdot \sigma)-\pot(w)\le k$.
\end{lemma}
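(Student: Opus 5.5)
The plan is to compare the maximal dominant state after $w\sigma$ with a suitably chosen dominant state after $w$, using $\sigma$ followed by the separating suffix. Let $\vec{c}=\xconf(s_0,w)$ and $\vec{d}=\xconf(s_0,w\sigma)$.

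First, $w$ itself has a seamless baseline run, because the baseline run of $w\sigma$ restricted to the prefix $w$ is still seamless. Hence $\pot(w)=\domval(\vec{c})$ is defined. Let $q\in\maxdom(\vec{d})$, so $\vec{d}(q)=\pot(w\sigma)$, and let $z$ be a suffix witnessing its dominance. That is, $\minweight(q\runsto{z}S)<\infty$ and every $r$ with $\vec{d}(r)<\vec{d}(q)$ has $\minweight(r\runsto{z}S)=\infty$.

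Rather than taking the predecessor of $q$ directly, I will define
\[P=\{p\in\supp(\vec{c})\mid \minweight(p\runsto{\sigma z}S)<\infty\}.\]
This set is nonempty: since $q\in\supp(\vec{d})$, some $p\in\supp(\vec{c})$ has a finite-weight transition $p\runsto{\sigma}q$, and then $\sigma z$ is finite from $p$. Choose $p^*\in P$ with $\vec{c}(p^*)$ minimal. Then $p^*$ is dominant in $\vec{c}$ with suffix $\sigma z$ essentially by construction: every $p'$ with $\vec{c}(p')<\vec{c}(p^*)$ lies outside $P$, so $\sigma z$ has infinite weight from it. Hence $\pot(w)\ge \vec{c}(p^*)$.

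Next I lower-bound $\vec{c}(p^*)$. Since $p^*\in P$, there is a transition $p^*\runsto{\sigma}r$ with $\minweight(r\runsto{z}S)<\infty$. By the choice of $z$ this forces $\vec{d}(r)\ge\vec{d}(q)$. On the other hand, $\vec{d}(r)\le \vec{c}(p^*)+\maxeff{\sigma}$. Therefore
\[\vec{c}(p^*)\ge \vec{d}(q)-\maxeff{\sigma}\ge \pot(w\sigma)-\maxeff{\Gamma},\]
which gives $\pot(w\sigma)-\pot(w)\le \maxeff{\Gamma}\le k$. The factor $2$ in $k$ is slack, which may be needed in the repair described next.

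The main obstacle is membership of the new suffix $\sigma z$ in the restricted suffix language $(\simpCacRebJump)^*\cdot\genCacRebCac$ of \cref{def:dominant state}. When $\sigma\in\simpCacRebJump$ this is automatic. For a general finite $\Gamma$ it is not, in particular when $z$ is a single $\genCacRebCac$ letter. My first attempt would be to check whether the intended reading of the lemma, as in the non-effective version of~\cite{almagor2026determinization}, takes dominance relative to suffixes over $\Gamma^*$ together with the allowed language, so that $\sigma z$ qualifies. Failing that, I would restrict to $\Gamma\subseteq\simpCacRebJump\cup\genCacRebCac$ and treat a final $\genCacRebCac$ letter $\sigma$ separately. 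There I would try, as in \cref{prop: dominance invariant to baseline shifts}, to prepend a jump letter and use a baseline shift to move the separating suffix into the allowed form, absorbing the extra $\maxeff{\sigma}$ that this costs into the second copy of $\maxeff{\Gamma}$ in $k$.
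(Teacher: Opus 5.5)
The paper gives no proof of this lemma. It imports it from~\cite{almagor2026determinization}, where dominance quantifies over arbitrary suffixes, so your proposal has to be judged on its own.

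\textbf{What is correct.} Your core argument is correct. You take the $\vec{c}$-lowest state $p^*$ from which $\sigma z$ has a finite run, note that it is dominant with separating suffix $\sigma z$, and use that its $\sigma$-successor $r$ reads $z$, which forces $\vec{d}(r)\ge\vec{d}(q)$. This even gives the sharper bound $\pot(w\sigma)-\pot(w)\le\wmax{\sigma}$, and it uses nothing about the letters of $w$. Its only real requirement is that $\sigma z$ be a legal separating suffix. That holds whenever $\sigma\in\simpCacReb\subseteq\simpCacRebJump$, because then $\sigma z\in(\simpCacRebJump)^*\cdot\genCacRebCac$. This case includes $\sigma\in\simpCac$, which is the only way the lemma is ever applied (through \cref{cor:effective bounded growth potential}, always for words over $\simpCac$). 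So for every use in the paper your proof is complete.

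\textbf{The remaining gap.} For the statement as literally written, with an arbitrary finite $\Gamma\subseteq\Gamma^\infty_\infty$ and the restricted suffix language of \cref{def:dominant state}, there is a genuine gap, and your repair (b) does not close it.
\begin{itemize}
\item A jump letter only rewrites the baseline component, and a baseline shift preserves all weight differences (\cref{prop: dominance invariant to baseline shifts}). Neither one removes $\sigma$ from a non-final position of the suffix.
\item Hence for $\sigma\in\genCac\setminus\simpCac$ the word $\jl_{s\to s'}\,\sigma z$ is still not in $(\simpCacRebJump)^*\cdot\genCacRebCac$. For $\sigma\in\genCacRebCac$ it has a $\genCacRebCac$ letter before the last position.
\item Shifting $\sigma$ itself only produces a rebase variant of it, which is again outside $\simpCacRebJump$.
\item For the same reason there is no extra cost of $\maxeff{\sigma}$ to absorb into the factor $2$. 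Shifts change no relative weights, so the slack in $k$ is irrelevant to the obstruction.
\end{itemize}
To handle such $\sigma$ you would need a legal separator for $\vec{c}$ that does not begin with $\sigma$. Neither your proposal nor the paper provides one.

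\textbf{Why option (a) does not help.} Option (a) avoids the issue only by changing the meaning of $\pot$. The rest of the paper depends on the restricted definition: in \cref{lem:unfolding maintains potential} the dominance suffix $z$ becomes the witness suffix $w_3$, which must lie in $(\simpCacRebJump)^*\cdot\genCacRebCac$.

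\textbf{Suggested fix.} State the lemma for $\sigma\in\simpCacReb$, which is all that is used. In that form your argument is a full proof.
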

For our setting, we need an effective version of this lemma. This is obtained immediately using \cref{prop:simp maxw bounds maxeff}.
\begin{corollary}[Effective Potential Bounded Growth]
    \label{cor:effective bounded growth potential}
    For every $w\in (\simpCac)^*$ such that $\depth(w)=d$ and for every $\sigma\in \simpCac$, if $w\cdot \sigma$ has a seamless baseline run then $\pot(w\cdot \sigma)-\pot(w)\le 2\simpLfuncMaxW{d}$.
\end{corollary}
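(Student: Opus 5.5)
The plan is to obtain the statement as a direct instantiation of \cref{lem:general bounded growth potential}. The only real work is to choose the finite alphabet $\Gamma$ so that both $w$ and $\sigma$ lie over it and so that $2\maxeff{\Gamma}\le 2\simpLfuncMaxW{d}$. I would take
\[\Gamma_d=\textstyle\bigcup_{k=0}^{d}\boundedCacti{\simpL,k},\]
the set of simple bounded cactus letters of depth at most $d$. It is finite by \cref{def:L bounded cactus letters}. Since $\depth(w)=d$, every letter of $w$ belongs to $\Gamma_d$, so $w\in\Gamma_d^*$.

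\textbf{Step 1: bound $\maxeff{\Gamma_d}$.} By \cref{prop:simp maxw bounds maxeff}, every $\gamma\in\boundedCacti{\simpL,k}$ satisfies $\maxeff{\gamma}\le\simpLfuncMaxW{k}$. The function $\simpLfuncMaxW{\cdot}$ is monotone non-decreasing in its argument. This is immediate from \cref{def: simpL func max weight}, because $\simpLfuncMaxW{k+1}=2\bigM\simpLfuncMaxW{k}\simpLfuncLength{k+1}{1}$ and the factor $2\bigM\simpLfuncLength{k+1}{1}$ is at least $1$; I would check this by a one-line induction. Hence $\maxeff{\Gamma_d}\le\simpLfuncMaxW{d}$.

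\textbf{Step 2: place $\sigma$ in the alphabet.} This is the step I expect to be the main obstacle. If $\depth(\sigma)\le d$, then $\sigma\in\Gamma_d$, and \cref{lem:general bounded growth potential} with $k=2\maxeff{\Gamma_d}\le 2\simpLfuncMaxW{d}$ gives exactly the claim, using the hypothesis that $w\cdot\sigma$ has a seamless baseline run. If $\depth(\sigma)>d$, the lemma applied as a black box only yields the bound $2\simpLfuncMaxW{\depth(\sigma)}$.

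For this case I would reopen the proof of \cref{lem:general bounded growth potential} rather than use it as a black box. Let $q\in\maxdom(\xconf(s_0,w\sigma))$ with separating suffix $z$. Then $q$ is reached from some $p$ in $\xconf(s_0,w)$ via $\sigma$. The suffix $\sigma z$ witnesses that some state at height at least $\vec{c}_w(p)$ is dominant after $w$, so $\pot(w)\ge\vec{c}_w(p)$. The baseline has weight $0$ before and after $\sigma$, so the increase is at most the weight of the $\sigma$-transition from $p$ to $q$.

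I would try to show that this transition weight is controlled by the baseline's own $\sigma$-step together with weights already present in $\vec{c}_w$, and hence by letters of $w$. If that fails, the honest outcome is that the bound must be read with $d$ dominating $\depth(\sigma)$. In that reading the first case already covers the statement, and in all later applications $\sigma$ is a letter of a word of depth $d$.

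Both Step 2 and the black-box case require $\sigma z\in(\simpCacRebJump)^*\cdot\genCacRebCac$. This holds because $\sigma\in\simpCac\subseteq\simpCacRebJump$, and I would verify this membership explicitly when writing out either argument.
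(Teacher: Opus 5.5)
Your first case is the paper's proof. The corollary is obtained in one line from \cref{lem:general bounded growth potential} and \cref{prop:simp maxw bounds maxeff}, with $\Gamma$ the simple letters of depth at most $d$, and this silently assumes $\depth(\sigma)\le d$. One caveat on your monotonicity step: it needs $\simpLfuncLength{k}{1}\ge 1$. With the printed base $\simpLfuncLength{d}{|S|+1}=0$, the product recursion makes every $\simpLfuncLength{d+1}{i}$ equal to $0$. So that base has to be read as a positive constant, and the paper's one-liner relies on the same reading.

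The case $\depth(\sigma)>d$ cannot be handled by bounding the $\sigma$-step through the letters of $w$, because the statement is false there. Take $\cA$ with the following transitions:
\begin{itemize}
\item $q_0\runsto{a,0}g$, $q_0\runsto{a,0}r$ and $q_0\runsto{a,0}h$;
\item at $g$: weight-$0$ loops on $b$ and $c$, and $g\runsto{c,0}r$;
\item at $r$: a $b$-loop of weight $1$, a $c$-loop of weight $0$, and $r\runsto{c,0}h$;
\item at $h$: only an $e$-loop of weight $0$.
\end{itemize}
With $w=(q_0,a,0,g)$, the states $g,r,h$ all sit at $0$, so $\pot(w)=0$; moreover $\simpLfuncMaxW{0}=1$. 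Let $\sigma=\alpha_{S',x}$ with $x=(g,b,0,g)^n(g,c,0,g)$, for any $n\ge 3$ with $n+1\le\simpL(1)$. This is a non-degenerate stable cycle, since $r$ is an ungrounded reflexive state. Its only transition into $h$ comes from $g$, with weight $n$, because a full $b^n$ must be read at $r$. After $\sigma$, $g$ and $r$ are at $0$ and $h$ is at $n$. Then $h$ is dominant via a jump to baseline $h$, the letter $(h,e,0,h)$, and a cactus letter on it, since $g$ and $r$ cannot read $e$. Hence $\pot(w\sigma)-\pot(w)=n>2\simpLfuncMaxW{0}$. Your fallback reading is therefore forced, not optional.

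Your reopened argument also contains a wrong step. The suffix $\sigma z$ only shows that the lowest state $p^*$ of $\vec{c}_w$ that can read $\sigma z$ is dominant. That $p^*$ may lie strictly below $p$, so $\pot(w)\ge\vec{c}_w(p)$ does not follow.

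The repair is to work with $p^*$:
\begin{itemize}
\item $\pot(w)\ge\vec{c}_w(p^*)$, since $p^*$ is dominant.
\item Take $p^*\runsto{\sigma}q^*$ with weight $t^*$, where $q^*$ reads $z$. Dominance of $q$ forces $\vec{c}_{w\sigma}(q^*)\ge\vec{c}_{w\sigma}(q)$.
\item Hence $\pot(w\sigma)\le\vec{c}_w(p^*)+t^*\le\pot(w)+\maxeff{\sigma}$.
\end{itemize}
This uses only $\sigma z\in(\simpCacRebJump)^*\cdot\genCacRebCac$.

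This $w$-independent bound is the form the paper actually needs later. In \cref{prop: pot high amplitude to pot increasing decomposition}, the prefix is $w_1v_j$, where $w_1$ has arbitrary depth and only the letters of $w_2$ have depth at most $d$. Your remark that later uses only involve letters of depth at most $d$ is right about $\sigma$. Still, neither your black-box case nor the corollary as stated yields $2\simpLfuncMaxW{d}$ there, because both bound the step by the depth of the whole prefix.
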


In contrast to $\pot$, there is no corresponding bound for $\charge$. Indeed, 
$\charge$ can \emph{decrease} abruptly, if a minimal run cannot continue and a much-higher run becomes minimal. The following lemma links charge and potential by showing that when this happens, we can find a high-potential word. This is a central tool in our proof, and is the first usage of our baseline-shift toolbox.
\begin{lemma}[Charge Decrease to High Potential]
    \label{lem: charge decrease to high potential}
    For every $P\in \bbN$, if there exists a word $u\in (\genCac)^*$ and a letter $\sigma\in \genCac$ with $\depth(\sigma)=d$ such that $\charge(u)-\charge(u\sigma)> P+2\generLfuncMaxW{d}$ then there exists a word $w\in (\Gamma_0^0)^*$ such that $\pot(w)> P$.
\end{lemma}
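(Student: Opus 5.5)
We argue directly from the run structure after $u$ and after $u\sigma$, then change the baseline so that the lowest run after $u$ lies on the baseline, and finally flatten to reach $\Gamma_0^0$.

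\textbf{Step 1: locate a state that is both high and dominant.} Let $\vec{c}=\xconf(s_0,u)$ and $\vec{c}'=\xconf(s_0,u\sigma)$. Fix $p\in\supp(\vec{c})$ with $\vec{c}(p)=-\charge(u)$, and $r'$ with $\vec{c}'(r')=-\charge(u\sigma)$. A minimal run to $r'$ passes through some state $q$ after $u$ with $\vec{c}'(r')=\vec{c}(q)+\minweight(q\runsto{\sigma}r')$. By \cref{prop:gen maxw bounds maxeff}, $|\minweight(q\runsto{\sigma}r')|\le\generLfuncMaxW{d}$. Hence
\[\vec{c}(q)\ge -\charge(u\sigma)-\generLfuncMaxW{d}> -\charge(u)+P+\generLfuncMaxW{d}.\]
Now let $p'$ be any state with $\vec{c}(p')<\vec{c}(q)-2\generLfuncMaxW{d}$, and suppose $p'$ can read $\sigma$. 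Then it reaches some state with weight below $\vec{c}(q)-\generLfuncMaxW{d}\le\vec{c}'(r')$, contradicting the minimality of $r'$. So every such $p'$ has $\minweight(p'\runsto{\sigma}S)=\infty$. In particular $p$ cannot read $\sigma$, while $q$ can.

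\textbf{Step 2: obtain dominance in the required suffix form.} The suffix alphabet is $(\simpCacRebJump)^*\cdot\genCacRebCac$, and it is unclear whether $\sigma\in\genCac$ fits the final slot. If $\sigma$ is a cactus letter, it lies in $\genCacRebCac$, since its inner word is over $\genCac\subseteq\genCacReb$ and has the required length. If $\depth(\sigma)=0$, I would treat $\sigma$ as a trivial letter of the final class, or pad it with a jump letter as in \cref{prop: dominance invariant to baseline shifts}.

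A threshold argument then gives a dominant state at least as high as needed. The states strictly below $\vec{c}(q)-2\generLfuncMaxW{d}$ all die on $\sigma$. Take $q^*$ to be the lowest state with $\vec{c}(q^*)\ge\vec{c}(q)-2\generLfuncMaxW{d}$ that can read $\sigma$; such a state exists because $q$ qualifies. Every state strictly below $q^*$ cannot read $\sigma$, so $q^*$ is dominant witnessed by the suffix $\sigma$. Its weight satisfies $\vec{c}(q^*)>-\charge(u)+P-\generLfuncMaxW{d}$. The constant $2\generLfuncMaxW{d}$ in the statement should absorb this slack. If not, one takes $q^*=q$ itself and refines the argument, using that all states in the interval between $\vec{c}(q)-2\generLfuncMaxW{d}$ and $\vec{c}(q)$ reach weights at least $\vec{c}'(r')$.

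\textbf{Step 3: shift the baseline to $p$ and flatten.} Take a seamless run $\rho$ reaching $p$ on $u$, which exists since $p$ attains the minimum. Consider $u'=\baseshift{u}{\rho}$. By \cref{cor:baseline shift to seamless run} its baseline is seamless, and it sits at weight $0$ on the lowest state. By \cref{prop: dominance invariant to baseline shifts}, $\baseshift{q^*}{\rho}$ remains dominant, and its weight becomes $\vec{c}(q^*)+\charge(u)>P$, so $\pot(u')>P$.

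Finally, to land in $(\Gamma_0^0)^*$, flatten $u'$ with a large $F$ (\cref{lem:flattening essence}). This preserves the finite part of the configuration and sends ghost states at least $F$ above it. The main obstacle is this step. Rebase letters produced by the shift flatten into jump letters, which are not in $\Gamma_0^0$. Also, the potential must be preserved: the newly reachable ghost states could dominate, but since they sit high this only raises $\pot$, provided they do not break the separation property. To handle this, I would instead baseline-shift after flattening $u$. The flattening $\flatten(u\wr F)$ is over $\Gamma_0^0$, where the shift is simply the standard re-reading in $\augA$ and stays inside $\Gamma_0^0$. Dominance of $q^*$ would then be rechecked using \cref{lem:flattening essence}: states below $q^*$ are unchanged, since ghost states appear only above.
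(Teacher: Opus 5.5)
Your route matches the paper's once Step 3 is settled the way you end up proposing. The lowest state after $u$ that can read $\sigma$ is dominant via $\sigma$ and lies far above the minimum. You then flatten $u$ first, which stays in $(\Gamma_0^0)^*$ because $\genCac$ has no rebase letters, and only afterwards shift the baseline onto a seamless minimal run.

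The genuine gap is the weight bound in Step 2. Your threshold choice of $q^*$ only gives $\vec{c}(q^*)\ge\vec{c}(q)-2\generLfuncMaxW{d}>-\charge(u)+P-\generLfuncMaxW{d}$. So the inequality $\vec{c}(q^*)+\charge(u)>P$ that you use in Step 3 is not established.

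Neither of your two remedies works. The constant $2\generLfuncMaxW{d}$ in the hypothesis cannot absorb this slack, since you already spent it in Step 1. The fallback $q^*=q$ fails because $q$ need not be dominant: other states in the window just below $\vec{c}(q)$ may also read $\sigma$.

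The ingredient you mention at the very end is the right one, but apply it to $q^*$ itself. The state $q^*$ reads $\sigma$ into some state of weight at most $\vec{c}(q^*)+\generLfuncMaxW{d}$. Every state after $u\sigma$ has weight at least $-\charge(u\sigma)$. Hence
\[
\vec{c}(q^*)\ \ge\ -\charge(u\sigma)-\generLfuncMaxW{d}\ >\ -\charge(u)+P+\generLfuncMaxW{d}.
\]
This holds for every state that can read $\sigma$, so $r'$, $q$ and the threshold are superfluous. Simply take $q^*$ to be the lowest state that can read $\sigma$ at all (the paper's $s_1$). It is dominant by construction and high enough.

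On the suffix format: for $\depth(\sigma)\ge 1$ your check that $\sigma\in\genCacRebCac$ is correct. The paper's proof likewise uses $\sigma$ itself as the separating suffix and does not discuss the case $\depth(\sigma)=0$. However, your proposed patch of prepending a jump letter, as in \cref{prop: dominance invariant to baseline shifts}, does not address that case. Jump letters belong to the $(\simpCacRebJump)^*$ part, and the final letter of the suffix must still come from $\genCacRebCac$.
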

\begin{proof}
    The proof is a quantitative and simplified version of the analogous claim from \cite{almagor2026determinization}.
    Intuitively, since $\sigma$ causes a sudden decrease in $\charge$, this means that $\sigma$ cannot be read by the minimal run after $u$, and that the only run that can read it is much higher. This already suggests that the potential is unbounded. 
    Technically, all that needs to be done is to perform a baseline shift (\cref{sec: baseline shift}) on a minimal run on $u$. This, in turn, also requires us to first flatten (\cref{sec: cactus unfolding}) this prefix, so that we remain in $\Gamma_0^0$. 

    By \cref{prop:gen maxw bounds maxeff} the maximal weight appearing in a transition over $\sigma$ is at most $\generLfuncMaxW{d}$. In particular, note that any two runs can become farther apart by at most $2\generLfuncMaxW{d}$ upon reading $\sigma$.
    We show that the potential can be greater than $P$. 
    By the assumption, there exist $u,\sigma$ such that $\charge(u)> \charge(u\sigma)+ P+2\generLfuncMaxW{d}$.
    
    Since $\charge$ is the negation of the minimal weight, there exists a seamless run $\rho_1:s_0\runsto{u} S$ such that $\weight(\rho_1)=-\charge(u)=\minweight(s_0\runsto{u} S)$. 
    Let $\vec{c_u}=\xconf(s_0,u)$, and consider the minimal state $s_1$ that can read $\sigma$. That is, $s_1\in \arg\min\{s\in \supp(\vec{c_u})\mid \minweight(s_1\runsto{\sigma}S)<\infty\}$. It must hold that $\vec{c_u}(s_1)>\weight(\rho_1)+P$ as otherwise the minimal run on $u\sigma$ has weight at most 
    \[-\charge(u)+P+\generLfuncMaxW{d}< -\charge(u)+P+2\generLfuncMaxW{d}<  -\charge(u\sigma)\]
    where the last inequality follows by our requirement on $\charge(u\sigma)$. But this is a contradiction, since $\minweight(s_0,u\sigma)=-\charge(u\sigma)$.
    Put simply -- if the charge has a large jump on $\sigma$, then the minimal state that can read $\sigma$ is much higher than the minimal state up to $\sigma$. Note that $s_1$ is therefore dominant (\cref{def:dominant state}). This idea is depicted in \cref{fig:charge drop to big potential}
    Let $\rho_2:s_0\runsto{u}s_1$ be a seamless run to $s_1$.

 \begin{figure}[H]
        \centering
        \includegraphics[width=0.9\linewidth]{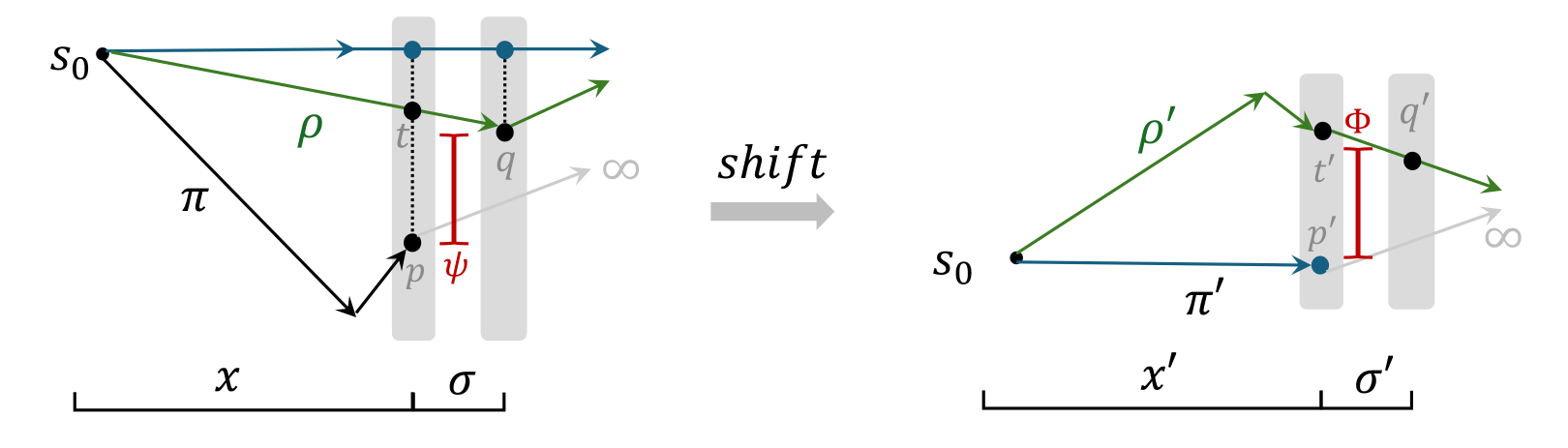}
        \caption{Upon reading $\sigma$, the charge drops drastically. This means that $\rho$ became minimal, instead of $\pi$. By shifting the baseline, the state $t'$ is dominant, setting a lower bound for the potential.}
        
        \label{fig:charge drop to big potential}
    \end{figure}

    We now turn to flattening the prefix (\cref{lem:flattening essence}).
    Let $x_0=\flatten(u \wr 2\generLfuncMaxW{|S|}|u\sigma|)$. Since $\genCac$ does not contain rebase letters, we have $x_0\in (\Gamma_0^0)^*$ (since flattening may only introduce jump letters in case of rebase letters).
    Let $\mu_1:s_0\runsto{x_0}S$ be a seamless run such that $\weight(\mu_1)=-\charge(x_0)$ and similarly $\mu_2:s_0\runsto{x_0}s_1$ a seamless run such that $\weight(\mu_2)=\weight(\rho_2)$. These exist by 
    \cref{lem:flattening essence}.
    By the same lemma, for every state $s'$ with $\minweight(s_0\runsto{x_0} s')<\weight(\mu_2)$ we have $\minweight(s'\runsto{\sigma} S)=\infty$ i.e., $s_1$ is still the minimal state from which $\sigma$ can be read, and $\weight(\mu_2)-\weight(\mu_1)> P$.
    
    We now perform a baseline shift on $\mu_1$, and by \cref{prop: dominance invariant to baseline shifts} we have that $\baseshift{\mu_2}{\mu_1}$ ends in a dominant state with weight at least $P$ (since $\baseshift{\mu_1}{\mu_1}$ is now the baseline with weight $0$). It follows that $\pot(\baseshift{x_0}{\mu_1})> P$, and we are done.
\end{proof}

Lastly, the next result from~\cite{almagor2026determinization} is that both potential and charge are monotone with respect to the initial configuration, as follows.
\begin{lemma}
\label{lem:potential and charge are monotone}
        Consider configurations $\vec{c},\vec{d}$ where $\vec{c}\le \vec{d}$ such that $\supp(\vec{c})=\supp(\vec{d})$, there is a single baseline state $q\in \supp(\vec{c})$, and $\vec{c}(q)=\vec{d}(q)=0$.
        
        Then for every $w\in \Sigma^*$ with a seamless baseline run from $\vec{c}$ it holds that:
        \begin{enumerate}
            \item The baseline run of $\xconf(\vec{d},w)$ is seamless.
            \item $\pot(\xconf(\vec{c},w))\le \pot(\xconf(\vec{d},w))$.
            \item $\charge(\xconf(\vec{c},w))\ge \charge(\xconf(\vec{d},w))$
        \end{enumerate}
    \end{lemma}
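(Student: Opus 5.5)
The plan is to prove the three items together by induction on $|w|$, letter by letter. Write $\vec{c}_j=\xconf(\vec{c},w[1,j])$ and $\vec{d}_j=\xconf(\vec{d},w[1,j])$. The invariant to carry along is
\[\vec{c}_j\le \vec{d}_j,\qquad \supp(\vec{c}_j)=\supp(\vec{d}_j),\qquad \vec{c}_j(b_j)=\vec{d}_j(b_j)=0,\]
where $b_j$ is the baseline state after $w[1,j]$. The baseline component is deterministic, so $b_j$ is the same state in both runs.

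\textbf{The invariant.} For $j=0$ it is the hypothesis. For the step, monotonicity of $\min$-$+$ gives $\vec{c}_{j+1}\le \vec{d}_{j+1}$ componentwise, since $\xconf(\cdot,\sigma)$ is a $\min$ of sums in which each term is monotone in the starting vector. Equal supports are preserved because finiteness of a coordinate depends only on which predecessor states are finite and which transitions exist. For the baseline coordinate:
\begin{itemize}
\item The baseline transition has weight $0$ in $\augA$ and also on cactus, rebase and jump letters (\cref{obs:baseline runs have weight 0}, \cref{rmk:baseline transitions with cactus are zero}), so $\vec{d}_{j+1}(b_{j+1})\le \vec{d}_j(b_j)=0$.
\item $\vec{d}_{j+1}(b_{j+1})\ge \vec{c}_{j+1}(b_{j+1})$, and seamlessness from $\vec{c}$ gives $\vec{c}_{j+1}(b_{j+1})=0$.
\end{itemize}
Hence $\vec{d}_{j+1}(b_{j+1})=0$. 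This shows the baseline run from $\vec{d}$ attains the minimum at every step, which is Item 1.

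\textbf{Charge (Item 3).} This is immediate from the invariant at the end of $w$: $\min_q \vec{c}_n(q)\le \min_q \vec{d}_n(q)$, and negating gives $\charge$ from $\vec{c}$ at least $\charge$ from $\vec{d}$.

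\textbf{Potential (Item 2).} This is the step I expect to be the main obstacle, since $\domval$ is not monotone in the configuration in general. Raising states can change which states lie below a candidate, and so change dominance. The plan is to let $q$ be maximal dominant in $\vec{c}_n$, with witnessing suffix $z\in(\simpCacRebJump)^*\cdot\genCacRebCac$, and to find a dominant state of $\vec{d}_n$ with value at least $\vec{c}_n(q)$. The suffix alphabet does not depend on the configuration, so we may reuse $z$. Consider the set $R=\{p\in\supp(\vec{d}_n)\mid \minweight(p\runsto{z}S)<\infty\}$, which contains $q$. Take $p^*\in R$ minimising $\vec{d}_n(p)$.
\begin{itemize}
\item $p^*$ is dominant in $\vec{d}_n$: every state strictly below it in $\vec{d}_n$ lies outside $R$, so it cannot read $z$, and $z$ itself witnesses dominance.
\item Every $p\in R$ has $\vec{c}_n(p)\ge \vec{c}_n(q)$ by dominance of $q$ in $\vec{c}_n$. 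Hence $\vec{d}_n(p^*)\ge \vec{c}_n(p^*)\ge \vec{c}_n(q)$.
\end{itemize}
Therefore $\domval(\vec{d}_n)\ge \vec{d}_n(p^*)\ge \pot$ from $\vec{c}$. Both potentials are measured relative to the same baseline with value $0$ (by Item 1), so this comparison is legitimate.

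The only care needed is the bookkeeping for rebase and jump letters in the baseline-zero claim of Item 1, plus checking that the setting assumes jump-free words, as the definition of $\pot$ requires. If jump letters occurred, seamlessness could fail and the potential would be undefined.
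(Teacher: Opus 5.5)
The paper does not prove this lemma; it quotes it from \cite{almagor2026determinization}, so there is no in-paper argument to compare yours against. Your proof is correct and self-contained.

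\begin{itemize}
\item The invariant ($\vec{c}_j\le\vec{d}_j$ with equal supports) is exactly what min-plus monotonicity gives, and Item 3 follows from it immediately.
\item For Item 2, taking $p^*$ to be the $\vec{d}_n$-minimal state among those that can read the witnessing suffix $z$ is the right move. It shows that equal supports are needed only to ensure $q\in\supp(\vec{d}_n)$, i.e.\ $R\neq\emptyset$.
\item That is exactly what breaks in \cref{lem:unfolding maintains potential}. There the lower, unfolded configuration has a larger support, its maximal dominant state can be a ghost state outside the other support, and the paper has to build a witness instead.
\end{itemize}

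Two minor points.

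\begin{itemize}
\item For Item 1 you can skip the letter-by-letter check that baseline transitions on cactus, rebase and jump letters have weight $0$. Your cited remarks do not cover rebase letters explicitly. The baseline run $\rho$ starts at $q$ with $\vec{c}(q)=\vec{d}(q)$, so for every $j$ you get $\vec{d}_j(b_j)\le\weight_{\vec{d}}(\rho[1,j])=\weight_{\vec{c}}(\rho[1,j])=\vec{c}_j(b_j)\le\vec{d}_j(b_j)$, using seamlessness from $\vec{c}$ and the invariant.
\item You tacitly assume $\vec{c}_n$ has a maximal dominant state. This is harmless. If no state of the common support can read any admissible suffix, the same is true for $\vec{d}_n$, since readability does not depend on weights. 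So the two sides are undefined together.
\end{itemize}
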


\section{A Witness for Undeterminisability}
\label{sec:witness}
The characterisation of undeterminisability given in~\cite{almagor2026determinization} is by means of \emph{witnesses} --  a word containing a cactus letter whose various unfoldings serve to induce arbitrarily large gap witnesses as per \cref{def: B gap witness}.
The leap we take in this work is to significantly restrict the set of witnesses to the finite alphabet of effective cactus letters. We no longer use these witnesses algorithmically, but only as part of the proof (although an algorithmic use is possible).

Intuitively, a witness consists of a prefix $w_1$, a ``pumpable infix'' $w_2$, and a suffix $w_3$, where the idea is that after enough iterations of $w_2$ the potential grows unboundedly, and $w_3$ serves to separate the dominant state from the baseline, thus demonstrating unbounded gaps, which imply undeterminisability by \cref{thm:det iff bounded gap}.

Consider the WFA $\augA_\infty^\infty=\tup{S,\Gamma_{\infty}^\infty,s_0,\augTrans_\infty^\infty}$, and recall that $\Gamma_{\infty}^\infty$ is obtained by alternating the stabilisation closure (\cref{def:stab closure}) and rebase (\cref{def:rebase}) (and it also includes jump letters, as per \cref{def:jump letters}). 
The importance of $\augA_\infty^\infty$ is captured by the following result from~\cite{almagor2026determinization}:
\begin{theorem}
    \label{thm:aug A inf inf determinisable iff A determinisable}
    $\augA_\infty^\infty$ is determinisable if and only if $\cA$ is determinisable.
\end{theorem}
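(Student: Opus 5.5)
The approach is to reduce both directions to the gap characterisation of \cref{thm:det iff bounded gap}, and to pass between $\augA_\infty^\infty$ and $\cA$ via flattening (\cref{lem:flattening essence}) together with \cref{lem:A det iff augA det}. The statement is essentially \cref{cor:aug inf inf is det iff A is det}, so what follows is how I would make that corollary explicit.

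\textbf{Easy direction.} Suppose $\cA$ is undeterminisable.
\begin{itemize}
\item By \cref{lem:A det iff augA det}, $\augA$ is undeterminisable.
\item $\augA$ has the finite alphabet $\Gamma_0^0=\Delta$. Hence, by the exact characterisation over finite alphabets, for every $B$ there is a $B$-gap witness over $\Gamma_0^0$.
\item The transitions of $\augA_\infty^\infty$ on letters of $\Gamma_0^0$ coincide with those of $\augA$. So every such witness is also a $B$-gap witness of $\augA_\infty^\infty$ over the finite sub-alphabet $\Gamma_0^0$.
\item Item 2 of \cref{thm:det iff bounded gap} then yields that $\augA_\infty^\infty$ is undeterminisable.
\end{itemize}

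\textbf{Converse.} Suppose $\augA_\infty^\infty$ is undeterminisable. By item 1 of \cref{thm:det iff bounded gap}, for every $B$ there is a $B$-gap witness $(x,y,q)$ over $\Gamma_\infty^\infty$. The plan is to turn it into a $B$-gap witness of $\cA$ over $\Sigma$, and then conclude by \cref{thm:det iff bounded gap} (finite $\Sigma$).
\begin{itemize}
\item \emph{Flatten.} Fix $F> B+2\maxeff{xy}$ and flatten letter by letter, so that $\flatten(xy\wr F)=x'y'$ with $x'$ the flattening of the prefix $x$. I would apply \cref{lem:flattening essence} to both $x$ and $xy$.
\item \emph{Weights are preserved.} States in $\supp(\xconf(s_0,x))$ keep their weights. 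Newly supported (ghost) states get weight at least the maximum old weight plus $F$. Hence the minimal weight on $x'$ equals that on $x$, and the state $q$ still lies more than $B$ above it.
\item \emph{Ghosts cannot undercut.} After continuing with $y'$, a run that passed through a ghost state sits at least $F-\maxeff{y}$ above where it would otherwise be. By the choice of $F$ it cannot undercut the run through $q$. So the run through $q$ is still minimal on $x'y'$.
\item \emph{Project to $\Sigma$.} The word $x'y'$ is over $\Gamma_0^0\cup\jlset$. Map it to a word over $\Sigma$ by sending each transition letter $(p,\sigma,c,p')$ to $\sigma$ and deleting jump letters. A run of $\augA$ is read as its first component, i.e.\ a run of $\cA$. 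Each letter shifts all runs by the same amount $-c$, and jump letters shift all runs by $0$. Hence, after every prefix, all runs of $\augA$ differ from their $\cA$ projections by one common offset. Gaps and minimality are therefore preserved, and we obtain a $B$-gap witness of $\cA$.
\end{itemize}

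\textbf{Main obstacle.} The delicate step is the flattening step, specifically compatibility with prefixes. I need the flattening of $xy$ to restrict to a flattening of $x$, so that the configuration after $x'$ is exactly the one controlled by \cref{lem:flattening essence}. I also need rebase-induced jump letters not to break the correspondence. I would first argue that flattening acts letter by letter, with unfolding constants chosen uniformly for $F$ (using \cref{rmk:increasing repetitions in unfolding} to enlarge repetitions if needed). Jumps would then be handled by the common-offset argument above, since they only rename the baseline component and preserve all relative weights.
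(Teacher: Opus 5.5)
Your proposal is correct in outline and follows the route the paper indicates. The paper gives no proof of its own: it imports the result from \cite{almagor2026determinization}, where (cf.\ \cref{cor:aug inf inf is det iff A is det}) it is obtained via flattening, and your reduction through the gap characterisation, flattening, and projection to $\Sigma$ (deleting jump letters, which shift all runs by a common offset) matches that argument, with one ingredient you rightly flag: keeping the run through $q$ minimal after $y'$ needs a run-level, letter-by-letter form of flattening (every run on $xy$ lifts to a run on $x'y'$ through the same boundary states with no larger weight), not just the configuration-level statement of \cref{lem:flattening essence}, and that is exactly the part supplied by the cited work.
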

We remark that the concept of a determinisable WFA over an infinite alphabet is a bit delicate. Nonetheless, the theorem holds in a strong sense: if $\cA$ is determinisable (over its finite alphabet), then $\augA_\infty^\infty$ is determinisable over its infinite alphabet. The converse is much simpler.

\begin{definition}[Witness]
    \label{def:witness}
    A \emph{witness of type-$0$ (resp. type-$1$)} is a tuple $w_1,\alpha_{S_1,w_2},w_3$ such that the following hold.
    \begin{enumerate}
        \item $w_1 \in (\simpCac)^*$, $S_1=\ghostTrans(s_0,w_1)$, $\alpha_{S_1,w_2}\in \simpCac$ (resp. $\alpha_{S_1,w_2}\in \simpCacRebCac$ for type-$1$), and $w_3\in(\simpCacRebJump)^* \cdot \genCacRebCac$.
        \item $\booltrans(s_0,w_1)=\booltrans(s_0,w_1 \cdot w_2^{2\bigM})$ 
        (i.e., $w_2^{2\bigM}$ cycles on the reachable states).
        \item $\minweight(s_0\runsto{w_1 \cdot w_2^{2\bigM} \cdot w_3} S)<\infty$ whereas $\minweight(s_0\runsto{w_1\alpha_{S_1,w_2}w_3} S)=\infty$.
    \end{enumerate}
\end{definition}

The main characterisation proved in~\cite{almagor2026determinization} is that $\augA^\infty_\infty$ is undeterminisable if and only if it has a witness. Our modified definition is more restrictive than in~\cite{almagor2026determinization}, hence the challenge is in proving the completeness of this characterisation. The soundness, however, follows directly from~\cite{almagor2026determinization}.
\begin{lemma}
    \label{lem:witness implies nondet}
    If there is a witness, then $\augA^\infty_\infty$ is undeterminisable.
\end{lemma}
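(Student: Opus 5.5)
The plan is to reduce the statement to the soundness direction of the witness characterisation in~\cite{almagor2026determinization}. Our witnesses are a syntactic restriction of theirs. A type-$0$ witness uses $\alpha_{S_1,w_2}\in\simpCac\subseteq\Gamma^0_\infty$, and a type-$1$ witness uses $\alpha_{S_1,w_2}\in\simpCacRebCac\subseteq\Gamma^1_1$. The suffix $w_3$ lies in $(\simpCacRebJump)^*\cdot\genCacRebCac\subseteq(\Gamma^\infty_\infty)^*$. So the first step is to check that each clause of \cref{def:witness} implies the corresponding clause of the original definition: the prefix condition $S_1=\ghostTrans(s_0,w_1)$, the cycling condition $\booltrans(s_0,w_1)=\booltrans(s_0,w_1w_2^{2\bigM})$, and the finite-versus-infinite condition on $w_3$. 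If they match verbatim, the lemma follows directly from their result.

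If a direct citation is not clean, I would reprove the statement via \cref{thm:det iff bounded gap}(2). Fix $B\in\bbN$. Take an unfolding $w_1w_2^{2\bigM m}w_3$ with $m\ge M_0$ chosen through \cref{lem:pumping grounded pairs} for a constant $F\gg B+2\maxeff{w_1w_2w_3}$.

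The argument then runs as follows. By item~3 of \cref{def:witness}, every run of $w_1\alpha_{S_1,w_2}w_3$ dies. Hence every finite run on the unfolded word must use an ungrounded pair inside $w_2^{2\bigM m}$.

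By \cref{lem:unfolding configuration characterisation}, the state $s$ reached via such a pair has weight exceeding $F-\maxeff{\cdot}$. Meanwhile the states in $\supp(\xconf(s_0,w_1\alpha_{S_1,w_2}))$ keep their original bounded weights. Item~2 guarantees that this support is nonempty, since $\booltrans$ is preserved and the baseline cycles with weight $0$.

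Consequently, after the prefix $x=w_1w_2^{2\bigM m}$ the minimal run is at bounded weight, while every run that survives $y=w_3$ passes through a state above it by more than $B$. This is exactly a $B$-gap witness over the finite alphabet of letters occurring in $w_1,w_2,w_3$. Since $B$ was arbitrary, \cref{thm:det iff bounded gap}(2) applied to $\augA^\infty_\infty$ gives undeterminisability.

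The main obstacle is that $w_3$ may contain rebase letters, jump letters and a $\genCacRebCac$ letter, and $w_2$ may contain rebase letters in the type-$1$ case. The gap theorem needs a genuine automaton over a fixed finite alphabet, with the minimal run realised honestly. I would handle this by passing to the finite sub-alphabet of letters appearing in the witness and its unfoldings, which is fine because \cref{thm:det iff bounded gap}(2) permits any finite $\Sigma'$. I would also check that $w_2^{2\bigM m}$ for a rebase-containing $w_2$ behaves as in \cref{lem:pumping grounded pairs}. If that check becomes delicate, I would fall back on citing~\cite{almagor2026determinization} for the soundness direction, which covers exactly these letter types.
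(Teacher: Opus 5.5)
Your primary route is exactly the paper's: its proof simply observes that the witnesses of \cref{def:witness} are a restriction of the witnesses in \cite{almagor2026determinization} and cites the soundness direction proved there. Your fallback is not needed, but it is a reasonable self-contained sketch of that soundness argument: you unfold $\alpha_{S_1,w_2}$ with a large constant and apply \cref{thm:det iff bounded gap}(2) to $\augA^\infty_\infty$ over the finite set of letters occurring in $w_1,w_2,w_3$.
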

The decidability of determinisation in~\cite{almagor2026determinization} relies on the decidability of checking witnesses in $\augA^\infty_\infty$.  
We could follow the same approach for this work. However, our new framework actually leads to a far simpler algorithm (\cref{sec:algorithm and complexity}), and we therefore do not concern ourselves with algorithmic properties of witnesses.

Our next result is key to many of our proofs, and asks what happens to the potential $\pot$ when we unfold a cactus letter. Intuitively, we want to argue that unfolding does not affect the potential. However, this is generally not true. Fortunately, when it is not true, then a type-0 witness exists.
\begin{lemma}[Unfolding Maintains Potential]
    \label{lem:unfolding maintains potential}
        Consider a word $u \alpha_{B,x} v\in \simpCac^*$ and let $F\ge 2\maxeff{u\alpha_{B,x}v}$ and $ux^{2\bigM M_0}v=\unfold(u,\alpha_{B,x},v \wr F)$ as per \cref{def:unfolding function}. For every prefix $v'$ of $v$ we have:
        \begin{enumerate}
            \item $\xconf(s_0,ux^{2\bigM M_0}v')\le \xconf(s_0,u\alpha_{B,x}v')$.
            \item Either $\augA_\infty^\infty$ has a type-$0$ witness, or $\pot(ux^{2\bigM M_0}v')=\pot(u\alpha_{B,x}v')$.
        \end{enumerate}
\end{lemma}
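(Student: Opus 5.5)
Item 1 follows from \cref{lem:unfolding configuration characterisation}. For a prefix $v'$ of $v$, set $\vec{c_1}=\xconf(s_0,u\alpha_{B,x}v')$ and $\vec{c_2}=\xconf(s_0,ux^{2\bigM M_0}v')$. The lemma gives $\supp(\vec{c_1})\subseteq\supp(\vec{c_2})$ and $\vec{c_2}(q)=\vec{c_1}(q)$ on $\supp(\vec{c_1})$. Every other state has $\vec{c_1}(q)=\infty\ge\vec{c_2}(q)$, so $\vec{c_2}\le\vec{c_1}$ componentwise.

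For Item 2, we use the same lemma together with \cref{prop:unfolding cactus maintains seamlesss gaps}. These make the seamless baseline run survive the unfolding, so both potentials are defined and the two configurations agree exactly on $\supp(\vec{c_1})$. Every ghost state in $\supp(\vec{c_2})\setminus\supp(\vec{c_1})$ lies above $F-\maxeff{u\alpha_{B,x}v}\ge\maxeff{u\alpha_{B,x}v}$, hence strictly above every state of $\supp(\vec{c_1})$. We then compare dominance in the two configurations, proving both inequalities separately and assuming there is no type-$0$ witness.

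\emph{$\pot(ux^{2\bigM M_0}v')\ge\pot(u\alpha_{B,x}v')$.} Let $q$ be maximal dominant in $\vec{c_1}$, with separating suffix $z\in(\simpCacRebJump)^*\cdot\genCacRebCac$. Then $q\in\supp(\vec{c_2})$ and $\vec{c_2}(q)=\vec{c_1}(q)$. Any state $p$ with $\vec{c_2}(p)<\vec{c_2}(q)$ is not a ghost, since ghosts sit above all of $\supp(\vec{c_1})$. So $p\in\supp(\vec{c_1})$ with $\vec{c_1}(p)<\vec{c_1}(q)$, and therefore $\minweight(p\runsto{z}S)=\infty$. Thus $q$ is dominant in $\vec{c_2}$, which gives this inequality.

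\emph{$\pot(ux^{2\bigM M_0}v')\le\pot(u\alpha_{B,x}v')$.} Suppose the maximal dominant state $t$ of $\vec{c_2}$ is strictly higher than $\pot(u\alpha_{B,x}v')$, and let $z$ be its separating suffix. If $t\in\supp(\vec{c_1})$, the argument of the previous paragraph run backwards shows $t$ is dominant in $\vec{c_1}$, since the states below $t$ in the two configurations coincide. That contradicts maximality. Hence $t$ is a ghost, and no state of $\supp(\vec{c_1})$ can read $z$: any such state is below $t$ in $\vec{c_2}$ and so yields $\infty$.

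We now build the witness $w_1=u$, $\alpha_{S_1,w_2}=\alpha_{B,x}$, $w_3=v'z$.
\begin{itemize}
\item $\minweight(s_0\runsto{u\alpha_{B,x}v'z}S)=\infty$, because every state of $\supp(\vec{c_1})$ dies on $z$.
\item $\minweight(s_0\runsto{ux^{2\bigM}v'z}S)<\infty$. The run to $t$ passes through $x^{2\bigM M_0}$. By \cref{prop: transitions stabilise at M}, Boolean reachability on $x^{2\bigM M_0}$ equals reachability on $x^{2\bigM}$, so $t$ is reachable after $ux^{2\bigM}v'$ and then reads $z$.
\item $u\in\simpCac^*$, $\alpha_{B,x}\in\simpCac$, and $v'z$ has the required form because $v'\in\simpCac^*\subseteq(\simpCacRebJump)^*$.
\end{itemize}

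\emph{Main obstacle.} The remaining conditions of \cref{def:witness} need care. First, we must have $B=\ghostTrans(s_0,u)$. Second, $\booltrans(s_0,u)=\booltrans(s_0,ux^{2\bigM})$ must hold, and this can fail. My plan here is to replace $u$ by a prefix-extension $ux^{2\bigM j}$ so that the reachable set cycles. The extension would not change ghost-reachability, again by \cref{prop: transitions stabilise at M}, and it stays inside the unfolding framework by \cref{rmk:increasing repetitions in unfolding}. I expect this adjustment to be the delicate part.
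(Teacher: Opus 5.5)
Your treatment of Item 1, both potential inequalities, and the argument that the offending maximal dominant state $t$ must be a ghost are all correct. Putting $v'$ into the witness suffix is a more direct version of the paper's argument. The paper instead case-splits on the maximal dominant state of $\xconf(s_0,ux^{2\bigM M_0})$, and reduces a high dominant state after $v'$ to that case via the suffix $v'z$.

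The gap is that the witness is never completed. You correctly note that $w_1=u$ may violate condition 2 of \cref{def:witness}. Your fix $w_1=ux^{2\bigM j}$ is the paper's own choice (with $j=M_0$), but you justify it only by stabilisation of Boolean reachability. That gives condition 2 and the finite-weight half of condition 3, but not the infinite-weight half. Your proof that $\minweight(s_0\runsto{u\alpha_{B,x}v'z}S)=\infty$ relied on the states reached after $u\alpha_{B,x}v'$ being exactly $\supp(\vec{c_1})$. Once $x^{2\bigM j}$ is inserted, $\alpha_{B,x}$ is read from $\booltrans(s_0,ux^{2\bigM j})$, which may contain states of $B$ outside $\booltrans(s_0,u)$. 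So you need the inclusion
\[
\booltrans(s_0,ux^{2\bigM j}\alpha_{B,x})\subseteq\booltrans(s_0,u\alpha_{B,x}).
\]
It holds for the following reason.
\begin{itemize}
\item Suppose $s_1\in\booltrans(s_0,u)$ and $s_1\runsto{x^{2\bigM j}}p'\runsto{\alpha_{B,x}}p$. Then $(p',p)$ is grounded with some grounding state $g$.
\item Hence $s_1\runsto{x^{\bigM(2j+1)}}g\runsto{x^{\bigM}}p$.
\item \cref{prop:grounding states reachable with any bigM k} then gives $(s_1,p)\in\GroundPairs(B,x)$, i.e.\ $s_1\runsto{\alpha_{B,x}}p$.
\end{itemize}
Consequently every state reached after $ux^{2\bigM j}\alpha_{B,x}v'$ lies in $\supp(\vec{c_1})$ and dies on $z$. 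This grounded-pair closure is exactly the step the paper carries out in its ``very high'' case.

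A smaller point: $B=\ghostTrans(s_0,ux^{2\bigM j})$ does not follow from \cref{prop: transitions stabilise at M}. It follows because $\alpha_{B,x}$ is readable after $u$, and the baseline and reachable-set components are deterministic and return to those of $B$ after each $x$, since $(B,x)$ is a proper reflexive cycle. The appeal to \cref{rmk:increasing repetitions in unfolding} plays no role here.
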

\begin{proof}
    The proof structure is almost identical to the analogous claim in~\cite{almagor2026determinization}, but some changes need to be made to account for the new definitions of dominance (\cref{def:dominant state}) and of witnesses. Naturally, we tailor both definitions so that this proof carries out smoothly.
    
    Item 1 follows immediately from \cref{lem:unfolding configuration characterisation}. Indeed, let $\vec{d}=\xconf(s_0,u\alpha_{B,x}v')$ and $\vec{d'}=\xconf(s_0,ux^{2\bigM M_0}v')$, then  we have 
    $\supp(\vec{d})\subseteq  \supp(\vec{d'})$ and for every $q\in \supp(\vec{d})$ we have $\vec{d'}(q)=\vec{d}(q)$ (and anyway $\vec{d}(q)=\infty$ otherwise). 

    It is now tempting to use \cref{lem:potential and charge are monotone} and use the superiority of the configurations to conclude the proof. Unfortunately, the condition there also requires that the configurations have equal supports, which fails in our case. We therefore need to work much harder to show the lemma.

    Intuitively, the proof proceeds as follows. We consider the unfolding $ux^{2\bigM M_0}v=\unfold(u,\alpha_{B,x},v \wr F)$, and specifically its prefix $ux^{2\bigM M_0}$. 
    By \cref{prop:unfolding cactus maintains seamlesss gaps}, the unfolding preserves the reachable configuration, i.e. the weights with which states in $\booltrans(s_0,u\alpha_{B,x})$ are reached. However, we might also get finite weights on other states in $\ghostTrans(s_0,u\alpha_{B,x})$, due to the unfolding. 
    The question now is whether these newly-reachable states interfere with the potential. 

    In order to reason about the potential, we consider a maximal dominant state $q$, and split to two cases. The first case is when $q$ is in $\booltrans(s_0,ux^{2\bigM M_0})\setminus \booltrans(s_0,u\alpha_{B,x})$ (namely -- very high, reached by pumping $x$ on non-grounded pairs). 
    Then, we show that we can construct a type-0 witness. Specifically, the witness is 
    %$(u x^{2\bigM M_0},x^{2\bigM M_0},z)$, 
    $(u x^{2\bigM M_0},\alpha_{B,x},z)$
    where $z$ 
    is the suffix that shows $q$ is dominant. Proving that this is indeed a witness is technical, and uses the intricacies of our definitions of stable cycles and grounded states. 
    The intuition, however, is not too complicated: if the dominant state is high, then replacing $x^{2\bigM M_0}$ by its corresponding cactus letter sends all states above $q$ to $\infty$, and the remaining states are sent to $\infty$ by $z$.

    The second case is when $q$ is in $\booltrans(s_0,u\alpha_{B,x})$, namely low, reachable by grounded states. Then we use the unfolding toolbox of \cref{sec: cactus unfolding} to show that the potential is indeed maintained.

    We now turn to formalise this argument.
    %Assume that $\augA_\infty^\infty$ does not have a $0$-type witness. 
    Consider $ux^{2\bigM M_0}v=\unfold(u,\alpha_{B,x},v \wr F)$, and assume that the potential is defined (otherwise we are done).
    Consider the following configurations and sets of states:
    % \[
    % \begin{split}
    %     &\vec{c_1}=\xconf(s_0,u)\quad \vec{c_2}=\xconf(s_0,u\alpha_{B,x})\quad \vec{c'_2}=\xconf(s_0,ux^{2\bigM M_0})\\
    %     &S_1=\ghostTrans(s_0,u)\quad S_2=\booltrans(s_0,u\alpha_{B,x})\quad S'_2=\booltrans(s_0,ux^{2\bigM M_0})
    % \end{split}
    % \]
    \[
\begin{array}{ l l l }
\vec{c_1} = \xconf(s_0, u) & \vec{c_2} = \xconf(s_0, u\alpha_{B,x}) & \vec{c'_2} = \xconf(s_0, ux^{2\bigM M_0}) \\ 
S_1 = \ghostTrans(s_0, u) & S_2 = \booltrans(s_0, u\alpha_{B,x}) & S'_2 = \booltrans(s_0, ux^{2\bigM M_0}) \\ 
\end{array}
\]
    Note that $S_2=\supp(\vec{c_2})$ and $S'_2=\supp(\vec{c'_2})$. 
    By \cref{lem:unfolding configuration characterisation} we have that 
    $S_2\subseteq S'_2$, for every $s\in S_2$ it holds that $\vec{c'_2}(s)=\vec{c_2}(s)$ and 
    for every $s'\notin S_2$ it holds that $\vec{c'_2}(s')> \vec{c_2}(s)$ (where $s'\notin S_2$ means either $s'\in  S'_2\setminus S_2$, or $s'\notin S'_2$).

    Consider a maximal-dominant state $q\in S$ in $\vec{c'_2}$. 
    By \cref{def:dominant state}, this means that there exists %$z\in \Gamma^\infty_\infty$ 
    $z\in (\simpCacRebJump)^*\genCacRebCac$
    such that 
    $\minweight(q\runsto{z}S)<\infty$ 
    whereas for every $p\in S$ with $\vec{c'_2}(p)<\vec{c'_2}(q)$ we have 
    %$\minweight_{\vec{c'_2}}(z,p\to S)=\infty$. 
    $\minweight(p\runsto{z} S)=\infty$. We now split to cases according to whether $q\in S'_2\setminus S_2$ (i.e., $q$ is ``very high'') or $q\in S_2$ (i.e., $q$ is ``low'').

    \paragraph*{Dominant State is Very High}
    If $q\in S'_2\setminus S_2$, we claim that \\
    $(u x^{2\bigM M_0},\alpha_{B,x},z)$
    %$(u x^{2\bigM M_0},x^{2\bigM M_0},z)$  
    is a type-0 witness. 
    Indeed, following the requirements in \cref{def:witness}, we have:
    \begin{enumerate}
        \item $ux^{2\bigM M_0}\in (\simpCac)^*,\alpha_{B,x}\in \simpCac$ and $z\in(\simpCacRebJump)^*\genCacRebCac$. 
        Also, note that $B=\ghostTrans(s_0,u)=\ghostTrans(s_0,ux)=\ghostTrans(s_0,ux^{2\bigM M_0})$ since $\alpha_{B,x}$ is a stable cycle readable after $u$.
        \item By \cref{prop: transitions stabilise at M} we have that \\ $\booltrans(s_0,u x^{2\bigM M_0})=\booltrans(s_0,u x^{2\bigM (M_0+1)})=\booltrans(s_0,u x^{2\bigM M_0}x^{2\bigM})$.
        \item Since $q\in S'_2=\booltrans(s_0,u x^{2\bigM M_0})$ then by Item 2 above we have $q\in \booltrans(s_0,u x^{2\bigM M_0}x^{2\bigM})$. In particular, $\minweight(s_0\runsto{u x^{2\bigM M_0}x^{2\bigM}} q)<\infty$. Since $z$ satisfies that $\minweight(q\runsto{z} S)<\infty$, we conclude that:
        \[ \begin{split}
        &\minweight(s_0\runsto{u x^{2\bigM M_0}\cdot x^{2\bigM}\cdot z} S)\le\\& \minweight(s_0\runsto{u x^{2\bigM M_0}x^{2\bigM}} q)+\minweight(q\runsto{z} S)<\infty
        \end{split}
        \]

        It therefore remains to prove that $\minweight(s_0\runsto{u x^{2\bigM M_0}\cdot \alpha_{B,x}\cdot z} S)=\infty$. 
        To this end, %we first notice that by \cref{prop:cactus letters stabilises at 2M}, the transitions on $\alpha_{B,x}$ and $\alpha_{B,x^{2\bigM M_0}}$ are identical. We therefore proceed with $\alpha_{B,x^{2\bigM M_0}}$. We
        we claim that for every state $p$, if $\minweight(s_0\runsto{u x^{2\bigM M_0}\cdot \alpha_{B,x}} p)<\infty$, then $\vec{c'_2}(p)<\vec{c'_2}(q)$ (from which we can obtain the desired result).
        Indeed, if \\$\minweight(s_0\runsto{u x^{2\bigM M_0}\cdot \alpha_{B,x}} p)<\infty$ then there exists a run 
        \[
        \rho:s_0\runsto{u}s_1\runsto{x^{2\bigM M_0}}p'\runsto{\alpha_{B,x}}p
        \]
        By \cref{def:stabilisation}, this implies that $(p',p)\in \GroundPairs(\alpha_{B,x})$. Then, however, it follows by \cref{prop:grounding states reachable with any bigM k} that $(s_1,p)\in \GroundPairs(B,x)$ (by using the same grounding state of $(p',p)$). 
        Thus, we have the run
        $s_0\runsto{u}s_1\runsto{\alpha_{B,x}}p$
        so $p\in S_2$. 
        As noted above, the configurations $\vec{c_2}$ and $\vec{c'_2}$ coincide on all the states in $S_2$, so $\vec{c_2}(p)=\vec{c'_2}(p)$. On the other hand, $q\in S'_2\setminus S_2$, and again as noted above this means that $\vec{c'_2}(q)>\vec{c_2}(p)=\vec{c'_2}(p)$.

        We are therefore in the following setting:
        \begin{itemize}
            \item If $\minweight(s_0\runsto{u x^{2\bigM M_0}\cdot \alpha_{B,x}} p)<\infty$, then $\vec{c'_2}(p)<\vec{c'_2}(q)$, but then $\minweight(p\runsto{z} S)=\infty$, so 
            \[\minweight(s_0\runsto{u x^{2\bigM M_0}\cdot \alpha_{B,x}}p\runsto{z}S)=\infty\]
            \item If $s_0\runsto{\minweight(u x^{2\bigM M_0}\cdot \alpha_{B,x}} p)=\infty$ then trivially 
            \[\minweight(s_0\runsto{u x^{2\bigM M_0}\cdot \alpha_{B,x}}p\runsto{z}S)=\infty\]
        \end{itemize}
        We conclude that anyway we have
        $\minweight(s_0\runsto{u x^{2\bigM M_0}\cdot \alpha_{B,x}\cdot z} S)=\infty$
    \end{enumerate}
    This shows that all the requirements of \cref{def:witness} hold, so there is a type-0 witness.

    \paragraph*{Dominant State is Low}
    If $q\notin S'_2\setminus S_2$, then $q\in S_2$ (since $q\in \supp(\vec{c'_2})=S'_2$). Then, by the observations above, we have $\vec{c'_2}(q)=\vec{c_2}(q)$. We claim that for every prefix $v'$ of $v$ it holds that $\pot(ux^{2\bigM M_0}v')=\pot(u\alpha_{B,x}v')$. 
    Indeed, fix such a prefix $v'$ and denote
    \[
    \begin{split}
        &\vec{c_3}=\xconf(s_0,u\alpha_{B,x} v')=\xconf(\vec{c_2},v') \quad S_3=\supp(\vec{c_3})\\
        &\vec{c'_3}=\xconf(s_0,u x^{2\bigM M_0} v')=\xconf(\vec{c'_2},v') \quad S'_3=\supp(\vec{c_3})
    \end{split}
    \]
    Similarly to the above, we have $S_3\subseteq S'_3$, and due to the choice of $F$ as the unfolding constant (\cref{def:unfolding function,lem:unfolding configuration characterisation}) for every $p\in S_3$ it holds that $\vec{c_3}(p)=\vec{c'_3}(p)$, and for every $p\in S'_3\setminus S_3$ we have $\vec{c'_3}(p)>\max\{\vec{c_3}(r)\mid r\in S_3\}+\maxeff{u\alpha_{B,x}v}$.

    Let $p,p'$ be maximal dominant states in $\vec{c_3}$ and $\vec{c'_3}$, respectively. We claim that $p,p'\in S_3$. Indeed, $p\in S_3$ by definition. If $p'\in S'_3\setminus S_3$, then there exists a suffix $z$ such that $\minweight(p'\runsto{z} S)<\infty$ and for every $r'$ with $\vec{c'_3}(r')<\vec{c'_3}(p')$ it holds that $\minweight(r'\runsto{z} S)=\infty$. In particular, for every $r'\in S_3$ this holds. 
    Since $p'\in S'_3\setminus S_3$, then $p'\in \booltrans(S'_2,v')\setminus \booltrans(S_2,v')$. But then $v'z$ is a suffix that exhibits a dominant state in $S'_2\setminus S_2$, and in particular the maximal dominant state of $\vec{c'_2}$ is in $S'_2\setminus S_2$, so we are back in the previous case of a very high dominant state (which we assume does not hold now).

    Thus, $p'\in S_3$. It now readily follows that $\pot(ux^{2\bigM M_0}v')=\pot(u\alpha_{B,x}v')$: restricted to $S_3$, the configurations $\vec{c_3}$ and $\vec{c'_3}$ are equivalent. Moreover, for every $r'\in S'_3\setminus S_3$ we have $\vec{c'_3}(r')>\vec{c'_3}(p')$, so does not come into consideration for the dominance of $p'$. 

    It thus follows that $\vec{c'_3}(p')=\vec{c_3}(p)$ (in fact, we can assume without loss of generality that $p=p'$, since we can choose any maximal-dominant state, if there are several), and since the baseline run is reachable and its state is also in $S_3$, we conclude that
    \[
    \pot(ux^{2\bigM M_0}v')=\vec{c'_3}(p')=\vec{c_3}(p)=\pot(u\alpha_{B,x}v').
    \]
\end{proof}
By applying \cref{lem:unfolding maintains potential} recursively, we obtain a similar result for flattening.
\begin{corollary}
    \label{cor:flatttening maintains potential}
    Either $\augA_\infty^\infty$ has a type-$0$ witness, or the following holds.
        Consider an initial state $s_0$ and $w \in (\simpCac)^*$. 
        Let  $F\ge 2\maxeff{w}$ and denote $w'=\flatten(w\wr F)$, then:
        \begin{enumerate}
            \item $\xconf(s_0,w)\le \xconf(s_0,w')$.
            \item $\pot(w)=\pot(w')$ (if the potential is defined, with respect to the given initial state $s_0$).
        \end{enumerate}
\end{corollary}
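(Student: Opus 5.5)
The plan is an induction on the number of unfolding steps that make up $\flatten(w\wr F)$. By \cref{lem:flattening essence}, $\flatten(w\wr F)$ is obtained by repeatedly applying $\unfold$, each time to one cactus letter, always with a constant at least $F$. Since $w\in(\simpCac)^*$ has no rebase letters, every intermediate word stays in $(\simpCac)^*$ and the result is in $(\Gamma^0_0)^*$. So I write the sequence
\[
w=w^{(0)},\ w^{(1)},\ \ldots,\ w^{(n)}=w'.
\]
Each $w^{(j+1)}=\unfold(u_j,\alpha_{B_j,x_j},v_j\wr F_j)$, where $w^{(j)}=u_j\alpha_{B_j,x_j}v_j$. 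To guarantee that the hypothesis $F_j\ge 2\maxeff{w^{(j)}}$ of \cref{lem:unfolding maintains potential} holds at every step, I would unfold the outermost cactus letters first. I would also take the unfolding counts large enough, as allowed by \cref{rmk:increasing repetitions in unfolding}, to match the constants used in \cref{lem:flattening essence}. The point to check is that unfolding an outer letter exposes inner letters whose $\maxeff{}$ is bounded by that of the enclosing letter, so a single $F$ works throughout. Steps of depth $0$ are trivial.

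For Item 1, I apply Item 1 of \cref{lem:unfolding maintains potential} with $v'=v_j$ at each step. It relates $\xconf(s_0,w^{(j)})$ and $\xconf(s_0,w^{(j+1)})$ in the following concrete sense, coming from \cref{lem:unfolding configuration characterisation}: the supports grow, and the two configurations coincide on the smaller support. The statement's $\le$ between $\xconf(s_0,w)$ and $\xconf(s_0,w')$ is the relation obtained by chaining this step relation along the sequence $w^{(0)},\ldots,w^{(n)}$, and I would prove exactly that composite. Both ingredients are transitive. Support inclusion is transitive. Agreement on the smaller support also chains: a state reachable after $w^{(j)}$ keeps its weight at step $j+1$, and since it lies in the larger support it keeps it again at every later step. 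I would cross-check the end result against the direct characterisation of $\xconf(s_0,\flatten(w\wr F))$ in \cref{lem:flattening essence}. That characterisation gives support inclusion and equality on $\supp(\xconf(s_0,w))$, and it places every newly finite state at least $F$ above $\max\{\vec{c}(p)\mid p\in\supp(\vec{c})\}$, where $\vec{c}=\xconf(s_0,w)$.

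For Item 2, I assume that no type-$0$ witness exists. Then Item 2 of \cref{lem:unfolding maintains potential}, applied with $v'=v_j$, gives $\pot(w^{(j)})=\pot(w^{(j+1)})$ for every $j$, and equality chains to $\pot(w)=\pot(w')$. One subtlety is that the lemma requires every intermediate word to lie in $\simpCac^*$; this holds by the outermost-first order, because inner letters of a $\simpCac$ letter are again in $\simpCac$. The other subtlety is that the potential must be defined at each step, meaning there is a jump-free word with a seamless baseline run. The words are jump-free because there are no rebase letters. Seamlessness of the baseline is preserved by \cref{prop:unfolding cactus maintains seamlesss gaps}, which maps the seamless baseline run to a seamless run of equal weights on each prefix. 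That run is again the baseline, since baseline transitions on cactus letters have weight $0$ by \cref{rmk:baseline transitions with cactus are zero}.

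I expect the main obstacle to be the bookkeeping of constants and orders rather than any single inequality. The constant $F$ must dominate $2\maxeff{}$ of every intermediate word, even though unfolding lengthens the word. The statement's fixed $F\ge 2\maxeff{w}$ must therefore be shown sufficient, using the monotone choice of unfolding counts and the fact that runs on $w^{(j)}$ are bounded via $\maxeff{w}$ through the weight equalities of \cref{prop:unfolding cactus maintains seamlesss gaps}.
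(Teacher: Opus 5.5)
Your route is the paper's own. Its proof is the single remark that \cref{lem:unfolding maintains potential} is applied recursively. Your outermost-first iteration keeps every intermediate word in $(\simpCac)^*$ and chains the per-step equalities of $\pot$, which is exactly that.

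The step that fails as you justify it is the claim that one fixed $F$ works throughout. The hypothesis of \cref{lem:unfolding maintains potential} is $F\ge 2\maxeff{\cdot}$ of the \emph{whole} current word. Since $\maxeff{\cdot}$ is additive over letters, $\maxeff{ux^{2\bigM M_0}v}=\maxeff{u}+2\bigM M_0\maxeff{x}+\maxeff{v}$. Suppose some ungrounded pair is connected by a run on $x^{2\bigM M_0}$. That run has weight $>F$ by the choice of $M_0$, but weight at most $2\bigM M_0\maxeff{x}$. Hence $\maxeff{w^{(1)}}>F$, and the lemma can no longer be invoked with the original $F$. Even the letter-wise bound you cite can fail, since $\wmax{\alpha_{B,x}}$ only records weights of grounded pairs and need not dominate the weights of the letters of $x$.

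Your closing appeal to the weight equalities of \cref{prop:unfolding cactus maintains seamlesss gaps} would amount to reproving the lemma under a weaker hypothesis. The easy repair is to let the constant grow: at step $j$, unfold with $F_j=\max\{F,2\maxeff{w^{(j)}}\}$. This is allowed by \cref{rmk:increasing repetitions in unfolding}, since an unfolding valid for $F_j\ge F$ is also an unfolding with constant $F$. It only pushes newly reachable states higher, so the guarantees of \cref{lem:flattening essence} relative to $F$ persist. The chaining of supports, weights and $\pot$ does not depend on which constants were used.

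Second, do not redefine $\le$. In the paper's componentwise order, with $\infty$ on top, Item 1 of \cref{lem:unfolding maintains potential} says unfolded $\le$ folded. This is because states that become reachable drop from $\infty$ to a finite value. Chaining your step relation (support inclusion plus agreement on the smaller support) therefore proves $\xconf(s_0,w')\le\xconf(s_0,w)$. This is the reverse of Item 1 as printed, and the printed inequality is false whenever flattening makes a ghost state reachable. You prove the correct relation, but you should say explicitly that the corollary's inequality is reversed rather than reinterpret $\le$. The later uses of the corollary only need Item 2.
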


\begin{remark}[The Prefix ``Either $\augA_\infty^\infty$ has a type-0 witness, or...'']
\label{rmk:the either witness prefix}
    Most of our reasoning about potential in the coming sections is ultimately based on \cref{lem:unfolding maintains potential}. 
    Therefore, the prefix ``Either $\augA_\infty^\infty$ has a type-0 witness, or the following holds'' appears in many statements. Since having such a witness immediately implies nondeterminziability (by \cref{lem:witness implies nondet}), this condition can be read as ``either we are done, or the following holds''.
\end{remark}

\section{Separated Repeating Infixes (SRI)}
\label{sec:separated repeating infix}
In this section we introduce the ``pumpable'' structure we work with, called \emph{separated repeating infix (SRI)}. The remainder of the work, and the main technical contribution, is split to two parts: first, we show that once an SRI is found in some word $w$, we can shorten $w$ while maintaining certain desirable properties. This is later used by assuming a minimal-length $w$ with these properties, and reaching a contradiction by finding an SRI.
Second, we show under which conditions an SRI exists.

Technically, SRI come in two variants: \emph{Simple SRI (SSRI)} relating to the potential ($\pot$), and \emph{General SRI (GSRI)} relating to the charge ($\charge$).
As a result, we need to develop several new tools for reasoning about the usage of SRI. Fortunately, these can be obtained using our new developments in baseline shifts (see \cref{sec: baseline shift}). 
While some of the tools work uniformly for both variants, others require separate reasoning. 

The main takeaway from this section can be summarised as follows: if we have an SRI, then we either have a witness, or we can shorten the word without causing too much havoc. This is later used as a crucial step in the proof.

\subsection{Separated Repeating Infixes -- Definition}
\label{sec:separated repeating infix definition}
We start with an intuitive overview of the properties we require, followed by the formal definition.

Consider the WFA $\augA_\infty^\infty$, and focus on the alphabet $\simpCac$ or $\genCac$ (i.e., without rebase and jump letters). 
An SRI consists of a concatenation $uxyv$ with the following structure (see \cref{fig:separated repeating infix}): 
\begin{itemize}
    \item Reading $u$ from $s_0$ reaches a configuration that is partitioned to ``sub-configurations'' that have a huge gap between them, we call the state sets in these sub configurations $V_1,\ldots, V_{\ell}$. 
    \item Next, reading $x$ maintains the exact structure of these $V_j$ sets, and shifts the weight of the states in each set $V_j$ by a constant $k_{j,x}$. In particular, the entire configuration maintains its support after $x$.

    An additional property of $x$ is that it is quite short -- enough so that if needed, we can turn $x^k$ into a stable cycle as per~\cref{sec:from reflexive to stable}.
    
    \item Reading $y$ has very similar behaviour to $x$, in the sense that the $V_j$ are maintained, and each is increased by a constant $k_{j,y}$, and the sign of $k_{j,y}$ is the same as that of $k_{j,x}$, so each $V_j$ is decreasing/increasing on both $x$ and $y$. 

    \item Finally, $v$ is just a harmless suffix.
\end{itemize}

We now turn to the formal definition. Let $L\in \{\simpL,\generL\}$ be a length function as per \cref{sec:effective cactus}.
\begin{definition}[Separated Repeating Infix (SRI) with respect to $L$]
    \label{def:separated repeating infix}
    Consider a word $w\in \boundedCac{L}$
    %Consider a word $w\in \simpCac^*$ 
    decomposed as $w=u\cdot x\cdot y\cdot v$. 
    Denote $\vec{c_u}=\xconf(s_0,u)$, $\vec{c_{ux}}=\xconf(s_0,ux)$ and $\vec{c_{uxy}}=\xconf(s_0,uxy)$.
    We say that $w$ is a \emph{separated repeating infix (SRI)} with respect to $L$ if the following conditions hold.
    \begin{enumerate}
        \item \label{itm:separated infix reachable states} $\supp(\vec{c_u})=\supp(\vec{c_{ux}})=\supp(\vec{c_{uxy}})=B$ for some subset $B\subseteq S$. 
        That is, the subsets of states reached after $u,ux,uxy$ are the same..
        \item 
        $|x| \le \frac{1}{\bigM}L(\depth(x)+1)$.
        \item \label{itm:separated infix partition}
        Define $G=\begin{cases}
            4|S|\bigM\maxeff{xy} & \text{ if } L=\simpL,\\
            4|S|\bigM\maxeff{xy}+\budH &\text{ if } L=\generL.
        \end{cases}$. Then $B$ can be partitioned to $B=V_1\cup \ldots \cup V_\ell$ for some $\ell$ such that the following hold. 
        \begin{enumerate}
            \item 
            \label{itm:separated infix gaps V}
            For every $1\le j<\ell$, every $s\in V_j,s'\in V_{j+1}$ and  every $\xi\in \{u,ux,uxy\}$ we have 
            $\vec{c_\xi}(s')-\vec{c_\xi}(s)> G$. 
            That is, the values assigned to states in $V_{j+1}$ are significantly larger than those assigned to $V_{j}$ in the configurations $\vec{c_{u}}, \vec{c_{ux}}$ and $\vec{c_{uxy}}$.
            \item 
            \label{itm:separated infix linear k}
            For every $1\le j\le \ell$ there exist $k_{j,x},k_{j,y}\in \bbZ$ such that for every $s\in V_j$ we have  $\vec{c_{ux}}(s)=\vec{c_u}(s)+k_{j,x}$ and $\vec{c_{uxy}}(s)=\vec{c_{ux}}(s)+k_{j,y}$.

            Moreover, $\sign(k_{j,x})=\sign(k_{j,y})$ (where $\sign:\bbZ\to \{-,0,+\}$ is the sign function).
        \end{enumerate}
        \item \label{itm:separated infix seamless baseline}  There is a seamless baseline run $\rho:s_0\runsto{uxyv}s$ for some $s$.
    \end{enumerate}
\end{definition}
The definition of SRI is structural. We now specialise it with a restriction on $\pot$ or on $\charge$, as follows.
\begin{definition}
    \label{def: SSRI and GSRI}
    Consider an SRI $uxyz$ over $L$. 
    \begin{itemize}
        \item It is a \emph{Simple SRI (SSRI)} if $L=\simpL$ and in addition $\pot(u)\le \pot(ux)\le \pot(uxy)$, and $\pot(u) = \pot(ux)$ iff $\pot(ux) = \pot(uxy)$.
        \item It is a \emph{General SRI (GSRI)} if $L=\generL$ and in addition $\charge(u)\ge \charge(ux)\ge \charge(uxy)$, and $\charge(u) = \charge(ux)$ iff $\charge(ux) = \charge(uxy)$.
    \end{itemize}
\end{definition}
In the following, when we write SRI we mean either an SSRI or a GSRI.
We further distinguish \emph{flavours} of SRI, each leading to its own results in the following.
\begin{definition}
    \label{def:flavours of inc inf}
    Consider an SRI $w=uxyv$ as per \cref{def:separated repeating infix}. We say that $w$ is:
    \begin{itemize}
        \item \emph{Negative} if $k_{j,x},k_{j,y}<0$ for some $j$.
        \item \emph{Positive} if $k_{j,x},k_{j,y}\ge 0$ for all $j$.
        \item \emph{Stable} if $(\ghostTrans(s_0,u),x)$ is a stable cycle. In this case it is \emph{Degenerate} if $(\ghostTrans(s_0,u),x)$ is degenerate, and \emph{Non-degenerate} otherwise.
    \end{itemize}   
\end{definition}

\begin{remark}[Comparison of SRI with~\cite{almagor2026determinization}]
    \label{rmk: SRI v.s. original separated increasing infix}
    \cite{almagor2026determinization} introduces \emph{Separated Increasing Infixes}, which are similar to \emph{Positive} SRI. Technically, they do not require $x$ to be an effective cactus letter (only a general cactus), but place much stronger restrictions on $x,y$, that have to do with a notion of \emph{cost}, which we completely avoid in this work. 
    Moreover, the notion there is purely structural, with no regard for potential or charge. The incorporation of those into SSRI and GSRI, respectively, renders them far more useful (but technically harder to find).
    
    Additionally, our notion of SRI is more versatile, with its different flavours.    
    Few of the results we give on SRI have similar proofs to those of~\cite{almagor2026determinization}. We comment when this is the case.
\end{remark}

\subsection{SRI -- A Toolbox}
\label{sec: basic results for SRI}
We present several tools for SRI that apply to both SSRI and GSRI. 
We start with some results about the structure of runs in an SRI.

The partition and gaps described in \cref{itm:separated infix partition,itm:separated infix gaps V} of \cref{def:separated repeating infix} give rise to a restriction on the runs on the infixes $x,y$. Specifically, there are only runs from $V_j$ to $V_{j'}$ for $j\ge j'$, and there is always a run from $V_{j'}$ to $V_{j'}$. 
\begin{proposition}
\label{prop:run characteristics of SRI}
    Consider an SRI $uxyv$ and let $B=V_1\cup\ldots \cup V_\ell$ and $k_{j,x},k_{j,y}$ as per \cref{def:separated repeating infix}. Then the following hold for every $1\le j\le \ell$,.
    \begin{itemize}
        \item $|k_{j,x}|\le \maxeff{x}$ and $|k_{j,y}|\le \maxeff{y}$.
        \item For every $z\in \{x,y\}$ and $s\in S$, if $s\in V_{j}$ then there exists $s'\in V_{j}$ such that $s'\runsto{z}s$.
        \item For every $z\in \{x,y\}$ and $s,s'\in S$ such that $s\runsto{z^k}s'$ for some $1\le k\le |S|\bigM$, if $s'\in V_{j'}$ then $s\in V_{j}$ for some $j\ge j'$.
    \end{itemize}
\end{proposition}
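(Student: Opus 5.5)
We prove the three items in order; the first two feed into the third.

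\textbf{First item.} Fix $s\in V_j$. Since $s\in B=\supp(\vec{c_{ux}})$, there is a seamless run to $s$ on $ux$. Write it as $s_0\runsto{u}s'\runsto{x}s$, so that $\vec{c_{ux}}(s)=\vec{c_u}(s')+\weight(\rho_x)$ with $|\weight(\rho_x)|\le\maxeff{x}$. This already shows $\vec{c_{ux}}(s)\ge \min_{s'}\vec{c_u}(s')-\maxeff{x}$. For the upper bound, $s$ itself is in $\supp(\vec{c_u})$, and the run $\rho_x$ can be re-started there: any finite run from $s'$ on $x$ has weight at most $\maxeff{x}$ in absolute value. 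So $\vec{c_{ux}}(s)\le \vec{c_u}(s')+\maxeff{x}$ for the predecessor $s'$ used.

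To pin down $k_{j,x}$ itself, we use the second item (proved next, independently): there is $s'\in V_j$ with $s'\runsto{x}s$. Take $s$ to be a minimal state of $V_j$ in $\vec{c_u}$. Then
\[
\vec{c_{ux}}(s)\le \vec{c_u}(s')+\maxeff{x}.
\]
Since all of $V_j$ shifts by the same $k_{j,x}$, summing the shifts along such backward chains inside $V_j$ gives $k_{j,x}\le\maxeff{x}$. Symmetrically, the seamless predecessor of the minimal state gives $k_{j,x}\ge-\maxeff{x}$. Concretely, apply the argument to the state $s\in V_j$ with minimal $\vec{c_u}$ value, and use that its predecessor lies in $V_{j'}$ with $j'\ge j$. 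The same argument works for $y$.

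\textbf{Second item.} Let $s\in V_j$ and take a seamless predecessor $s'\in B$ with $s'\runsto{x}s$. By the third item with $k=1$, $s'\in V_{j'}$ for some $j'\ge j$. We must exclude $j'>j$ for at least one predecessor.

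Suppose every predecessor of every state in $V_j$ lies strictly above $V_j$. Take the minimal-weight state $t\in V_j$ in $\vec{c_{ux}}$. Its value is at least $\min_{V_{j+1}}\vec{c_u}-\maxeff{x}$, which exceeds $\max_{V_j}\vec{c_u}+G-\maxeff{x}$. On the other hand, its value is $\vec{c_u}(t)+k_{j,x}$. This forces $k_{j,x}>G-\maxeff{x}$, contradicting the needed bound.

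So the first two items must be proved jointly. Consider the minimal state $t$ of $V_j$ after $ux$ and its seamless predecessor $t'$. If $t'\in V_{j'}$ with $j'>j$, then every state of $V_j$ has rose by more than $G-2\maxeff{x}$ relative to $V_j$'s old maximum. This happens uniformly, since the shift is the constant $k_{j,x}$. Iterate with $y$ (same sign), or simply compare the separation condition of \cref{itm:separated infix gaps V} at $ux$ against $V_{j+1}$, which itself shifts by $k_{j+1,x}$. I expect this to yield the contradiction $k_{j,x}>G/2$, while the states of the topmost block $V_\ell$ must have predecessors in $V_\ell$ and hence $|k_{\ell,x}|\le\maxeff{x}$. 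Then argue downward by induction on $j$ from $\ell$.

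\textbf{Third item.} This is the cleanest. Suppose $s\runsto{z^k}s'$ with $s\in V_j$, $s'\in V_{j'}$ and $j'>j$. The run gives
\[
\vec{c_{uz^k}}(s')\le \vec{c_u}(s)+k\maxeff{z}.
\]
Here $uz^k$ is meaningful because $z$ preserves the support and each block shifts linearly. Indeed $\xconf(\vec{c_u},x^k)$ equals $\vec{c_u}$ shifted by $k\,k_{j,x}$ on $V_j$, since by the seamless-predecessor structure the configuration after $x$ only depends on relative values; this needs checking but follows from $\xconf(\vec{c_{ux}},x)$ being determined by $\vec{c_{ux}}$ which equals $\vec{c_u}$ up to block shifts, within the huge gaps. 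Meanwhile $\vec{c_{uz^k}}(s')\ge \vec{c_u}(s)+G-k\maxeff{z}$. Since $k\le|S|\bigM$ and $G\ge 4|S|\bigM\maxeff{xy}$, this is a contradiction.

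The main obstacle is justifying the iterated block-linearity for $z^k$ (and for $y$ after $x$). I would handle it by proving by induction on $k$ that $\xconf(\vec{c_u},z^k)$ agrees with the block-shifted configuration, using the first two items for $k=1$ and the gap $G$ to show that cross-block runs never become minimal.
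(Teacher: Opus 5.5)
\textbf{The gap is circularity.} Your dependencies form a loop that you never break:
\begin{itemize}
\item You get the upper bound in the first item from the second item (predecessors inside $V_j$).
\item Your proof of the second item ends with ``contradicting the needed bound'', so it uses $k_{j,x}\le\maxeff{x}$. It also uses the third item to place the seamless predecessor in a block $\ge j$.
\item You get the lower bound $k_{j,x}\ge-\maxeff{x}$ from the third item.
\item In the third item, your inequality $\vec{c_{uz^k}}(s')\ge\vec{c_u}(s)+G-k\maxeff{z}$ presupposes $k_{j',z}\ge-\maxeff{z}$. It also relies on an iterated block-linearity of $\xconf(\vec{c_u},z^k)$ that you leave unchecked and that itself needs the first two items.
\end{itemize}
You notice the problem, but the repair is only announced (``I expect this to yield\dots''), and as sketched it does not work. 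The base case ``states of $V_\ell$ must have predecessors in $V_\ell$'' is again an instance of the third item. Moreover, a downward induction from $V_\ell$ cannot deliver the lower bound: a too-negative shift of $V_j$ is refuted through a predecessor in a \emph{lower} block, whose shift must already be controlled.

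\textbf{The missing idea} is to prove the first item directly, using the separation condition at $\vec{c_{ux}}$ (not only at $\vec{c_u}$) together with an extremal choice of block.

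For the upper bound, let $j$ be \emph{maximal} with $k_{j,x}>\maxeff{x}$, and let $s$ be the $\vec{c_u}$-maximal state of $V_j$. The run realising $\vec{c_{ux}}(s)$ starts at some $s'$ with $\vec{c_u}(s')\ge\vec{c_{ux}}(s)-\maxeff{x}>\vec{c_u}(s)$. Hence $s'\in V_{j'}$ for some $j'>j$, and $\vec{c_{ux}}(s)\ge\vec{c_u}(s')-\maxeff{x}$. By maximality of $j$, $\vec{c_{ux}}(s')\le\vec{c_u}(s')+\maxeff{x}$. So $\vec{c_{ux}}(s')-\vec{c_{ux}}(s)\le 2\maxeff{x}\le G$, contradicting the gap at $ux$.

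For the lower bound, argue dually: take $j$ \emph{minimal} with $k_{j,x}<-\maxeff{x}$ and $s$ the $\vec{c_u}$-minimal state of $V_j$. Its optimal predecessor lies in a lower block whose shift is $\ge-\maxeff{x}$, which again gives a gap of at most $2\maxeff{x}$ at $ux$. So this induction runs upward from $V_1$.

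With the first item in hand, the rest goes through:
\begin{itemize}
\item Your argument for the second item is fine.
\item The third item for $k=1$ is immediate: $j<j'$ would force $k_{j',x}<-G+\maxeff{x}\le-\maxeff{x}$.
\item For general $k$ you need no block-linearity. From $\supp(\vec{c_u})=\supp(\vec{c_{ux}})=\supp(\vec{c_{uxy}})=B$ you get $\booltrans(B,z)=B$. So a run on $z^k$ from a state of $B$ passes only through states of $B$, and you simply chain the $k=1$ case along its $z$-steps.
\end{itemize}
For $z=y$, replace $\vec{c_u},\vec{c_{ux}}$ by $\vec{c_{ux}},\vec{c_{uxy}}$ throughout.
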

\begin{proof}
    The proof follows the same lines as its analogue in~\cite{almagor2026determinization}, with the main changes being the Negative SRI case, and the extension to $z^k$ for $k\ge 1$.
    
    We prove the claim for $z=x$, and the case of $y$ identical up to working with $\vec{c_{uxy}}$ instead of $\vec{c_{ux}}$. 
    
    We start with the first item, and we split to positive and negative $k_{j,x}$. Assume by way of contradiction that there exists $1\le j\le \ell$ such that $k_{j,x}>\maxeff{x}$, and let $j$ be maximal with this property (i.e., the ``highest'' part in the partition where this holds). We depict the proof in \cref{fig:increasing infix run characterisation}.
    \begin{figure}[H]
        \centering
        \includegraphics[width=0.75\linewidth]{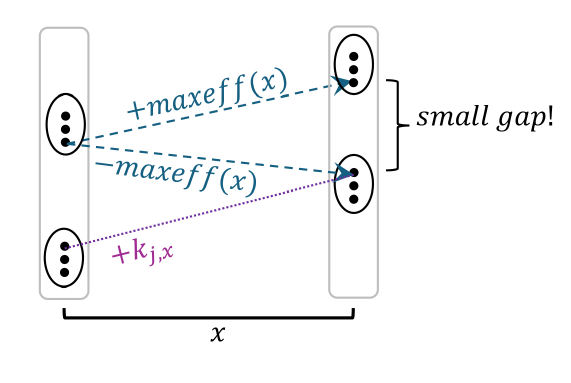}
        \caption{The part $V_j$ and the state $s$ whose minimal run stems from $V_{j'}$, implying a small gap.}
        \label{fig:increasing infix run characterisation}
    \end{figure}
    
    Consider a state $s\in \arg\max\{\vec{c_u}(q)\mid q\in V_j\}$. We then have $\vec{c_{ux}}(s)>\vec{c_u}(s)+\maxeff{x}$. This means that the minimal seamless run $\rho:s'\runsto{x}s$ is from $s'\in V_{j'}$ with $j'>j$. Indeed, no run from any state in $V_j$ or lower can get to weight $\vec{c_u}(s)+\maxeff{x}$. In particular, we get $j<\ell$.
    We therefore have $\vec{c_{ux}(s)}\ge \vec{c_{u}}(s')-\maxeff{x}$.

    In the following, it is useful to think of $j'$ as $j+1$ (although we do not actually assume that). 
    Consider $k_{j',x}$. By the maximality assumption on $j$, we have that $k_{j',x}\le \maxeff{x}$. Therefore, $\vec{c_{ux}}(s')\le \vec{c_{u}}(s')+\maxeff{x}$. This, however, contradicts the gap in $\vec{c_{ux}}$: we have that $s'\in V_{j'}$ and $s\in V_j$, but 
    \[
    \vec{c_{ux}}(s')-\vec{c_{ux}}(s)\le 2\maxeff{x} < 4\maxeff{xy}
    \]
    
    in contradiction to \cref{itm:separated infix gaps V} of \cref{def:separated repeating infix}.
    We conclude that $k_{j,x}\le \maxeff{x}$ for all $j$, and similarly for $y$.

    The case where $k_{j,x}<0$ is completely dual: 
    assume by way of contradiction that there exists $1\le j\le \ell$ such that $k_{j,x}<-\maxeff{x}$, and let $j$ be minimal with this property (i.e., the ``lowest'' part in the partition where this holds). 
    
    Consider a state $s\in \arg\min\{\vec{c_u}(q)\mid q\in V_j\}$. We then have $\vec{c_{ux}}(s)<\vec{c_u}(s)-\maxeff{x}$. 
    This means that the minimal seamless run $\rho:s'\runsto{x}s$ is from $s'\in V_{j'}$ with $j'<j$. Indeed, no run from any state in $V_j$ or higher can get to weight $\vec{c_u}(s)-\maxeff{x}$. In particular, we get $1<j$.
    We therefore have $\vec{c_{ux}(s)}\le \vec{c_{u}}(s')+\maxeff{x}$.

    Consider $k_{j',x}$. By the minimality assumption on $j$, we have that $k_{j',x}\ge -\maxeff{x}$. Therefore, $\vec{c_{ux}}(s')\ge \vec{c_{u}}(s')-\maxeff{x}$. This, however, contradicts the gap in $\vec{c_{ux}}$: we have that $s'\in V_{j'}$ and $s\in V_j$, but 
    \[
    \vec{c_{ux}}(s)-\vec{c_{ux}}(s')\le 2\maxeff{x} < 4\maxeff{xy}
    \]
    
    in contradiction to \cref{itm:separated infix gaps V} of \cref{def:separated repeating infix}.
    We conclude that $k_{j,x}\ge -\maxeff{x}$ for all $j$, and similarly for $y$.

    We now proceed to the latter two items.
    Let $1\le j\le \ell$ and $s\in V_{j}$. Assume by way of contradiction that for every $s'$ such that $s'\runsto{x}s$ we have $s'\in V_{j'}$ for $j'>j$. Fix $s'$ that attains $\minweight(s'\runsto{x}s)$.
    By the gap criterion \cref{itm:separated infix gaps V}, since the gap is much larger than $2\maxeff{x}$ we have
    $\vec{c_{ux}}(s)\ge \vec{c_u}(s')-\maxeff{x}>\vec{c_{u}}(s)+\maxeff{x}\ge \vec{c_{u}}(s)+k_{j',x}$.
    But this is a contradiction since $\vec{c_{ux}}(s)=\vec{c_u}(s)+k_{j',x}$.
    
    Finally, we address the last item. We first prove the case of $k=1$.
    Let $s'\in V_{j'}$ such that $s'\runsto{x}s$. 
    Assume by way of contradiction that $j'<j$, then by the gap criterion \cref{itm:separated infix gaps V}, since the gap is much larger than $2\maxeff{x}$, we have 
    $\vec{c_{ux}}(s)\le \vec{c_u}(s')+\maxeff{x}<\vec{c_{u}}(s)-2\maxeff{x}$, but this is a contradiction to $k_{j,x}\ge -\maxeff{x}$ proved above.
    Thus, $j'\ge j$, as required.
    
    We now generalise this to any $k\le |S|\bigM$ by observing that the proofs of the two latter items above (with $k=1$ for the last) assume a gap of at most $2\maxeff{x}$. Note that after reading $x$, the gaps change by at most $2\maxeff{x}$ for every pair of states. Since we start with a gap of $4|S|\bigM \maxeff{x}$, after reading $x^{k-1}$, the gaps are still greater than $2\maxeff{x}$, so we can apply this argument repeatedly.
\end{proof}

\begin{proposition}
\label{prop: Positive SRI no negative cycles on x}
    Consider a Positive SRI $uxyv$. 
    In the notations of \cref{def:separated repeating infix}, for every $r\in B$ and $k\in \bbN$ we have $\minweight(r\runsto{x^k} r)\ge 0$.
\end{proposition}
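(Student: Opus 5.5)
Our plan is to argue by contradiction, comparing the weight of a putative negative cycle with the configuration growth rates $k_{j,x}\ge 0$. Suppose $r\in B$, $k\in\bbN$ and $\tau:r\runsto{x^k}r$ with $\weight(\tau)<0$. We first reduce to short cycles. By \cref{prop: reflexive cycle negative slope short}, applied to the reflexive cycle $(\ghostTrans(s_0,u),x)$ (reflexive since $\supp$ is preserved by $x$ and $B\subseteq \ghostTrans(s_0,u)$), there is a minimal-slope cycle $\rho:s\runsto{x^{k'}}s$ with $k'\le |S|$ and $\slope(\rho)\le\slope(\tau)<0$. We must make sure $s\in B$ rather than an arbitrary ghost state. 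For this we use the pigeonhole decomposition inside that proposition, applied to $\tau$ itself: every state visited by $\tau$ at the $x$-boundaries lies on a cycle through $r$. Hence every such state reaches $r$ and is reached from $r$ by powers of $x$. By the last item of \cref{prop:run characteristics of SRI}, runs on $x^{k''}$ ($k''\le |S|\bigM$) only go from $V_j$ to $V_{j'}$ with $j'\le j$. So all intermediate states of a cycle through $r\in V_j$ lie in the same $V_j$. If $k>|S|\bigM$, we apply this to consecutive blocks of length at most $|S|\bigM$. Thus we may assume a negative cycle $\rho:s\runsto{x^{k'}}s$ with $k'\le |S|$ that stays inside a single $V_j$.

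The key step is then to show that $\rho$ forces $k_{j,x}<0$. Let $s$ be the start of $\rho$, let $m$ be large, and consider the configurations $\xconf(s_0,ux^{k'm})$. On one hand, iterating $\rho$ gives $\xconf(s_0,ux^{k'm})(s)\le \vec{c_u}(s)+m\weight(\rho)$, which tends to $-\infty$ linearly in $m$. On the other hand, we want to show $\xconf(s_0,ux^{n})(s)=\vec{c_u}(s)+n k_{j,x}$ for all $n$, which is at least $\vec{c_u}(s)$. The SRI definition only gives this for $n=1$, namely $\vec{c_{ux}}=\vec{c_u}+k_{\cdot,x}$ on each $V_j$. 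However, $\vec{c_{ux}}$ restricted to $B$ is $\vec{c_u}$ shifted blockwise by constants, and the gaps between blocks remain larger than $2\maxeff{x}$ (they start at $G\ge 4|S|\bigM\maxeff{xy}$). So the minimal runs into $V_j$ on the next copy of $x$ come from $V_j$ again, by the argument of \cref{prop:run characteristics of SRI}. Consequently $\xconf(\vec{c_{ux}},x)$ agrees with $\vec{c_{ux}}+k_{\cdot,x}$ on $V_j$, since the within-block relative weights are identical.

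We make this rigorous by a direct one-step argument rather than via configurations at step $m$. Within $V_j$, the minimum over $p\in V_j$ of $\vec{c_u}(p)+\minweight(p\runsto{x}q)$ equals $\vec{c_u}(q)+k_{j,x}$ for each $q\in V_j$. Here we use that the minimal run into $q$ comes from $V_j$: it cannot come from a higher block by the gap, and it cannot come from a lower block by the last item of \cref{prop:run characteristics of SRI}. Hence, for every $p,q\in V_j$,
\[
\vec{c_u}(p)+\minweight(p\runsto{x}q)\ge \vec{c_u}(q)+k_{j,x}.
\]
Summing this inequality along the $k'$ consecutive $x$-segments of $\rho$ (all of whose boundary states lie in $V_j$) telescopes the $\vec{c_u}$ terms and yields $\weight(\rho)\ge k'k_{j,x}\ge 0$, since the SRI is Positive. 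This contradicts $\weight(\rho)<0$.

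The main obstacle is the localisation step: showing that a negative cycle on $x^k$ through $r\in B$ stays within one block $V_j$, particularly for large $k$, and that all intermediate states lie in $B$ at all. The telescoping argument in fact makes the reduction to $k'\le|S|$ unnecessary. We only need every $x$-boundary state of $\tau$ to lie in the same $V_j$. This follows because runs never go up between blocks on $x$ (the last item of \cref{prop:run characteristics of SRI} with single-step $k=1$, iterated), while the cycle returns to $r$, so it can never go down either. We would therefore present the proof with the single-step telescoping bound applied directly to $\tau$.
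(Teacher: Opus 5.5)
Your final argument is exactly the paper's proof, and the preliminary reduction to short, minimal-slope cycles is indeed unnecessary. The argument runs: block indices are non-increasing along the cycle and the cycle returns to $r$, so all $x$-boundary states lie in one $V_j$; then telescoping the one-step bound $\vec{c_u}(p)+\minweight(p\runsto{x}q)\ge \vec{c_u}(q)+k_{j,x}$ gives $\weight(\tau)\ge k\cdot k_{j,x}\ge 0$.

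Two small points close what you left open. First, the worry that intermediate states might lie outside $B$ is settled by forward reachability, since $\booltrans(B,x)=\supp(\vec{c_{ux}})=B$. Second, the one-step bound holds simply because $\vec{c_{ux}}(q)\le \vec{c_u}(p)+\minweight(p\runsto{x}q)$ for every $p\in B$. You do not need to argue that the minimal run into $q$ originates in $V_j$.
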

\begin{proof}
    The following proof is a slight simplification of the one from~\cite{almagor2026determinization}, but is mostly similar.
    Consider a run  
    $\rho:r=r_0\runsto{x}r_1\runsto{x}r_2\cdots r_{k-1}\runsto{x}r_k=r$.
    By \cref{itm:separated infix partition} of \cref{def:separated repeating infix}, for every $0\le i\le k$ we have $r_i\in V_{j_i}$ for some $1\le j_i\le \ell$. 
    By \cref{prop:run characteristics of SRI}, we have $j_{i-1}\ge j_i$ for all $1\le i\le k$, since there are only runs from $V_j$ to ``lower'' parts.
    Since $r_0=r_k$, it therefore follows that all the states must be in the same part $V_j$.

    Let $k_{j,x}$ be the corresponding constant to $V_j$. Observe that for every $r',r''\in V_j$ it holds that $\minweight(r'\runsto{x} r'')\ge \vec{c_u}(r'')-\vec{c_u}(r')+k_{j,x}$ (otherwise we would have $\vec{c_{ux}}(r'')<\vec{c_u}(r'')+k_{j,x}$).
    Thus, as a telescopic sum, and since $r_k=r_0=r$, we conclude:
    \[ 
    \begin{split}
        &\weight(\rho)\ge \sum_{i=1}^k \minweight(r_{i-1}\runsto{x} r_i)\ge  \sum_{i=1}^k (\vec{c_u}(r_i)-\vec{c_u}(r_{i-1})+k_{j,x}) \\
        &= \vec{c_u}(r_k)-\vec{c_u}(r_0)+k\cdot k_{j,x}=\vec{c_u}(r)-\vec{c_u}(r)+k\cdot k_{j,x}=k\cdot k_{j,x}\ge 0
    \end{split}
    \]
\end{proof}
We now have that in a Positive SRI, for every $k\in \bbN$, the minimal weight that can be attained by a run over $x^k$ cannot be too small (as the run cannot contain negative cycles). Specifically, we have the following.
\begin{corollary}
\label{cor:increasing infix no very negative runs on x k}
Consider a Positive SRI $uxyv$. For every $k\in \bbN$ and run $\rho:B\runsto{x^k} B$ we have $\weight(\rho)\ge -\maxeff{x}|S|$.
\end{corollary}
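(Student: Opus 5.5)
The plan is to decompose an arbitrary run $\rho:r_0\runsto{x^k}r_k$ with $r_0,r_k\in B$ into its $x$-steps $r_0\runsto{x}r_1\runsto{x}\cdots\runsto{x}r_k$. By \cref{itm:separated infix reachable states} of \cref{def:separated repeating infix}, the support is preserved after $x$. Hence every intermediate $r_i$ lies in $B$: it is reached from $r_0\in B$ along $x$-steps, and $\booltrans(B,x)=B$ because the configurations after $u$ and $ux$ have the same support. Each $r_i$ therefore lies in some part $V_{j_i}$.

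By the last item of \cref{prop:run characteristics of SRI}, we have $j_0\ge j_1\ge\cdots\ge j_k$. So the index sequence is non-increasing and changes value at most $\ell-1\le |S|-1$ times. This splits $\rho$ into at most $|S|$ maximal blocks of consecutive steps staying inside a single part $V_j$, separated by at most $|S|-1$ ``descending'' single $x$-steps from some $V_j$ to some $V_{j'}$ with $j'<j$.

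Within a block that stays in $V_j$, I reuse the telescoping estimate from the proof of \cref{prop: Positive SRI no negative cycles on x}. For $r',r''\in V_j$ we have $\minweight(r'\runsto{x}r'')\ge \vec{c_u}(r'')-\vec{c_u}(r')+k_{j,x}$. Summing over the block from $r_a$ to $r_b$ gives weight at least $\vec{c_u}(r_b)-\vec{c_u}(r_a)+(b-a)k_{j,x}\ge \vec{c_u}(r_b)-\vec{c_u}(r_a)$, using positivity $k_{j,x}\ge 0$. A descending step costs at least $-\maxeff{x}$ trivially.

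Summing everything, the block contributions telescope only partially, because across a descending step the endpoint differences do not cancel. This is the main obstacle: the terms $\vec{c_u}(r_{\text{end}})-\vec{c_u}(r_{\text{start}})$ within a single $V_j$ could a priori be large negative. To handle it I would instead bound a descending step $r_{i}\runsto{x}r_{i+1}$ by $\minweight\ge \vec{c_u}(r_{i+1})-\vec{c_u}(r_i)+k_{j_{i+1},x}$. This inequality holds for any predecessor $r_i$ of $r_{i+1}$, by the same argument: otherwise $\vec{c_{ux}}(r_{i+1})<\vec{c_u}(r_{i+1})+k_{j_{i+1},x}$, since $\vec{c_u}(r_i)+\weight$ upper-bounds $\vec{c_{ux}}(r_{i+1})$. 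With that, every step satisfies $\weight(r_{i}\runsto{x}r_{i+1})\ge \vec{c_u}(r_{i+1})-\vec{c_u}(r_i)+k_{j_{i+1},x}\ge \vec{c_u}(r_{i+1})-\vec{c_u}(r_i)$. The full sum then telescopes to $\weight(\rho)\ge\vec{c_u}(r_k)-\vec{c_u}(r_0)$.

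It remains to bound $\vec{c_u}(r_0)-\vec{c_u}(r_k)$ when the run stays in one part, and to note that descending across parts only makes $r_k$ lower by at least $G$. The last point looks bad, so I would refine the estimate. Across a descending step, use only the trivial bound $-\maxeff{x}$. Within a block in $V_j$, use the telescoping bound. Then I need that the spread of $\vec{c_u}$ within a single $V_j$ is controlled by roughly $\maxeff{x}$. I would argue this by \cref{prop:run characteristics of SRI}: every state of $V_j$ has an $x$-predecessor in $V_j$, and the relative weights inside $V_j$ are preserved. A block that drops by more than $\maxeff{x}$ per step would violate $k_{j,x}\ge 0$. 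Thus each block's net telescoping deficit is at most one $\maxeff{x}$ (its first step), giving a total of at least $-|S|\maxeff{x}$ over at most $|S|$ blocks and descents. Making this ``per block deficit $\le\maxeff{x}$'' precise is the step I expect to require the most care.
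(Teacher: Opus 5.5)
Your setup is sound: every boundary state $r_i$ lies in $B$ because $\booltrans(B,x)=B$, and your per-step inequality $\weight(r_i\runsto{x}r_{i+1})\ge \vec{c_u}(r_{i+1})-\vec{c_u}(r_i)+k_{j_{i+1},x}$ is correct. But the step you flag as delicate is false, so the argument does not close.

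\cref{def:separated repeating infix} places no constraint on the spread of $\vec{c_u}$ inside a single part $V_j$; for instance, $\ell=1$ with $V_1=B$ is allowed, and then there is no gap condition at all. A block can also lose far more than one $\maxeff{x}$. Write $m=\maxeff{x}$ and take distinct $a,b,c\in V_j$ with $\vec{c_u}(a)=0$, $\vec{c_u}(b)=-m$ and $\vec{c_u}(c)=-2m$. Give them transitions $a\runsto{x}b$ and $b\runsto{x}c$ of weight $-m$, and a $0$-weight $x$-self-loop at each of $a,b,c$. Then $\vec{c_{ux}}=\vec{c_u}$ on these states, so $k_{j,x}=0$ is consistent and nothing you invoke is violated. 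Yet the block $a\runsto{x}b\runsto{x}c$ has weight $-2m$.

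So the claim that each block's net telescoping deficit is at most one $\maxeff{x}$ cannot be proved. The telescoped quantity $\vec{c_u}(r_b)-\vec{c_u}(r_a)$ is the wrong thing to control: it can be arbitrarily negative even for a short run. Your final count is also off. With $\ell$ blocks and $\ell-1$ descents, each charged one $\maxeff{x}$, you get $(2\ell-1)\maxeff{x}$, not $|S|\maxeff{x}$.

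The missing idea is the one the paper signals ("the run cannot contain negative cycles"): derive the corollary directly from \cref{prop: Positive SRI no negative cycles on x} by cycle removal. Suppose $r_i=r_{i'}$ with $i<i'$. Then the infix $r_i\runsto{x^{i'-i}}r_{i'}$ is a cycle on a power of $x$ from a state of $B$, so its weight is nonnegative. Cutting it out leaves a run $r_0\runsto{x^{k-(i'-i)}}r_k$ of weight at most $\weight(\rho)$.

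Iterate until the boundary states are pairwise distinct. The resulting run has at most $|S|-1$ $x$-steps, each of weight at least $-\maxeff{x}$. Hence $\weight(\rho)\ge -(|S|-1)\maxeff{x}\ge -|S|\maxeff{x}$.

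No block decomposition or telescoping is needed. If you want to keep the blocks, you must apply this same cycle-removal bound inside each block. That gives roughly $-(|V_j|-1)\maxeff{x}$ per block rather than $-\maxeff{x}$.
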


Our overall approach henceforth is the following: given an SRI, we try to make it nicer, where Stable SRI are the ``nicest'', followed by Positive SRI and then Negative SRI. This niceness manifests as the ability to shorten an infix of the SRI.
We first show that Stable SRIs are indeed at least as nice as Positive ones.
\begin{proposition}
    \label{prop:stable SRI is positive}
    Every Stable SRI is a Positive SRI.
\end{proposition}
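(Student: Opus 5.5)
We argue by contradiction: assume $uxyv$ is a Stable SRI, so $(\ghostTrans(s_0,u),x)$ is a stable cycle, but it is not Positive. Since the signs of $k_{j,x}$ and $k_{j,y}$ agree, this means there is some $j$ with $k_{j,x}<0$. Our aim is to build, from $V_j$, a cycle on some power of $x$ with negative weight. Such a cycle would start at a state in $B\subseteq \ghostTrans(s_0,u)$, contradicting \cref{def:stable cycle}: every cycle on $x^n$ from the stable cycle has weight at least the baseline cycle's weight, which is $0$.

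\textbf{Step 1: backward chain.} We construct a backward chain inside $V_j$. Fix any $s_0'\in V_j$. By the second item of \cref{prop:run characteristics of SRI}, there is $s_1'\in V_j$ with $s_1'\runsto{x}s_0'$. Choose $s_1'$ to attain the minimal run into $s_0'$ after $u$, i.e.\ $\vec{c_{ux}}(s_0')=\vec{c_u}(s_1')+\minweight(s_1'\runsto{x}s_0')$. This choice is legitimate: the minimal predecessor cannot lie in a higher part, by the gap argument in that proof, and by the third item it cannot lie in a lower part either. Iterating gives $s_{i+1}'\in V_j$ with
\[\minweight(s_{i+1}'\runsto{x}s_i')=\vec{c_{ux}}(s_i')-\vec{c_u}(s_{i+1}')=\vec{c_u}(s_i')-\vec{c_u}(s_{i+1}')+k_{j,x}.\]
We use here that $\vec{c_{ux}}=\vec{c_u}+k_{j,x}$ on $V_j$, together with the support equality in \cref{def:separated repeating infix}.

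\textbf{Step 2: extract a negative cycle.} By pigeonhole, within $|V_j|+1\le |S|+1$ steps some state repeats, say $s_a'=s_b'$ with $a<b$ and $b-a\le |S|$. Concatenating the minimal runs from $s_b'$ back to $s_a'$ yields a cycle $s_b'\runsto{x^{b-a}}s_b'$. Its weight telescopes exactly as in the proof of \cref{prop: Positive SRI no negative cycles on x}, but with equalities instead of inequalities, giving $(b-a)\,k_{j,x}<0$.

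\textbf{Step 3: conclude.} Since $s_b'\in B\subseteq\ghostTrans(s_0,u)$, the cycle is a negative cycle on $x^{b-a}$ from a state of the stable cycle. This contradicts stability, so every Stable SRI is Positive.

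The main obstacle is Step 1. We must ensure the minimizing predecessor stays in $V_j$, because otherwise the telescoping breaks. We will handle this by invoking the gap bound $G>2\maxeff{x}$ and \cref{prop:run characteristics of SRI}, as described above. As a fallback, telescoping with inequalities, $\minweight(r'\runsto{x}r'')\ge \vec{c_u}(r'')-\vec{c_u}(r')+k_{j,x}$, is not enough on its own, since it only gives a lower bound; the equality for the chosen minimal predecessor is what yields the negative weight.
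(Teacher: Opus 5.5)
Your proof is correct, but it reaches the contradiction by a different mechanism than the paper.

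\textbf{The paper's route.} The paper pumps $x$ about $2|S|\maxeff{x}$ times and asserts that $V_j$ keeps decreasing linearly. From this it obtains a run from $V_j$ to $V_j$ on $x^{2|S|\maxeff{x}}$ whose weight is below $-|S|\maxeff{x}$. It then extracts a negative cycle by the usual averaging/pigeonhole argument: a run on powers of $x$ with no negative cycle cannot drop that low.

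\textbf{Your route.} You apply $x$ only once at a time. For each state of $V_j$ you pick a minimising predecessor, and you correctly show it lies in $V_j$. Higher parts are excluded by $G>2\maxeff{x}$ together with $|k_{j,x}|\le\maxeff{x}$. Lower parts are excluded by the third item of \cref{prop:run characteristics of SRI}. Along this backward chain the one-step minimal weights telescope exactly. So the first repeated state gives a cycle on $x^{b-a}$, with $b-a\le|S|$, of weight exactly $(b-a)k_{j,x}<0$.

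\textbf{What your route buys.} You never iterate the configuration. The paper's argument has to control the behaviour of $V_j$ across roughly $2|S|\maxeff{x}$ iterations of $x$. That is more than the $|S|\bigM$ iterations covered by \cref{prop:run characteristics of SRI}. It also has to choose the endpoint so that the offset $\vec{c_u}(s)-\vec{c_u}(s')$ inside $V_j$ does not spoil the bound. Your argument uses only the relation between $\vec{c_u}$ and $\vec{c_{ux}}$ on $V_j$, and it additionally bounds the length of the negative cycle. You also make explicit the use of $\sign(k_{j,x})=\sign(k_{j,y})$ to pass from $k_{j,x}\ge 0$ for all $j$ to Positivity, which the paper leaves implicit.
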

\begin{proof}
    Consider a Stable SRI $uxyv$, so $(\ghostTrans(s_0,u),x)$ is a stable cycle. Let $1\le j\le \ell$, we first claim that $k_{j,x}\ge 0$. 
    Assume by way of contradiction that $k_{j,x}<0$, we claim that there exists $s\in V_j$, $k\in \bbN$ and a run $\rho:s\runsto{x^k}s$ such that $\weight(\rho)<0$. Indeed, by the gaps in \cref{itm:separated infix gaps V} of \cref{def:separated repeating infix}, for every $s\in V_j$ we have
    \[ \begin{split}
        &\xconf(\vec{c_{u}},x^{2|S|\maxeff{x}})(s)=\vec{c_u}(s)+2|S|\maxeff{x}k_{j,x} \\&<\vec{c_u}(s)-2|S|\maxeff{x}
    \end{split}
    \]
    That is, we can reach very low values of $s$. By \cref{prop:run characteristics of SRI} there exists $s'\in V_j$ and a run that attains this decrease, i.e., $\pi: s'\runsto{x^{2|S|\maxeff{x}}} s$ such that $\weight(\pi)<-2|S|\maxeff{x}$. In particular, by the pigeonhole principle within $\pi$ there is a negative cycle on some $x^n$ (otherwise the weight of $\pi$ can decrease to at most $|S|\maxeff{x}$). 
    
    This, however, is a contradiction to the assumption that $(\ghostTrans(s_0,u),x)$ is stable, as we then have a negative minimal reflexive state.
    We conclude that $k_{j,x}\ge 0$ for every $1\le j\le \ell$, so $uxyv$ is a Positive SRI.
\end{proof}
The most naive type of SRIs are Stable Degenerate ones. Here, the setting is very simple, and as expected -- they induce degenerate behaviours. Specifically, they manifest by $x$ causing the exact configuration to repeat, and we can therefore omit $x$ without any effect, as follows.
\begin{proposition}
    \label{prop:degenerate SRI x repeats config}
    Consider a Degenerate Stable SRI $w=uxyv$, then $\vec{c_{u}}=\vec{c_{ux}}$.
\end{proposition}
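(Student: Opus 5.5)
The plan is to show that $k_{j,x}=0$ for every $1\le j\le \ell$. By \cref{itm:separated infix linear k} of \cref{def:separated repeating infix} this gives $\vec{c_{ux}}(s)=\vec{c_u}(s)$ for every $s\in B$. Outside $B$ both configurations are $\infty$ by \cref{itm:separated infix reachable states}. So this yields $\vec{c_u}=\vec{c_{ux}}$.

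By \cref{prop:stable SRI is positive}, $uxyv$ is Positive, so $k_{j,x}\ge 0$ for all $j$. Assume by way of contradiction that $k_{j,x}>0$ for some $j$.

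\textbf{Step 1: a positive lower bound on cycles inside $V_j$.} For $r',r''\in V_j$ we have $\minweight(r'\runsto{x}r'')\ge \vec{c_u}(r'')-\vec{c_u}(r')+k_{j,x}$, exactly as in the proof of \cref{prop: Positive SRI no negative cycles on x}. Now take any cycle $r_0\runsto{x}r_1\runsto{x}\cdots\runsto{x}r_n=r_0$ with $n\le |S|\bigM$. By the last item of \cref{prop:run characteristics of SRI}, runs only go weakly downward in the partition, so every $r_i$ lies in the same part as $r_0$. If that part is $V_j$, telescoping gives weight at least $n\cdot k_{j,x}>0$. This bound covers cycles on $x^{\bigM}$ and $x^{2\bigM}$.

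\textbf{Step 2: a reflexive state in $V_j$.} By the second item of \cref{prop:run characteristics of SRI}, every state of $V_j$ has an $x$-predecessor in $V_j$. Walking backwards $|S|+1$ steps from any state of $V_j$ repeats a state. This gives some $r\in V_j$ and $m\le |S|$ with $r\runsto{x^m}r$. Since $m$ divides $|S|!$, it divides $\bigM$, so $r\runsto{x^{2\bigM}}r$ and $r\in\RefStates(S',x^{2\bigM})$, where $S'=\ghostTrans(s_0,u)\supseteq B$.

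\textbf{Step 3: apply degeneracy.} Degeneracy (\cref{def:degenerate stable cycle}) with $s=t=r$ and $k=1$ gives $(r,r)\in\GroundPairs(S',x)$. Hence there is $g\in\MinRefStates(S',x^{\bigM})$ with $r\runsto{x^{\bigM}}g\runsto{x^{\bigM}}r$. Stability gives $\minweight(g\runsto{x^{\bigM}}g)=0$.

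By the downward-only structure of \cref{prop:run characteristics of SRI}, $g$ lies in a part no higher than that of $r$, namely $V_j$. Likewise, $r$ lies in a part no higher than that of $g$. Hence $g\in V_j$.

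Then $g$ has a zero-weight cycle on $x^{\bigM}$ inside $V_j$, contradicting the bound $\bigM k_{j,x}>0$ from Step 1. Therefore $k_{j,x}=0$ for all $j$, which gives the claim.

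The main point to verify carefully is Step 1. The key is that the iteration bound $|S|\bigM$ in \cref{prop:run characteristics of SRI} is large enough to cover $x^{\bigM}$, so the whole cycle is confined to a single part and the telescoping applies to each of its single-$x$ steps.
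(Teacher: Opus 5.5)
Your proof is correct and follows the same route as the paper's. Positivity gives $k_{j,x}\ge 0$, and pigeonhole yields a reflexive state in $V_j$. Degeneracy then supplies a grounding state with a zero-weight cycle, which \cref{prop:run characteristics of SRI} places in $V_j$, contradicting $k_{j,x}>0$.

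You also make explicit three points the paper glosses over. First, $m\le |S|$ divides $\bigM$, so the state is reflexive on $x^{2\bigM}$. Second, the grounding state lies in $B$ because $\booltrans(B,x)=B$. Third, the telescoping bound is what turns $k_{j,x}>0$ into a strictly positive cycle weight.
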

\begin{proof}
    Since $uxyv$ is Degenerate, we have that $(\ghostTrans(s_0,u),x)$ is a degenerate stable cycle. 
    Let $1\le j\le \ell$, we claim that $k_{j,x}=0$. 
    Since $uxyv$ is positive, then $k_{j,x}\ge 0$. 
    By \cref{prop:run characteristics of SRI} and the pigeonhole principle, there exists $q\in V_{j}$ such that $q\runsto{x^k}q$ for some $k\le |S|$.
    By \cref{def:degenerate stable cycle}, this implies that there exists $r\in \MinRefStates(\ghostTrans(s_0,u),x)$ such that $q\runsto{x^{2\bigM}}r\runsto{x^{2\bigM}}q$ where $\minweight(r\runsto{x^{2\bigM}}r)=0$.
    By applying \cref{prop:run characteristics of SRI} twice (once from $q$ to $r$ and once from $r$ to $q$), we have that $r\in V_j$.

    Assume by way of contradiction that $k_{j,x}>0$, then by the gaps in \cref{itm:separated infix gaps V} of \cref{def:separated repeating infix}, we have that
    \[\minweight(r\runsto{2\bigM}r)= r+2\bigM k_{j,x}>0\]
    which is a contradiction.
    We conclude hat $k_{j,x}=0$ for every $j$. In particular, for every $q\in B$ we have $\vec{c_{ux}}(q)=\vec{c_{u}}(q)+0=\vec{c_u}(q)$, and we are done.
\end{proof}
In particular, we can ``safely'' remove $x$ from a Degenerate SSRI or GSRI:
\begin{corollary}
    \label{cor:degenerate SRI remove x maintain potential and charge}
    Consider a Degenerate Stable SRI $w=uxyv$, then $\pot(uxyv)=\pot(uyv)$ and $\charge(uxyv)=\charge(uyv)$.
\end{corollary}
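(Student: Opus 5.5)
The plan is to deduce the corollary directly from \cref{prop:degenerate SRI x repeats config}, which gives $\vec{c_u}=\vec{c_{ux}}$. Since both $\pot$ and $\charge$ are functions of the configuration reached by the word, the main task is to check that the configurations reached after $uxyv$ and after $uyv$ coincide. Then we check that the side conditions under which $\pot$ and $\charge$ are defined, namely a seamless baseline run and jump-freeness, transfer from $uxyv$ to $uyv$.

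\textbf{Step 1 (configurations).} By definition $\xconf(s_0,uxyv)=\xconf(\xconf(s_0,ux),yv)=\xconf(\vec{c_{ux}},yv)$. Substituting $\vec{c_{ux}}=\vec{c_u}$ gives $\xconf(\vec{c_u},yv)=\xconf(s_0,uyv)$. The same argument applies to every prefix $v'$ of $yv$, so the configurations after $ux\cdot z$ and after $u\cdot z$ agree for every prefix $z$ of $yv$. In particular the configurations at the end agree. Hence $\charge(uxyv)=-\min_q\vec{c}(q)=\charge(uyv)$. For $\pot$, dominance (\cref{def:dominant state}) is a property of the configuration alone, since the suffix set and the transition structure are fixed. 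Therefore $\domval$ agrees, and $\pot(uxyv)=\pot(uyv)$.

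\textbf{Step 2 (well-definedness).} We must verify that $uyv$ has a seamless baseline run. By \cref{itm:separated infix seamless baseline} of \cref{def:separated repeating infix}, $uxyv$ has one, call it $\rho$. Because the baseline and reachable-set components are deterministic, the baseline state after $u$ and after $ux$ is the same: the baseline run of a stable cycle is a $0$-weight self loop on the baseline state. We therefore splice $\rho[u]$ with $\rho[yv]$ to obtain a baseline run on $uyv$. Its seamlessness at each prefix $u z$ holds because the configuration after $uz$ equals the configuration after $uxz$, and $\rho$ is minimal to its state there with weight $0$. Jump-freeness is immediate, since $w$ is over $\boundedCac{L}$.

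\textbf{Main obstacle.} The delicate point is Step 2, showing that the baseline state after $ux$ equals the baseline state after $u$. I would argue this through the reflexive/stable cycle structure: $(\ghostTrans(s_0,u),x)$ is a stable cycle, so it is proper, and the baseline component of the states in $\ghostTrans(s_0,u)$ returns to itself after reading $x$. If that argument does not suffice, the fallback is to note that $\vec{c_u}=\vec{c_{ux}}$ assigns $0$ to the baseline state after $ux$. Since $\rho$ is seamless, this baseline state is in the support, and by the determinism of the baseline component it is the unique baseline state in $\ghostTrans(s_0,u)$.
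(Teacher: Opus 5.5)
Your proposal is correct and follows the same route as the paper, which derives the corollary directly from \cref{prop:degenerate SRI x repeats config}. There, $\vec{c_u}=\vec{c_{ux}}$ forces $\xconf(s_0,uxyv)=\xconf(s_0,uyv)$, and both $\pot$ and $\charge$ are functions of that final configuration. Your Step 2, which checks that $uyv$ still has a seamless, jump-free baseline run so that $\pot(uyv)$ and $\charge(uyv)$ are defined, covers a detail the paper leaves implicit. Your splicing argument there is sound; the equality of supports in \cref{itm:separated infix reachable states} of \cref{def:separated repeating infix} already shows that the baseline state after $u$ and after $ux$ coincide.
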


The next result is a central and new contribution of this work. 
Recall from \cref{lem: shift reflexive cycle to stable cycle} that we can turn a reflexive cycle into a stable one by shifting on the minimal-slope run. We now show that intuitively, the ``positive'' $V_j$'s do not survive such a shift, as they increase too much. Technically, the claim is slightly more delicate.

We show that for SRI, if $V_1,\ldots,V_{\ell'}$ all have $k_{j,x}\ge 0$ up to some ${\ell'}\le \ell$, then they do not have any finite-value runs on the cactus letter resulting from the shift. Concretely, this is later applied in two settings -- one where all the $k_{j,x}$ are non-negative, and one where $k_{1,x}\ge 0$, and we consider only $V_1$.
\begin{lemma}
    \label{lem: shift on negative run in SRI kills positive sets}
    Consider an SRI $uxyv$ where $k_{j,x}\ge 0$ for all $1\le j\le \ell'$ for some $\ell'\le \ell$.
    Assume there is a minimal-slope run $\rho:s\runsto{x^k}s$ with $s\in \ghostTrans(s_0,u)$, $k\le |S|$ and $\weight(\rho)<0$. 
    Then $(S'',x')=\baseshift{(\ghostTrans(s_0,u),x^k)}{\rho}$ is a stable cycle, and for every state $s'\in V_j$ with $j\le \ell'$ we have $\minweight(\baseshift{s'}{\rho}\runsto{\alpha_{S'',x'}}S)=\infty$.
\end{lemma}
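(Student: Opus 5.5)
The first claim follows directly from \cref{lem: shift reflexive cycle to stable cycle}. Since $(\ghostTrans(s_0,u),x)$ is a reflexive cycle and $\rho$ is a minimal-slope run on $x^k$ with $k\le |S|$, the shifted pair $(S'',x')$ is a stable cycle. I will also check that $\alpha_{S'',x'}$ is a legal letter. This uses the length requirement of an SRI: $|x^k|\le |S|\cdot\frac{1}{\bigM}L(\depth(x)+1)\le L(\depth(x)+1)$, so the letter lies in the effective alphabet. The real content is the second claim, which I prove by contradiction.

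Fix $s_1\in V_j$ with $j\le \ell'$, and let $t_1=\baseshift{s_1}{\rho}$. Suppose there is a finite-weight transition $t_1\runsto{\alpha_{S'',x'}}t_2$. By \cref{def:stabilisation}, $(t_1,t_2)\in\GroundPairs(S'',x')$. Hence there is $t_3\in\MinRefStates(S'',x'^{\bigM})$ with $t_1\runsto{x'^{\bigM}}t_3$. Since the cycle is stable, there is also a cycle $\tau:t_3\runsto{x'^{\bigM}}t_3$ with $\weight(\tau)=0$, the minimum over reflexive states. I then pull back to the original baseline using \cref{rmk:baseline shift is right absorbing}: shifting $x'^{\bigM}$ back gives $x^{k\bigM}$. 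This produces a state $t'_3$, a run $s_1\runsto{x^{k\bigM}}t'_3$, and a cycle $\tau':t'_3\runsto{x^{k\bigM}}t'_3$. By \cref{cor:baseline shift maintains gaps} applied to $\tau'$ and $\rho^{\bigM}$, and using $\weight(\baseshift{\rho^\bigM}{\rho^\bigM})=0$, we get $\weight(\tau')=\weight(\rho^{\bigM})=\bigM\weight(\rho)<0$.

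Next I locate $t'_3$. The third item of \cref{prop:run characteristics of SRI} says runs on $x^{k'}$ with $k'\le |S|\bigM$ never go up. The run $s_1\runsto{x^{k\bigM}}t'_3$ has $k\bigM\le |S|\bigM$, so $t'_3\in V_{j'}$ for some $j'\le j\le\ell'$, and therefore $k_{j',x}\ge 0$. There is a catch: $t'_3$ must be a reachable state in $B$, not merely a ghost state. It is, because $s_1\in B$ is reachable and the run from $s_1$ has finite weight.

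The contradiction then comes from a telescoping argument, as in \cref{prop: Positive SRI no negative cycles on x}. Write $\tau'$ as a chain of $x$-steps through states $r_0,\dots,r_{k\bigM}$. By the monotonicity above, all $r_i$ lie in $V_{j'}$. For $r',r''\in V_{j'}$ we have $\minweight(r'\runsto{x}r'')\ge \vec{c_u}(r'')-\vec{c_u}(r')+k_{j',x}$. Summing gives $\weight(\tau')\ge k\bigM k_{j',x}\ge0$, which contradicts $\weight(\tau')<0$. So no finite-weight transition from $t_1$ on $\alpha_{S'',x'}$ exists.

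I expect the main obstacle to be the bookkeeping of the pull-back. Two points need care. First, the shift of $\rho^{\bigM}$ must be the baseline of $(S'',x'^{\bigM})$, so that \cref{cor:baseline shift maintains gaps} applies to runs over the same word; this needs $\tau'$ and $\rho^{\bigM}$ to read exactly $x^{k\bigM}$ from states sharing the same baseline component. Second, the depth of $x^{k\bigM}$ must stay within the range $|S|\bigM$ of \cref{prop:run characteristics of SRI}. If that range is a problem, I would instead work with $x'$-steps, concatenating $k$ copies of $x$ at a time.
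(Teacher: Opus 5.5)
Your proposal is correct and follows essentially the same route as the paper. The paper also gets stability from \cref{lem: shift reflexive cycle to stable cycle}, and also uses the grounded pair to obtain a zero-weight cycle at a minimal reflexive state. It then pulls that cycle back via \cref{rmk:baseline shift is right absorbing} and \cref{cor:baseline shift maintains gaps} to a negative cycle on $x^{k\bigM}$ at a state of some $V_{j'}$ with $j'\le j$, using \cref{prop:run characteristics of SRI}, and derives the same contradiction. Your telescoping argument spells out the final step, which the paper only asserts from the gaps, and your value $\weight(\tau')=\bigM\weight(\rho)$ is the precise one.
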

\begin{proof}
Intuitively, we proceed in four steps as depicted in \cref{fig:shift on neg kills pos sets proof}. First, we baseline shift $x^k$ on $\rho$, resulting in a stable cycle $(S'',x')$ as per \cref{lem: shift reflexive cycle to stable cycle}. We then consider a state $s$ in some $V_j$ with $j<\ell'$, i.e., one of the ``low and non-negative'' $V_j$'s. We view the shifted version of $s$, and consider what happens when it reads $\alpha_{S'',x'}$. 
If there is a finite-weight run, this means that this state is grounded, and in particular can reach a $0$-cycle on $x'^\bigM$. However, by \cref{prop:run characteristics of SRI}, the only transitions are to \emph{lower} $V_{j}$, which are also non-negative by the premise. In addition, a $0$-cycle after the shift corresponds to a negative cycle before the shift. We therefore found some lower $V_{j'}$ with a negative cycle on $x^{k\bigM}$, which contradicts the non-negativity. 
    \begin{figure}[H]
        \centering
        \includegraphics[width=0.9\linewidth]{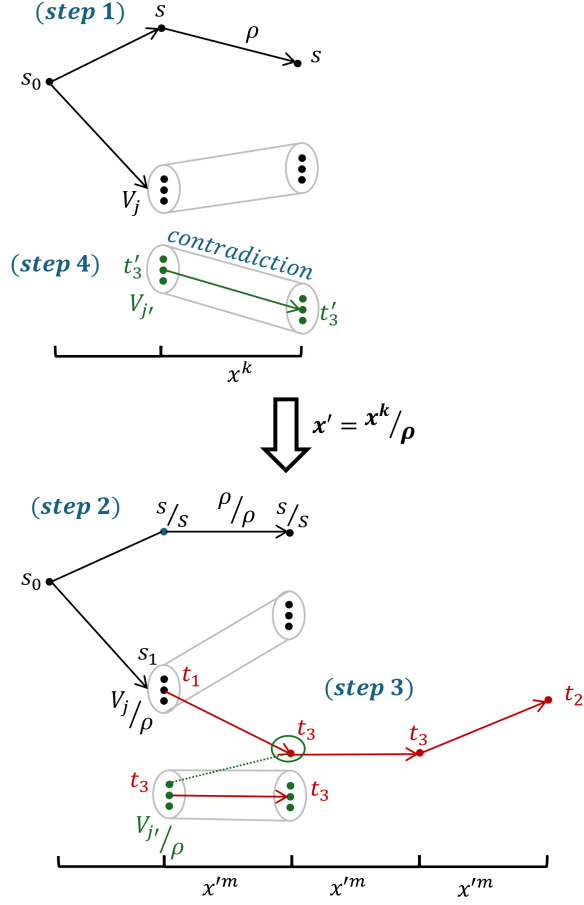}
        \caption{Proof of \cref{lem: shift on negative run in SRI kills positive sets}. In step 1 we consider the decreasing run $\rho$ over $x^k$. We baseline shift to it to obtain the figure on the right. In step 2 we consider some increasing $V_j$, and assume there is a finite-weight run from it (the red run). We then identify a state $t_3$ with a $0$-weight run. In step 3 we pull $t_3$ back to the left drawing ($t'_3$) so it becomes a negative run. In step 4 we reach a contradiction, since this negative run is lower than $V_j$.}
        \label{fig:shift on neg kills pos sets proof}
      
    \end{figure}
    
    We now turn to the precise details.
    Let $\rho:s\runsto{x^k}s$ with $s\in \ghostTrans(s_0,u)$ and $k\le |S|$ be a negative minimal-slope run.
    
    Note that $(\ghostTrans(s_0,u),x)$ is a reflexive cycle. Indeed, by \cref{def:separated repeating infix} we have $\booltrans(B,x)=B$ for $B=\booltrans(s_0,u)$, and by \cref{rmk:booltrans determines ghosttrans} the ghost-reachable states are determined by $B$.
    Let $(S'',x')=\baseshift{(\ghostTrans(s_0,u),x^k)}{\rho}$. By \cref{lem: shift reflexive cycle to stable cycle} $(S'',x')$ is a stable cycle. 
    
    Consider a state $s_1\in V_j$ with $j\le \ell'$, so that $k_{j,x}\ge 0$, let $t_1=\baseshift{s_1}{\rho}$ and assume by way of contradiction that $\minweight(t_1\runsto{\alpha_{S'',x}}t_2)<\infty$ for some $t_2\in S$. 
    Thus, by \cref{def:grounded pairs} we have $(t_1,t_2)\in \GroundPairs(S'',x')$, so there is some $t_3\in S''$ such that $t_1\runsto{x'^\bigM}t_3\runsto{x'^{\bigM}}t_3\runsto{x'^{\bigM}}t_2$, and $\minweight(t_3\runsto{x'^\bigM}t_3)=0$. Denote such a $0$-weight run $\tau:t_3\runsto{x'^\bigM}t_3$. 
    Intuitively, we reach a contradiction by baseline shifting $S''$ back to $S'$, so that $\tau$ becomes a negative run from $V_{j'}$ for some $j'<j$, contradicting $k_{j',x}\ge 0$.
    
    More precisely, denote by $\rhobase:s_b\runsto{x}s_b$ the baseline run of $(\ghostTrans(s_0,u),x)$, then $\rhobase^k$ is the baseline run of $(\ghostTrans(s_0,u),x^k)$. By \cref{rmk:baseline shift is right absorbing} we have $\baseshift{(S'',x')}{\rhobase^k}=(\ghostTrans(s_0,u),x^k)$. 
    Note that $\weight(\baseshift{\tau}{\rho^\bigM})=0$ since $\weight(\tau)=\weight(\rho^\bigM)=0$, and trivially $\weight(\baseshift{\rho^\bigM}{\rho^\bigM})=0$. 
    Since $\weight(\rho)=\weight(\baseshift{\rho}{\rhobase^k})$ (since $\rhobase^k$ is constantly $0$), then by \cref{cor:baseline shift maintains gaps} we also have
    \[\weight(\baseshift{\tau}{\rhobase^{k\bigM}})=\weight(\baseshift{\rho}{\rhobase^k})=\weight(\rho)<0\]
    But $\baseshift{\tau}{\rhobase^{k\bigM}}:s_3\runsto{x^k\bigM}s_3$ for some $s_3$ such that $s_1\runsto{x^{k\bigM}}s_3$ and therefore by \cref{prop:run characteristics of SRI} we have $s_3\in V_{j'}$ with $j'\le j\le \ell'$.

    But due to the gaps in \cref{itm:separated infix gaps V} of \cref{def:separated repeating infix} and since $k\le |S|$, we have $\minweight(s_3\runsto{x^{\bigM k}}s_3)\ge k_{j',x}\cdot k\cdot \bigM\ge 0$, since $k_{j',x}\ge 0$. This is a contradiction to $\weight(\baseshift{\tau}{\rhobase^{k\bigM}})<0$.

    We conclude that there is no transition of finite weight on $\alpha_{S'',x'}$ from any state in $V_1,\ldots, V_{\ell'}$.
\end{proof}

We proceed to study Stable SRI. Here, we show that we can ``bud'' a cactus letter from $x$, which allows us to shorten $x$ to a single letter, as well as entirely remove $y$, while not decreasing the potential and not increasing the charge. As usual, there is a caveat that there might be a type-0 witness. Specifically, we have the following.
\begin{lemma}[Cactus Budding on Stable SRI]
\label{lem:nondegenerate SRI budding increases potential decreases charge} 
    Consider a Non-degenerate Stable SRI $w=uxyv$ and let $w'=u\alpha_{\ghostTrans(s_0,u),x}v$, then the following hold:
    \begin{enumerate}
        \item For every $t\in S$ it holds that $\minweight(s_0\runsto{w} t)\le \minweight(s_0\runsto{w'} t)$.
        \item $\charge(w)\ge \charge(w')$.
        \item Either there is a type-0 witness, or $\pot(w)\le \pot(w')$.
    \end{enumerate}
\end{lemma}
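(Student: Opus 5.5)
The plan is to compare configurations after $ux$ (resp. $uxy$) with those after $u\alpha_{B',x}$, where $B'=\ghostTrans(s_0,u)$. The comparison goes through an unfolding, and the suffix $v$ is handled by monotonicity.

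\textbf{Step 1: compare the infix $xy$ with the cactus letter.} By \cref{prop:degenerate SRI x repeats config}-style reasoning and \cref{prop:run characteristics of SRI}, each $V_j$ is shifted by $k_{j,x}\ge 0$ on $x$; by \cref{prop:stable SRI is positive} the SRI is Positive. I claim that for every $t$,
\[
\xconf(s_0,uxy)(t)\le \xconf(s_0,u\alpha_{B',x})(t).
\]
A finite run $s_0\runsto{u}s\runsto{\alpha_{B',x}}t$ corresponds to a grounded pair $(s,t)$ with grounding state $g$ and weight $\minweight(s\runsto{x^{\bigM}}g\runsto{x^{\bigM}}t)$. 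By \cref{prop:run characteristics of SRI}, applied to $x^{k}$ with $k\le|S|\bigM$, the states $s,g,t$ lie in parts $V_{j_s}\supseteq$-ordered with $j_s\ge j_g\ge j_t$. Since $g$ has a $0$-cycle, its part has $k_{j_g,x}=0$ (the telescoping argument of \cref{prop: Positive SRI no negative cycles on x}). The gaps $G\ge 4|S|\bigM\maxeff{xy}$ then force $j_s=j_g=j_t$ whenever the run is competitive, so the value on $t$ is $\vec{c_u}(s)+\minweight(s\runsto{x^{2\bigM}}t)\ge \vec{c_u}(t)+2\bigM k_{j,x}=\vec{c_u}(t)$.

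For $t$ in a part with $k_{j,x}=0$, we have $k_{j,y}=0$ too, by the sign condition. Hence $\vec{c_{uxy}}(t)=\vec{c_u}(t)$ on such parts, and $\vec{c_{uxy}}(t)\le$ the $\alpha$-value. For parts with $k_{j,x}>0$ I expect no grounded transition to land there; this is the same argument as in \cref{lem: shift on negative run in SRI kills positive sets}. So those states are $\infty$ after $\alpha$, and the inequality is trivial. This comparison of parts is the main obstacle: a transition on $\alpha$ may go from a higher part into a zero part with lower weight than $\vec{c_u}$. I would rule this out by noting that any such run is also a run on $x^{2\bigM}$. Comparing with $\vec{c_{ux^{2\bigM}}}$, which equals $\vec{c_u}$ on zero parts (iterate the SRI shifts, using \cref{prop:run characteristics of SRI} for $x^k$), shows it cannot go below $\vec{c_u}(t)$.

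\textbf{Step 2: Items 1 and 2.} Take the configurations $\vec{d}=\xconf(s_0,uxy)$ and $\vec{d'}=\xconf(s_0,u\alpha_{B',x})$, with $\vec{d}\le\vec{d'}$ pointwise. Reading $v$ is monotone in the initial configuration, since min-plus matrix products preserve the order. This gives Item 1 directly for $w$ versus $w'$. Item 2 follows since $\charge$ is minus the minimum entry, and both words keep the baseline run; the baseline lies in a zero part, so it survives $\alpha$ with weight $0$ by \cref{rmk:baseline transitions with cactus are zero}.

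\textbf{Step 3: Item 3.} Unfold $w'$ to $w''=ux^{2\bigM M_0}v$ with $F$ large. By \cref{lem:unfolding maintains potential}, either there is a type-0 witness or $\pot(w'')=\pot(w')$. It remains to show $\pot(w)\le\pot(w'')$. I would argue that $\xconf(s_0,ux^{2\bigM M_0})$ and $\vec{c_{uxy}}$ have the same support $B$. They agree on zero parts, and on positive parts they differ by a shift that is nonnegative for the unfolding: $2\bigM M_0 k_{j,x}\ge k_{j,x}+k_{j,y}$ when $M_0$ is large. Hence $\vec{c_{uxy}}\le \xconf(s_0,ux^{2\bigM M_0})$ with equal supports and a common zero baseline. \cref{lem:potential and charge are monotone} then yields $\pot(uxyv)\le\pot(ux^{2\bigM M_0}v)=\pot(w')$. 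The delicate point is that the support after the unfolding equals $B$, not merely its superset of ghost states; this holds because $\supp$ is invariant under $x$ in the SRI.
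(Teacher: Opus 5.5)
Your route for Item 3 is the paper's: unfold to $ux^{2\bigM M_0}v$, show $\vec{c_{uxy}}\le\vec{d}:=\xconf(s_0,ux^{2\bigM M_0})$ with common support $B$, apply \cref{lem:potential and charge are monotone} with suffix $v$, and finish with \cref{lem:unfolding maintains potential}. The gap is in the comparison $\vec{c_{uxy}}\le\vec{d}$, which is where the real work of the lemma lies.

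You assert that on a positive part $V_j$ the configuration $\vec{d}$ is the translate $\vec{c_u}+2\bigM M_0k_{j,x}$. The SRI gap $G$ only protects this translate structure while the accumulated drift $ik_{j,x}$ stays below roughly $G-\maxeff{x}$. But $M_0$ is dictated by $F$ and is arbitrarily large; the very largeness you invoke to get $2\bigM M_0k_{j,x}\ge k_{j,x}+k_{j,y}$ is what destroys the structure. Once a rising part $V_j$ climbs up to a slower (e.g.\ zero) part $V_{j'}$ above it, runs from $V_{j'}$ down into $V_j$ become minimal, although they were irrelevant after $u$. Then $\vec{d}$ on $V_j$ can lie far below $\vec{c_u}+2\bigM M_0k_{j,x}$, and you must still show $\vec{d}(r')\ge\vec{c_{uxy}}(r')$ in that regime.

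The paper does this by a case split on the minimal run to $r'\in V_j$. If the run stays in $V_j$, the translate formula holds. Otherwise it starts in a higher part. A Positive SRI admits no negative cycles on any $x^k$ (\cref{prop: Positive SRI no negative cycles on x}, \cref{cor:increasing infix no very negative runs on x k}), so the run's weight is at least $\vec{c_u}(p_0)-|S|\maxeff{x}$. By the gap, this exceeds $\vec{c_u}(r')+\maxeff{xy}\ge\vec{c_{uxy}}(r')$. A cheaper repair also works: by positivity and monotonicity of min-plus, $\xconf(s_0,ux^i)$ is nondecreasing in $i$. The translate does hold for the first $\lceil\maxeff{xy}/k_{j,x}\rceil$ iterations, and this already pushes $V_j$ above $\vec{c_u}+\maxeff{xy}$.

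There is a second, repairable error in Step 1. It is not true that no grounded $\alpha$-transition lands in a positive part. A grounding state $g$ in a higher zero part may have a run $g\runsto{x^{\bigM}}t$ down into a positive part. Then $(g,t)$ is grounded, and $t$ is finite after $u\alpha_{B',x}$. Also, \cref{lem: shift on negative run in SRI kills positive sets} concerns a cycle shifted onto a negative-slope baseline and does not apply here. The inequality nevertheless survives. Such a transition originates in a strictly higher part, so its value is at least $\vec{c_u}(t)+G-2\bigM\maxeff{x}\ge\vec{c_{uxy}}(t)$.

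Step 1 is in any case unnecessary once Step 3 is repaired. Items 1 and 2 follow from the chain $\vec{c_{uxy}}\le\vec{d}\le\xconf(s_0,u\alpha_{B',x})$, the second inequality being Item 1 of \cref{lem:unfolding maintains potential}, followed by monotonicity of reading $v$. This is exactly how the paper obtains them.
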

\begin{proof}
Conceptually, the proof is an adaptation of the analogue in~\cite{almagor2026determinization}. Technically, however, the proof is supported by the careful choice of bounds in \cref{def:separated repeating infix}, which are different from those in~\cite{almagor2026determinization}.

Denote $S'=\ghostTrans(s_0,u)$, we therefore consider replacing the infix $xy$ with $\alpha_{S',x}$.
\paragraph*{Step 1: From $y$ to $x^*$}
The first step is to replace $y$ with many iterations of $x$. Intuitively, we imagine that $x$ is already replaced with $\alpha_{S',x}$, and we unfold it.

Consider the word $\unfold(u,\alpha_{S',x},v\wr F)=ux^{2\bigM M_0}v$ for some $F>2\maxeff{u\alpha_{S',x}yv}$. 
By \cref{rmk:increasing repetitions in unfolding} we can further assume that $M_0> \maxeff{xy}$.

We show the following properties:
\begin{itemize}
    \item For every $t\in S$ it holds that $\minweight(s_0\runsto{uxyv} t)\le \\\minweight(s_0\runsto{ux^{2\bigM M_0}v} t)$.
    \item $\charge(uxyv)\ge \charge(ux^{2\bigM M_0}v)$.
    \item $\pot(uxyv)\le \pot(ux^{2\bigM M_0}v)$.
\end{itemize}

We start with the first property.
The intuitive idea is as follows. We repeat the infix $x$ many times. Since $uxyv$ is in particular a Positive SRI (by \cref{prop:stable SRI is positive}), for each $V_j$ in the partition, this either causes all states to increase, or stay in place. However, this is only true as long as the $V_j$ sub-configurations stay far from one another. At some point, it could be that e.g., $V_2$ rises a lot, while $V_3$ stays in place, and when the weights ``cross'', the states in $V_3$ now become minimal. 
Now, while state in $V_3$ grow when reading $x$, it may still be the case that $V_3$ generates \emph{negative} runs to states in $V_2$. Before pumping $x$, this had no effect. Now, however, it generates lower runs.

Nonetheless, these negative runs are ``local'', since $V_3$ cannot generate any negative cycle in $V_2$ (otherwise $V_2$ would have a negative cycle itself). Since the gap we start with is huge, we are guaranteed that even after these local decreasing runs, the overall configuration increases in comparison to the first $x$.
Moreover, even when removing $y$, the negative effect this might create is not large enough to overcome the gaps we create by pumping $x$, so we end up with a superior configuration. This idea is depicted in~\cref{fig:inc infix cactus unfolding} 

An important point to stress is that the pumped runs of the $V_j$ sets might indeed cross each other, but if this happens, it must be so far above the original configuration, that it maintains the overall increase.

\begin{figure}[H]
    \centering
    \includegraphics[width=0.8\linewidth]{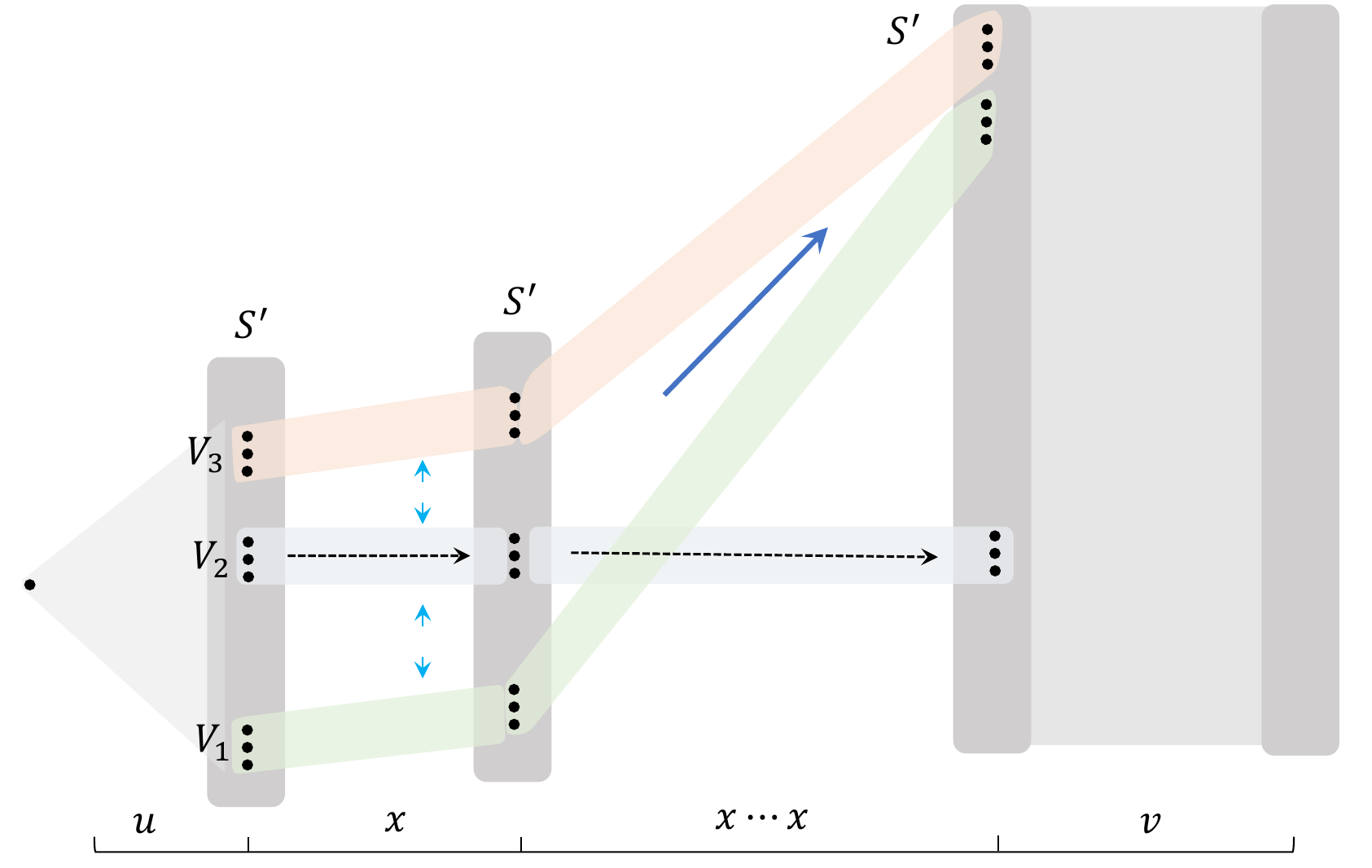}
    \caption{Replacing $y$ with $x^{2\bigM M_0}$. Visually, the increase seems ``steeper'', but it is the same slope as $x$, only much longer. Also notice that runs from $V_1$ eventually cross those of $V_2$ (and so some runs might be overtaken by $V_2$), but this happens way above the original $V_1$ states.}
    \label{fig:inc infix cactus unfolding}
\end{figure}

We now turn this wild hand-waving to a precise argument. 
We start by analysing runs on the words $uxy$ and $ux^{2\bigM M_0}$. Specifically, we show that 
$\xconf(s_0,uxy)\le \xconf(s_0,ux^{2\bigM M_0})$.
Using the notations of \cref{def:separated repeating infix}, 
since $\supp(\vec{c_u})=\supp(\vec{c_{ux}})=\supp(\vec{c_{uxy}})=B$, then by induction we also have that $B= \supp(\vec{c_{ux^{k}}})$ for any $k\in \bbN$, where $c_{ux^k}=\xconf(s_0,ux^k)$.

Consider some $r'\in B$, and assume $r'\in V_j$ then by \cref{itm:separated infix linear k} of \cref{def:separated repeating infix} we have that $k_{j,x},k_{j,y}$ satisfy  $k_{j,x}=0$ if and only if $k_{j,y}=0$ if and only if $k_{j,x}+k_{j,y}=0$. 

Let 
\[\rho:s_0\runsto{u}p_0\runsto{x}p_1\runsto{x}p_2\cdots p_{2\bigM M_0-1} \runsto{x} p_{2\bigM M_0}=r'\]
be a minimal-weight run. By \cref{prop:run characteristics of SRI}, for all $0\le i\le 2\bigM M_0$ we have that  $p_i\in V_{j'}$ for $j'\ge j$ 
(otherwise there is a transition from lower $V_{j''}$ to a higher one, which is a contradiction).
Denote $\vec{d_i}=\xconf(s_0,ux^i)$. We aim to show that restricted to $V_j$, the configuration $\vec{d_{2\bigM M_0}}$ is superior to $\vec{c_{uxy}}$.
Note that for every $0\le i<2\bigM M_0$ we have $\vec{d_{i-1}}\le \vec{d_i}$.
Indeed, by \cref{def:separated repeating infix} each $V_j$ can only (weakly) increase. This, however, is not enough to prove our claim, since once $i$ gets high enough, the gaps are no longer maintained and the $V_j$ sets may ``interfere'' with each other, and prevent certain states from strictly increasing. Then, removing $y$ may decrease things, which can be problematic.

We consider two cases.
\begin{itemize}
    \item If $p_i\in V_j$ for all $0\le i\le 2\bigM M_0$. In particular, the minimal run leading to each $p_i$ stems from $V_j$, when starting at $\vec{d_{i-1}}$. 
    We claim (by induction) that $\vec{d_i}(s)=\vec{c_u}(s)+ik_{j,x}$ for all $i$.
    
    The base case $i=0$ is trivial, since $\vec{d_0}=\vec{c_u}$. Assume correctness for $i$, we prove for $i+1$.
    By the induction hypothesis, we have that $\vec{d_i}(s)=\vec{c_u}(s)+ik_{j,x}$.
    By the observation above, we have that
    $\vec{d_{i+1}}(s)=\minweight_{\vec{d_i}}(V_j\runsto{x} s)$. 
    Since this expression depends only on states in $V_j$, then by the above we can restrict attention to $V_j$, and therefore write
    \[\vec{d_{i+1}}(s)=\minweight_{\vec{c_u}}(V_j\runsto{x}s)+ik_{j,x}=\vec{c_{ux}}(s)+ik_{j,x}=\vec{c_u}(s)+(i+1)k_{j,x}\]
    which completes the induction.

    Now, if $k_{i,j}=0$ then for every $s'\in V_j$ we have $\vec{d_{ 2\bigM M_0}}(s')=\vec{c_u}(s')=\vec{c_{ux}}(s')=\vec{c_{uxy}}(s')$ (since by \cref{def:separated repeating infix} we also have $k_{j,y}=0$). So the ``sub-configurations'' of $uxy$ and $ux^{2\bigM M_0}$ restricted to $V_j$ are equal. In particular, $\minweight(s_0\runsto{uxy} r')=\minweight(s_0\runsto{ux^{2\bigM M_0}} r')$.
    
    If $k_{j,x}>0$, then by our choice that $M_0> \maxeff{xy}$ and the inductive claim, we have 
    \[
    \vec{d_{2\bigM M_0}}(r')> \vec{c_u}(r')+2\bigM \maxeff{xy} k_{j,x}\ge \vec{c_u}(r')+\maxeff{xy}
    \]
    However, by \cref{prop:run characteristics of SRI} we have
    \[ \begin{split}
    &\vec{c_{uxy}}(r')\le \\&\vec{c_u}(r')+k_{j,x}+k_{k,y}\le \vec{c_u}(r')+\maxeff{x}+\maxeff{y}\\& = \vec{c_u}(r')+\maxeff{xy}
    \end{split}\]
    and we conclude that restricted to $V_j$, the sub-configuration of $ux^{2\bigM M_0}$ is superior to that of $uxy$. In particular, $\minweight(s_0\runsto{uxy} r')<\minweight(s_0\runsto{ux^{2\bigM M_0}} r')$.

    \item  The second case is when there exists $0\le i\le 2\bigM M_0$ such that $p_i\in V_{j'}$ for some $j'>j$. 
    Intuitively, in this case the run on $x^{2\bigM M_0}$ goes so high that it mingles with e.g., $V_{j+1}$. From there, it must remain much higher than any weight that $xy$ can attain starting from $V_j$ (and in particular the weight of reaching $r'$)
    
    By \cref{cor:increasing infix no very negative runs on x k}, for every $s\in V_{j''}$ with $j''>j$ and every $k\in \bbN$ we have $\minweight_{\vec{c_u}}(s\runsto{x^k} B)\ge \vec{c_u}(s)-\maxeff{x}|S|$. 
    As observed above, for every $i'$ we have $\vec{d_{i'-1}}\le \vec{d_{i'}}$, and therefore we in particular have 
    \[\minweight_{\vec{d_i}}(p_i\runsto{x^{2\bigM M_0-i}} B)\ge \vec{c_u}(s)-\maxeff{x}|S|\]
    but by \cref{itm:separated infix gaps V} of \cref{def:separated repeating infix} (for any $L\in \{\simpL,\generL\}$), for the maximal state $s'\in V_j$ we have
    \[
        \begin{split}
        &\vec{c_u}(s)-\maxeff{x}|S| > \\
        &\vec{c_u}(s')+4|S|\bigM\maxeff{xy} -\maxeff{x}|S| \gg \\
        &\vec{c_u}(s')+\maxeff{xy}\ge \\
        &\minweight_{\vec{c_u}}(V_j\runsto{xy} B)
        \end{split}
    \]
    Since $p_0\runsto{xy}r'$ and $p_0\in V_j$, it follows that $\minweight(s_0\runsto{uxy} r')< \minweight(s_0\runsto{ux^{2\bigM M_0}} r')$.

    We remark that this latter case cannot occur if $k_{j,x}=0$, but we do not use this fact.
\end{itemize}

Since the above works for every $j$ and every $r'$, we conclude that $\vec{c_{uxy}}\le \vec{d_{2\bigM M_0}}$. 
Therefore, we immediately have that
$\xconf(\vec{c_{uxy}},v)\le \xconf(\vec{d_{2\bigM M_0}},v)$.  
We can conclude the first property we want to prove, namely:
for every $t\in S$ it holds that $\minweight(s_0\runsto{uxyv} t)\le \minweight(s_0\runsto{ux^{2\bigM M_0}v} t)$.

Obtaining the second and third items is now easy. 
Observe that since both $x$ and $y$ are cycles on $B$, i.e., $\booltrans(B,x)=\booltrans(B,y)=B$, then 
\[\supp(\xconf(\vec{c_{uxy}},v))= \supp(\xconf(\vec{d_{2\bigM M_0}},v))\]
This brings us under the conditions of \cref{lem:potential and charge are monotone} (two support-equivalent configurations with a superiority relation and the same baseline). Applying the lemma with the suffix $v$, we have the required inequalities:
\[\charge(uxyv)=\charge(\xconf(\vec{c_{uxy}},v))\ge \charge(\xconf(\vec{d_{2\bigM M_0}},v))=\charge(ux^{2\bigM M_0}v)\]
\[\pot(uxyv)=\pot(\xconf(\vec{c_{uxy}},v))\le \pot(\xconf(\vec{d_{2\bigM M_0}},v))=\pot(ux^{2\bigM M_0}v)\]

\paragraph*{Step 2: from $x^*$ to $\alpha_{S',x}$}
Using the previous step, we now need to prove the following.
\begin{itemize}
    \item For every $t\in S$ it holds that $\minweight(s_0\runsto{ux^{2\bigM M_0}v}t)\le \\ \minweight(s_0\runsto{u\alpha_{S',x}v} t)$.
    \item $\charge(ux^{2\bigM M_0}v)\ge \charge(u\alpha_{S',x} v)$.
    \item $\pot(ux^{2\bigM M_0}v)\le \pot(u\alpha_{S',x} v)$.
\end{itemize}
Intuitively, we already repeat $x$ enough times so that all its behaviours are ``stabilised'', and can be replaced by $\alpha_{S',x}$.

Technically, all the work is already done by \cref{lem:unfolding maintains potential}, as follows.
Recall that $M_0$ is chosen so that $ux^{2\bigM M_0}v$ is an unfolding $\unfold(u,\alpha_{S',x},v \wr F)$ with $F$ large enough. 
We are therefore within the conditions of \cref{lem:unfolding maintains potential}. We then have from Item 1 therein that 
$\xconf(s_0,ux^{2\bigM M_0}v)\le \xconf(s_0,u\alpha_{S',x}v)$ (by applying the lemma with the suffix $v$). This in means that for every $t\in S$ we have 
\[
\begin{split}
&\minweight(s_0\runsto{ux^{2\bigM M_0}v} t)=\xconf(s_0,ux^{2\bigM M_0}v)(t)\le\\ 
&\xconf(s_0,u\alpha_{S',x}v)=\minweight(s_0\runsto{u\alpha_{S',x}v} t)
\end{split}
\]
concluding Item 1 of our proof.
This also readily implies that $\charge(ux^{2\bigM M_0}v)\ge \charge(u\alpha_{S',x} v)$, since the baseline remains the same, but the minimal run increases, concluding Item 2 of our proof.

Finally, still by \cref{lem:unfolding maintains potential} we have that either $\augA_\infty^\infty$ has a type-0 witness, or the potential inequality (Item 2 of our proof) follows immediately from Item 2 therein. 
\end{proof}

\subsection{SSRI -- A Toolbox}
\label{sec:SSRI toolbox}
We now present some SSRI-specific tools.
We first treat Negative SSRI.
As it turns out, these imply the existence of a type-1 witness. To show this we make heavy use of baseline shifts.
\begin{lemma}
\label{lem:negative SSRI to type 1 witness}
    If there exists a Negative SSRI, then there exists a type-1 witness.
\end{lemma}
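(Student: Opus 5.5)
The plan follows the four-step outline of the abstract sketch and produces a type-1 witness $(w_1,\alpha_{S'',x'},z)$. Fix a Negative SSRI $w=uxyv$ and let $\vec{c_u},\vec{c_{ux}}$ be as in \cref{def:separated repeating infix}.

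\textbf{Step 1: the maximal dominant state is stable under $x$.} Every $V_j$ is translated rigidly by $k_{j,x}$, and the gaps $G$ dwarf $|k_{j,x}|\le\maxeff{x}$ (\cref{prop:run characteristics of SRI}). Hence $x$ preserves the order of states in $B$, and suffix-separation is unaffected. So if $t$ is maximal dominant in $\vec{c_u}$ with separating suffix $z\in(\simpCacRebJump)^*\cdot\genCacRebCac$, then $t$ is also maximal dominant in $\vec{c_{ux}}$.

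Let $t\in V_{j_t}$. The SSRI condition $\pot(u)\le\pot(ux)$ forces $k_{j_t,x}\ge 0$, since the baseline also lies in some $V_j$ and is normalised to $0$. I expect to use this sign constraint when the recovered state turns out to share $t$'s part.

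\textbf{Step 2: flatten the prefix.} Replace $u$ by $u_0=\flatten(u\wr F)$ with $F$ large. By \cref{cor:flatttening maintains potential}, either a type-0 witness exists, which suffices, or the potential and dominant states are preserved. By \cref{lem:flattening essence}, ghost states of $u$ become genuinely reachable with huge weight, which we need in order to pin down $w_1$ and to shift onto a run starting at a ghost state.

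\textbf{Step 3: shift onto a steepest cycle.} Since some $k_{j,x}<0$, the cycle $(\ghostTrans(s_0,u),x)$ has negative slope. By \cref{prop: reflexive cycle negative slope short} pick a minimal-slope run $\rho:s\runsto{x^k}s$ with $k\le|S|$. By \cref{lem: shift reflexive cycle to stable cycle}, $(S'',x')=\baseshift{(\ghostTrans(s_0,u),x^k)}{\rho}$ is stable. The bound $|x|\le\frac1\bigM\simpL(\depth(x)+1)$ ensures $\alpha_{S'',x'}\in\simpCacRebCac$.

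Set $w_1=\baseshift{u_0}{\pi}$, where $\pi$ is a run on $u_0$ ending at $s$, so that the two pieces glue. To keep $w_1\in(\simpCac)^*$ I would first try to choose $\pi$ so that $w_1$ uses only jump-free letters; otherwise I would fall back on the type-1 allowance of rebase letters.

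By \cref{prop: dominance invariant to baseline shifts}, $t'=\baseshift{t}{\pi}$ is dominant in the shifted configuration, witnessed by $z'=\jl\cdot z$.

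\textbf{Step 4: verify the witness conditions.} Reachable states are unchanged by $x$ and the shift, so $t'$ is reached after $w_1x'^{2\bigM}$ and reads $z'$ with finite weight. Conversely, suppose a finite run $w_1\runsto{}s_1'\runsto{\alpha_{S'',x'}}s_2'\runsto{z'}$ exists. Then $(s_1',s_2')$ is grounded, with a $0$-cycle $\tau$ on $x'^\bigM$. Pulling $\tau$ back as in the proof of \cref{lem: shift on negative run in SRI kills positive sets} gives a cycle of slope $\slope(\rho)<0$. By \cref{prop:run characteristics of SRI} this places $s_2'$ in a part $V_j$ with $k_{j,x}<0$. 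Three cases remain:
\begin{itemize}
\item If $V_j$ is below $t$'s part, dominance of $t$ forbids reading $z$.
\item If $V_j$ is $t$'s part, this contradicts $k_{j_t,x}\ge 0$.
\item If $V_j$ is above $t$'s part, the state would be a higher dominant state, contradicting maximality.
\end{itemize}

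The main obstacle is this last trichotomy. It requires showing that the configuration reached after $w_1\alpha_{S'',x'}$ keeps $s_2'$ comparable to $t'$ as it was in the original realm. That in turn needs careful bookkeeping of baselines through the shifts and the jump letter, together with the gap $G\ge 4|S|\bigM\maxeff{xy}$, which must absorb $k\le|S|$ iterations.
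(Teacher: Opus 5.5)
Your outline follows the paper's proof essentially step for step. Both use order preservation under $x$ to get $k_{j_t,x}\ge 0$, flatten with the type-0 escape, shift onto a minimal-slope cycle on $x^k$, glue the prefix via a run into its start point, and end with the same three-way split.

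Three smaller points first.
\begin{itemize}
\item The rebase fallback for $w_1$ is neither needed nor allowed. A type-1 witness still requires $w_1\in(\simpCac)^*$. But $w_1$ is the shift of a flattened word in $(\Gamma_0^0)^*$, and baseline shifts send $\Delta$-letters to $\Delta$-letters, so this holds automatically.
\item Step 4 needs $t'$ to be \emph{maximal} dominant in $\xconf(s_0,w_1x')$, not merely dominant after $w_1$. The paper argues this separately. A flattened ghost state cannot be maximal dominant after $u_0x$, since it would then be dominant after $u_0$, contradicting $\pot(u_0)=\pot(u)$. Also, $k\,|k_{j,x}|\le|S|\maxeff{x}$ is absorbed by the gaps.
\item $\xconf(s_0,w_1x')$ is where $s_2'$ should be compared with $t'$. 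There is nothing to track through $\alpha_{S'',x'}$.
\end{itemize}

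The genuine gap is the case analysis itself.

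\emph{Middle case.} Pulling back $\tau$ gives a negative cycle at the grounding state $r$, not at $s_2'$. \cref{prop:run characteristics of SRI} only says that the parts of $s_1'$, $r$, $s_2'$ weakly decrease along the run. Moreover, $r$ may be one of the flattened ghost states outside $B$, which lie in no $V_j$. So $s_2'$ may well sit in a part with $k_{j,x}\ge 0$, and your middle case gives no contradiction. The paper instead places $r$ itself in $t$'s part, but that requires $s_1'$ to be in that part rather than merely above it.

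\emph{Upper case.} This case does not work as stated. $t'$ lies strictly below $s_2'$ and reads $z'$ with finite weight. So $z'$ can never certify dominance of a state above $t'$ in the sense of \cref{def:dominant state}, and maximality of $t'$ gives no contradiction. A separating suffix would have to kill $t'$ before $z'$, for instance $\alpha_{S'',x'}z'$. However, dominance suffixes lie in $(\simpCacRebJump)^*\cdot\genCacRebCac$. The letter $\alpha_{S'',x'}$ is a cactus over a word that may contain rebase letters, so in general it is not in $\simpCacRebJump$ and cannot appear before the last position.

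The paper's own proof settles this case with the same one-line claim, so following it does not fill the hole. You still need either an argument that no state above $t'$ both survives $\alpha_{S'',x'}$ and reads $z'$, or a different choice of suffix or of cycle.
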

\begin{proof}
    Consider a Negative SSRI $w=uxyv$ with the notations of \cref{def:separated repeating infix}. The proof proceeds in several steps, and we start with a high-level intuition.
    In Step 1 we show that the maximal dominant states at $\vec{c_u}$ and $\vec{c_{ux}}$ are the same. This is simple, since the ordering of the states induced by $\vec{c_u}$ and $\vec{c_{ux}}$ is the same.

    In Step 2 we start our path towards a witness by flattening the prefix $u$, and showing that we can assume this does not affect the potential and the dominant states. It does make the ghost states reachable, though.

    In Step 3 we consider the minimal slope run on the $x$ infix (rather, on any $x^k$), and baseline shift on this run, thus inducing a stable cycle $x'$ as per \cref{lem: shift reflexive cycle to stable cycle}. We also shift the prefix $u$ so that it glues properly to the shifted cycle. We then keep track again of the maximal dominant state, and show that it corresponds to the shift of the dominant state at $\vec{c_u}$, and we point to the suffix that causes it to be dominant.

    Finally, in Step 4 we show that the resulting tuple of (prefix,stable cycle,suffix) forms a type-1 witness. This requires a lot of point-of-view changes between the runs before and after the baseline shift. It also heavily relies on the gaps in the SRI definition.

    \paragraph*{Step 1: Identify the maximal dominant states.}
    From \cref{def:separated repeating infix} we have that $\pot(u)\le \pot(ux)$. We claim that $\maxdom(\vec{c_u})=\maxdom(\vec{c_{ux}})$. Indeed, observe that $\vec{c_u}$ and $\vec{c_{ux}}$ induce the same ordering on the state weights. That is, $\vec{c_u}(q)\le \vec{c_{u}}(q')$ if and only if $\vec{c_{ux}}(q)\le \vec{c_{ux}}(q')$. Since maximal dominance is determined only by the ordering (\cref{def:dominant state}), it follows that the maximal dominant states are equal.

    Fix such a maximal dominant state $s$ and let $V_d$ for some $1\le d\le \ell$ its subset in the partition of $B$ (so $s\in V_d$). 
    Notice that $k_{d,x}\ge 0$, since $\pot(u)\le \pot(ux)$ (and the baseline run remains at $0$). This becomes important in the following, when we shift the baseline to a negative run, making the dominant state strictly increase.
    
    \paragraph*{Step 2: Flatten $u$.}
    Recall that in a witness (\cref{def:witness}), the prefix is in $(\Gamma_0^0)^*$. We therefore flatten the prefix $u$, to obtain a prefix over this alphabet. 
    Let $u'=\flatten(u\wr 4\bigM\cdot \maxeff{w})$ and write $\vec{c_{u'}}=\xconf(s_0,u')$. We call upon \cref{cor:flatttening maintains potential} which gives one of two outcomes: either there is a witness of type-0, or $\pot(u)=\pot(u')$. In the former case we are done, since a type-0 witness is in particular type-1. We therefore proceed under the latter. 

    By \cref{lem:flattening essence} we have that $\vec{c_u}(q)=\vec{c_{u'}}(q)$ for every $q\in B$, and $\vec{c_{u'}}(q)-\vec{c_{u'}}(p)>2\bigM\maxeff{w}$ for all $q\in \ghostTrans(s_0,u)\setminus B$ and $p\in B$. 
    That is, flattening does not interfere with the weights of the $u$-reachable states ($B$), and the remaining $u'$-reachable states $\ghostTrans(s_0,u)\setminus B$ get very high weights.

    We claim that $\pot(ux)=\pot(u'x)$. Intuitively, the newly-reachable states $\ghostTrans(s_0,u)\setminus B$ on $u'$ cannot generate dominating states, as this would mean $\pot(u')>\pot(u)$. We make this precise.    
    
    Observe that all runs from $\ghostTrans(s_0,u)\setminus B$ on $x$ at $\vec{c_{u'}}$ attain weights much higher than runs from $B$. Thus, $s$ is still dominant in $\vec{c_{u'x}}=\xconf(\vec{c_{u'}},x)$. So we only need to show that it is maximal dominant. To this end, if $\hat{s}\in \ghostTrans(s_0,u)\setminus B$ is maximal dominant in $\vec{c_{u'x}}$ then $\hat{s}$ is also dominant in $\vec{c_{u'}}$ (by concatenating its separating suffix to $x$). But $\vec{c_{u'}}(\hat{s})>\vec{c_{u'}}(s)$, so $s$ is not maximal dominant in $\vec{c_{u'}}$, contradicting $\pot(u)=\pot(u')$.

    We thus have the same maximal-dominant state $s\in V_d$ in $\vec{c_u},\vec{c_{u'}},\vec{c_{ux}},\vec{c_{u'x}}$.

    \paragraph*{Step 3: shift $x$ to a stable cycle $x'$.} 
    Denote $S'=\ghostTrans(s_0,u)=\ghostTrans(s_0,u')=\booltrans(s_0,u')$.
    Since $uxyv$ is a Negative SRI there exists some $1\le i\le \ell$ such that $k_{i,x}<0$. We claim that $\minslope(S',x)<0$. Indeed, we can find a reflexive run $\rho:q\runsto{x^m}q$ with $m\le |S|$ and $q\in V_i$ and weight $mk_{i,x}<0$ by the pigeonhole principle.
    Note, however, that negative runs might also occur in $S'\setminus B$, and that the states there are not nicely partitioned as in $B$. 
    
    Let $\rho:s'\runsto{x^k}s'$ be a minimal slope run from $S'$ as per \cref{prop: reflexive cycle negative slope short} with $k\le |S|$, and denote $(S'',x')=\baseshift{(S',x^k)}{\rho}$. By \cref{lem: shift reflexive cycle to stable cycle} $(S'',x')$ is a stable cycle. 
    In order to properly ``glue'' $(S'',x')$ to $u'$, we need to shift $u'$ as well. Since $s'\in S'=\booltrans(s_0,u')$, there is a run $\mu:s_0\runsto{u'}s'$. We then obtain $u''=\baseshift{u'}{\mu}$, so $\booltrans(s_0,u'')=S''$ and we can consider $u''x'$. 

    Let $\vec{c_{u''}}=\xconf(s_0,u'')$ and $\vec{c_{u''x'}}=\xconf(s_0,u''x')$. We want to show that $s''=\baseshift{s}{s'}$ is maximal dominant in both configurations. For $\vec{c_{u''}}$ this readily follows from the fact that dominance and weight-difference within configurations are invariant under baseline shifts (\cref{prop: dominance invariant to baseline shifts}). For $\vec{c_{u''x'}}$, recall that $x'=\baseshift{x^k}{\rho}$ for some $k$. If we are lucky and $k=1$, then this also follows from \cref{prop: dominance invariant to baseline shifts}.    
    However, since we might have $1<k\le |S|$ we need to work slightly harder.

    To this end, it is enough to show that $s$ is maximal dominant at $\xconf(\vec{c_{u'}},x^k)$, i.e., before the shift, and then apply \cref{prop: dominance invariant to baseline shifts}. Fortunately, this is simple to show given the conditions of \cref{def:separated repeating infix}: recall that the gap between $V_j$ and $V_{j'}$ for $j\neq j'$ is at least $4\bigM \maxeff{xy}\gg 2|S|\maxeff{x}$. Thus, for every $s\in V_j$ we have
    $\minweight(S'\runsto{x^k}s)=\minweight(V_j\runsto{x^k}s)=\vec{c_{u'}}(s)+k\cdot k_{j',x}$
    due to \cref{itm:separated infix linear k} of \cref{def:separated repeating infix}.

    By~\cref{prop:run characteristics of SRI} we have $|k_{j,x}|\le \maxeff{x}$, and therefore $|k\cdot k_{j',x}|\le |S|\maxeff{x}\ll 2\bigM\maxeff{xy}$. Thus, the runs stemming from $V_j$ on $x^k$ do not ``interfere'' with any run from other $V_{j'}$. We conclude that $\xconf(\vec{c_{u'}},x^k)$ has the same weight ordering as $\vec{c_{u'}}$, and therefore $s$ is maximal-dominant.
    
    Thus, we have that $s''=\baseshift{s}{s'}$ is maximal-dominant in $\baseshift{\vec{c_{u''}}}{s'}$ and $\baseshift{\vec{c_{u''x'}}}{s'}$. In particular, by \cref{def:dominant state} there exists $z''\in (\simpCacRebJump)^*\genCacRebCac$ such that $\minweight(s''\runsto{z''}S)<\infty$, for every $r$ such that $\vec{c_{u''x'}}(r)< \vec{c_{u''x'}}(s'')$ we have $\minweight(r\runsto{z''}S)=\infty$, and $\vec{c_{u''x'}}(s'')$ is maximal with this property.

    \paragraph*{Step 4: $u'',\alpha_{S'',x'},z''$ is a type-1 witness.}
    Since $(S'',x')$ is a stable cycle, we can consider the cactus letter $\alpha_{S'',x'}$. Note that $x'\in \simpCacReb$, and therefore $\alpha_{S'',x'}\in \simpCacRebCac$. Thus, $u'',\alpha_{S'',x'},z''$ is in the correct ``format'' of a type-1 witness (see \cref{def:witness}). We claim that it is indeed a witness.

    First, the seamless run on $ux$ extends through the various baseline shifts to a seamless run on $u''x'$. Next, by the definition of $z''$ we have \[\minweight(s_0\runsto{u''x'z''}S)\le \minweight(s_0\runsto{u''x'}s''\runsto{z''}S)<\infty\]
    It remains to show that $\minweight(s_0\runsto{u''\alpha_{S'',x'}z''}S)=\infty$.
    
    Assume by way of contradiction that $\tau':s_0\runsto{u''}s'_1\runsto{\alpha_{S'',x'}}s'_2\runsto{z''}s'_3$ has $\weight(\tau)<\infty$.
    By definition, it holds that $(s'_1,s'_2)\in \GroundPairs(\alpha_{S'',x'})$. Thus (by \cref{def:grounded pairs}), there exists some state $r$ such that $s'_1\runsto{x'^\bigM} r'\runsto{x'^\bigM}s'_2$ and $\minweight(r'\runsto{x'^\bigM}r')=0$.
    Recall that $x'=\baseshift{x^k}{\rho}$. In particular, every run  $\eta':p'\runsto{x'}q'$ can be written as $\eta'=\baseshift{\eta}{\rho}$ for some run $\eta:p\runsto{x^k}q$ and $p'=\baseshift{p}{\rho}$ and $q'=\baseshift{q}{\rho}$ (note that the latter is because $\rho$ is a cycle, but essentially $q'$ is the shift of the \emph{last} state in $\rho$, not the first as per \cref{def: baseline shift on states sets and config}).

    Denote $\pi':r'\runsto{x'^\bigM}r'$ with $\weight(\pi')=0$, then we can write write $r'=\baseshift{r}{\rho}$ and $\pi'=\baseshift{\pi}{\rho^m}$ so that $\pi:r\runsto{x^{k\bigM}}r$. Since $\weight(\rho)<0$, we have $\weight(\pi)<0$ (by \cref{cor:baseline shift maintains gaps}).
    We now consider the baseline shift of the $V_i$ sets: denote $V'_i=\baseshift{V_i}{\rho}$ for every $1\le i\le \ell$, as well as $C=\baseshift{\ghostTrans(s_0,u')\setminus B}{\rho}$ the shift of the ghost states. By \cref{prop: dominance invariant to baseline shifts} the gaps between these shifted sets are maintained in $\vec{c_{u''}}$. 

    We now split to cases according to the relative weight $\vec{c_{u''x'}}(s'_2)$ and $\vec{c_{u''x'}}(s'')$.
    \begin{itemize}
        \item If $\vec{c_{u'x''}}(s_2)<\vec{c_{u'x''}}(s'')$ then        
        by the dominance of $s''$ we have that $\minweight(s_2\runsto{z''}S)=\infty$, in contradiction to $\weight(\tau')<\infty$.
        \item If $s_2\in V'_d$ (recall that $s''\in V'_d$ as $s\in V_d$, where $s$ is the corresponding maximal dominant state in $\vec{c_{ux}}$), then $s_2\in V_d$. By \cref{prop:run characteristics of SRI}
        we have $s_1\in V_d$. Since $s_1\runsto{x^{k\bigM}}r$ and $r\runsto{x^{k\bigM}}s_2$, we have $r\in V_d$ as well (otherwise either $s_1$ or $s_2$ would be in some ``lower'' $V_i$).
        However, $\minweight(r\runsto{x^{k\bigM}}r)\le \weight(\pi)<0$, which contradicts our observation from Step 1 that $k_{d,x}\ge 0$.

        Intuitively, this can be summed as: if $s_2$ is very close to the dominant state, then it cannot belong to a grounded pair, since $V'_d$ is strictly increasing.

        \item If $s_2\in V'_j$ for some $j>d$ (i.e., ``way above'' $s''$), assume with out loss of generality that it has the minimal value $\vec{c_{u''x'}}(s_2)$ with the property that $\minweight(s_2\runsto{z''}S)<\infty$. Then $s_2$ is dominant, which contradicts the fact that $s''$ is maximal dominant in $\vec{u''x'}$.
    \end{itemize}

    We conclude that $\minweight(s_0\runsto{u''\alpha_{S'',x'}z''}S)=\infty$, so $u'',\alpha_{S'',x'},z''$ is a type-1 witness, and we are done.
\end{proof}

We now turn our attention to Positive SSRI. 
We observe that a Positive SSRI does not allow negative cycles on $x^k$ for any $k$. Note that negative non-cyclic runs on $x$ may occur. For example, a state in $V_7$ may have a negative run on $x$, but only to a state in e.g., $V_3$, which is much lower, and therefore this negative run is not seamless, and thus does not really ``matter'' (as long as $x$ is not pumped too much, which becomes a concern in \cref{lem:pos SSRI to stable}).

The following central lemma is our main tool when using Positive SSRI. Intuitively, it shows that a Positive SSRI in fact induces a stable cycle on $x$, and is therefore a Stable SSRI. This relies on our novel \cref{lem: shift on negative run in SRI kills positive sets}.
A caveat is that the above might not hold, but in such a case we have a type-0 witness, in the sense of \cref{cor:flatttening maintains potential}.

\begin{lemma}[From Positive SSRI to Stable SSRI]
\label{lem:pos SSRI to stable} 
    Consider a Positive SSRI $w=uxyv$, then either $(\ghostTrans(u),x)$ is a stable cycle or there is a type-0 witness.
\end{lemma}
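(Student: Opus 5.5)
I will argue by contraposition: assume $(\ghostTrans(s_0,u),x)$ is not a stable cycle and build a type-0 witness. Write $S'=\ghostTrans(s_0,u)$. Since $(S',x)$ is a reflexive cycle (\cref{itm:separated infix reachable states} together with \cref{rmk:booltrans determines ghosttrans}) that is not stable, \cref{lem: shift reflexive cycle to stable cycle} gives $\minslope(S',x)<0$. By \cref{prop: reflexive cycle negative slope short} there is a minimal-slope run $\rho:s\runsto{x^k}s$ with $s\in S'$, $k\le |S|$ and $\weight(\rho)<0$. Because the SSRI is Positive, \cref{prop: Positive SRI no negative cycles on x} rules out negative cycles from $B=\booltrans(s_0,u)$. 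Hence $s\in S'\setminus B$ is a proper ghost state.

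Next I apply \cref{lem: shift on negative run in SRI kills positive sets} with $\ell'=\ell$; this is allowed because all $k_{j,x}\ge 0$. It yields a stable cycle $(S'',x')=\baseshift{(S',x^k)}{\rho}$ such that every $\baseshift{s'}{\rho}$ with $s'\in B$ has no finite run on $\alpha_{S'',x'}$. The length condition $|x|\le \frac1{\bigM}\simpL(\depth(x)+1)$ gives $|x^k|\le \simpL(\depth(x)+1)$, and non-degeneracy holds because a degenerate cycle would make grounded pairs reach the baseline. So $\alpha_{S'',x'}$ is an effective letter.

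To glue the pieces, I first flatten $u$ to $u'=\flatten(u\wr F)$ for a large $F$, using \cref{lem:flattening essence}. This makes the ghost state $s$ genuinely reachable, with $\booltrans(s_0,u')=S'$, and \cref{cor:flatttening maintains potential} says either a type-0 witness exists or nothing relevant changes. I then take a run $\mu:s_0\runsto{u'}s$ and set $u''=\baseshift{u'}{\mu}$, so that $u''x'$ is well-formed and $\booltrans(s_0,u'')=S''$.

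The candidate witness is $(u'',\alpha_{S'',x'},z)$, where $z$ is a suffix that kills the baseline-shifted $B$ states but is finitely readable from the image of $s$ after $x'$. The cheapest choice is a jump letter followed by a letter of $\genCacRebCac$, mimicking the trick in \cref{prop: dominance invariant to baseline shifts}. The finiteness condition $\minweight(s_0\runsto{u''x'^{2\bigM}z}S)<\infty$ follows from the cycle $\baseshift{\rho}{\rho}$ on $x'^{2\bigM}$ from $\baseshift{s}{\rho}$.

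For the infinity condition, note that the only states with finite runs on $\alpha_{S'',x'}$ come from grounded pairs. The lemma excludes those from $B$, so any surviving run must originate in the shifted ghost region $C=\baseshift{S'\setminus B}{\rho}$. The suffix $z$ must then be chosen to kill those as well, or else to show that such runs coincide with the reachable ones of $x'^{2\bigM}$.

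\textbf{Main obstacle.} Reconciling the witness format with \cref{def:witness} is the hard part. Type-0 requires $w_1\in\simpCac^*$ and $\alpha\in\simpCac$, but $\alpha_{S'',x'}$ is built over a shifted word that may contain rebase letters, which naturally gives type-1. My first attempt would be to use \cref{lem:unfolding maintains potential} directly, which produces type-0 witnesses. The idea is to view the failure of stability as the potential changing under unfolding of a suitable cactus in $u$ (the origin of the ghost state $s$), and let that lemma's ``dominant state is very high'' case supply the type-0 witness, since $s$ is a ghost, hence high, state. If that fails, I would settle for type-1, which the downstream argument in \cref{lem:abs:positive SSRI to Stable SSRI} also tolerates.
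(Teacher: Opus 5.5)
There is a genuine gap: your explicit tuple $(u'',\alpha_{S'',x'},z)$ cannot give what the statement asks for. As you observe yourself, $x'=\baseshift{x^k}{\rho}$ generally contains rebase letters, so $\alpha_{S'',x'}\in\simpCacRebCac$. The tuple is therefore at best a type-1 witness, while the lemma promises a type-0 one, so ``settling for type-1'' proves a weaker statement.

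Even as a type-1 witness the construction is unfinished. First, $z$ is never specified. Second, the infinity condition requires killing every run that survives $\alpha_{S'',x'}$ from the shifted ghost region $C$, and \cref{lem: shift on negative run in SRI kills positive sets} says nothing about $C$. Third, item 2 of \cref{def:witness} need not hold for the prefix $u''$, since $x'$ need not map $S''$ onto itself. Your ``first attempt'' via \cref{lem:unfolding maintains potential} points in the right direction but is never carried out. Your non-degeneracy claim is unjustified, and it is also unnecessary, since $\simpCacRebCac$ imposes no such condition.

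The missing idea is to use these objects as a dominance certificate, not as a witness. The right place is the ``nothing changes'' branch of \cref{cor:flatttening maintains potential}, which you invoke but never exploit.

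Take $u'=\flatten(u\wr 4\bigM\maxeff{w})$. By \cref{lem:flattening essence}, the weights on $B$ are unchanged, and every state of $\ghostTrans(s_0,u)\setminus B$ sits more than $2\bigM\maxeff{w}$ above all of $B$.

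Now read the suffix $\jl_{s_b\to s}\cdot\alpha_{S'',x'}$ directly after $u'$, with $s_b$ the baseline state; no shift of $u'$ is needed. The jump moves $B$ onto $\baseshift{B}{\rho}$, which has no finite run on $\alpha_{S'',x'}$. On the other hand, $\baseshift{s}{\rho}$ reads $\alpha_{S'',x'}$ via the $0$-weight cycle $\baseshift{\rho}{\rho}$.

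This suffix is admissible for \cref{def:dominant state}: it lies in $(\simpCacRebJump)^*\cdot\genCacRebCac$, and the rebase letters inside $\alpha_{S'',x'}$ are harmless there. So a state of minimal weight in $\vec{c_{u'}}$ among those with a finite run on this suffix is dominant, and it lies outside $B$. Since $\pot(u)$ is the weight of a state of $B$, this gives $\pot(u')>\pot(u)$, contradicting $\pot(u')=\pot(u)$. Hence the other alternative of the corollary must hold, namely a type-0 witness.

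This is the paper's argument. It avoids choosing $z$, handling $C$, and verifying the witness format altogether.
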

\begin{proof}
    We start with an intuitive overview: if $(\ghostTrans(s_0,u),x)$ is not a stable cycle, then there is some negative-slope run on some $x^k$. However, since $x$ is part of a Positive SSRI it does not admit such a reachable cycle. 
    Thus, this cycle must stem from a ghost state. 
    By applying \cref{lem: shift on negative run in SRI kills positive sets} we obtain a stable cycle where none of the reachable states have finite-weight transitions on this new cactus letter. Intuitively, we can now flatten $u$ and get arbitrarily high dominant states. Then, combined with \cref{cor:flatttening maintains potential} 
    this is a 0-type witness.

    Assume that $(\ghostTrans(s_0,u),x)$ is not a stable cycle (otherwise we are done).  
    Note that $(\ghostTrans(s_0,u),x)$ is a reflexive cycle. Indeed, by \cref{def:separated repeating infix} we have $\booltrans(B,x)=B$ for $B=\booltrans(s_0,u)$, and by \cref{rmk:booltrans determines ghosttrans} the ghost-reachable states are determined by $B$.
    
    Let $\rho:s\runsto{x^k}s$ with $s\in \ghostTrans(s_0,u)$ and $k\le |S|$ be a minimal-slope run as per \cref{prop: reflexive cycle negative slope short}. 
    By \cref{lem: shift reflexive cycle to stable cycle} we have $\slope(\rho)< 0$ (otherwise $(\ghostTrans(s_0,u),x)$ is already a stable cycle). Therefore, $\weight(\rho)<0$. 
    By \cref{prop: Positive SRI no negative cycles on x}, this means that $s\notin B$. That is, $s\in \ghostTrans(s_0,u)\setminus \booltrans(s_0,u)$, so there are ghost states after reading $u$. Denote $G_u=\ghostTrans(s_0,u)\setminus B$.

    By \cref{lem: shift on negative run in SRI kills positive sets} we have that $(S'',x')=\baseshift{(\ghostTrans(s_0,u),x^k)}{\rho}$ is a stable cycle, and that for every $s\in B$ we have \\$\minweight(\baseshift{s}{\rho}\runsto{\alpha_{S'',x'}}S)=\infty$. Indeed, this follows since the SSRI is positive and $B=V_1\cup \ldots\cup V_\ell$ (i.e., $\ell'=\ell$ in the notation of \cref{lem: shift on negative run in SRI kills positive sets}). Denote $B'=\baseshift{B}{\rho}$.
    
    By \cref{def:separated repeating infix} we have $|x|\le \frac1\bigM \simpL(\depth(x)+1)$ and since $k\le |S|\le \bigM$ we have $|x'|=|x^k|\le \simpL(\depth(x)+1)$. In addition, we have $\depth(x')=\depth(x)$ and since $x\in \simpCac^*$ then due to the baseline shift, $x'$ may contain rebase letters, and therefore $x'\in \simpCacReb$. Thus, $\alpha_{S'',x'}\in \simpCacRebCac$.

    We now unfold the prefix $u$ as follows: consider the word $u'=\flatten(u\wr \maxeff{w}\cdot 4\bigM)$, and denote $\vec{c_u'}=\xconf(s_0,u')$. Then $G_u\cup B=\supp(\vec{c_u'})$ and by \cref{lem:flattening essence} for every $s'\in B$ and $s''\in G_u$ we have $\vec{c_u'}(s'')-\vec{c_u'}(s')> \maxeff{w}\cdot 2\bigM$. Put simply -- all states in $G_u$ are much higher than those in $B$.
    We now employ \cref{cor:flatttening maintains potential} with two possible outcomes:
    if $\augA_\infty^\infty$ has a type-0 witness, we conclude the proof. 
    Otherwise, we have that $\pot(u)=\pot(u')$.     
    This, in turn, means that every $s''\in G_u$ is not dominant at $\vec{c_u'}$. Indeed, we have $\pot(u)\le \maxeff{u}$, but $\vec{c_u'}(s'')>\maxeff{w}\cdot 2\bigM$, i.e., is far too high.

    We now reach a contradiction to the above by finding a dominant state in $G_u$, as follows.
    Consider the word $u'\cdot \jl_{s_b\to s}\cdot \alpha_{S'',x'}$. Observe that
    $\weight(s_0\runsto{u'}s\runsto{\jl_{s_b\to s}}t\runsto{\alpha_{S'',x'}}t)<\infty$ where $t$ is the baseline-shifted version of $s$ on $\rho$ (i.e., $\baseshift{\rho}{\rho}:t\runsto{x'}t$). Indeed, $s$ is reachable from $s_0$ on the flattened $u'$, then the jump letter has weight $0$, and $\baseshift{\rho}{\rho}$ yields the transition $t\runsto{\alpha_{S'',x'}}t$. 
    Recall, however, that $s\in G_u$ and that there are no transitions from $B'$ on $\alpha_{S'',x'}$. This means that there is no transition from $B$ on $\jl_{s_b\to s}\cdot \alpha_{S'',x'}$ (since the jump letter exactly takes us from $B$ to $B'$).
    We conclude that either $s$, or some lower state in $G_u$, is dominant, which is a contradiction.
\end{proof}

\subsection{GSRI -- A Toolbox}
\label{sec:GSRI toolbox}
We continue by developing some GSRI-specific tools.

We start by observing that for GSRI, the minimal set in the partition (namely $V_1$, whose in particular responsible for $\charge$), does not generate negative cycles on $x$. This can be viewed as a counterpart of \cref{prop: Positive SRI no negative cycles on x}.
\begin{proposition}
\label{prop: general SRI no negative cycles on x V0}
    Consider a GSRI $uxyv$. 
    In the notations of \cref{def:separated repeating infix}, for every $r\in V_1$ and $k\in \bbN$ we have $\minweight(r\runsto{x^k} r)\ge 0$. In particular $k_{1,x}\ge 0$.
\end{proposition}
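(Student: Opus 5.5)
The plan is to prove first that $k_{1,x}\ge 0$, and then to deduce the cycle bound by copying the telescoping argument of \cref{prop: Positive SRI no negative cycles on x}, now restricted to $V_1$.

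\textbf{Step 1: $k_{1,x}\ge 0$.} The minimal weight of $\vec{c_u}$ is attained in $V_1$, because $V_1$ is the lowest part of the partition in \cref{itm:separated infix gaps V} of \cref{def:separated repeating infix}. The same holds for $\vec{c_{ux}}$, since the gaps are maintained there as well. Every state of $V_1$ is shifted by the same constant $k_{1,x}$ (\cref{itm:separated infix linear k}), so the minimiser is unchanged and
\[\charge(ux)=\charge(u)-k_{1,x}.\]
The GSRI condition $\charge(u)\ge\charge(ux)$ then gives $k_{1,x}\ge 0$. The only subtlety is the seamless baseline run, which keeps the baseline at weight $0$ in both configurations; this is guaranteed by \cref{itm:separated infix seamless baseline}.

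\textbf{Step 2: cycles stay in $V_1$.} Take $r\in V_1$ and a run
\[\rho:r=r_0\runsto{x}r_1\runsto{x}\cdots\runsto{x}r_k=r.\]
By the third item of \cref{prop:run characteristics of SRI}, applied with exponent $1$ to each consecutive step, the index of the part containing $r_i$ is non-increasing along the run. (A state with a run into $V_{j'}$ lies in some $V_j$ with $j\ge j'$; I also need that each $r_i\in B$, which follows since $\booltrans(B,x)=B$.) The index is non-increasing and returns to its starting value, so it is constant, and all $r_i$ lie in $V_1$.

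\textbf{Step 3: telescoping.} For $r',r''\in V_1$ we have
\[\minweight(r'\runsto{x}r'')\ge \vec{c_u}(r'')-\vec{c_u}(r')+k_{1,x},\]
since otherwise $\vec{c_{ux}}(r'')$ would drop below $\vec{c_u}(r'')+k_{1,x}$. Summing this over the steps of $\rho$, the $\vec{c_u}$ terms telescope and give
\[\weight(\rho)\ge k\cdot k_{1,x}\ge 0.\]

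The main obstacle is Step 1. I must check that the charge is measured over the reachable support $B$, so that ghost states are irrelevant, and that the minimum of each configuration indeed lies in $V_1$. Both follow from the gap condition, because all of $V_{j}$ for $j\ge 2$ sit more than $G$ above $V_1$ in $\vec{c_u}$ and in $\vec{c_{ux}}$. Step 2 is purely a matter of quoting \cref{prop:run characteristics of SRI}.
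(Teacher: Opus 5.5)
Your proof is correct. Step 1 is the paper's argument for $k_{1,x}\ge 0$: the minima of $\vec{c_u}$ and $\vec{c_{ux}}$ lie in $V_1$, $V_1$ is shifted uniformly, and you conclude using $\charge(u)\ge\charge(ux)$. For the cycle bound, however, you take a different route.

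The paper argues by contradiction. It first uses pigeonhole to pass to a negative cycle on $x^{k'}$ with $k'\le |S|$. It then invokes the iterated identity $\xconf(\vec{c_u},x^{k'})(r)=\vec{c_u}(r)+k'k_{1,x}\ge \vec{c_u}(r)$, which contradicts $\xconf(\vec{c_u},x^{k'})(r)\le \vec{c_u}(r)+\minweight(r\runsto{x^{k'}}r)<\vec{c_u}(r)$.

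That route is shorter to write, but it leans on two tacit facts:
\begin{itemize}
\item The iterated identity is only guaranteed while the SRI gaps survive. This is why the exponent must first be cut down to $k'\le|S|$.
\item Pigeonhole in general yields a short negative cycle based at some state of the original cycle, not necessarily at $r$. One therefore needs that state to lie in $V_1$, which is exactly the confinement you prove in Step 2.
\end{itemize}

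Your argument is the telescoping proof of \cref{prop: Positive SRI no negative cycles on x} restricted to $V_1$. It uses the observation that only $k_{1,x}\ge 0$ is needed, because the cycle never leaves $V_1$. This buys three things: it handles every $k$ at once, it uses only the single-step relation between $\vec{c_u}$ and $\vec{c_{ux}}$, and it never iterates configurations, so gap erosion never arises. Its only cost is citing the downward-runs item of \cref{prop:run characteristics of SRI}, together with $\booltrans(B,x)=B$ to keep the intermediate states in $B$.
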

\begin{proof}
    Assume by way of contradiction there is some $r\in V_1$ and $k\in \bbN$ such that $\minweight(r\runsto{x^k} r) < 0$.
    By the pigeonhole principle there is some $k' \le |S|$ such that $\minweight(r\runsto{x^{k'}} r) < 0$.
    However, observe that in a GSRI, the minimal runs after $u$ and after $ux$ end in a state in $V_1$. Since $\charge(u)\ge \charge(ux)$ we have $k_{1,x}\ge 0$. Therefore, we have 
    $\xconf(\vec{c_u},x^{k'})(r)= \vec{c_u}+k' k_{1,x}\ge \vec{c_u}$, so $\minweight(r\runsto{x^{k'}} r) \ge  0$, in contradiction.
\end{proof}

The next lemma is the central property of GSRI, and is the counterpart of \cref{lem:pos SSRI to stable}. Conceptually, it is what links the charge to the potential.
\begin{lemma}[From GSRI to Stable GSRI or High Potential]
\label{lem:GSRI to stable or high potential} 
    Consider a GSRI $w=uxyv$, then either $(\ghostTrans(s_0,u),x)$ is a stable cycle or there is a word $w'\in (\Gamma_0^0)^*$ such that $\pot(w') \ge \budH$.
\end{lemma}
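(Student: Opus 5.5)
Assume $(\ghostTrans(s_0,u),x)$ is not a stable cycle; the plan is to produce a flat word of high potential. It is a reflexive cycle, since $\booltrans(B,x)=B$ and ghost states are determined by $B$ (\cref{rmk:booltrans determines ghosttrans}). By \cref{prop: reflexive cycle negative slope short} and \cref{lem: shift reflexive cycle to stable cycle} there is a minimal-slope run $\rho:s\runsto{x^k}s$ with $k\le |S|$ and $\weight(\rho)<0$. By \cref{prop: general SRI no negative cycles on x V0}, $s\notin V_1$. Moreover, $k_{1,x}\ge 0$, so the premise of \cref{lem: shift on negative run in SRI kills positive sets} holds with $\ell'=1$. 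That lemma gives a stable cycle $(S'',x')=\baseshift{(\ghostTrans(s_0,u),x^k)}{\rho}$ such that no state of $\baseshift{V_1}{\rho}$ has a finite-weight transition on $\alpha_{S'',x'}$.

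The length bound $|x|\le\frac1\bigM\generL(\depth(x)+1)$ gives $|x'|\le \generL(\depth(x)+1)$, and $x'$ is over $\genCacReb$. Hence $\alpha_{S'',x'}\in\genCacRebCac$, which is a legal last letter of a dominance suffix.

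Next we flatten $u$. Set $u'=\flatten(u\wr F)$ with $F\gg\maxeff{w}+\budH$; since $u\in\genCac^*$, we get $u'\in(\Gamma_0^0)^*$. By \cref{lem:flattening essence}, $\booltrans(s_0,u')=\ghostTrans(s_0,u)$, weights on $B$ are preserved, and proper ghost states lie at least $F$ above $B$. So $s$ is reachable after $u'$, and every state other than those of $V_1$ is at least $G>\budH$ above $V_1$: for states in $V_j$ with $j\ge 2$ this is the GSRI gap, and for ghost states it follows from the choice of $F$. Consider the suffix $z=\jl_{s_b\to s}\cdot\alpha_{S'',x'}\in(\simpCacRebJump)^*\genCacRebCac$, where $s_b$ is the baseline state after $u'$.
\begin{itemize}
\item From $s$, $z$ has finite weight via $\baseshift{\rho}{\rho}$.
\item From $V_1$, $z$ has infinite weight: the jump maps $V_1$ onto $\baseshift{V_1}{\rho}$, which dies on $\alpha_{S'',x'}$.
\end{itemize}
Let $s^*$ be the lowest state in $\xconf(s_0,u')$ from which $z$ is finite. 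Then $s^*$ is dominant and $s^*\notin V_1$, so $\vec{c_{u'}}(s^*)>\min_{V_1}\vec{c_{u'}}+\budH$.

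It remains to turn this into potential measured against the baseline. The minimal run after $u'$ ends in $V_1$, since the GSRI charge is attained there and ghost states are higher. Take a seamless run $\mu_1:s_0\runsto{u'}V_1$ attaining $-\charge(u')$, and set $w'=\baseshift{u'}{\mu_1}$. By \cref{prop: dominance invariant to baseline shifts}, $\baseshift{s^*}{\mu_1}$ stays dominant and weight differences are preserved, so $\pot(w')\ge\vec{c_{u'}}(s^*)-\vec{c_{u'}}(\mu_1)\ge\budH$. As in the proof of \cref{lem: charge decrease to high potential}, $w'\in(\Gamma_0^0)^*$ because $u'$ contains no rebase letters. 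We should also check that $w'$ is jump-free and has a seamless baseline run (\cref{cor:baseline shift to seamless run}).

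The main obstacle is the interaction between the jump letter and the baselines. The suffix $z$ is read from the unshifted configuration after $u'$, while \cref{lem: shift on negative run in SRI kills positive sets} speaks about shifted states. We must verify that $\jl_{s_b\to s}$ maps exactly $\baseshift{\cdot}{\rho}$ onto the states that $\alpha_{S'',x'}$ reads. We must also check that the ghost states made reachable by flattening do not produce a dominant state that falls below the $\budH$ threshold; this holds because they all lie above $V_1$ by $F$. A secondary check is that the minimum in $\vec{c_{u'}}$ is attained in $V_1$ so that the shift is measured from the correct base.
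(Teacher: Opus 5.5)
Your proposal is correct and follows the paper's proof essentially step for step. You take a negative minimal-slope cycle, which must lie outside $V_1$, apply \cref{lem: shift on negative run in SRI kills positive sets} so that $V_1$ dies on $\alpha_{S'',x'}\in\genCacRebCac$, flatten $u$, use the suffix $\jl_{s_b\to s}\cdot\alpha_{S'',x'}$ to force a dominant state more than $\budH$ above $V_1$, and finish with a baseline shift onto $V_1$. The differences are cosmetic: you shift onto the minimal rather than the maximal state of $V_1$, and your $\ell'=1$ is the correct instantiation where the paper writes $\ell'=0$. The jump-letter ``obstacle'' you flag is immediate from \cref{def:jump letters}: for $s=(q_s,p_b,T)$ it maps each $(q',p_b,T)$ to $(q',q_s,T)$ with weight $0$, which is exactly $\baseshift{\cdot}{s}$.
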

\begin{proof}
    We start with an intuitive overview: if $(\ghostTrans(s_0,u),x)$ is not a stable cycle, then there is some minimal negative-slope run $\rho$ on some $x^k$. By \cref{prop: general SRI no negative cycles on x V0} this cycle cannot stem from a $V_1$ state, so it stems from a higher (possibly ghost) state. 
    By \cref{lem: shift on negative run in SRI kills positive sets} we can shift on $\rho$ to obtain a cactus letter $\alpha_{S'',x'}$ that has no finite weight transition from $V_1$.
    
    By the gap property (\cref{itm:separated infix gaps V} in \cref{def:separated repeating infix}) this state is \emph{much} higher than $V_1$.
    We flatten $u$ to a word $u'$ such that the ghost states are also much higher than $\budH$ above $V_1$.

    Using a jump letter to $S''$ and $\alpha_{S'',x'}$ as a suffix, we get that the maximal dominant state after reading $u'$ is not in $V_1$, and is therefore more than $\budH$ above it. Finally, we can shift so that the baseline run is in $V_1$, and we get potential greater than $\budH$.
    \begin{figure}[H]
        \centering
        \includegraphics[width=0.9\linewidth]{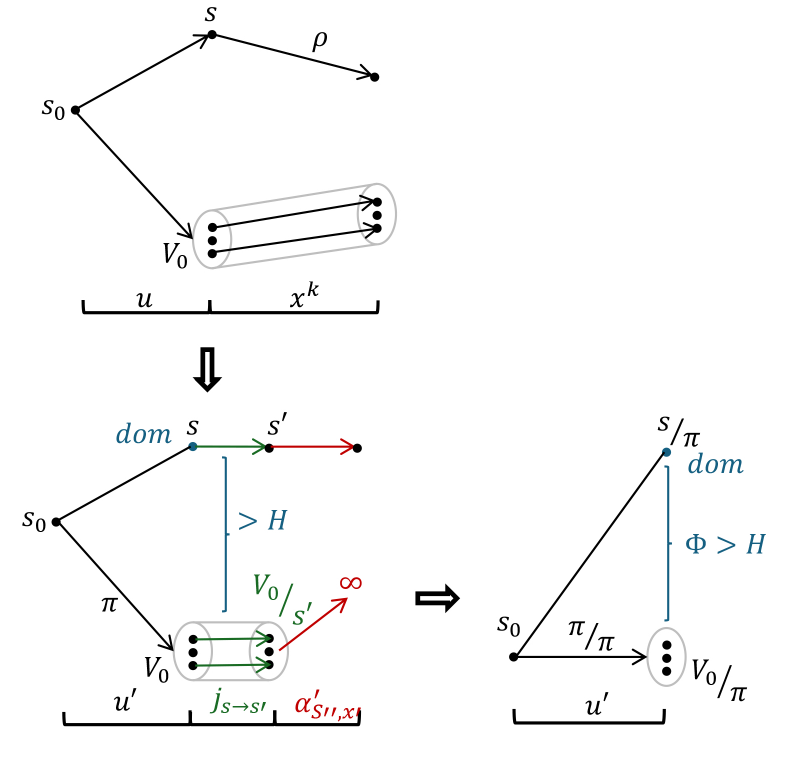}
        \caption{Proof outline of \cref{lem:GSRI to stable or high potential}. We start with the negative run $\rho$ and baseline-shift on it. This casuses $V_1$ to jump to $\infty$ (by \cref{lem: shift on negative run in SRI kills positive sets}. The GSRI gaps are larger than $\budH$, so by baseline-shifting again to a minimal run to $V_1$ and using a jump letter, we obtain a high-potential configuration.}
        \label{fig:pos GSRI to high potential}
    \end{figure}

    Assume that $(\ghostTrans(s_0,u),x)$ is not a stable cycle (otherwise we are done).  Therefore, by \cref{lem: shift reflexive cycle to stable cycle} there is some $s\in \ghostTrans(s_0,u)$ and a minimal-slope run $\rho:s\runsto{x^k}s$ such that $k\le |S|$  and $\weight(\rho)<0$.
    By \cref{prop: general SRI no negative cycles on x V0} we have $k_{1,x}\ge 0$, so we can apply \cref{lem: shift on negative run in SRI kills positive sets} with $\ell'=0$. We obtain the following: $(S'',x')=\baseshift{(\ghostTrans(s_0,u),x^k)}{\rho}$ is a stable cycle and for every $s\in V_1$ we have $\minweight(\baseshift{s'}{\rho}\runsto{\alpha_{S'',x'}}S)=\infty$.

    By \cref{def:separated repeating infix} we have $|x|\le \frac1\bigM \generL(\depth(x)+1)$ and since $k\le |S|\le \bigM$ we have $|x^k|\le \generL(\depth(x)+1)$, and keep in mind that $x\in \genCac^*$. Note that $x'=\baseshift{x^k}{\rho}$, so $\depth(x')=\depth(x)$, and by the above $|x'|\le \generL(\depth(x')+1)$. 
    Due to the baseline shift, $x'$ may contain rebase letters, and therefore $x'\in \genCacReb^*$. 
    Thus, $\alpha_{S'',x'}\in \genCacRebCac$.

    We now flatten $u$ as to $u'=\flatten(u\wr 2\budH\cdot 4\bigM\cdot \maxeff{w})$. Denote $\vec{c_{u'}}=\xconf(s_0,u')$ and observe that $\supp(\vec{c_{u'}})=\ghostTrans(s_0,u)$. Moreover, for every $r\in V_1$ and state $r_1\notin V_1$ we have that $\vec{c_{u'}}(r_1)-\vec{c_{u'}}(r)>\budH$. Indeed, if $r_1\in V_j$ for some $j>1$ then this holds by the gap criterion (\cref{itm:separated infix gaps V} in \cref{def:separated repeating infix}), and if $r_1\in \ghostTrans(s_0,u)\setminus B$ then this holds due to the flattening constants (i.e., \cref{lem:flattening essence}). 
    Let $r_b$ be the baseline state of $B$ (namely before the shift on $\rho$, which changes the baseline to $s$).
    Consider the word $u'\cdot \jl_{r_b\to s}$, then $\ghostTrans(s_0,u'\jl_{r_b\to s})=S''$, since we reach $\ghostTrans(s_0,u)$ after reading $u'$, and then the jump letter simply changes the baseline component without any other effect.

    But then we have that 
    $\minweight(s_0\runsto{u'\jl_{r_b\to s}}V_1 \runsto{\alpha_{S'',x'}}S)=\infty$, whereas \[
    \begin{split}
    &\minweight(s_0\runsto{u'\jl_{r_b\to s}}S'' \runsto{\alpha_{S'',x'}}S) \\
    &\le \minweight(s_0\runsto{u'}s\runsto{\jl_{r_b\to s}}\baseshift{s}{s} \runsto{\alpha_{S'',x'}} S)<\infty
    \end{split}
    \]
    since $\rho:s\runsto{x^k}s$, so $\baseshift{\rho}{\rho}:\baseshift{s}{s}\runsto{x'}\baseshift{s}{s}$, and in particular we can read $\alpha_{S'',x'}$ from  $\baseshift{s}{s}$.
    In particular, since all states above $V_1$ are more than  $\budH$ above it, for the maximal dominant state $r'$ in $\vec{c_{u'}}$ we have $\vec{c_{u'}}(r')-\max\{\vec{c_{u'}}(r)\mid r\in V_1\}>\budH$. Crucially, note that this suffix is legitimate for the definition of potential (\cref{def:potential}), since as we argue above we have $\alpha_{S'',x'}\in \genCacRebCac$.

    Finally, we perform a baseline shift on a seamless run $\pi:s_0\runsto{u'}r$ to the maximal state in $V_1$, then by \cref{prop: dominance invariant to baseline shifts} we have $\pot(\baseshift{u'}{\pi})>\budH$, and we are done.
\end{proof}

\section{Existence of SSRI}
\label{sec:existence of SSRI}
The tools we develop in \cref{sec:separated repeating infix} show that various useful conclusions can be reached if an SRI is found. Our next challenge is to show that SRI can indeed be found, under suitable conditions. In this section show this for SSRI. In \cref{sec:existence of GSRI} we do the same for GSRI. We remark that the structure and technical details of both sections is almost identical, with the main exception being the bounded-growth property of $\pot$ versus the absence of it for $\charge$. However, due to many other tiny differences, we opt for duplicating the arguments, rather than presenting a unified but heavily-parametrised version that captures both.

The high-level approach is the following. We consider a word decomposed as $w_1w_2w_3$, with $\pot(w_1)\le \pot(w_1w_2)$.
We then separate to two cases: either the potential increases a lot on $w_2$ (dubbed \emph{high amplitude}), or it stays in some bounded ``band'' (dubbed \emph{bounded amplitude}). 
By placing some constraints on the depth and length of $w_2$. We show that if it is long enough, but not too long, we can decompose it into many smaller infixes with a similar behaviour to the original $w_2$, but with respect to shorter length. 

In tandem, we introduce a new quantity into our analysis -- a set $I$ of seamless runs that are far away from each other, called \emph{independent runs}. Specifically, we focus on the size $i=|I|$ and use it to parametrise our definitions (this is the same $i$ that appears already in the functions of \cref{sec:effective cactus}).

We now ask whether every run along $w_2$ remains close to some run in $I$. If this is the case, then we have a \emph{cover}, and we use the large gaps between the runs in $I$ to induce an SSRI. Otherwise, some run gets very far from all the runs in $I$, and we use it to induce a new independent run, thus increasing $i$ and going one step lower in the induction. 
The latter step involves very careful management of the length, depth, and amplitude of the word, and the size of $i$. It is where all the pieces of the recursive functions of \cref{sec:effective cactus} fit in.

For SSRI, we focus on words over $\simpCac$ decomposed as $w_1w_2w_3$. The intuition behind this decomposition is the following: $w_1$ is a prefix leading to some set of states with very large gaps between some of them (possibly). 
$w_2$ is a long infix that maintains a separation between independent runs. $w_3$ is just a suffix.

\begin{remark}[Implicit Decomposition to $(w_1,w_2,w_3)$]
    \label{rmk:w1w2w3 is a decomposition}
    When we write $w_1w_2w_3$ we assume that the decomposition is explicit (instead of writing $(w_1,w_2,w_3)$). 
\end{remark}

We start with the technical details. The following definitions pertain to any word, and are reused in \cref{sec:existence of GSRI}. They introduce the concept of independent runs, and measure how close these runs are to the other states.
\begin{definition}[Independent runs with gap $G$]
    \label{def:independent run}
    Consider a word $w_1w_2w_3$ and two seamless runs $\rho_1:s_0\runsto{w_1}p_1\runsto{w_2}q_1$ and $\rho_2:s_0\runsto{w_1}p_2\runsto{w_2}q_2$. For $G\in \bbN$ we say that $\rho_1$ and $\rho_2$ are \emph{independent with respect to $w_2$ with gap $G$} if for every prefix $u$ of $w_2$ we have $|\weight(\rho_1[{w_1u}])-\weight(\rho_2[{w_1u}])|\ge G$.

    A set $I$ of seamless runs on $w_1w_2w_3$ is \emph{independent with gap $G$} if every two distinct runs in $I$ are independent with respect to $w_2$ with gap $G$. We denote such a set with its natural ordering $I=\{\rho_1<\rho_2<\ldots<\rho_i\}$.
\end{definition}

\begin{definition}[States near an independent run]
    \label{def:near run}
    Consider a word $w=xy$ and a seamless run $\rho:s_0\runsto{x}s_1\runsto{y}s_2$.
    Let $b\in \bbN$. Denote $\vec{c_x}=\xconf(s_0,x)$ and define
    \[\near_b(\rho,x)=\{(s,k)\mid k=|\vec{c_x}(s)-\vec{c_{x}}(s_1)|\le b\}\]
We usually think of $\near_b$ as a sub-configuration $\{-b,\ldots,b,\bot\}^{S}$ where $\bot$ means that a state is either not reachable, or far from $\rho$.
\end{definition}
Given some infix $y$ and a run $\rho$ on it, we now look at the behaviour of the states before and after reading $y$ with respect to how close they are to $\rho$, as well as the overall tendency of $\rho$ (increasing, decreasing, or with weight $0$). This is later used to find repeating infixes.
\begin{definition}[$\difftype$]
    \label{def:diff type}
    Consider a word $w=xy$ and a run $\rho:s_0\runsto{xy}s_1$. Let $b\in \bbN$. 
    We define 
    \[\difftype_b(\rho,x,y)=(\near_b(\rho,x),\sign(\weight(\rho[y])),\near_b(\rho,xy))\]
    We lift this to a set of independent runs $I=\{\rho_1<\ldots<\rho_i\}$ by defining 
    \[\difftype_b(I,x,y)=(\near_b(\rho_j,x),\sign(\weight(\rho_j[y])),\near_b(\rho_j,xy))_{j=1}^i\]
        
    We denote the set of all possible $\difftype_b$ vectors (over a set of runs) by $\difftypeset_b$.
\end{definition}
Intuitively, $\difftype_b(I,x,y)$ contains the following information: for each $\rho\in I$ what is the relative weight of the states near $\rho$ before and after reading $y$, and whether $\rho$ itself increased or decreased upon reading $y$. 
\begin{remark}[Size of $\difftypeset$]
\label{rmk:size of difftypeset}
    Let $i=|I|$ and observe that 
    \[\difftype_b(\rho,x,y)\in \{-b,\ldots,b,\bot\}^S\times \{-1,0,1\}\times \{-b,\ldots,b,\bot\}^S\]
    In particular, we have 
    \[|\difftypeset_b(I,x,y)|\le ((2b+2)^{|S|}\cdot 3\cdot (2b+2)^{|S|})^i= 3^i\cdot (2b+2)^{2|S|i}\] 

    Note that for $b=\simpLfuncCover{d+1}{i}$ it holds that $|\difftypeset_b(I,x,y)|\le \simpLfuncTypes{d+1}{i}$. Similarly, for $b=\generLfuncCover{d+1}{i}$ we have \\ $|\difftypeset_b(I,x,y)|\le \generLfuncTypes{d+1}{i}$.
\end{remark}

We now focus on SSRI, and in particular work over the alphabet $\simpCac$ and focus on $\pot$.
Throughout this section, we mainly look at words with non-decreasing potential:
\begin{definition}[$\pot$-Increasing Word]
\label{def:potential increasing word}
A word $w_1w_2w_3$ is \emph{$\pot$-increasing} if there is a seamless baseline run on $w_1w_2w_3$ and $\pot(w_1)\le \pot(w_1w_2)$.
\end{definition}
Our next definition limits words to a certain length based on their depth and the number of independent runs associated with them by a given set $I$. 
\begin{definition}[$(d,i)$-Fit word]
    \label{def:d i fit}
    Consider a word $w_1w_2w_3\in (\simpCac)^*$ and a set $I$ of $G$-independent runs. The word $w_1w_2w_3$ (with $I$) is \emph{$(d,i)$-fit} if the following hold.
    \begin{itemize}
        \item $w_1w_2w_3$ is $\pot$-increasing.
        \item $\depth(w_2)\le d$.
        \item $|w_2|\le \frac{1}{\extraSize}\simpLfuncLength{d+1}{i}$ where $i=|I|$.
    \end{itemize} 
\end{definition}
We now specialise $(d,i)$-fit words to ones with certain significant properties relating to the behaviour of $\pot$ and the gaps between the runs in $I$.
\begin{definition}[$(d,i)$ Well-Separated and High/Bounded\\-Amplitude word]
\label{def:well separated word}
\label{def:high amplitude word}
\label{def:bounded amplitude word}
A word $w_1w_2w_3\in (\simpCac)^*$ with set $I$ is \emph{$(d,i)$ well separated} if $w_1w_2w_3$ is $(d,i)$-fit and $I$ has gap at least
\[\frac12 \simpLfuncCover{d+1}{i-1}=4\bigM^2(\simpLfuncMaxW{d}\simpLfuncLength{d}{1}\simpLfuncLength{d+1}{i}+\simpLfuncCover{d+1}{i})\] 
If $w_1w_2w_3$ is well separated, then it is:
\begin{itemize}
    \item \emph{$(d,i)$ high amplitude} if $\pot(w_1w_2)-\pot(w_1)>\simpLfuncAmp{d+1}{i}$.
    \item \emph{$(d,i)$ bounded amplitude} if $|\pot(w_1x)-\pot(w_1)|\le \simpLfuncAmp{d+1}{i}$ for every prefix $x$ of $w_2$, and $|w_2|=\frac{1}{\extraSize}\simpLfuncLength{d+1}{i}$.
\end{itemize}
\end{definition}
Note that by \cref{lem:general bounded growth potential} we know the potential increases in a bounded fashion, and therefore high-amplitude words cannot be very short. However, bounded amplitude a-priori holds also for short words, hence the length restriction in \cref{def:bounded amplitude word}.

Our next batch of definitions pertain to decompositions of the $w_2$ part of a word into subwords. They are depicted in \cref{fig:decompositions}

\begin{figure}[H]
  \begin{center}
    \includegraphics[width=0.9\textwidth]{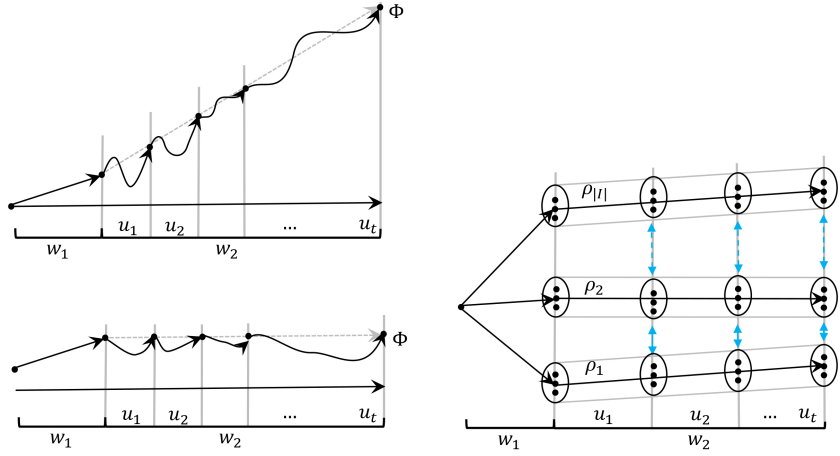}
  \end{center}
  \caption{In a $\pot$-increasing decomposition (top left), the potential gradually increases, and within each segment, it never surpasses the potential at its end. In a $\pot$-bounded decomposition (bottom left), the potential keeps its value at the end of the segments, and certain points within the segments never surpass it, and in cover decomposition (right), the states within the sub-configurations at the end of each segment are close to one of the independent runs. }

  \label{fig:decompositions}
\end{figure}

\begin{definition}[$\pot$-increasing $(d,i)$-decomposition]
\label{def: pot increasing decomposition}
Consider a $(d,i)$ high-amplitude word $w_1w_2w_3$ with a decomposition $w_2=u_0u_1\cdots u_t,u_{t+1}$. Denote $v_j=u_0\cdot u_1\cdots u_j$ for every $0\le j\le t$. The decomposition is a \emph{$\pot$-increasing $(d,i)$-decomposition} if the following hold.
\begin{enumerate}
    \item $t=\ramsey(\simpLfuncTypes{d+1}{i},3)$.
    \item For every $0\le j<j'\le t$ we have $\pot(w_1v_j)<\pot(w_1v_{j'})$.
    \item For every $1\le j\le t$ and prefix $x$ of $u_j$ we have $\pot(w_1v_{j-1}x)\le \pot(w_1v_j)$.
    \item $|u_0|\ge \simpLfuncLength{d+1}{i+1}$.
    \item For every $1\le j\le t$ we have $|u_i|\ge \simpLfuncLength{d}{1}$.
\end{enumerate}
\end{definition}

Intuitively, a $\pot$-increasing decomposition has the following properties: there are enough segments to apply some Ramsey argument (Item 1), the potential when moving from one end of a segment to the next end of a segment strictly increases (Item 2), while in all prefixes between such ends the potential might decrease, but never surpasses the end of the segment (Item 3). 
The length of the prefix is enough to apply an inductive argument on depth $d+1$ (Item 4), and all other segments are long enough for depth $d$ (Item 5). These length constraints become clear when we use the definition. Note that the last segment $u_{t+1}$ is just padding to cover all of $w_2$, and has no restrictions.

We now show that under the right circumstances, such a decomposition exists.
\begin{proposition}[From high amplitude to $\pot$-increasing decomposition]
    \label{prop: pot high amplitude to pot increasing decomposition}
    Every $(d,i)$ high-amplitude word has a $\pot$-increasing $(d,i)$-decomposition.
\end{proposition}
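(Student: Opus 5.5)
The construction is a greedy ``record-breaking'' decomposition of $w_2$ built from the bounded growth of the potential. Write $P_0=\pot(w_1)$, and let $d=\depth(w_2)$ bound the depth of every letter of $w_2$. Three facts are combined. By \cref{cor:effective bounded growth potential}, each letter of $w_2$ raises $\pot$ by at most $2\simpLfuncMaxW{d}$. By the high-amplitude assumption, $\pot(w_1w_2)-P_0>\simpLfuncAmp{d+1}{i}$. By the definition of $\simpLfuncAmp{d+1}{i}$, this amplitude is at least $\extraSize\cdot 2\simpLfuncMaxW{d}\cdot(t+\simpLfuncLength{d+1}{i+1}+\simpLfuncLength{d}{1})$ with $t=\ramsey(\simpLfuncTypes{d+1}{i},3)$.

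First, choose the end of $u_0$. Let $v_0$ be the prefix of $w_2$ of length at least $\simpLfuncLength{d+1}{i+1}$ at which $\pot(w_1v_0)$ attains the maximum over all prefixes of length at most that threshold, taking the last such position if there are ties. Moving $v_0$ forward to this running-maximum point keeps Item 4. By bounded growth, $\pot(w_1v_0)\le P_0+2\simpLfuncMaxW{d}\cdot\simpLfuncLength{d+1}{i+1}$, so almost all of the amplitude is still available after $v_0$.

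Next, build $u_1,\dots,u_t$ inductively. Given $v_{j-1}$, skip ahead at least $\simpLfuncLength{d}{1}$ letters. Then let $v_j$ be the first prefix extending that point at which the potential reaches a new strict record, meaning it strictly exceeds $\pot(w_1v_{j-1})$ and also every potential value attained on prefixes between $v_{j-1}$ and $v_j$. Concretely, take $v_j$ to be the first position after the skip whose potential is at least the maximum of $\pot$ over all prefixes in $(v_{j-1},\text{skip point}]$ and is strictly above $\pot(w_1v_{j-1})$.

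This choice gives Items 2, 3 and 5 directly. Item 3 holds because $v_j$ is a record point, so no intermediate prefix exceeds it. Item 2 follows from strict increase and transitivity. Item 5 holds because of the skip. Item 1 is immediate from the definition of $t$, and whatever remains of $w_2$ becomes $u_{t+1}$.

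The main obstacle is proving that such a record point $v_j$ exists inside $w_2$ at every step $j\le t$. The budget argument goes as follows. During the skip the potential can rise by at most $2\simpLfuncMaxW{d}\cdot\simpLfuncLength{d}{1}$. After the skip, some later prefix reaches the final value $\pot(w_1w_2)$, so a record above the skip-window maximum is guaranteed as long as
\[\pot(w_1w_2)>\pot(w_1v_{j-1})+2\simpLfuncMaxW{d}(\simpLfuncLength{d}{1}+1).\]
Each step therefore consumes at most $2\simpLfuncMaxW{d}(\simpLfuncLength{d}{1}+1)$ of the amplitude. After $t$ steps the total consumed, together with the cost of the $u_0$ step, is at most $2\simpLfuncMaxW{d}\bigl(\simpLfuncLength{d+1}{i+1}+t(\simpLfuncLength{d}{1}+1)\bigr)$. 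This sum is not obviously below $\simpLfuncAmp{d+1}{i}$, because the amplitude contains the sum $t+\simpLfuncLength{d}{1}$ rather than the product $t\cdot\simpLfuncLength{d}{1}$.

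To handle this I would use the record structure more carefully. The skip only needs to be long, not to consume potential, and a new record after the skip may lie only slightly above the old one. A strict record requires just one unit above the running maximum, so each step truly costs only the gain during the skip window. I would try to bound that gain by $2\simpLfuncMaxW{d}$ plus the amount by which the skip window rises. If this fails, the fallback is to absorb the product using the $\extraSize$ slack together with the fact that $\simpLfuncMaxW{d}$ already dominates $\simpLfuncLength{d}{1}$ through its recursive definition.
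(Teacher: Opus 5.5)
Your record-point construction is sound as far as it goes. Let $M$ be the maximum of $\pot$ over the skip window, and take $v_j$ to be the first position after the skip whose potential reaches $\max(M,\pot(w_1v_{j-1})+1)$; this does give Items 2, 3 and 5. Your definition of $v_0$ is muddled but unnecessary: Item 3 does not constrain $u_0$, so $|u_0|=\simpLfuncLength{d+1}{i+1}$ works. This is essentially the paper's greedy construction. The paper instead takes shortest prefixes reaching thresholds spaced $4\simpLfuncMaxW{d}\simpLfuncLength{d}{1}$ apart, so Item 5 is forced by bounded growth rather than by an explicit skip.

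The gap is exactly the step you flag: you never show that the $t$ record points exist inside $w_2$, and neither remedy can work. Bounded growth alone permits a profile in which $\pot$ rises by exactly $2\simpLfuncMaxW{d}$ at every letter of $w_2$. For such a profile the amplitude equals $2\simpLfuncMaxW{d}|w_2|$, while Items 4 and 5 force $|w_2|\ge\simpLfuncLength{d+1}{i+1}+t\cdot\simpLfuncLength{d}{1}$ for \emph{any} decomposition. A word whose amplitude barely exceeds the sum-form $\simpLfuncAmp{d+1}{i}$ has length only about $\extraSize(t+\simpLfuncLength{d+1}{i+1}+\simpLfuncLength{d}{1})$, which is too short. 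So finer bookkeeping of record points (your first remedy) cannot help: the window gain really can be $2\simpLfuncMaxW{d}\simpLfuncLength{d}{1}$ at every step. Your second remedy fails because $\simpLfuncMaxW{d}$ is a common factor of both the cost and $\simpLfuncAmp{d+1}{i}$. Knowing $\simpLfuncMaxW{d}\ge\simpLfuncLength{d}{1}$ does not turn the sum $t+\simpLfuncLength{d}{1}$ into the product $t\cdot\simpLfuncLength{d}{1}$, and $\extraSize$ is fixed while $t$ and $\simpLfuncLength{d}{1}$ are astronomically larger.

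The missing ingredient is therefore a property of the constant, not of the construction. You need $\simpLfuncAmp{d+1}{i}$ to dominate $2\simpLfuncMaxW{d}\bigl(\simpLfuncLength{d+1}{i+1}+t(\simpLfuncLength{d}{1}+1)\bigr)+t$, i.e.\ the product $\ramsey(\simpLfuncTypes{d+1}{i},3)\cdot\simpLfuncLength{d}{1}$ must appear in the amplitude. The paper's proof gets past this point only by asserting $\simpLfuncAmp{d+1}{i}\ge 8\cdot\ramsey(\simpLfuncTypes{d+1}{i},3)\cdot\simpLfuncMaxW{d}\cdot\simpLfuncLength{d}{1}$ ``from the definition''. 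As you noticed, this does not follow from the displayed sum, so the paper is tacitly using the product form.

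To complete your argument, state and use such a bound explicitly. For example, take $\simpLfuncAmp{d+1}{i}=\extraSize\cdot 2\simpLfuncMaxW{d}\cdot\bigl(\ramsey(\simpLfuncTypes{d+1}{i},3)\cdot\simpLfuncLength{d}{1}+\simpLfuncLength{d+1}{i+1}\bigr)$. This leaves the dependency graph and the one-step elementary bounds of the complexity analysis unchanged, and with it your budget closes at once. As written, however, the proposal does not prove the proposition.
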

\begin{proof}
    Consider a $(d,i)$ high-amplitude word $w_1w_2w_3$ as per \cref{def:high amplitude word}, then $\pot(w_1w_2)-\pot(w_1)> \simpLfuncAmp{d+1}{i}$. 
    \cref{cor:effective bounded growth potential} shows that the potential $\pot$ has a bounded growth rate. Specifically, with each letter the potential can grow by at most $2\simpLfuncMaxW{d}$. 

    Thus, intuitively, in order to increase the potential from $\pot(w_1)$ to $\pot(w_1w_2)$, there must be many prefixes of $w_2$ at which the potential increases slightly above its previous values. By splitting at those indices, we obtain the decomposition. We now make this precise.

    Define $u_0$ to be the shortest prefix of $w_2$ such that 
    \[\pot(w_1u_0)\ge \pot(w_1)+4\simpLfuncMaxW{d}\cdot \simpLfuncLength{d+1}{i+1}\] 
    We then proceed inductively: for every $1\le i\le \ramsey(\simpLfuncTypes{d+1}{i},3)$, let $w_2=u_0u_1\cdots u_{i-1}z$, we define $u_i$ to be the shortest prefix of $z$ such that \[\pot(w_1u_0\cdots u_i)\ge \pot(w_1)+4\cdot i\cdot \simpLfuncMaxW{d}\cdot \simpLfuncLength{d}{1}\]

    We claim that all the words above exist and that $u_i\neq \epsilon$ for all $i$.
    Indeed, $\pot(w_1w_2)-\pot(w_1)\ge \simpLfuncAmp{d+1}{i}$, and observe from \cref{def: simpL func Amp} that
    \[\simpLfuncAmp{d+1}{i}\ge 4\cdot \simpLfuncMaxW{d}\cdot \simpLfuncLength{d+1}{i+1}\]
    so $u_0$ exists, and by the bounded growth of $\pot$ (\cref{cor:effective bounded growth potential}) we also have $|u_0|\ge \simpLfuncLength{d+1}{i+1}$.
    
    Similarly, from \cref{def: simpL func Amp} we also have
    \[\simpLfuncAmp{d+1}{i}\ge 8\cdot \ramsey(\simpLfuncTypes{d+1}{i},3)\cdot \simpLfuncMaxW{d}\cdot \simpLfuncLength{d}{1}\]
    so that all the $u_i$ exist, and by \cref{cor:effective bounded growth potential} we have that 
    \[
    \pot(w_1u_0\cdots u_{i+1})-\pot(w_1u_0u_1\cdots u_i)\ge 4\cdot \simpLfuncMaxW{d}\cdot \simpLfuncLength{d}{1}
    \]
    so $|u_{i+1}|\ge  \simpLfuncLength{d}{1}$. 

    Finally, denote $v_j$ as per \cref{def: pot increasing decomposition}, then since each $u_j$ is chosen as a minimal prefix, we trivially have that for every prefix $x$ of $u_j$ it holds that $\pot(w_1v_{j-1}x)\le \pot(w_1v_j)$. 
    This concludes all the desired properties of the decomposition.
\end{proof}

Dually to \cref{def: pot increasing decomposition}, we present a decomposition that corresponds to bounded-amplitude words (\cref{def:bounded amplitude word}). The specific requirements of the segments differ significantly, though.
\begin{definition}[$\pot$-bounded $(d,i)$-decomposition]
    \label{def: pot bounded decomposition}
    Consider a $(d,i)$ bounded-amplitude word $w_1w_2w_3$ with a decomposition $w_2=u_0u_1\cdots u_tu_{t+1}$. Denote $v_j=u_0\cdot u_1\cdots u_j$ for every $0\le j\le t$, and let $E=\frac{1}{\extraSize}\simpLfuncLength{d}{1}$.
    The decomposition is a \emph{$\pot$-bounded $(d,i)$-decomposition} if the following hold.
    \begin{enumerate}
        %\item $t=  \ramsey(\simpLfuncTypes{d+1}{i},3)+\simpLfuncLength{d}{1}+\simpLfuncLength{d+1}{i+1}+1$.
        \item $t=\ramsey(\simpLfuncTypes{d+1}{i},3)+1$.
        \item For every $0<j<j'\le t$ we have $\pot(w_1v_j)=\pot(w_1v_{j'})$.
        \item For every $0\le j\le t$ we have that $|u_j|$ is a multiple of $E$. 
        \item $|u_0|\ge \simpLfuncLength{d+1}{i+1}$.
        \item For every $0<j\le t$ and prefix $x$ of $u_j$ with $|x|$ multiple of $E$ we have $\pot(w_1v_{j}x)\le \pot(w_1v_{j+1})$.
        %\item $|u_1|\ge \simpLfuncLength{d}{1}+\simpLfuncLength{d+1}{i+1}$.
    \end{enumerate}
\end{definition}
Analogously to \cref{prop: pot high amplitude to pot increasing decomposition}, we show that every bounded-amplitude word can be decomposed. The proof, however, is more intricate. Intuitively, this is because we no longer have the increasing potential to decompose by, so we need to introduce the decomposition indices ``artificially'', but with great care.

\begin{proposition}[From bounded amplitude to $\pot$-bounded decomposition]
\label{prop: pot bounded amp to bounded decomposition}
Every $\pot$-bounded word has a $\pot$-bounded $(d,i)$-decomposition.
\end{proposition}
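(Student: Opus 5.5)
The plan is a pigeonhole argument on the potential values sampled at grid points. Write $E=\frac{1}{\extraSize}\simpLfuncLength{d}{1}$ and $P_0=\pot(w_1)$. By the bounded-amplitude assumption, every prefix $x$ of $w_2$ has $\pot(w_1x)\in[P_0-\simpLfuncAmp{d+1}{i},P_0+\simpLfuncAmp{d+1}{i}]$, a window of $2\simpLfuncAmp{d+1}{i}+1$ integer values. First I fix $u_0$ as the prefix of $w_2$ of length exactly $\simpLfuncLength{d+1}{i+1}$, rounded up to a multiple of $E$. This gives Item~4 and the $u_0$ case of Item~3. The remainder $w'_2$ still has length at least $\frac{1}{\extraSize}\simpLfuncLength{d+1}{i}-E-\simpLfuncLength{d+1}{i+1}$. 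From \cref{def: simpL func Length} this is at least $E\cdot N^{2\simpLfuncAmp{d+1}{i}+1}$ with $N=\ramsey(\simpLfuncTypes{d+1}{i},3)+2$, since $\simpLfuncLength{d+1}{i}$ carries the factor $\extraSize\cdot\simpLfuncLength{d}{1}\cdot\simpLfuncLength{d+1}{i+1}\cdot N^{2\simpLfuncAmp{d+1}{i}+1}$.

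I then sample only the grid points of $w'_2$, i.e.\ prefixes whose length is a multiple of $E$. There are at least $N^{A}$ of them, where $A=2\simpLfuncAmp{d+1}{i}+1$. The construction is an iterated "record-value" argument by induction on the number of admissible values. The claim to prove is that any sequence of at least $N^{a}$ grid points whose values lie in a set of at most $a$ values contains $t+1=N-1$ grid points with equal value $P$ such that no intermediate grid point between the first and last of them exceeds $P$.

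For the inductive step, let $P$ be the maximal value in the sequence. If it is attained at least $N-1$ times, take those occurrences. Nothing between them exceeds the global maximum, so Items~2 and~5 hold, provided the cut indices are the grid points where $\pot=P$.

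Otherwise, the fewer than $N-1$ occurrences of $P$ split the sequence into at most $N-1$ blocks. One block therefore has length at least $N^{a}/N\ge N^{a-1}$ and avoids $P$, so it uses at most $a-1$ values, and induction applies to it. The intermediate condition is local to the chosen stretch, so it survives.

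Having found grid indices $g_1<\dots<g_{t+1}$, I set $u_1$ to be the segment from the end of $u_0$ to $g_1$ and $u_j$ to be the segment from $g_{j-1}$ to $g_j$. The final padding $u_{t+1}$ is the rest of $w_2$. All cut points are multiples of $E$ after $u_0$, and $|u_0|$ is a multiple of $E$, which gives Item~3. Item~2 holds for $0<j<j'$ by construction.

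The step I expect to need the most care is the index bookkeeping. The condition "$\pot(w_1v_jx)\le\pot(w_1v_{j+1})$" must hold for grid prefixes $x$ of $u_{j+1}$, and this is what forces the first segment $u_1$ to already start at a record point; it may be cleaner to start $u_1$ with an empty-length offset. The other delicate point is verifying that the numeric slack in $\simpLfuncLength{d+1}{i}$ really dominates $E\cdot N^{A}+\simpLfuncLength{d+1}{i+1}$ after dividing by $\extraSize$. This should follow since $\simpLfuncLength{d+1}{i}/\extraSize=\simpLfuncLength{d}{1}\simpLfuncLength{d+1}{i+1}N^{A}\ge E\,N^{A}+\simpLfuncLength{d+1}{i+1}+E$ whenever $\simpLfuncLength{d+1}{i+1}\ge 2$; the degenerate cases need separate handling.
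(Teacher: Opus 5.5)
Your route is the paper's. You sample $\pot$ only at prefixes whose length is a multiple of $E$, after a fixed prefix $u_0$ of length at least $\simpLfuncLength{d+1}{i+1}$. You then run the iterated pigeonhole over the $2\simpLfuncAmp{d+1}{i}+1$ values that bounded amplitude allows: either the current maximum recurs often enough, or some long stretch avoids it and you recurse with one fewer value. Your length check is exactly what the factor $(\ramsey(\simpLfuncTypes{d+1}{i},3)+2)^{2\simpLfuncAmp{d+1}{i}+1}$ in $\simpLfuncLength{d+1}{i}$ is for; the paper simply asserts it with $\gg$.

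As written, though, the final bookkeeping fails, for two related reasons.

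\emph{The count is off by one.} Since $t=\ramsey(\simpLfuncTypes{d+1}{i},3)+1=N-1$, we have $t+1=N$, not $N-1$. So your claim produces only $t$ record points, while your construction (rightly) asks for $g_1<\dots<g_{t+1}$.

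\emph{The first segment is uncontrolled.} You let $u_1$ start at the end of the fixed $u_0$. The grid prefixes of $u_1$ before $g_1$ may lie outside the block in which the records were found, and can carry larger values (e.g.\ the global maximum). Then $\pot(w_1v_0x)\le\pot(w_1v_1)$ fails for such grid prefixes $x$ of $u_1$. The paper's proof establishes Item~5 in the form $\pot(w_1v_{j-1}x)\le\pot(w_1v_j)$ for grid prefixes $x$ of $u_j$, $1\le j\le t$. The instance $j=1$ is genuinely needed: \cref{prop:pot cover decomp induces SRI or lower depth} uses it when the monochromatic triple has $j=1$, $j'=2$. Your suggested ``empty-length offset'' does not address this.

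The fix is the paper's:
\begin{enumerate}
\item Ask the claim for $N$ points. The induction still works: at most $N-1$ occurrences of the maximum leave at most $N$ blocks, and one of them has at least $\lceil (N^a-N+1)/N\rceil=N^{a-1}$ points.
\item Extend $u_0$ up to $g_1$. It is still a multiple of $E$ and still long enough.
\item Let $u_j$ run from $g_j$ to $g_{j+1}$ for $1\le j\le t$, and put the rest of $w_2$ into $u_{t+1}$.
\end{enumerate}
Then every $u_j$ with $j\ge1$ lies between two records of the same value. The running maximum at that recursion level bounds every grid point in between, so Items~2--5 all hold.
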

\begin{proof}
    Consider a $\pot$-bounded word $w_1w_2w_3$. By \cref{def:bounded amplitude word} we have that $|w_2|=\frac{1}{\extraSize}\simpLfuncLength{d+1}{i}$ and for every prefix $x$ of $w_2$ it holds that 
    \[
    \pot(w_1)-\simpLfuncAmp{d+1}{i}\le \pot(w_1x)\le \pot(w_1)+\simpLfuncAmp{d+1}{i}
    \]
    Let $E=\frac{1}{\extraSize}\simpLfuncLength{d}{1}$ be as in \cref{def: pot bounded decomposition}.
    We start with a preliminary decomposition of $w_2=u_0x_0\cdots x_{t'}$ as follows.  
    Observe the coarse lower bound
    \[
    \begin{split}
    &|w_2|=\frac{1}{\extraSize}\simpLfuncLength{d+1}{i}=\extraSize\cdot E \cdot \simpLfuncLength{d+1}{i+1}\cdot \\
    &(\ramsey(\simpLfuncTypes{d+1}{i},3)+2)^{2\simpLfuncAmp{d+1}{i}+1}\\
    &\gg E\cdot (\simpLfuncLength{d+1}{i+1}+ (\ramsey(\simpLfuncTypes{d+1}{i},3)+2)^{2\simpLfuncAmp{d+1}{i}+1})
    \end{split}
    \]
    We can take the prefix $u_0$ such that $|u_0|=E\cdot \simpLfuncLength{d+1}{i+1}$, and define $x_0,\ldots, x_{t'}$ such that $|x_j|=E$ for every $0 \le j< t'$ (so that we cannot decompose further, i.e., $t'$ is maximal). 
    We have in particular that $t'\gg (\ramsey(\simpLfuncTypes{d+1}{i},3)+2)^{2\simpLfuncAmp{d+1}{i}+1}$.
    Henceforth, we only consider positions at the end of each $x_j$. 
        
    Denote $T=\ramsey(\simpLfuncTypes{d+1}{i},3)+1$. 
    We start with an intuition for the proof, where the difficult part is to ensure requirement 5 of \cref{def: pot bounded decomposition}.
    We initially look for $T+1$ positions along $w_2$ where the potential is $\pot(w_1)+\simpLfuncAmp{d+1}{i}$. If we find such positions, we use the infixes between them as the decomposition, and we are done (as this is the highest possible potential, so requirement 5 certainly holds). 
    
    Otherwise, there are at most $T$ such positions, and since $t'$ is very large there is a long infix of $w_2$ where $\pot$ remains at most $\pot(w_1)+\simpLfuncAmp{d+1}{i}-1$ (at the ends of the $x_j$s). We proceed inductively, either finding $T+1$ positions, or lowering the upper bound. Due to the bounded amplitude, this terminates and we find $T+1$ such positions. We then use these positions to construct the decomposition.
    We now turn to formalise this.

    Consider a constant $-\simpLfuncAmp{d+1}{i}\le  P\le \simpLfuncAmp{d+1}{i}$ and an infix $w'_2=x_{j}\cdots x_{j'}$ for some $j\le j'$ of $w_2$. We say that $w'_2$ is $P$-upper bounded if $\pot(w_1u_0x_0\cdots x_{j''})\le \pot(w_1)+P$ for every $j\le j''\le j'$ (i.e., every prefix of this infix, restricted to the $x_j$s).
    From the pigeonhole principle, if $w'_2$ is $P$-upper bounded, then one of the following holds:
    \begin{enumerate}
        \item $w'_2$ has at $T+1$ prefixes $z_0,z_1,\ldots,z_T$ where $\pot(w_1u_0z_k)=\pot(w_1)+P$ for every $k\ge 0$ (where the prefixes are concatenations of the $x_j$s).
        \item There is an infix $w''_2$ of $w_2$ that is $P-1$-upper bounded, and $|w''_2| \ge E\cdot \frac{t'}{T+2}$.
    \end{enumerate}
    Moreover, if $P=-\simpLfuncAmp{d+1}{i}$ then we trivially have the first case, provided $|w'_2|\ge E\cdot (T+1)$.
    Recall that $t'\ge  (T+2)^{2\simpLfuncAmp{d+1}{i}+1}$ and that $w_2$ is $\simpLfuncAmp{d+1}{i}$-upper bounded. 
    We can therefore apply the reasoning above inductively, starting from $P=\simpLfuncAmp{d+1}{i}$. Since we divide the length by $T+1$ at each step, then in the worst case we reach $P=-\simpLfuncAmp{d+1}{i}$ with length $T+2$ left, after $2\simpLfuncAmp{d+1}{i}$ repetitions.
    
    By implicitly adding to $u_0$ the infix until the first position in this set, we therefore found a decomposition $w_2=u_0u_1\cdots u_{T}$ where $\pot(w_1u_0\cdots u_j)=\pot(w_1u_0\cdots u_{j'})$ for every $0<j<j'\le T$ and for every prefix $x$ of $u_{j}$ with $|x|$ multiple of $E$, we have $\pot(w_1u_0\cdots u_{j-1}x)\le \pot(w_1u_0\cdots u_{j})$. Moreover, $T\ge t$, so we can arbitrarily choose $t$ of the infixes, and leave the rest in $u_{t+1}$.
\end{proof}

The next property we consider is a \emph{cover}. Intuitively, a decomposition is a cover if at every end point of an infix, all the states are close to one of the independent runs. Then, tracking the states near independent runs amounts to tracking all states. This idea is depicted in \cref{fig:decompositions}.

\begin{definition}[Cover decomposition]
    \label{def: cover decomposition}
    Consider a $(d,i)$ well-separated word $w_1w_2w_3$ (see \cref{def:well separated word}) with a set of independent runs $I$ and a decomposition $w_2=u_0u_1\cdots u_tu_{t+1}$. 
    Denote $v_j=u_0\cdot u_1\cdots u_j$ and $\vec{c_j}=\xconf(s_0,v_j)$ for every $0\le j\le t$. Let $b= \simpLfuncCover{d+1}{i}$ with $|I|=i$.
    
    The decomposition is a \emph{cover} if for every $0< j\le t$ and state $s\in \supp(\vec{c_j})$ there exists $\rho\in I$ such that $\near_b(\rho,w_1v_j)(s)\neq \bot$. That is, $s$ is near $\rho$ after reading $w_1v_j$.
\end{definition}

Our final decomposition lemma is that if the word is either $\pot$-increasing or $\pot$-bounded, then we either have an SRI, or we can reduce the depth of the decomposition. This is the main tool we use inductively in \cref{sec:well separation to SSRI}.
\begin{proposition}
    \label{prop:pot cover decomp induces SRI or lower depth}
    Consider a $(d,i)$ well-separated word $w_1w_2w_3$ with a cover decomposition $w_2=u_0u_1\cdots u_t,u_{t+1}$ (\cref{def: cover decomposition}).
    Assume the word is either high-amplitude and the decomposition is $\pot$-increasing, or the word is bounded-amplitude and the decomposition is $\pot$-bounded (\cref{def:high amplitude word,def: pot increasing decomposition,def: pot bounded decomposition}). 
    Then one of the following holds:
    \begin{enumerate}
        \item $w_1w_2w_3=uxyv$ where $uxyv$ is an SSRI with $\depth(x)=d$ and $xy$ is an infix of $w_2$.
        \item $w_1w_2w_2=w'_1w'_2w'_3$ where $w'_1w'_2w'_3$ is a $(d-1,1)$-fit well-separated word (\cref{def:well separated word}) and $w'_2$ is an infix of $w_2$ with $|w'_2|=\frac{1}{\extraSize}\simpLfuncLength{d}{1}$.
    \end{enumerate}
\end{proposition}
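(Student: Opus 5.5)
The plan is a finite Ramsey colouring on the segment endpoints of the decomposition. By \cref{def: pot increasing decomposition} or \cref{def: pot bounded decomposition}, there are $t\ge \ramsey(\simpLfuncTypes{d+1}{i},3)$ endpoints $0<j\le t$. For $j<j'$ I would colour the pair $\{j,j'\}$ by
\[
\difftype_b(I,w_1v_j,u_{j+1}\cdots u_{j'}),\qquad b=\simpLfuncCover{d+1}{i}.
\]
By \cref{rmk:size of difftypeset} there are at most $\simpLfuncTypes{d+1}{i}$ colours, so \cref{def:ramsey function} gives three indices $j_1<j_2<j_3$ on which the colouring is monochromatic. I then set
\[
u=w_1v_{j_1},\quad x=u_{j_1+1}\cdots u_{j_2},\quad y=u_{j_2+1}\cdots u_{j_3},\quad v=\text{the rest}.
\]
Before verifying \cref{def:separated repeating infix}, I split on the length of $x$, which decides between the two outcomes.

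\textbf{Case $|x|\le \frac{1}{\bigM}\simpLfuncLength{d+1}{1}$.} Here I aim for outcome 1. Because $I$ is a cover at the endpoints, $\near_b$ determines the configuration up to a common shift per independent run. Monochromaticity of the pairs $\{j_1,j_2\}$, $\{j_2,j_3\}$ and $\{j_1,j_3\}$ makes the near-vectors at $j_1,j_2,j_3$ identical. I take the partition $V_r$ to be the set of states near $\rho_r$. This gives support equality and the constant shifts $k_{r,x}=\weight(\rho_r[x])$ and $k_{r,y}=\weight(\rho_r[y])$, with the same sign because the colour records the sign.

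For the gap condition, the separation between groups is at least the independence gap $\frac12\simpLfuncCover{d+1}{i-1}$ minus $2b$. By \cref{prop: relating cover length and maxW} this dominates $b$ plus $4|S|\bigM\maxeff{xy}$, using $|xy|\le|w_2|$ and $\maxeff{\cdot}\le \simpLfuncMaxW{d}|\cdot|$ from \cref{prop:simp maxw bounds maxeff}. I expect this to need $\maxeff{xy}$ bounded via the segment lengths rather than all of $w_2$, which is where the factor $\extraSize$ is spent. The potential conditions come directly from the decomposition properties. In the increasing case $\pot$ is strictly increasing at the endpoints. In the bounded case $\pot$ is equal at the endpoints, so both equalities hold together, as the iff clause of \cref{def: SSRI and GSRI} requires. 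The seamless baseline is inherited from the word being $\pot$-increasing. Finally, $\depth(x)\le d$; if $\depth(x)<d$ the length bound is even easier, and I would handle that in the same way, accepting a small mismatch with the stated $\depth(x)=d$.

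\textbf{Case $|x|$ long.} Here I aim for outcome 2. The segments $u_j$ for $j\ge1$ have length at least $\simpLfuncLength{d}{1}$, or are multiples of $E$ in the bounded case. If some segment inside $x$ has depth less than $d$, I take $w'_2$ to be a length-$\frac{1}{\extraSize}\simpLfuncLength{d}{1}$ infix of it, with $w'_1$ the preceding prefix. I then need $w'_1w'_2w'_3$ to be $\pot$-increasing, which I would get from property 3 (resp.\ 5) of the decomposition: the potential at segment ends dominates the interior points. A single independent run, the baseline, trivially satisfies well-separation, so the word is $(d-1,1)$ well separated.

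The hardest part, I expect, is reconciling the two cases. A long $x$ all of whose letters have depth $d$ should still contradict the length bounds through the product structure of $\simpLfuncLength{d+1}{i}$. I would first try choosing the Ramsey triple among consecutive indices, so that $x$ is a few segments long, and exploiting that the $d$-depth letters are few.
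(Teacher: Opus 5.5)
Your Ramsey set-up (colouring pairs of endpoints by $\difftype$, extracting a monochromatic triple, cutting $u,x,y,v$) matches the paper's. There is a genuine gap, though: you split on the wrong quantity.

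\textbf{The gap.} You split on whether $|x|\le \frac{1}{\bigM}\simpLfuncLength{d+1}{1}$. The fit condition already gives
$|x|\le |w_2|\le \frac{1}{\extraSize}\simpLfuncLength{d+1}{i}\le \frac{1}{\bigM}\simpLfuncLength{d+1}{1}$.
So your ``long $x$'' case never occurs, and your argument never produces outcome 2.

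Meanwhile your first case absorbs $\depth(x)<d$, and you claim the length bound is ``even easier'' there. It is the opposite. The SRI definition requires $|x|\le \frac{1}{\bigM}\simpL(\depth(x)+1)$. When $\depth(x)<d$, the quantity $\simpL(\depth(x)+1)=\simpLfuncLength{\depth(x)+1}{1}$ is far smaller than $\simpLfuncLength{d+1}{1}$, because the length functions grow with depth. So you cannot certify an SSRI in that situation. Outcome 1 also explicitly demands $\depth(x)=d$. This is not a small mismatch: it is exactly the situation outcome 2 exists to handle. Your closing speculation (consecutive Ramsey indices, counting depth-$d$ letters) is aimed at a case that cannot arise.

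\textbf{The fix} is to split on $\depth(x)$, which is what the paper does.

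\emph{If $\depth(x)=d$:} your SSRI verification goes through, with the length condition automatic from the fit bound. For the gaps, $|xy|\le |w_2|$ suffices, since the independence gap already contains the factor $\simpLfuncMaxW{d}\simpLfuncLength{d}{1}\simpLfuncLength{d+1}{i}$. No segment-length accounting is needed.

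\emph{If $\depth(x)<d$:} every letter of $x$ has depth at most $d-1$. Take $w'_2$ to be the suffix of $x$ of length $\frac{1}{\extraSize}\simpLfuncLength{d}{1}$. It exists because $x$ contains the whole segment $u_{j_2}$. That segment has length at least $\simpLfuncLength{d}{1}$ in the increasing case, and is a multiple of $E$ in the bounded case. Then $w'_1w'_2$ ends at the endpoint $v_{j_2}$, while $w'_1$ ends inside $u_{j_2}$ (at an $E$-multiple position in the bounded case). Item 3 of the increasing decomposition, respectively item 5 of the bounded one, gives $\pot(w'_1)\le \pot(w'_1w'_2)$. A single run makes well-separation trivial, so the word is $(d-1,1)$-fit well separated. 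This is essentially your Case 2 construction, but triggered by depth rather than length, and applied to the suffix of $x$ rather than to an arbitrary lower-depth segment.
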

\begin{proof}
    Let $v_j$ denote $u_0\cdots u_j$ as per \cref{def: pot increasing decomposition,def: pot bounded decomposition}.
    We begin by identifying a structure that repeats twice along the decomposition. Let $b=2\simpLfuncCover{d+1}{i}$, and consider a complete graph on the vertices $\{1,\ldots,t\}$. We define a colouring of the edges as follows. For $1\le j<j'\le t$ write $u_{j,j'}=u_{j}\cdots u_{j'-1}$, colour the edge $\{j,j'\}$ with $\difftype_b(w_1v_{j-1},u_{j,j'})$. 
    
    By \cref{rmk:size of difftypeset}, the number of colours we use is bounded by $\simpLfuncTypes{d+1}{i}$, and by \cref{def: pot increasing decomposition,def: pot bounded decomposition} we have $$t\ge \ramsey(\simpLfuncTypes{d+1}{i},3)$$ Thus, there exists a monochromatic 3-clique in our graph, i.e., there are $j<j'<j''$ such that 
    \begin{align*}
    \difftype_b(w_1v_{j-1},u_{j,j'}) =  &\difftype_b(w_1v_{j'-1},u_{j',j''})\\ =  &\difftype_b(w_1v_{j-1},u_{j,j''})
    \end{align*}
    Write $w_1w_2w_3=uxyv$ where $u=w_1v_{j-1}$, $x=u_{j,j'}$, $y=u_{j',j''}$ and $z=u_{j''+1}\cdots u_tw_2$. Clearly $xy$ is an infix of $w_2$.
    
    We now separate to two cases. 
    \paragraph*{If $\depth(x)=d$.} In this case we claim that $uxyv$ is an SSRI. Indeed, we go over the requirements of \cref{def:separated repeating infix,def: SSRI and GSRI}:
    \begin{enumerate}
        \item $\supp(\vec{c_u})=\supp(\vec{c_{ux}})=\supp(\vec{c_{uxy}})=B$: since the decomposition is a cover, then as noted in \cref{def: cover decomposition}, every reachable state is near some independent run, and in particular the equivalent $\difftype_b$ implies the same support.
        \item In the case of a $\pot$-increasing decomposition (\cref{def: pot increasing decomposition}), $\pot(u)\le \pot(ux)\le \pot(uxy)$ follows immediately. Then, the fact that $\pot(u)=\pot(ux)$ iff $\pot(ux)=\pot(uxy)$ is actually implied by the definition, in particular by Item 4 below. 

        In the case of a $\pot$-bounded decomposition, we already have $\pot(u)= \pot(ux)= \pot(uxy)$. Note that this is the only place where we treat the two types of decomposition separately.
        \item $|x|\le \frac{1}{\bigM}\simpL(\depth(x)+1)=\frac{1}{\bigM}\simpLfuncLength{d+1}{1}$: this is immediate from the requirement in \cref{def:d i fit} that $|w_2|\le \frac{1}{\extraSize}\simpLfuncLength{d+1}{i}\le \frac{1}{\bigM}\simpLfuncLength{d+1}{1}$ (recall that the inductive definition of $\simpLfuncLength{d+1}{i}$ is with increasing $i$).
        \item We partition $B=V_1\cup\ldots \cup V_i$ according to the independent runs. That is, set $V_j=\supp(\near_b(\rho_j,x))$ for $b=\simpLfuncCover{d+1}{i}$ where $I=\{\rho_1<\ldots<\rho_i\}$. 
        By \cref{def: cover decomposition} this is indeed a partition.
        \begin{itemize}
            \item The gaps between $V_j$ and $V_{j+1}$ are higher than \\ $4|S|\bigM\maxeff{xy}$: by \cref{def:well separated word} the set $I$ has gap at least 
            $4\bigM^2(\simpLfuncMaxW{d}\simpLfuncLength{d}{1}\simpLfuncLength{d+1}{i}+\simpLfuncCover{d+1}{i})$
            Notice that $|xy|\le |w_2|<\simpLfuncLength{d+1}{i}$ (by \cref{def:d i fit}), so \[\maxeff{xy}\le \simpLfuncMaxW{d}\simpLfuncLength{d+1}{i}\le \simpLfuncMaxW{d}\simpLfuncLength{d}{1}\simpLfuncLength{d+1}{i}\] 
            Thus, the minimal possible gap of a state in $V_j$ from a state in $V_{j+1}$ is at least
            \[
            \begin{split}
            &4\bigM^2(\maxeff{xy}+\simpLfuncCover{d+1}{i})-2b\\
            &=4\bigM^2(\maxeff{xy}+\simpLfuncCover{d+1}{i})-2\simpLfuncCover{d+1}{i}\\
            &\gg 4|S|\bigM\maxeff{xy}
            \end{split}
            \]
            \item The existence of $k_{j,x}, k_{j,y}$ and their sign equality is immediate from $\difftype_b(w_1v_{j-1},u_{j,j'})= \\\difftype_b(w_1v_{j'-1},u_{j',j''})$. Indeed, this means that the configurations around each independent run are identical, and that the sign of the run on $x$ and on $y$ is the same.
        \end{itemize}
        \item The seamless baseline run exists by \cref{def:potential increasing word}.
    \end{enumerate}
    This concludes that $uxyv$ is an SSRI.

    \paragraph*{If $\depth(x)<d$.} In this case we treat the two types of decomposition separately. 
    The case of a $\pot$-increasing decomposition is simple: we just restrict attention to an infix of $w_2$, as follows.
    Define $w'_2$ to be the suffix of $x$ of length $\frac{1}{\extraSize}\simpLfuncLength{d}{1}$, and let $w'_1,w'_3$ be the prefix and suffix such that $w_1w_2w_3=w'_1w'_2w'_3$. Note that there is such a suffix since $|x|\ge \simpLfuncLength{d}{1}$ as a concatenation of $u_j$'s, and by \cref{def: pot increasing decomposition}.
    Still by \cref{def: pot increasing decomposition} we have $\pot(w'_1)\le \pot(w'_1w'_2)$ (viewing $w'_2$ as a prefix of $u_{j'}$). Take $I'=\{\rho_1\}$ (i.e., arbitrarily choose a single independent run from $I$) so $w'_1w'_2w'_3$ with $I'$ is a $(d-1,1)$ fit word, as per \cref{def:d i fit}.

    The case of $\pot$-bounded decomposition is slightly more delicate. We still take $w'_2$ to be the suffix of $x$ of length $\frac{1}{\extraSize}\simpLfuncLength{d}{1}$. This is possible since each part in the decomposition is of length multiple of $\frac{1}{\extraSize}\simpLfuncLength{d}{1}$ (see \cref{def: pot bounded decomposition}). Moreover, Property 5 in \cref{def: pot bounded decomposition} guarantees that $\pot(w'_1)\le \pot(w'_1w'_2)$ (although between them the potential may increase, but this does not concern us).
    We can now proceed as above by taking $I'=\{\rho_1\}$, and we are done.
\end{proof}

\subsection{From Well Separation to SSRI}
\label{sec:well separation to SSRI}
We now have the tools to show that given a well-separated word, we can obtain an SSRI from it. To do so, we develop an effective version of the \emph{Zooming Technique} introduced in~\cite{almagor2026determinization}. We remark that this is where the convoluted inductive definitions of \cref{sec:effective cactus} play a central role.
\begin{lemma}
    \label{lem: d i fit to SSRI or lower depth}
    Consider a $(d,i)$-fit well-separated word $w_1w_2w_3$ where
    % \gatodo{either $\pot(w_1w_2)-\pot(w_1)\ge \simpLfuncAmp{d+1}{i}$ or}
    $|w_2|= \frac{1}{\extraSize}\simpLfuncLength{d+1}{i}$, then one of the following holds:
    \begin{enumerate}
        \item $w_1w_2w_3=uxyv$ where $uxyv$ is an SSRI with $\depth(x)=d$ and $xy$ is an infix of $w_2$.
        \item $w_1w_2w_2=w'_1w'_2w'_3$ where $w'_1w'_2w'_3$ is a $(d-1,1)$-fit well-separated word (\cref{def:well separated word}) and $w'_2$ is an infix of $w_2$ with $|w'_2|=\frac{1}{\extraSize}\simpLfuncLength{d}{1}$.
    \end{enumerate}
\end{lemma}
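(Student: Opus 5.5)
We argue by induction on $i$ in decreasing order, from $i=|S|$ down to $i=1$, with $d$ fixed. The base $i\ge|S|$ is degenerate, since there are at most $|S|$ independent runs and $\simpLfuncLength{d+1}{|S|+1}=0$. For the inductive step, we first split the $(d,i)$-fit well-separated word into the two amplitude regimes. If $\pot(w_1w_2)-\pot(w_1)>\simpLfuncAmp{d+1}{i}$, the word is high amplitude, and \cref{prop: pot high amplitude to pot increasing decomposition} gives a $\pot$-increasing $(d,i)$-decomposition. Otherwise we look for an infix on which the potential stays in the band $\pm\simpLfuncAmp{d+1}{i}$. Since $|w_2|=\frac{1}{\extraSize}\simpLfuncLength{d+1}{i}$ and $\pot(w_1)\le\pot(w_1w_2)$, either some prefix already climbs more than $\simpLfuncAmp{d+1}{i}$ above $\pot(w_1)$, in which case we restrict $w_2$ to that prefix and obtain the high-amplitude case, or every prefix stays in the band. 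In the latter case the word is bounded amplitude, and \cref{prop: pot bounded amp to bounded decomposition} gives a $\pot$-bounded decomposition. We also need to rule out prefixes that dip below $\pot(w_1)-\simpLfuncAmp{d+1}{i}$. For this, we cut $w_1$ at the latest deep minimum, using bounded growth (\cref{cor:effective bounded growth potential}); the details need care.

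Given either decomposition, we ask whether it is a cover (\cref{def: cover decomposition}) with $b=\simpLfuncCover{d+1}{i}$. If it is, \cref{prop:pot cover decomp induces SRI or lower depth} directly yields one of the two outcomes of the lemma. So it remains to handle the case where the decomposition is not a cover.

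This is the zooming step, which we expect to be the main obstacle. Since the decomposition is not a cover, there is an index $j\ge1$ and a state $s\in\supp(\xconf(s_0,w_1v_j))$ that is farther than $\simpLfuncCover{d+1}{i}$ from every run in $I$. Take a seamless run $\rho_s$ reaching $s$ after $w_1v_j$. Consider the window $u_0$, which has length at least $\simpLfuncLength{d+1}{i+1}$; more precisely, take the suffix of $w_1v_j$ of length $\frac{1}{\extraSize}\simpLfuncLength{d+1}{i+1}$ inside it. Along this window, the distance between $\rho_s$ and any $\rho\in I$ changes per letter by at most $2\simpLfuncMaxW{d}$. Hence throughout the window that distance stays at least
\[\simpLfuncCover{d+1}{i}-\frac{2}{\extraSize}\simpLfuncLength{d+1}{i+1}\simpLfuncMaxW{d}\ \ge\ \tfrac12\simpLfuncCover{d+1}{i+1},\]
by \cref{prop: relating cover length and maxW}. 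This bound is exactly the gap required of $I\cup\{\rho_s\}$ to be $(d,i+1)$ well separated. The runs of $I$ already have pairwise gap far above this requirement, since the required gap decreases as $i$ grows.

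Let $w_1'$ end at the window's start and $w_2'$ be the window. We still need $\pot(w_1')\le\pot(w_1'w_2')$ for fitness. To get it, we choose the window endpoint at one of the decomposition indices and exploit Items 2 and 3 (respectively 2 and 5) of the decompositions. If the potential property fails there, we shift the window back to the last position of maximal potential, which costs only a bounded amount of length. Once fitness holds, the induction hypothesis for $(d,i+1)$ applies to $w_1'w_2'w_3'$ with $I\cup\{\rho_s\}$. It returns either an SSRI with $\depth(x)=d$ whose $xy$ is an infix of $w_2'\subseteq w_2$, or a $(d-1,1)$-fit well-separated infix of the required length. Either way the lemma's conclusion holds. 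We expect the delicate parts to be maintaining $\pot$-monotonicity for the zoomed window and making sure the depth of the window is still at most $d$; the latter is immediate because the window is an infix of $w_2$.
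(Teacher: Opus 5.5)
Your outline is the paper's proof: reverse induction on $i$, reduction to a high- or bounded-amplitude word, the decomposition of \cref{prop: pot high amplitude to pot increasing decomposition} or \cref{prop: pot bounded amp to bounded decomposition}, then \cref{prop:pot cover decomp induces SRI or lower depth} when the decomposition is a cover. Otherwise you zoom on a far state $s$ with a window of length $\frac{1}{\extraSize}\simpLfuncLength{d+1}{i+1}$ ending at $w_1v_j$, and apply the hypothesis for $i+1$. The case $i=|S|$ is not vacuous. What cannot occur there is the zooming branch: $|S|$ seamless independent runs occupy all states, so the decomposition is a cover with $b=0$. The dip case you leave open needs no bounded-growth argument. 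If $\pot(w_1x)<\pot(w_1)-\simpLfuncAmp{d+1}{i}$ for a prefix $x$ of $w_2$, then $\pot(w_1w_2)\ge\pot(w_1)$ makes the remaining suffix of $w_2$ a high-amplitude infix; this is exactly how the paper handles it.

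The real loose end is the one you single out: $\pot(w'_1)\le\pot(w'_1w'_2)$ for the zoomed word. The paper's proof does not check it at all.

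\textbf{Increasing decomposition.} No shifting is needed. Items 2 and 3, together with the minimality of $u_0$ in the proof of \cref{prop: pot high amplitude to pot increasing decomposition}, bound the potential at every position of $v_j$ by $\pot(w_1v_j)$.

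\textbf{Bounded decomposition.} Here Items 2 and 5 of \cref{def: pot bounded decomposition} control only $E$-aligned positions in $u_1\cdots u_j$. The window may start inside $u_0$, where the potential can be as large as $\pot(w_1)+\simpLfuncAmp{d+1}{i}$. Your fallback of shifting the window back ``by a bounded amount'' is not justified. If the potential drifts down slowly across $u_0$, the amplitude band only guarantees a good window after a shift of order $\simpLfuncAmp{d+1}{i}$ window lengths. That is far beyond the roughly $\bigM^2\simpLfuncLength{d}{1}\simpLfuncLength{d+1}{i+1}$ letters over which the distance estimate for $\rho_s$ survives.

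What is actually needed is a stronger bounded decomposition: the $E$-aligned positions in the last $\frac{1}{\extraSize}\simpLfuncLength{d+1}{i+1}$ letters of $u_0$ should also have potential at most $\pot(w_1v_1)$. The construction in \cref{prop: pot bounded amp to bounded decomposition} can provide this. At each level, discard a stretch of that length at the start of the current $P$-upper-bounded infix before searching for the $T+1$ positions; the available length easily absorbs this.

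Two smaller points, both shared with the paper. First, $(d,i+1)$ well separation requires gap $\frac12\simpLfuncCover{d+1}{i}$, not $\frac12\simpLfuncCover{d+1}{i+1}$. Your estimate does give the stronger gap, since $\frac{2}{\extraSize}\simpLfuncLength{d+1}{i+1}\simpLfuncMaxW{d}\le\frac12\simpLfuncCover{d+1}{i}$. Second, take $w'_3=\epsilon$, since $\rho_s$ need not extend beyond $w_1v_j$ to a run on the whole word.
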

\begin{proof}
    Let $I=\{\rho_1<\ldots<\rho_i\}$ be the set of independent runs associated with $w_1w_2w_3$ (\cref{def:independent run}). 
    Consider the behaviour of $\pot$ along $w_2$ as per \cref{def:bounded amplitude word}. 
    We claim there are two cases: either $w_1w_2w_3$ is $(d,i)$ bounded amplitude, or it has an infix that is high amplitude.
    Indeed, if $w_1w_2w_3$ is not bounded potential, then for some prefix $x$ of $w_2$ we have $|\pot(w_1x)-\pot(w_1)|>\simpLfuncAmp{d+1}{i}$. If $\pot(w_1x)-\pot(w_1)>\simpLfuncAmp{d+1}{i}$ then $x$ is an infix as required. If $\pot(w_1)-\pot(w_1x)>\simpLfuncAmp{d+1}{i}$ then since $\pot(w_1)\le \pot(w_1w_2)$ (by \cref{def:potential increasing word}) the suffix $z$ such that $w_2=xz$ is a high-amplitude infix of $w_2$. 

    We can therefore assume without loss of generality that $w_1w_2w_3$ itself is either $(d,i)$ bounded amplitude or $(d,i)$ high amplitude.
    We then employ either \cref{prop: pot bounded amp to bounded decomposition} in the former case, or \cref{prop: pot high amplitude to pot increasing decomposition} in the latter, and obtain a $(d,i)$ decomposition as per either \cref{def: pot bounded decomposition} or \cref{def: pot increasing decomposition}, respectively. I.e., we can write $w_2=u_0u_1\cdots u_tu_{t+1}$.
    
    Our goal is now to use \cref{prop:pot cover decomp induces SRI or lower depth} to conclude the lemma. However, this requires the decomposition to also be a cover (\cref{def: cover decomposition}), which is not necessarily the case. This is where Zooming comes in.
    
    We proceed by reverse induction on $i$.
    The base case is $i=|S|$. In this case, every state belongs to one of the independent runs at each step. Therefore, the decomposition is a cover already with $b=0$, and in particular with $b=\simpLfuncCover{d+1}{|S|}=1$ (by \cref{def: simpL func Cover}). We can therefore use \cref{prop:pot cover decomp induces SRI or lower depth} and we are done.

    We proceed to the induction case, where $1\le i<|S|$. Consider the decomposition of $w_2$. If it is already a cover, then we can again use \cref{prop:pot cover decomp induces SRI or lower depth} and we are done. 
    
    Assume henceforth that this is not the case. In the notations of \cref{def: cover decomposition} this means that there is some $0< j\le  t$ and state $s\in \supp(\vec{c_j})$ that is more than $\simpLfuncCover{d+1}{i}$ away from all the independent runs in $I$. More precisely, for every run $\rho\in I$, write $\rho[w_1v_j]:s_0\runsto{v_j}s'$, then $|\vec{c_j}(s)-\vec{c_j}(s')|>\simpLfuncCover{d+1}{i}$.

    We now ``zoom in'' on $s$, and show that we can use it to construct a new independent run on an infix of $w_2$, so we can use the induction hypothesis with $i+1$ (see \cref{fig:zooming}). This, however, needs to be done carefully in order to account for the depth, length, and cover.

\begin{figure}[H]
  \begin{center}
    \includegraphics[width=1\textwidth]{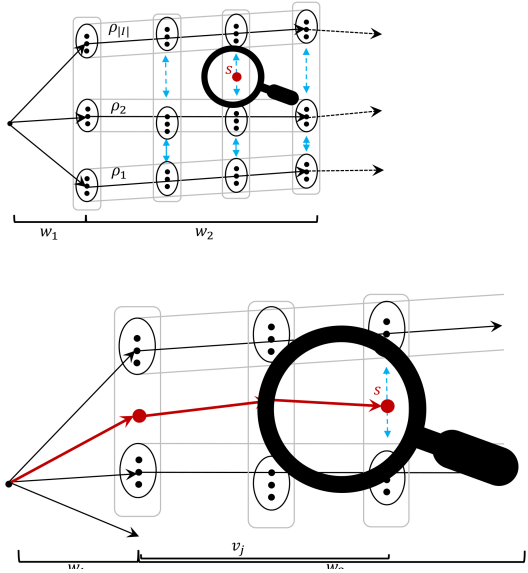}
  \end{center}
  \caption{The state $s$ is not near any of the independent runs. We can then zoom in on $s$, and construct a new independent run that reaches it.}

  \label{fig:zooming}
\end{figure}

    Let $\pi:s_0\runsto{w_1v_j}s$ be a seamless minimal-weight run to $s$. In particular, $\weight(\pi)=\vec{c_j}(s)$. Decompose $w_1v_j=z_1z_2$ where $|z_2|=\frac{1}{\extraSize}\simpLfuncLength{d+1}{i+1}$. Such a decomposition is possible due to the lower bound on $|u_0|$ in both \cref{def: pot increasing decomposition} and \cref{def: pot bounded decomposition}.
    
    We claim that $\pi[z_2]$ remains at least $\frac12\simpLfuncCover{d+1}{i+1}$ away from all independent runs along this suffix.
    Indeed, this is a direct consequence of \cref{prop: relating cover length and maxW}, since the maximal change in weight that can occur during $\pi[z_2]$ is 
    \[2\maxeff{|z_2|}\le |z_2|\cdot \simpLfuncMaxW{d}\cdot 2= \frac{1}{\extraSize}\simpLfuncLength{d+1}{i+1}\cdot 2\]
    Thus, the word $z_1z_2z_3$ with $z_3=\epsilon$ is $(d,i+1)$-fit well separated word (\cref{def:well separated word}), and by the induction hypothesis it satisfies the conditions of the lemma, so we are done.
\end{proof}

Finally, consider a potential-increasing word $w_1w_2w_3\in (\simpCac)^*$ with $|w_2|= \frac{1}{\extraSize}\simpLfuncLength{|S|}{1}$. By \cref{rmk:depth vs length bound} we have $\depth(w_2)\le |S|-1$. By choosing an arbitrary seamless run on it, we have that it is $(|S|-1,1)$-fit well-separated. Indeed, for $i=1$ well-separation is trivial (see \cref{def:well separated word}).
We can apply \cref{lem: d i fit to SSRI or lower depth} so that either we identify an SSRI, or we rewrite $w_1w_2w_3=w'_1w'_2w'_3$ as a $(|S|-2,1)$-fit well-separated word. We continue inductively until at some point an SSRI must be found (since the induction cannot go below depth $0$). We therefore have the following:
\begin{corollary}
\label{cor: potential increasing to SSRI}
    Consider a potential-increasing word $w_1w_2w_3\in (\simpCac)^*$ with $|w_2|= \frac{1}{\extraSize}\simpLfuncLength{|S|}{1}$, then $w_1w_2w_3=uxyv$ where $uxyv$ is an SSRI and $xy$ is an infix of $w_2$.
\end{corollary}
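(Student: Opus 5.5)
The plan is a descending induction on the depth parameter, iterating \cref{lem: d i fit to SSRI or lower depth} until it is forced to output an SSRI. First we set up the base instance. Since $w_1w_2w_3$ is potential-increasing, it has a seamless baseline run and $\pot(w_1)\le \pot(w_1w_2)$. Since $w_2\in(\simpCac)^*$, \cref{rmk:depth vs length bound} gives $\depth(w_2)\le |S|-1$. Set $d_0=|S|-1$, so that $|w_2|=\frac{1}{\extraSize}\simpLfuncLength{d_0+1}{1}$.

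Next we choose an independent set. Take $I=\{\rho\}$, where $\rho$ is any seamless run on $w_1w_2w_3$; the seamless baseline run will do. With $i=1$ the gap condition of \cref{def:independent run} is vacuous, so the word is $(d_0,1)$-fit (\cref{def:d i fit}) and $(d_0,1)$ well-separated (\cref{def:well separated word}). Its length matches the hypothesis of \cref{lem: d i fit to SSRI or lower depth} exactly.

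For the induction step, suppose we hold a $(d,1)$-fit well-separated decomposition $w_1w_2w_3=w^{(d)}_1w^{(d)}_2w^{(d)}_3$ in which $w^{(d)}_2$ is an infix of the original $w_2$ and $|w^{(d)}_2|=\frac{1}{\extraSize}\simpLfuncLength{d+1}{1}$. Applying \cref{lem: d i fit to SSRI or lower depth} yields one of two outcomes.
\begin{itemize}
\item It gives an SSRI $uxyv$ with $xy$ an infix of $w^{(d)}_2$, and hence of $w_2$. In this case we stop.
\item It gives a $(d-1,1)$-fit well-separated decomposition whose middle part is an infix of $w^{(d)}_2$, and hence of $w_2$, of length $\frac{1}{\extraSize}\simpLfuncLength{d}{1}$. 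This restores the invariant at depth $d-1$.
\end{itemize}
Transitivity of the infix relation keeps the ``$xy$ is an infix of $w_2$'' guarantee intact throughout.

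The main obstacle is showing that the second outcome cannot recur forever, i.e.\ that depth $0$ forces an SSRI. Since the depth strictly decreases, the recursion runs at most $|S|$ times. At $d=0$ the second outcome would require a $(-1,1)$-fit word, which is impossible because depth is nonnegative. For this reason the first outcome must occur whenever \cref{lem: d i fit to SSRI or lower depth} is invoked with $d=0$.

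The one delicate point is to read the proof of \cref{prop:pot cover decomp induces SRI or lower depth} carefully at $d=0$. The case $\depth(x)<d$ is vacuous there, so the SSRI branch is always taken. Also, $\simpLfuncLength{0}{1}=1$ does not create a degenerate length $0$ word at the last step, as the length is only used through $\simpLfuncLength{d+1}{\cdot}$. I would verify that the base values in \cref{def: simpL func Length} make $|w^{(0)}_2|=\frac{1}{\extraSize}\simpLfuncLength{1}{1}$ large enough for the decompositions of \cref{prop: pot high amplitude to pot increasing decomposition,prop: pot bounded amp to bounded decomposition} to exist. Taking this as already guaranteed by the earlier lemma's statement, the corollary follows.
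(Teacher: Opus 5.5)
Your proposal is correct and is essentially the paper's argument. A single seamless run makes the word $(|S|-1,1)$-fit well-separated (the gap condition is vacuous for $i=1$), and iterating \cref{lem: d i fit to SSRI or lower depth} lowers the depth until depth $0$ forces the SSRI outcome, with transitivity of infixes preserving ``$xy$ is an infix of $w_2$''.

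Your explicit check that the $\depth(x)<d$ branch of \cref{prop:pot cover decomp induces SRI or lower depth} is vacuous at $d=0$ makes precise the paper's brief remark that ``the induction cannot go below depth $0$''. Your aside that $\simpLfuncLength{0}{1}$ enters only through $\simpLfuncLength{d+1}{\cdot}$ is inaccurate, since the depth-$d$ decompositions also use $\simpLfuncLength{d}{1}$. This does not affect the argument, because at that point you rely only on the lemma's statement.
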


\section{Existence of GSRI}
\label{sec:existence of GSRI}
We now turn our attention to GSRI and their existence. This section has the same structure as \cref{sec:existence of SSRI}, and the proofs follow mostly the same patterns.
The main difference is that GSRI are related to $\charge$, which does not have bounded growth. In addition, the definitions are based on general length functions (\cref{sec: general length bound function}), which introduces slight differences. 

We use the notions of independent runs, gaps and $\difftype$ introduced in \cref{def:independent run,def:near run,def:diff type}. In this section, we work over $(\genCac)^*$. 
\begin{definition}[$\charge$-Decreasing Word]
\label{def:charge decreasing word}
A word $w_1w_2w_3$ is \emph{$\charge$-decreasing} if there is a seamless baseline run on $w_1w_2w_3$ and $\charge(w_1)\ge \charge(w_1w_2)$.
\end{definition}

\begin{definition}[$\charge-(d,i)$-Fit word]
    \label{def:charge d i fit}
    Consider a word $w_1w_2w_3\in (\genCac)^*$ and a set $I$ of $G$-independent runs. The word $w_1w_2w_3$ (with $I$) is \emph{$\charge-(d,i)$-fit} if the following hold.
    \begin{itemize}
        \item $w_1w_2w_3$ is $\charge$-decreasing.
        \item $\depth(w_2)\le d$.
        \item $|w_2|\le \frac{1}{\extraSize}\generLfuncLength{d+1}{i}$ where $i=|I|$.
    \end{itemize} 
\end{definition}
We again specialise to high and bounded amplitude.
\begin{definition}[$\charge-(d,i)$ Well-Separated and High / Bounded-Amplitude word]
\label{def:charge well separated word}
\label{def:charge high amplitude word}
\label{def:charge bounded amplitude word}
A word $w_1w_2w_3\in (\genCac)^*$ with set $I$ is \emph{$\charge-(d,i)$ well separated} if $w_1w_2w_3$ is $\charge-(d,i)$-fit and $I$ has gap at least
\[\begin{split}
&\frac12 \generLfuncCover{d+1}{i-1}=\\
&4\bigM^2(\generLfuncMaxW{d}\generLfuncLength{d}{1}\generLfuncLength{d+1}{i}+\generLfuncCover{d+1}{i}+\budH)
\end{split}\] 
If $w_1w_2w_3$ is well separated, then it is:
\begin{itemize}
    \item \emph{$\charge-(d,i)$ high amplitude} if $\charge(w_1)-\charge(w_1w_2)> \generLfuncAmp{d+1}{i}$.
    \item \emph{$\charge-(d,i)$ bounded amplitude} if $|\charge(w_1)-\charge(w_1x)|\le \generLfuncAmp{d+1}{i}$ for every prefix $x$ of $w_2$, and $|w_2|=\frac{1}{\extraSize}\generLfuncLength{d+1}{i}$.
\end{itemize}
\end{definition}
We proceed with the decomposition of $\charge$-high-amplitude words. The intuition is similar to that of \cref{def: pot increasing decomposition}, referring to decreasing $\charge$ instead of increasing $\pot$.
\begin{definition}[$\charge$-decreasing $(d,i)$-decomposition]
\label{def: charge decreasing decomposition}
Consider a $(d,i)$ high-amplitude word $w_1w_2w_3$ with a decomposition $w_2=u_0u_1\cdots u_tu_{t+1}$. Denote $v_j=u_0\cdot u_1\cdots u_j$ for every $0\le j\le t$. The decomposition is a \emph{$\charge$-decreasing $(d,i)$-decomposition} if the following hold.
\begin{enumerate}
    \item $t=\ramsey(\generLfuncTypes{d+1}{i},3)$.
    \item For every $0\le j<j'\le t$ we have $\charge(w_1v_j)>\charge(w_1v_{j'})$.
    \item For every $1\le j\le t$ and prefix $x$ of $u_j$ we have $\charge(w_1v_{j-1}x)\ge \charge(w_1v_j)$.
    \item $|u_0|\ge \generLfuncLength{d+1}{i+1}$.
    \item For every $1\le j\le t$ we have $|u_i|\ge \generLfuncLength{d}{1}$.
\end{enumerate}
\end{definition}
In contrast to \cref{prop: pot high amplitude to pot increasing decomposition}, being $\charge$ high amplitude is not a sufficient condition for having a decomposition. Indeed, it may be the case that the charge has unbounded growth, which then might cause the word to be very short. Fortunately, if this is the case, then \cref{lem: charge decrease to high potential} already gives us unbounded potential, which puts us in the SSRI setting.
\begin{proposition}[From $\charge$-high amplitude to $\charge$-decreasing decomposition]
    \label{prop: charge high amplitude to charge decreasing decomposition}
    Either there is a word $w' \in (\Gamma_0^0)^*$ such that $\pot(w')>\budH$, or every every $\charge-(d,i)$ high-amplitude word has a $\charge$-decreasing $(d,i)$-decomposition.
\end{proposition}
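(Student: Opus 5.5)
We follow the proof of \cref{prop: pot high amplitude to pot increasing decomposition}, with the bounded growth of $\pot$ (\cref{cor:effective bounded growth potential}) replaced by a conditional bounded decrease of $\charge$ obtained from \cref{lem: charge decrease to high potential}. Assume there is no word $w'\in(\Gamma_0^0)^*$ with $\pot(w')>\budH$ (otherwise we are done). Apply \cref{lem: charge decrease to high potential} with $P=\budH$. Then for every prefix $z\sigma$ of $w_1w_2$ with $\sigma\in\genCac$ and $\depth(\sigma)\le d$, we get $\charge(z)-\charge(z\sigma)\le \budH+2\generLfuncMaxW{d}$. The depth bound holds because $\depth(w_2)\le d$, and $\generLfuncMaxW{\cdot}$ is monotone in its argument. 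The only care needed is to check that the prefixes $z$ lie in $(\genCac)^*$ and carry a seamless baseline run, which is inherited from the $\charge$-decreasing assumption. This gives a uniform per-letter bound $D=\budH+2\generLfuncMaxW{d}$ on the decrease of $\charge$. That is exactly the factor appearing in \cref{def: genL func Amp}.

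Next we build the decomposition greedily. Let $u_0$ be the shortest prefix of $w_2$ with $\charge(w_1u_0)\le \charge(w_1)-2D\cdot\generLfuncLength{d+1}{i+1}$. Inductively, for $1\le j\le t=\ramsey(\generLfuncTypes{d+1}{i},3)$, let $u_j$ be the shortest prefix of the remainder such that $\charge(w_1u_0\cdots u_j)\le \charge(w_1)-2D(\generLfuncLength{d+1}{i+1}+j\,\generLfuncLength{d}{1})$. The remainder of $w_2$ becomes $u_{t+1}$.

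Existence of all these prefixes follows from the high-amplitude assumption $\charge(w_1)-\charge(w_1w_2)>\generLfuncAmp{d+1}{i}$. By \cref{def: genL func Amp} we have $\generLfuncAmp{d+1}{i}\ge 2D(\generLfuncLength{d+1}{i+1}+t\,\generLfuncLength{d}{1})$, with a large margin from the $\extraSize$ factor.

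The length bounds (Items 4 and 5) follow from the per-letter bound $D$. Reaching a drop of $2D\generLfuncLength{d+1}{i+1}$ requires at least $\generLfuncLength{d+1}{i+1}$ letters. Each further threshold step is $2D\generLfuncLength{d}{1}$ below the previous one, and the previous endpoint lies at most $D$ below its own threshold by minimality, so each $u_j$ requires at least $\generLfuncLength{d}{1}$ letters.

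The ordering properties (Items 2 and 3) follow from minimality. Each endpoint $v_j$ is the first time $\charge$ falls below its threshold. So every prefix $x$ of $u_j$ satisfies $\charge(w_1v_{j-1}x)$ above the $j$-th threshold. That threshold is at least $\charge(w_1v_j)$, which gives Item 3. The endpoint of $u_j$ is at or below the $j$-th threshold, and all prefixes inside $u_{j+1}$, including its endpoint, are above the $(j+1)$-th threshold only until they cross it. Hence the endpoint values strictly decrease, because consecutive thresholds differ by more than the overshoot $D$; this gives Item 2.

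The main obstacle is the first step, specifically verifying that \cref{lem: charge decrease to high potential} applies to each prefix.
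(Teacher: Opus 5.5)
Your route is the paper's: the contrapositive of \cref{lem: charge decrease to high potential} with $P=\budH$ gives a per-letter bound $D=\budH+2\generLfuncMaxW{d}$ on the decrease of $\charge$ along $w_2$. Then $w_2$ is cut greedily at the first crossings of decreasing thresholds.

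Your bookkeeping is actually better than the paper's. The paper takes the $j$-th threshold to be $\charge(w_1)-4jD\,\generLfuncLength{d}{1}$, without the offset coming from $u_0$. So already $u_1=\epsilon$ as soon as $\generLfuncLength{d+1}{i+1}\ge\generLfuncLength{d}{1}$, which violates Items 2 and 5. Your cumulative thresholds $\theta_j=\charge(w_1)-2D(\generLfuncLength{d+1}{i+1}+j\,\generLfuncLength{d}{1})$, together with the overshoot bound $\charge(w_1v_j)>\theta_j-D$, avoid this and give strict decrease of the endpoints. Your remark that letters of depth below $d$ are covered by monotonicity of $\generLfuncMaxW{\cdot}$ is also a point the paper skips.

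One step, however, is not justified as written. Existence of the prefixes needs $\generLfuncAmp{d+1}{i}\ge 2D(\generLfuncLength{d+1}{i+1}+t\,\generLfuncLength{d}{1})$. But \cref{def: genL func Amp} reads $\generLfuncAmp{d+1}{i}=\extraSize\cdot D\cdot(t+\generLfuncLength{d+1}{i+1}+\generLfuncLength{d}{1})$, with $t=\ramsey(\generLfuncTypes{d+1}{i},3)$. This is linear in the \emph{sum} of $t$ and $\generLfuncLength{d}{1}$, and no constant factor such as $\extraSize$ bounds their \emph{product}. The product is exactly what $t$ steps of height $2D\,\generLfuncLength{d}{1}$ cost.

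The paper's own proof asserts the same bound: it claims $\generLfuncAmp{d+1}{i}\ge 8tD\,\generLfuncLength{d}{1}$ from that definition. So this is a defect of the definition rather than of your strategy. Replacing the sum by a product in \cref{def: genL func Amp}, e.g.\ using $t\cdot\generLfuncLength{d}{1}+\generLfuncLength{d+1}{i+1}$, repairs both arguments. It also keeps the function elementary, so the complexity analysis is unaffected.
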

\begin{proof}
    If there exists a word $w' \in (\Gamma_0^0)^*$ such that $\pot(w')>\budH$, we are done. Assume therefore that this is not the case. By \cref{lem: charge decrease to high potential} (viewed in the contrapositive), this means that for every word $u\in (\genCac)^*$ and $\sigma\in \genCac$ with $\depth(\sigma)=d$ we have $\charge(u)-\charge(u\sigma)\le \budH+2\generLfuncMaxW{d}$.
    Consider a $(d,i)$ high-amplitude word $w_1w_2w_3$ as per \cref{def:charge high amplitude word}, then $\charge(w_1w_2)-\charge(w_1)< -\generLfuncAmp{d+1}{i}$.
    
    Thus, intuitively, in order to decrease the charge from $\charge(w_1)$ to $\charge(w_1w_2)$, there must be many prefixes of $w_2$ at which the charge decreases slightly below its previous values. By splitting at those indices, we obtain the decomposition. We now make this precise.

    Define $u_0$ to be the shortest prefix of $w_2$ such that \[\charge(w_1u_0)+4(2\generLfuncMaxW{d}+\budH)\cdot \generLfuncLength{d+1}{i+1}\le \charge(w_1)\] 
    We then proceed inductively: for every $1\le i\le \ramsey(\generLfuncTypes{d+1}{i},3)$, let $w_2=u_0u_1\cdots u_{i-1}z$, we define $u_i$ to be the shortest prefix of $z$ such that \[\charge(w_1u_0\cdots u_i)+4\cdot i\cdot (2\generLfuncMaxW{d}+\budH)\cdot \generLfuncLength{d}{1}\le \charge(w_1)\]

    We claim that all the words above exist and that $u_i\neq \epsilon$ for all $i$.
    Indeed, $\charge(w_1)-\charge(w_1w_2)\ge \generLfuncAmp{d+1}{i}$, and observe from \cref{def: genL func Amp} that
    \[\generLfuncAmp{d+1}{i}\ge 4\cdot (2\generLfuncMaxW{d}+\budH)\cdot \generLfuncLength{d+1}{i+1}\]
    so $u_0$ exists, and by the bounded decrease of $\charge$ we also have $|u_0|\ge \generLfuncLength{d+1}{i+1}$.
    
    Similarly, from \cref{def: genL func Amp} we also have
    \[\generLfuncAmp{d+1}{i}\ge 8\cdot \ramsey(\generLfuncTypes{d+1}{i},3)\cdot (2\generLfuncMaxW{d}+\budH)\cdot \generLfuncLength{d}{1}\]
    so that all the $u_i$ exist, and by the bounded decrease we have that 
    \[
    \charge(w_1u_0u_1\cdots u_i)-\charge(w_1u_0\cdots u_{i+1})\ge 4\cdot (2\generLfuncMaxW{d}+\budH)\cdot \generLfuncLength{d}{1}
    \]
    so $|u_{i+1}|\ge  \generLfuncLength{d}{1}$. 

    Finally, denote $v_j$ as per \cref{def: charge decreasing decomposition}, then since each $u_j$ is chosen as a minimal prefix, we trivially have that for every prefix $x$ of $u_j$ it holds that $\charge(w_1v_{j-1}x)\ge \charge(w_1v_j)$. 
    This concludes all the desired properties of the decomposition.
\end{proof}
We now proceed to study decompositions of $\charge$ bounded amplitude words. In this case, the possibly unbounded growth of $\charge$ plays no role, and therefore this part is nearly identical to its $\pot$ counterpart in \cref{sec:existence of SSRI}.
\begin{definition}[$\charge$-bounded $(d,i)$-decomposition]
    \label{def: charge bounded decomposition}
    Consider a $\charge-(d,i)$ bounded-amplitude word $w_1w_2w_3$ with a decomposition $w_2=u_0u_1\cdots u_tu_{t+1}$. Denote $v_j=u_0\cdot u_1\cdots u_j$ for every $0\le j\le t$, and let $E=\frac{1}{\extraSize}\generLfuncLength{d}{1}$.
    The decomposition is a \emph{$\charge$-bounded $(d,i)$-decomposition} if the following hold.
    \begin{enumerate}
        \item $t=\ramsey(\generLfuncTypes{d+1}{i},3)+1$.
        \item For every $0<j<j'\le t$ we have $\charge(w_1v_j)=\charge(w_1v_{j'})$.
        \item For every $0\le j\le t$ we have that $|u_j|$ is a multiple of $E$. 
        \item $|u_0|\ge \generLfuncLength{d+1}{i+1}$.
        \item For every $0<j\le t$ and prefix $x$ of $u_j$ with $|x|$ multiple of $E$ we have $\charge(w_1v_{j}x)\ge \charge(w_1v_{j+1})$.
    \end{enumerate}
\end{definition}
\begin{proposition}[From bounded amplitude to $\charge$-bounded decomposition]
\label{prop: charge bounded amp to bounded decomposition}
Every $\charge$-bounded word has a $\charge$-bounded $(d,i)$-decomposition.
\end{proposition}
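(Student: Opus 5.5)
The plan is to mirror the proof of \cref{prop: pot bounded amp to bounded decomposition}, replacing $\pot$ by $\charge$, the inequalities $\le$ by $\ge$, and every simple function by its general counterpart. The unbounded growth of $\charge$ plays no role here. The bounded-amplitude assumption already confines $\charge(w_1x)$ to the band $[\charge(w_1)-\generLfuncAmp{d+1}{i},\ \charge(w_1)+\generLfuncAmp{d+1}{i}]$ for every prefix $x$ of $w_2$. The argument never uses how fast $\charge$ moves between consecutive positions.

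\textbf{Step 1 (coarse grid).} Let $E=\frac{1}{\extraSize}\generLfuncLength{d}{1}$ and $T=\ramsey(\generLfuncTypes{d+1}{i},3)+1$. I would take $u_0$ to be the prefix of $w_2$ of length $E\cdot\generLfuncLength{d+1}{i+1}$, and cut the remainder into blocks $x_0,\ldots,x_{t'}$ of length $E$. From \cref{def: genL func Length} and $|w_2|=\frac{1}{\extraSize}\generLfuncLength{d+1}{i}$ we get $t'\ge (T+2)^{2\generLfuncAmp{d+1}{i}+1}$. From now on only block boundaries are considered.

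\textbf{Step 2 (descending induction on a lower bound).} Call an infix $x_j\cdots x_{j'}$ \emph{$P$-lower bounded} if $\charge(w_1u_0x_0\cdots x_{j''})\ge \charge(w_1)+P$ for all $j\le j''\le j'$. The induction starts with all of $w_2$, which is $(-\generLfuncAmp{d+1}{i})$-lower bounded. Given a $P$-lower-bounded infix of length at least $E(T+2)^{m}$, the pigeonhole principle gives two cases. Either there are $T+1$ block boundaries at which $\charge$ equals exactly $\charge(w_1)+P$, or there is a sub-infix strictly between consecutive such boundaries of length at least a $\frac{1}{T+2}$ fraction; that sub-infix is $(P+1)$-lower bounded. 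Since $P$ can rise at most $2\generLfuncAmp{d+1}{i}$ times, and at $P=\generLfuncAmp{d+1}{i}$ every boundary attains exactly that value, the lengths guaranteed by $t'$ suffice: at some stage we obtain $T+1$ boundaries with equal charge $\charge(w_1)+P$, all lying within a $P$-lower-bounded infix.

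\textbf{Step 3 (assemble).} I would absorb everything before the first chosen boundary into $u_0$. This keeps $|u_0|\ge \generLfuncLength{d+1}{i+1}$ (\cref{def: charge bounded decomposition} Item 4) and keeps its length a multiple of $E$. The $u_j$ are then the infixes between consecutive chosen boundaries, keeping $t$ of them and putting the rest into $u_{t+1}$. Item 2 holds by equality of charge at the chosen boundaries, and Item 3 holds because all cuts are block boundaries. Item 5 holds because every block boundary inside these infixes lies in the $P$-lower-bounded region, so its charge is at least $\charge(w_1)+P$, which is the charge at the segment ends.

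The main obstacle is the counting in Step 2. One has to verify that $(T+2)^{2\generLfuncAmp{d+1}{i}+1}$, together with the extra factors in $\generLfuncLength{d+1}{i}$, leaves enough length after $2\generLfuncAmp{d+1}{i}$ divisions. This is exactly the calculation done in the $\pot$ case, so I would cite that computation verbatim with $\mathtt{G}$-functions substituted.
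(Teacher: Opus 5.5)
Your proposal is correct and is essentially the paper's proof: the same prefix $u_0$ of length $E\cdot\generLfuncLength{d+1}{i+1}$ followed by blocks of length $E$; the same iterated pigeonhole that raises a lower bound on $\charge$ at block boundaries from $\charge(w_1)-\generLfuncAmp{d+1}{i}$ up to $\charge(w_1)+\generLfuncAmp{d+1}{i}$ (you write the bound as $\charge(w_1)+P$ where the paper writes $\charge(w_1)-P$); and the same assembly, absorbing the stretch before the first chosen boundary into $u_0$. Your justification of Item 5 also states the inequality in the correct direction ($\ge$), whereas the paper's closing sentence writes $\le$, a leftover from the $\pot$ version.
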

\begin{proof}
    Consider a $\charge$-bounded word $w_1w_2w_3$. By \cref{def:charge bounded amplitude word} we have that $|w_2|=\frac{1}{\extraSize}\generLfuncLength{d+1}{i}$ and for every prefix $x$ of $w_2$ it holds that 
    \[
    \charge(w_1)-\generLfuncAmp{d+1}{i}\le \charge(w_1x)\le \charge(w_1)+\generLfuncAmp{d+1}{i}
    \]
    Let $E=\frac{1}{\extraSize}\generLfuncLength{d}{1}$ be as in \cref{def: charge bounded decomposition}.
    We start with a preliminary decomposition of $w_2=u_0x_0\cdots x_{t'}$ as follows.  
    Observe the coarse lower bound
    \[
    \begin{split}
    &|w_2|=\frac{1}{\extraSize}\generLfuncLength{d+1}{i}= \\& 
    \extraSize\cdot E \cdot \\
    &\generLfuncLength{d+1}{i+1}\cdot(\ramsey(\generLfuncTypes{d+1}{i},3)+2)^{2\generLfuncAmp{d+1}{i}+1}\\
    &\gg E\cdot \\
    &(\generLfuncLength{d+1}{i+1}+ 
    (\ramsey(\generLfuncTypes{d+1}{i},3)+2)^{2\generLfuncAmp{d+1}{i}+1})
    \end{split}
    \]
    We can take the prefix $u_0$ such that $|u_0|=E\cdot \generLfuncLength{d+1}{i+1}$, and define $x_0,\ldots, x_{t'}$ such that $|x_j|=E$ for every $0 \le j< t'$ (so that we cannot decompose further, i.e., $t'$ is maximal). 
    We have in particular that $t'\gg (\ramsey(\generLfuncTypes{d+1}{i},3)+2)^{2\generLfuncAmp{d+1}{i}+1}$.
    Henceforth, we only consider positions at the end of each $x_j$. 
        
    Denote $T=\ramsey(\generLfuncTypes{d+1}{i},3)+1$. 
    We start with an intuition for the proof, where the difficult part is to ensure requirement 5 of \cref{def: charge bounded decomposition}.
    We initially look for $T+1$ positions along $w_2$ where the charge is $\charge(w_1)-\generLfuncAmp{d+1}{i}$. If we find such positions, we use the infixes between them as the decomposition, and we are done (as this is the lowest possible charge, so requirement 5 certainly holds). 
    
    Otherwise, there are at most $T$ such positions, and since $t'$ is very large there is a long infix of $w_2$ where $\charge$ remains at least $\charge(w_1)-\generLfuncAmp{d+1}{i}+1$ (at the ends of the $x_j$s). We proceed inductively, either finding $T+1$ positions, or increasing the lower bound. 
    Due to the $\charge$-bounded amplitude, this terminates and we find $T+1$ such positions. We then use these positions to construct the decomposition.
    We now turn to formalise this.

    Consider a constant $-\generLfuncAmp{d+1}{i}\le  P\le \generLfuncAmp{d+1}{i}$ and an infix $w'_2=x_{j}\cdots x_{j'}$ for some $j\le j'$ of $w_2$. We say that $w'_2$ is $P$-lower bounded if $\charge(w_1u_0x_0\cdots x_{j''})\ge \charge(w_1)-P$ for every $j\le j''\le j'$ (i.e., every prefix of this infix, restricted to the $x_j$s).
    From the pigeonhole principle, if $w'_2$ is $P$-lower bounded, then one of the following holds:
    \begin{enumerate}
        \item $w'_2$ has at $T+1$ prefixes $z_0,z_1,\ldots,z_T$ where $\charge(w_1u_0z_k)=\charge(w_1)-P$ for every $k\ge 0$ (where the prefixes are concatenations of the $x_j$s).
        \item There is an infix $w''_2$ of $w_2$ that is $P-1$-lower bounded, and $|w''_2| \ge E\cdot \frac{t'}{T+2}$.
    \end{enumerate}
    Moreover, if $P=-\generLfuncAmp{d+1}{i}$ then we trivially have the first case, provided $|w'_2|\ge E\cdot (T+1)$.
    Recall that $t'\ge  (T+2)^{2\generLfuncAmp{d+1}{i}+1}$ and that $w_2$ is $\generLfuncAmp{d+1}{i}$-lower bounded. 
    We can therefore apply the reasoning above inductively, starting from $P=\generLfuncAmp{d+1}{i}$. Since we divide the length by $T+1$ at each step, then in the worst case we reach $P=-\generLfuncAmp{d+1}{i}$ with length $T+2$ left. 
    
    By implicitly adding to $u_0$ the infix until the first position in this set, we therefore found a decomposition $w_2=u_0u_1\cdots u_{T}$ where $\charge(w_1u_0\cdots u_j)=\charge(w_1u_0\cdots u_{j'})$ for every $0<j<j'\le T$ and for every prefix $x$ of $u_{j}$ with $|x|$ multiple of $E$, we have $\charge(w_1u_0\cdots u_{j-1}x)\le \charge(w_1u_0\cdots u_{j})$. Moreover, $T\ge t$, so we can arbitrarily choose $t$ of the infixes, and leave the rest in $u_{t+1}$.
\end{proof}
Next, we define a cover, and show that the existence of a covered decomposition implies the existence of a GSRI, or a reduction of the depth of the word. Note that here the addition of $\budH$ comes into play.
\begin{definition}[Cover decomposition for $\charge$]
    \label{def: charge cover decomposition}
    Consider a $\charge-(d,i)$ well-separated word $w_1w_2w_3$ (see \cref{def:charge well separated word}) with a set of independent runs $I$ and a decomposition $w_2=u_0u_1\cdots u_tu_{t+1}$. 
    Denote $v_j=u_0\cdot u_1\cdots u_j$ and $\vec{c_j}=\xconf(s_0,v_j)$ for every $0\le j\le t$. Let $b= \generLfuncCover{d+1}{i}$ with $|I|=i$.
    
    The decomposition is a \emph{cover} if for every $0< j\le t$ and state $s\in \supp(\vec{c_j})$ there exists $\rho\in I$ such that $\near_b(\rho,w_1v_j)(s)\neq \bot$. That is, $s$ is near $\rho$ after reading $w_1v_j$.
\end{definition}
\begin{proposition}
    \label{prop:charge cover decomp induces GSRI or lower depth}
    Consider a $\charge-(d,i)$ well-separated word $w_1w_2w_3$ with a cover decomposition $w_2=u_0u_1\cdots u_t,u_{t+1}$.
    Assume the word is either $\charge$ high amplitude and the decomposition is $\charge$-decreasing, or the word is $\charge$ bounded amplitude and the decomposition is $\charge$-bounded (\cref{def:charge high amplitude word,def: charge decreasing decomposition,def: charge bounded decomposition}). 
    Then one of the following holds:
    \begin{enumerate}
        \item $w_1w_2w_3=uxyv$ where $uxyv$ is a GSRI with $\depth(x)=d$, and $xy$ is an infix of $w_2$.
        \item $w_1w_2w_2=w'_1w'_2w'_3$ where $w'_1w'_2w'_3$ is a $\charge-(d-1,1)$-fit well-separated word (\cref{def:charge well separated word}) and $|w'_2|=\frac{1}{\extraSize}\simpLfuncLength{d}{1}$.
    \end{enumerate}
\end{proposition}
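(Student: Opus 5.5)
The plan is to mirror the proof of \cref{prop:pot cover decomp induces SRI or lower depth} almost verbatim, with $\pot$ replaced by $\charge$, the simple functions replaced by their general counterparts, and the extra $\budH$ in the gaps accounted for.

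\textbf{Finding the repeating structure.} Set $b=\generLfuncCover{d+1}{i}$ and colour each edge $\{j,j'\}$ of the complete graph on $\{1,\ldots,t\}$ by $\difftype_b(I,w_1v_{j-1},u_{j,j'})$, where $u_{j,j'}=u_j\cdots u_{j'-1}$. By \cref{rmk:size of difftypeset} there are at most $\generLfuncTypes{d+1}{i}$ colours. Both \cref{def: charge decreasing decomposition} and \cref{def: charge bounded decomposition} give $t\ge\ramsey(\generLfuncTypes{d+1}{i},3)$, so there is a monochromatic triangle $j<j'<j''$. Set $u=w_1v_{j-1}$, $x=u_{j,j'}$, $y=u_{j',j''}$, and let $v$ be the rest of the word.

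\textbf{Case $\depth(x)=d$: the GSRI conditions.}
\begin{itemize}
\item Equal supports follow from the cover property together with equality of the $\difftype$s.
\item The length bound $|x|\le|w_2|\le\frac{1}{\extraSize}\generLfuncLength{d+1}{i}\le\frac1\bigM\generL(d+1)$ holds by monotonicity in $i$.
\item Take the partition $V_j=\supp(\near_b(\rho_j,\cdot))$. The shifts $k_{j,x},k_{j,y}$ and their equal signs come from the equal $\difftype$s.
\item The seamless baseline comes from $\charge$-decreasing.
\item The charge condition $\charge(u)\ge\charge(ux)\ge\charge(uxy)$, with the iff clause, follows from Items 2--3 of the decreasing decomposition, or trivially from equality in the bounded case.
\end{itemize}
The only new point is the gap $G=4|S|\bigM\maxeff{xy}+\budH$. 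Well-separation gives an $I$-gap of at least $4\bigM^2(\generLfuncMaxW{d}\generLfuncLength{d}{1}\generLfuncLength{d+1}{i}+\generLfuncCover{d+1}{i}+\budH)$. We subtract $2b$ for the cover windows and use $\maxeff{xy}\le\generLfuncMaxW{d}\generLfuncLength{d+1}{i}$. The $4\bigM^2\budH$ term then dominates the required $\budH$, and the remaining terms dominate $4|S|\bigM\maxeff{xy}$.

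\textbf{Case $\depth(x)<d$: dropping to depth $d-1$.} Take $w'_2$ to be the suffix of $x$ of length $\frac{1}{\extraSize}\generLfuncLength{d}{1}$, and take $I'=\{\rho_1\}$, for which well-separation is trivial. The suffix exists because each $u_j$ has length at least $\generLfuncLength{d}{1}$ in the decreasing case, or a multiple of $E$ in the bounded case. We need $\charge(w'_1)\ge\charge(w'_1w'_2)$:
\begin{itemize}
\item In the decreasing case, $w'_1w'_2$ ends at a segment end, so Item 3 of \cref{def: charge decreasing decomposition} gives it directly.
\item In the bounded case, the start of $w'_2$ is at a multiple of $E$ inside a segment, and Item 5 of \cref{def: charge bounded decomposition} supplies the inequality.
\end{itemize}
The depth is now $\le d-1$, so the word is $\charge$-$(d-1,1)$-fit. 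The statement writes $\simpLfuncLength{d}{1}$ for the length of $w'_2$; I expect this is meant as $\generLfuncLength{d}{1}$, and I would prove it with $\mathtt{GLen}$.

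\textbf{Main obstacle.} I expect the subtle step to be the bounded case of the depth reduction. Item 5 as stated must align with positions that are multiples of $E$, and its inequality direction must give $\charge(w'_1)\ge\charge(w'_1w'_2)$ rather than the reverse. I would first try to apply Item 5 with $x$ equal to the prefix ending where $w'_2$ starts; if the direction fails, I would instead choose $w'_2$ to begin at a segment boundary.
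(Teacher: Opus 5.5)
Your proposal is correct and follows the paper's proof essentially step for step: the Ramsey argument on the $\difftype$ colouring, the verification of the GSRI conditions with the extra $\budH$ absorbed by the $4\bigM^2\budH$ term of the well-separation gap, and the drop to depth $d-1$ via the suffix of $x$ of length $\frac{1}{\extraSize}\generLfuncLength{d}{1}$ (the paper's proof also uses $\mathtt{GLen}$ here, which confirms your reading of the typo). In the bounded case your primary argument already works, since Item 5 is stated with $\ge$ and $w'_1w'_2$ ends at a segment end, so the fallback is unnecessary; starting $w'_2$ at a segment boundary would in fact give the inequality in the wrong direction.
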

\begin{proof}
    Let $v_j$ denote $u_0\cdots u_j$ as per \cref{def: charge decreasing decomposition,def: charge bounded decomposition}.
    We begin by identifying a structure that repeats twice along the decomposition. Let $b=2\generLfuncCover{d+1}{i}$, and consider a complete graph on the vertices $\{1,\ldots,t\}$. We define a colouring of the edges as follows. For $1\le j<j'\le t$ write $u_{j,j'}=u_{j}\cdots u_{j'-1}$, colour the edge $\{j,j'\}$ with $\difftype_b(w_1v_{j-1},u_{j,j'})$. 
    
    By \cref{rmk:size of difftypeset}, the number of colours we use is bounded by $\generLfuncTypes{d+1}{i}$, and by \cref{def: charge decreasing decomposition,def: charge bounded decomposition} we have \\ $t\ge \ramsey(\generLfuncTypes{d+1}{i},3)$. Thus, there exists a monochromatic 3-clique in our graph, i.e., there are $j<j'<j''$ such that 
    \[\begin{split} &\difftype_b(w_1v_{j-1},u_{j,j'})= \difftype_b(w_1v_{j'-1},u_{j',j''})\\& = \difftype_b(w_1v_{j-1},u_{j,j''})\end{split}\]
    Write $w_1w_2w_3=uxyv$ where $u=w_1v_{j-1}$, $x=u_{j,j'}$, $y=u_{j',j''}$ and $z=u_{j''+1}\cdots u_tw_2$. Clearly $xy$ is an infix of $w_2$.
    
    We now separate to two cases. 
    \paragraph*{If $\depth(x)=d$.} In this case we claim that $uxyv$ is a GSRI. Indeed, we go over the requirements of \cref{def:separated repeating infix,def: SSRI and GSRI}:
    \begin{enumerate}
        \item $\supp(\vec{c_u})=\supp(\vec{c_{ux}})=\supp(\vec{c_{uxy}})=B$: since the decomposition is a cover, then as noted in \cref{def: charge cover decomposition}, every reachable state is near some independent run, and in particular the equivalent $\difftype_b$ implies the same support.
        \item In the case of a $\charge$-decreasing decomposition (\cref{def: charge decreasing decomposition}), $\charge(u)\ge \charge(ux)\ge \charge(uxy)$ follows immediately. Then, the fact that $\charge(u)=\charge(ux)$ iff $\charge(ux)=\charge(uxy)$ is actually implied by the definition, in particular by Item 4 below. 

        In the case of a $\charge$-bounded decomposition, we already have $\charge(u)= \charge(ux)= \charge(uxy)$. Note that this is the only place where we treat the two types of decomposition separately.
        \item $|x|\le \frac{1}{\bigM}\generL(\depth(x)+1)=\frac{1}{\bigM}\generLfuncLength{d+1}{1}$: this is immediate from the requirement in \cref{def:charge d i fit} that $|w_2|\le \frac{1}{\extraSize}\generLfuncLength{d+1}{i}\le \frac{1}{\bigM}\generLfuncLength{d+1}{1}$ (recall that the inductive definition of $\generLfuncLength{d+1}{i}$ is with increasing $i$).
        \item We partition $B=V_1\cup\ldots \cup V_i$ according to the independent runs. That is, set $V_j=\supp(\near_b(\rho_j,x))$ for $b=\generLfuncCover{d+1}{i}$ where $I=\{\rho_1<\ldots<\rho_i\}$. 
        By \cref{def: charge cover decomposition} this is indeed a partition.
        \begin{itemize}
             \item The gaps between $V_j$ and $V_{j+1}$ are higher than \\$4|S|\bigM\maxeff{xy}+\budH$: by \cref{def:charge well separated word} the set $I$ has gap at least 
            $4\bigM^2(\generLfuncMaxW{d}\generLfuncLength{d}{1}\generLfuncLength{d+1}{i}+\generLfuncCover{d+1}{i}+\budH)$
            Notice that $|xy|\le |w_2|<\generLfuncLength{d+1}{i}$ (by \cref{def:charge d i fit}), so \[
            \begin{split}
                &\maxeff{xy}\le \generLfuncMaxW{d}\generLfuncLength{d+1}{i}\le \\&\generLfuncMaxW{d}\generLfuncLength{d}{1}\generLfuncLength{d+1}{i}
            \end{split}
            \] 
            Thus, the minimal possible gap of a state in $V_j$ from a state in $V_{j+1}$ is at least
            \[
            \begin{split}
            &4\bigM^2(\maxeff{xy}+\simpLfuncCover{d+1}{i}+\budH)-2b\\
            &=4\bigM^2(\maxeff{xy}+\simpLfuncCover{d+1}{i}+\budH)-2\simpLfuncCover{d+1}{i}
            \gg  \\&4|S|\bigM\maxeff{xy}+\budH
            \end{split}
            \]
            
            \item The existence of $k_{j,x}, k_{j,y}$ and their sign equality is immediate from $\difftype_b(w_1v_{j-1},u_{j,j'})=  \\\difftype_b(w_1v_{j'-1},u_{j',j''})$. Indeed, this means that the configurations around each independent run are identical, and that the sign of the run on $x$ and on $y$ is the same.
        \end{itemize}
        \item The seamless baseline run exists by \cref{def:charge decreasing word}.
    \end{enumerate}
    This concludes that $uxyv$ is an GSRI.

    \paragraph*{If $\depth(x)<d$.} In this case we treat the two types of decomposition separately. 
    The case of a $\charge$-decreasing decomposition is simple: we just restrict attention to an infix of $w_2$, as follows.
    Define $w'_2$ to be the suffix of $x$ of length $\frac{1}{\extraSize}\generLfuncLength{d}{1}$, and let $w'_1,w'_3$ be the prefix and suffix such that $w_1w_2w_3=w'_1w'_2w'_3$. Note that there is such a suffix since $|x|\ge \generLfuncLength{d}{1}$ as a concatenation of $u_j$'s, and by \cref{def: charge decreasing decomposition}.
    Still by \cref{def: charge decreasing decomposition} we have $\charge(w'_1)\ge \charge(w'_1w'_2)$ (viewing $w'_2$ as a prefix of $u_{j'}$). Take $I'=\{\rho_1\}$ (i.e., arbitrarily choose a single independent run from $I$) so $w'_1w'_2w'_3$ with $I'$ is a $\charge-(d-1,1)$ fit word, as per \cref{def:charge d i fit}.

    The case of $\charge$ bounded decomposition is slightly more delicate. We still take $w'_2$ to be the suffix of $x$ of length $\frac{1}{\extraSize}\generLfuncLength{d}{1}$. This is possible since each part in the decomposition is of length multiple of $\frac{1}{\extraSize}\generLfuncLength{d}{1}$ (see \cref{def: charge bounded decomposition}). Moreover, Property 5 in \cref{def: charge bounded decomposition} guarantees that $\charge(w'_1)\ge \charge(w'_1w'_2)$ (although between them the charge may decrease, but this does not concern us).
    We can now proceed as above by taking $I'=\{\rho_1\}$, and we are done.
\end{proof}

\subsection{From Well Separation to GSRI}
\label{sec:charge well separation to GSRI}
We now have the tools to show that given a $\charge$ well-separated word, we can obtain a GSRI from it. To do so, we use a similar zooming argument as in \cref{sec:well separation to SSRI}. Note that the caveat of \cref{prop: charge high amplitude to charge decreasing decomposition} now pops up again, so we cannot always obtain a GSRI -- sometimes we show the potential can be large instead.
\begin{lemma}
    \label{lem: charge d i fit to GSRI or lower depth or high potential}
    Either there is a word $w' \in (\Gamma_0^0)^*$ such that $\pot(w')>\budH$, or for every $\charge-(d,i)$-fit well-separated word $w_1w_2w_3$ where 
    $|w_2|= \frac{1}{\extraSize}\generLfuncLength{d+1}{i}$, one of the following holds:
    \begin{enumerate}
        \item $w_1w_2w_3=uxyv$ where $uxyv$ is an GSRI with $\depth(x)=d$ and $xy$ is an infix of $w_2$.
        \item $w_1w_2w_2=w'_1w'_2w'_3$ where $w'_1w'_2w'_3$ is a $\charge-(d-1,1)$-fit well-separated word (\cref{def:charge well separated word}) and $w'_2$ is an infix of $w_2$ with $|w'_2|=\frac{1}{\extraSize}\generLfuncLength{d}{1}$.
    \end{enumerate}
\end{lemma}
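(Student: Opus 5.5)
The proof mirrors that of \cref{lem: d i fit to SSRI or lower depth}, with the global dichotomy moved to the front. First, if some $w'\in(\Gamma_0^0)^*$ has $\pot(w')>\budH$, we are done. So assume no such word exists. By \cref{lem: charge decrease to high potential}, read contrapositively with $P=\budH$, every $u\in(\genCac)^*$ and $\sigma\in\genCac$ of depth $d$ then satisfy $\charge(u)-\charge(u\sigma)\le \budH+2\generLfuncMaxW{d}$. This is the bounded-decrease property that replaces \cref{cor:effective bounded growth potential}. Under this assumption, \cref{prop: charge high amplitude to charge decreasing decomposition} provides $\charge$-decreasing decompositions unconditionally, and \cref{prop: charge bounded amp to bounded decomposition} provides $\charge$-bounded ones.

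Now fix a $\charge-(d,i)$-fit well-separated word with $|w_2|=\frac{1}{\extraSize}\generLfuncLength{d+1}{i}$. We first reduce to the case that it is either $\charge$ high amplitude or $\charge$ bounded amplitude. If it is not bounded amplitude, some prefix $x$ of $w_2$ has $|\charge(w_1)-\charge(w_1x)|>\generLfuncAmp{d+1}{i}$. If $\charge$ dropped along $x$, then $x$ itself is a high-amplitude infix. If $\charge$ rose along $x$, then since $\charge(w_1)\ge\charge(w_1w_2)$, the remaining suffix $z$ (with $w_2=xz$) is a high-amplitude infix. In both cases, moving part of $w_2$ into $w_1$ or $w_3$ preserves fitness and the gaps of $I$ (a sub-infix inherits them). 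We then apply the appropriate decomposition proposition to get $w_2=u_0u_1\cdots u_tu_{t+1}$.

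Next comes the reverse induction on $i$ (zooming). For the base case $i=|S|$, every reachable state lies on an independent run, so the decomposition is a cover. If the decomposition is a cover, \cref{prop:charge cover decomp induces GSRI or lower depth} gives either the GSRI or the $\charge-(d-1,1)$-fit word, and we are done. Otherwise, some $0<j\le t$ and some $s\in\supp(\vec{c_j})$ are more than $\generLfuncCover{d+1}{i}$ from every run in $I$. Take a seamless minimal run $\pi$ to $s$, and split $w_1v_j=z_1z_2$ with $|z_2|=\frac{1}{\extraSize}\generLfuncLength{d+1}{i+1}$; this is possible because $|u_0|\ge\generLfuncLength{d+1}{i+1}$. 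By \cref{prop: relating cover length and maxW} (the $\mathtt{G}$ version), $\pi$ stays at least $\frac12\generLfuncCover{d+1}{i+1}$ from all of $I$ along $z_2$. Hence $I\cup\{\pi\}$, with gap $\frac12\generLfuncCover{d+1}{i}$, witnesses that $z_1z_2\varepsilon$ is $(d,i+1)$ well separated, and the induction hypothesis applies.

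The main obstacle is the requirement in the zoomed word that $\charge(z_1)\ge\charge(z_1z_2)$. Unlike the $\pot$ case, this is not automatic, since $z_2$ ends at an arbitrary decomposition point $v_j$. I would try first to pick $j$ and the split point using decomposition property 3 (respectively 5), so that $\charge$ at the end of $v_j$ is minimal relative to the earlier positions. Failing that, I would shift the start of $z_2$ back to a position where $\charge$ is at least its value at the end, trimming the segment to the right length.

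A second point to verify is that the new independent run is seamless and that the gap bound includes the $+\budH$ term. This should hold because $\generLfuncCover{d+1}{i}$ contains $\budH$ by \cref{def: genL func Cover}.
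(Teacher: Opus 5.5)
Your outline follows the paper's proof step for step: the global dichotomy first, the reduction to the high- or bounded-amplitude case, the two decomposition propositions, reverse induction on $i$, and zooming via \cref{prop: relating cover length and maxW}. The gap is the one you flag yourself. The zoomed word $z_1z_2\varepsilon$ must be $\charge-(d,i+1)$-fit, hence $\charge$-decreasing, and your proposal never establishes $\charge(z_1)\ge\charge(z_1z_2)$; it only lists two possible remedies. (The paper's proof asserts fitness at this point without argument, so you are right that something must be said.) The difficulty is structural. The length $|z_2|=\frac{1}{\extraSize}\generLfuncLength{d+1}{i+1}$ is forced by the induction hypothesis, and the index $j$ is dictated by where the cover fails and may be $1$. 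The blocks $u_1,\dots,u_j$ are only guaranteed length $\generLfuncLength{d}{1}$ (resp.\ a multiple of $E$). So the split point can fall inside $u_0$, which neither property 3 of \cref{def: charge decreasing decomposition} nor property 5 of \cref{def: charge bounded decomposition} controls.

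In the high-amplitude case this can be repaired with what you have. The minimal choice of $u_0$ in the proof of \cref{prop: charge high amplitude to charge decreasing decomposition} gives $\charge(w_1x)>\charge(w_1u_0)>\charge(w_1v_j)$ for every strict prefix $x$ of $u_0$. Together with properties 2--3 this yields $\charge(z_1)\ge\charge(w_1v_j)$.

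In the bounded case your first remedy fails. The construction in \cref{prop: charge bounded amp to bounded decomposition} absorbs into $u_0$ everything before the first level-$P$ position. The part preceding the final $P$-lower-bounded infix is only bounded below by $\charge(w_1)-P'$ for coarser levels $P'>P$. So $\charge(z_1)<\charge(w_1)-P=\charge(w_1v_j)$ is possible.

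Your fallback of sliding or trimming the window has no support either. The charge may rise steadily (by at most $\generLfuncMaxW{d}$ per letter, far inside the amplitude band) along the whole stretch before $v_j$ on which $\pi$ is known to stay far from $I$. In that case every admissible window has $\charge(z_1)<\charge(z_1z_2)$.

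What closes the gap is a stronger bounded decomposition. In the pigeonhole step, look for the $T+1$ level-$P$ positions only after the first $\generLfuncLength{d+1}{i+1}$ letters of the current $P$-lower-bounded infix. This costs an additive term per level, which the multiplicative slack in $|w_2|$ absorbs. Now $E$ divides $|z_2|$ by the recursive definition of $\generLfuncLength{d+1}{i+1}$, so $z_1$ ends at an $E$-aligned position inside the final infix. This gives $\charge(z_1)\ge\charge(w_1)-P$.

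A smaller point: the bound $\frac12\generLfuncCover{d+1}{i+1}$ you quote from \cref{prop: relating cover length and maxW} is weaker than the gap $\frac12\generLfuncCover{d+1}{i}$ that $(d,i+1)$ well-separation requires. The latter holds directly. At the end of $z_2$, $\pi$ is more than $\generLfuncCover{d+1}{i}$ from every run of $I$, and it drifts by at most $\frac{2}{\extraSize}\generLfuncLength{d+1}{i+1}\generLfuncMaxW{d}$ along $z_2$.
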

\begin{proof}
    Assume there is no word  $w' \in (\Gamma_0^0)^*$ such that $\pot(w')>\budH$ (otherwise we are done). Consider therefore a $\charge-(d,i)$-fit well-separated word $w_1w_2w_3$. 
    Let $I=\{\rho_1<\ldots<\rho_i\}$ be the set of independent runs associated with $w_1w_2w_3$ (\cref{def:independent run}). 
    Consider the behaviour of $\charge$ along $w_2$ as per \cref{def:charge bounded amplitude word}. 
    We claim there are two cases: either $w_1w_2w_3$ is $\charge-(d,i)$ bounded amplitude, or it has an infix that is $\charge$ high amplitude.
    Indeed, if $w_1w_2w_3$ is not $\charge$ bounded, then for some prefix $x$ of $w_2$ we have $|\charge(w_1)-\charge(w_1x)|>\generLfuncAmp{d+1}{i}$. If $\charge(w_1)-\charge(w_1x)>\generLfuncAmp{d+1}{i}$ then $x$ is an infix as required. If $\charge(w_1x)-\charge(w_1)>\generLfuncAmp{d+1}{i}$ then since $\charge(w_1)\ge \charge(w_1w_2)$ (by \cref{def:charge decreasing word}) the suffix $z$ such that $w_2=xz$ is a $\charge$ high-amplitude infix of $w_2$. 

    We can therefore assume without loss of generality that $w_1w_2w_3$ itself is either $\charge-(d,i)$ bounded amplitude or $\charge-(d,i)$ high amplitude.
    We then employ either \cref{prop: charge bounded amp to bounded decomposition} in the former case, or \cref{prop: charge high amplitude to charge decreasing decomposition} in the latter, and obtain a $(d,i)$ decomposition as per either \cref{def: charge bounded decomposition} or \cref{def: charge decreasing decomposition}, respectively. I.e., we can write $w_2=u_0u_1\cdots u_tu_{t+1}$. Note that crucially, in order to use \cref{prop: charge high amplitude to charge decreasing decomposition}, we rely on the assumption that there are no high-potential words (above $\budH$), assumed above.
    
    Our goal is now to use \cref{prop:charge cover decomp induces GSRI or lower depth} to conclude the lemma. However, this requires the decomposition to also be a cover (\cref{def: charge cover decomposition}), which is not necessarily the case. This is where Zooming comes in.
    
    We proceed by reverse induction on $i$.
    The base case is $i=|S|$. In this case, every state belongs to one of the independent runs at each step. Therefore, the decomposition is a cover already with $b=0$, and in particular with $b=\generLfuncCover{d+1}{|S|}=1$ (by \cref{def: genL func Cover}). We can therefore use \cref{prop:charge cover decomp induces GSRI or lower depth} and we are done.

    We proceed to the induction case, where $1\le i<|S|$. Consider the decomposition of $w_2$. If it is already a cover, then we can again use \cref{prop:charge cover decomp induces GSRI or lower depth} and we are done. 
    
    Assume henceforth that this is not the case. In the notations of \cref{def: charge cover decomposition} this means that there is some $0< j\le  t$ and state $s\in \supp(\vec{c_j})$ that is more than $\generLfuncCover{d+1}{i}$ away from all the independent runs in $I$. More precisely, for every run $\rho\in I$, write $\rho[w_1v_j]:s_0\runsto{v_j}s'$, then $|\vec{c_j}(s)-\vec{c_j}(s')|>\generLfuncCover{d+1}{i}$.

    We now ``zoom in'' on $s$, and show that we can use it to construct a new independent run on an infix of $w_2$, so we can use the induction hypothesis with $i+1$. This, however, needs to be done carefully in order to account for the depth, length, and cover.

    Let $\pi:s_0\runsto{w_1v_j}s$ be a seamless minimal-weight run to $s$. In particular, $\weight(\pi)=\vec{c_j}(s)$. Decompose $w_1v_j=z_1z_2$ where $|z_2|=\frac{1}{\extraSize}\generLfuncLength{d+1}{i+1}$. Such a decomposition is possible due to the lower bound on $|u_0|$ in both \cref{def: charge decreasing decomposition} and \cref{def: charge bounded decomposition}.
    
    We claim that $\pi[z_2]$ remains at least $\frac12\generLfuncCover{d+1}{i+1}$ away from all independent runs along this suffix.
    Indeed, this is a direct consequence of \cref{prop: relating cover length and maxW}, since the maximal change in weight that can occur during $\pi[z_2]$ is 
    \[2\maxeff{|z_2|}\le |z_2|\cdot \generLfuncMaxW{d}\cdot 2= \frac{1}{\extraSize}\generLfuncLength{d+1}{i+1}\cdot 2\]
    Thus, the word $z_1z_2z_3$ with $z_3=\epsilon$ is $\charge-(d,i+1)$-fit well separated word (\cref{def:charge well separated word}), and by the induction hypothesis it satisfies the conditions of the lemma, so we are done.
\end{proof}

Finally, consider a charge-decreasing word $w_1w_2w_3\in (\genCac)^*$ with $|w_2|= \frac{1}{\extraSize}\generLfuncLength{|S|}{1}$. Recall from \cref{rmk:depth vs length bound} that we have $\depth(w_2)\le |S|-1$. By choosing an arbitrary seamless run on it, we have that it is $(|S|-1,1)$-fit well-separated. Indeed, for $i=1$ well-separation is trivial (see \cref{def:charge well separated word}).
We can apply \cref{lem: charge d i fit to GSRI or lower depth or high potential} so that either there is a word $w'\in (\simpCac)^*$ with $\pot(w')> \budH$, or we identify a GSRI, or we rewrite $w_1w_2w_3=w'_1w'_2w'_3$ as a $\charge-(|S|-2,1)$-fit well-separated word. 
We continue inductively (assuming we did not find a $w'$ as above) until at some point a GSRI must be found (since the induction cannot go below depth $0$). We therefore have the following:
\begin{corollary}
\label{cor: charge decrease to GSRI or high potential}
    Either there is a word $w' \in (\Gamma_0^0)^*$ such that $\pot(w')>\budH$, or for every charge-decreasing word $w_1w_2w_3\in (\genCac)^*$ with $|w_2|= \frac{1}{\extraSize}\generLfuncLength{|S|}{1}$ we have that $w_1w_2w_3=uxyv$ where $uxyv$ is a GSRI and $xy$ is an infix of $w_2$.
\end{corollary}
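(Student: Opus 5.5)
The plan is to mirror the derivation of \cref{cor: potential increasing to SSRI}, iterating \cref{lem: charge d i fit to GSRI or lower depth or high potential} downward in depth. First I dispose of the trivial branch: if there is a word $w'\in(\Gamma_0^0)^*$ with $\pot(w')>\budH$ we are done. So I assume throughout that no such word exists. This assumption is exactly the one under which \cref{lem: charge d i fit to GSRI or lower depth or high potential} always returns one of its two structural outcomes, for every $d$ simultaneously. I fix a charge-decreasing word $w_1w_2w_3\in(\genCac)^*$ with $|w_2|=\frac{1}{\extraSize}\generLfuncLength{|S|}{1}$.

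For the base of the iteration, \cref{rmk:depth vs length bound} gives $\depth(w_2)\le|S|-1$, since $\genCac$ only contains letters of depth at most $|S|-1$. I then pick an arbitrary seamless run $\rho$ on $w_1w_2w_3$; the seamless baseline run supplied by the $\charge$-decreasing hypothesis will do. Taking $I=\{\rho\}$, the word is $\charge$-$(|S|-1,1)$-fit, because it is $\charge$-decreasing, has depth at most $|S|-1$, and $|w_2|=\frac{1}{\extraSize}\generLfuncLength{|S|}{1}$. It is also well separated, since a single run imposes no gap condition. Hence \cref{lem: charge d i fit to GSRI or lower depth or high potential} applies with $d=|S|-1$ and $i=1$, and the length hypothesis holds with equality.

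I then run an induction downward on $d$. The induction invariant is that $w_1w_2w_3=w'_1w'_2w'_3$ is a $\charge$-$(d,1)$-fit well-separated word with $|w'_2|=\frac{1}{\extraSize}\generLfuncLength{d+1}{1}$ and $w'_2$ an infix of $w_2$. The lemma yields one of two outcomes. Either we obtain a GSRI $uxyv$ with $xy$ an infix of $w'_2$, and hence of $w_2$, in which case we stop. Or we obtain a $\charge$-$(d-1,1)$-fit well-separated decomposition whose middle part is an infix of $w'_2$ of length $\frac{1}{\extraSize}\generLfuncLength{d}{1}$, which is precisely the invariant at $d-1$. The infix relation is transitive, so every GSRI found in this way has $xy$ inside the original $w_2$.

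The main obstacle is showing that the induction terminates with a GSRI rather than descending below depth $0$. At $d=0$ the second outcome would demand a $(-1,1)$-fit word, which is impossible since depth cannot be negative. So I would argue that the lemma must return the first outcome there. Concretely, $w'_2$ then consists of depth-$0$ letters, so any $x$ selected inside it has $\depth(x)=0=d$, and in the proof of \cref{prop:charge cover decomp induces GSRI or lower depth} the branch $\depth(x)<d$ is vacuous. I also need a sanity check that the lengths stay large enough at $d=0$. The lemma only requires $|w'_2|=\frac{1}{\extraSize}\generLfuncLength{1}{1}$, which is the invariant, so no additional length argument should be needed beyond what the lemma already encodes. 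Once this is settled, the process stops after at most $|S|$ steps with a GSRI, which proves the corollary.
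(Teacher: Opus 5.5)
Your proposal is correct and follows the paper's own argument. You dispose of the high-potential branch first, start from the $\charge$-$(|S|-1,1)$-fit well-separated word given by a single seamless run, and iterate \cref{lem: charge d i fit to GSRI or lower depth or high potential} downward in depth until a GSRI appears. Your explicit observation that the $\depth(x)<d$ branch is vacuous at $d=0$ simply makes precise the paper's remark that the induction cannot go below depth $0$.
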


\section{Large Gaps Imply Undeterminisability}
\label{sec:gaps to undeterminisability}
We are now ready to our main result, namely that if $\cA$ has a $B$-gap witness for large enough $B$, then it is undeterminisable. This readily implies the decidability of determinisation, and provides a simple algorithm (albeit with horrible complexity).
\begin{theorem}
    \label{thm:main large gap to undet}
    Consider a WFA $\cA$, then $\cA$ is undeterminisable if and only if there is a $B$-gap witness with $B=\generLfuncAmp{|S|}{0}$ where $|S|$ is the size of $\augA$.
\end{theorem}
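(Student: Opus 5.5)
The ``only if'' direction is immediate from \cref{thm:det iff bounded gap}: if $\cA$ is undeterminisable then for every $B$ there is a $B$-gap witness, in particular for $B=\generLfuncAmp{|S|}{0}$. For the converse, I would show that a $B$-gap witness for this $B$ yields a witness in the sense of \cref{def:witness}. Then \cref{lem:witness implies nondet} makes $\augA_\infty^\infty$ undeterminisable, and \cref{thm:aug A inf inf determinisable iff A determinisable} (together with \cref{lem:A det iff augA det}) transfers this to $\cA$. The argument follows the two-phase outline: first pass from a large charge drop to a high-potential word, then from a high-potential word to a witness.

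\textbf{Phase 1: from the gap witness to a high-potential word.} Take the $B$-gap witness $x,y,q$ and lift it to $\augA$ by feeding as baseline a seamless minimal run on $xy$. The run through $q$ after $x$ becomes the baseline at weight $0$, while the minimal run on $x$ sits more than $B$ below it. Hence $\charge(x)-\charge(xy)>B$. Among all words in $(\genCac)^*$ with a decomposition $w_1w_2$ satisfying $\charge(w_1)-\charge(w_1w_2)>B$, choose one of minimal length.

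Assume towards contradiction that no $w'$ satisfies $\pot(w')>\budH$. By \cref{lem: charge decrease to high potential}, every single letter of depth $d$ then drops the charge by at most $\budH+2\generLfuncMaxW{d}$. Since $B$ is enormous compared to $\extraSize\generLfuncLength{|S|}{1}$ times this per-letter bound, some suffix of $w_2$ (after possibly shrinking $w_1$'s split point) is charge-decreasing with length exactly $\frac{1}{\extraSize}\generLfuncLength{|S|}{1}$. Applying \cref{cor: charge decrease to GSRI or high potential} gives a GSRI $uxyv$, and \cref{lem:GSRI to stable or high potential} makes it Stable.

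Two cases remain. If the GSRI is Degenerate, \cref{cor:degenerate SRI remove x maintain potential and charge} lets us delete $x$ while keeping the charge. If it is Non-degenerate, \cref{lem:nondegenerate SRI budding increases potential decreases charge} replaces $xy$ by a single letter $\alpha$ with charge not increasing. One must check this letter lies in $\genCac$: this uses $|x|\le \frac1\bigM\generL(\depth(x)+1)$ and non-degeneracy. One must also check that $\charge(w_1)$ is unaffected, or only increases, when the SRI sits inside $w_2$. Both cases contradict minimality, so some $w'$ with $\pot(w')>\budH$ exists.

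\textbf{Phase 2: from the high-potential word to a witness.} Take a minimal-length $w\in(\simpCac)^*$ with $\pot(w)>\budH$; by \cref{cor:flatttening maintains potential} the $\Gamma_0^0$ word can be used, modulo a type-$0$ witness. By \cref{cor:effective bounded growth potential}, and since $\budH=\simpLfuncAmp{|S|}{0}$ dominates $2\simpLfuncMaxW{|S|}\cdot\frac{1}{\extraSize}\simpLfuncLength{|S|}{1}$, we can split $w=w_1w_2$ with $\pot(w_1)\le\pot(w_1w_2)$ and $|w_2|=\frac{1}{\extraSize}\simpLfuncLength{|S|}{1}$. \Cref{cor: potential increasing to SSRI} then yields an SSRI. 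The cases follow \cref{sec:abs:proof outline}:
\begin{itemize}
\item a Negative SSRI gives a witness by \cref{lem:negative SSRI to type 1 witness};
\item a Positive SSRI is Stable or gives a witness by \cref{lem:pos SSRI to stable};
\item a Degenerate Stable SSRI contradicts minimality via \cref{cor:degenerate SRI remove x maintain potential and charge};
\item a Non-degenerate Stable SSRI gives a witness or contradicts minimality via \cref{lem:nondegenerate SRI budding increases potential decreases charge}.
\end{itemize}

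\textbf{Main obstacle.} I expect the hardest step to be the minimality bookkeeping. Shortening must preserve the exact form required: the alphabet, the seamless baseline, and that the shortened word still exhibits a charge drop above $B$ (respectively potential above $\budH$). In particular, the SRI from \cref{cor: charge decrease to GSRI or high potential} lies inside $w_2$ while the charge of $w_1$ is measured at the split point, so I would handle this with item~1 of \cref{lem:nondegenerate SRI budding increases potential decreases charge}, which bounds configurations pointwise and so controls both prefix and full-word charges.
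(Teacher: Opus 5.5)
Your proposal is correct and follows the paper's proof essentially step for step: you lift the gap witness to a large charge drop, then use a minimal-length charge-drop word over $\genCac$ and the GSRI lemmas to force a word with potential above $\budH$, and finally push a minimal-length such word over $\simpCac$ through the SSRI case analysis to a witness. Two small points. In Phase~2 the per-letter bound should be $2\simpLfuncMaxW{|S|-1}$, since letters of $\simpCac$ have depth at most $|S|-1$ and $\budH=\simpLfuncAmp{|S|}{0}$ is built from $\simpLfuncMaxW{|S|-1}$, not $\simpLfuncMaxW{|S|}$. Also, the appeal to flattening there is unnecessary, because the word from Phase~1 already lies in $(\Gamma_0^0)^*\subseteq(\simpCac)^*$.
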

We prove the theorem in the remainder of this section.

In the trivial direction, if $\cA$ is undeterminisable then by \cref{thm:det iff bounded gap} it has unboundedly large gap witnesses, in particular there is some $B$ gap witness for $B=\generLfuncAmp{|S|}{0}$.

We turn to the interesting direction, and proceed in steps.
\paragraph*{Step 1: From a Gap witness to a $\charge$-Decreasing Word.}
Consider a $B$-gap witness $xy$ for $B=\generLfuncAmp{|S|}{0}$ as per \cref{thm:det iff bounded gap}. Let $\rho:q_0\runsto{xy}Q$ be a seamless minimal weight run on $xy$ and $\pi:q_0\runsto{x}q$ be a minimal weight run on $x$, then $\weight(\rho[x])-\weight(\pi)>B$. 

Intuitively, we view this gap witness as a word read by $\augA$, and select the run $\rho$ as baseline. This way, the run $\pi$ decreases significantly below $0$, leading to a very high $\charge$. Since at the end of the run $\rho$ is minimal, we get a significant decrease in $\charge$. \cref{fig:big gap to big charge} depicts this idea. We now formalise this idea.

We view $\rho$ as a word read by $\augA$, and let $\rho':s_0\runsto{\rho[x]}s_1\runsto{\rho[y]}s_2$ be its baseline run. In particular, $\rho'$ has constant weight $0$ (see \cref{obs:baseline runs have weight 0}).
We then consider the run $\pi':s_0\runsto{\rho[x]}q_1$ obtained from $\pi$ with the baseline component $\rho[x]$. By the construction of $\augA$ we have 
\[\weight(\rho'[\rho[x]])-\weight(\pi')=\weight(\rho[x])-\weight(\pi)>B\]
Since $\weight(\rho)$ is constantly $0$, we have $\weight(\pi')<-B$. In particular, $\charge(\rho[x])<-B$.
However, since $\rho$ is a minimal weight run on $xy$, then $\rho'$ is a minimal weight run on $\rho[xy]=\rho[x]\rho[y]$. It follows that $\charge(\rho[x]\rho[y])=0$. 
Thus, we have that $\rho[x]\rho[y]$ is a charge-decreasing word, and moreover $\charge(\rho[x])-\charge(\rho[x]\rho[y])>B=\generLfuncAmp{|S|}{0}$

\begin{figure}[H]
  \begin{center}
    \includegraphics[width=0.9\textwidth]{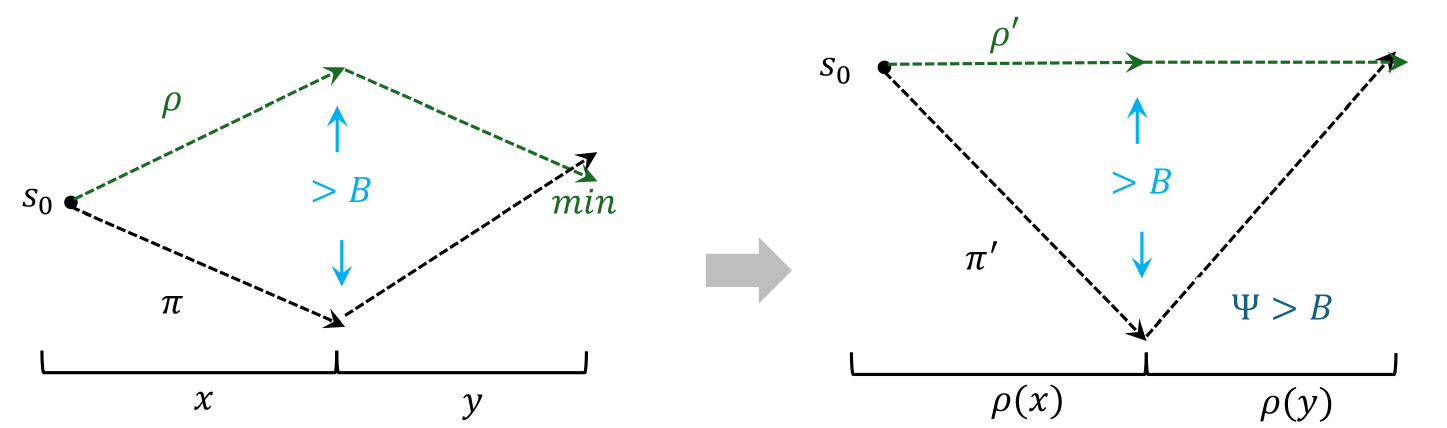}
  \end{center}
  \caption{A $B$-gap witness and its counterpart in $\augA$, where the baseline run is the higher one. The lower run then induces a large charge.}
  \label{fig:big gap to big charge}

\end{figure}

\paragraph*{Step 2: From $\charge$-Decreasing to $\pot$-Increasing}
We now consider $\augA_\infty^\infty$ with the alphabet $\genCac$. 
We know that there is some word, namely $\rho[x]\rho[y]$, with a significant charge drop. We now let $w_1w_2$ be a \emph{minimal length} word over $(\genCac)^*$ such that $\charge(w_1)-\charge(w_1w_2)>\generLfuncAmp{|S|}{0}$. Note that we now allow cactus letters, and also require length minimality. Also recall by \cref{def:L bounded cactus letters} that $\depth(w_1w_2)\le |S|-1$, since it does not use any degenerate letters.

We now ask whether there is a word $w'\in (\Gamma_0^0)^*$ such that $\pot(w')> \budH$. If there is, we proceed to the next step.
We continue with this step assuming by way of contradiction that there is no such word.

Intuitively, we now use this assumption in combination with \cref{lem: charge decrease to high potential} and obtain that the charge has bounded decrease over $\genCac$. Since $w_1w_2$ has a huge charge drop, then $w_2$ must be very long. We use this with \cref{cor: charge decrease to GSRI or high potential} to show that $w_1w_2$ is a GSRI. Then, we use our GSRI toolbox to show that we can find a shorter witness with the same charge decrease, contradicting minimality. We now formalise this intuition.

Let $d=\depth(w_2)\le |S|-1$. By \cref{lem: charge decrease to high potential} and the assumption that there is no word $w'$ with $\pot(w')>\budH$, we have that $\charge$ has an upper-bounded decrease with each letter, namely at most $\budH+2\generLfuncMaxW{|S|-1}$. Applying this to all of $w_2$, we have
\[\generLfuncAmp{|S|}{0}<\charge(w_1)-\charge(w_1w_2)\le |w_2|(\budH+2\generLfuncMaxW{|S|-1})\]
By plugging in \cref{def: genL func Amp} and rearranging, we obtain
\[|w_2|\gg \frac{1}{\extraSize}\generLfuncLength{|S|}{1}\]

By the minimality of $w_1w_2$, we have that $\charge(w_1)-\charge(w_1z)<\charge(w_1)-\charge(w_1w_2)$ for every strict prefix $z$ of $w_2$ (otherwise we can shorten to $w_1z$). Therefore, by taking the suffix $w'_2$ of $w_2$ such that $|w'_2|=\frac{1}{\extraSize}\generLfuncLength{|S|}{1}$ and defining $w'_1$ such that $w_1w_2=w'_1w'_2$ we have that $w'_1w'_2$ is a charge-decreasing word. Indeed, by the minimality we have \[\charge(w_1)-\charge(w'_1)<\charge(w_1)-\charge(w_1w_2)=\charge(w_1)-\charge(w'_1w'_2)\]
which implies $\charge(w'_1)>\charge(w'_1w'_2)$.

We are therefore within the premise of \cref{cor: charge decrease to GSRI or high potential}. Moreover, since we assume there is no $w'\in (\Gamma_0^0)^*$ with $\pot(w')>\budH$, we get that $w_1w_2=w'_1w'_2=uxyv$ where $uxyv$ is a GSRI and $xy$ is an infix of $w'_2$ (and thus of $w_2$). Write $w_1w_2=uxyv=w_1zxyv$.
We now apply \cref{lem:GSRI to stable or high potential}, and get that $uxyv$ is a Stable GSRI (see \cref{def:flavours of inc inf}). We separate to cases:
\begin{itemize}
    \item If $uxyv$ is a Degenerate Stable GSRI, we use \cref{cor:degenerate SRI remove x maintain potential and charge} and get that $\charge(uxyv)=\charge(uyv)$, but then $\charge(w_1)-\charge(w_1zyv)=\charge(w_1)-\charge(w_1w_2)$, while $|w_1zyv|<|w_1w_2|$, contradicting the minimality of $w_1w_2$.
    \item If $uxyv$ is a Non-degenerate Stable GSRI, we use \cref{lem:nondegenerate SRI budding increases potential decreases charge} and get that $\charge(uxyv)\ge \charge(u\alpha_{S',x}v)$. Similarly to the case above, we can write $uxyv=w_1zxyv$, so we have $\charge(w_1)-\charge(w_1zyv)\le \charge(w_1)-\charge(w_1z\alpha_{S',x}v)$, while $|w_1\alpha_{S',x}v|<|w_1w_2|$ (since $|\alpha_{S',x}|=1$ but $|xy|\ge 2$).
    This is again a contradiction to the minimality of $w_1w_2$.
\end{itemize}
In either case, we reach a contradiction. Thus, there is a word $w'\in (\Gamma_0^0)^*$ such that $\pot(w')> \budH$.

\paragraph*{Step 3: Increasing Potential to Witness}
Intuitively, we now repeat Step 2 with $\pot$ and SSRI instead of $\charge$ and GSRI, and this results in a witness, as follows.

By the existence of $w'$, let $w\in (\simpCac)^*$ be the shortest word for which $\pot(w)>\budH$. Note that we again require minimality, now over $\simpCac$.
We claim that $|w|\ge \frac{1}{\extraSize}\simpLfuncLength{|S|}{1}$. Indeed, by \cref{cor:effective bounded growth potential} we have that $\pot$ has bounded-increase on every letter. Specifically, we have
\[\budH< \pot(w)\le |w|\cdot 2\simpLfuncMaxW{|S|-1}\]
Expanding $\budH=\simpLfuncAmp{|S|}{0}$ by \cref{def: simpL func Amp} and rearranging we obtain
\[|w|\gg\frac{1}{\extraSize}\simpLfuncLength{|S|}{1}\]
By minimality, we have $\pot(w)>\pot(z)$ for every prefix $z$ of $w$. In particular, write $w=w_1w_2$ with $|w_2|=\frac{1}{\extraSize}\simpLfuncLength{|S|}{1}$, then we are within the premise of \cref{cor: potential increasing to SSRI}. 
We can therefore write $w_1w_2=uxyv$ where $uxyv$ is an SSRI.

We now split to cases according to the flavour of this SSRI (as per \cref{def:flavours of inc inf}). In each case, if we reach a witness then we proceed to Step 4.
\begin{itemize}
    \item If $uxyv$ is a Negative SSRI, then by \cref{lem:negative SSRI to type 1 witness} there is a type-1 witness in $\augA_\infty^\infty$. We proceed to Step 4.
    \item If $uxyv$ is not a Negative SSRI, then it is a Positive SRI. By \cref{lem:pos SSRI to stable}, either there is a type-0 witness, in which case we proceed to Step 4, or $uxyv$ is a Stable SSRI, and we proceed to the next item.
    \item If $uxyv$ is a Stable SSRI, we split to cases again.
    \begin{itemize}
        \item If $uxyv$ is a Degenerate SSRI, then by \cref{cor:degenerate SRI remove x maintain potential and charge} we have $\pot(uxyv)=\pot(uyv)$, contradicting the minimality of $w$.
        \item If $uxyv$ is a Non-degenerate SSRI, then by \cref{lem:nondegenerate SRI budding increases potential decreases charge}, either there is a type-0 witness, in which case we proceed to Step 4, or $\pot(uxyv)\le \pot(u\alpha_{S',x}v)$. The latter, however, contradicts the minimality of $w$, since $|\alpha_{S',x}|=1$ but $|xy|\ge 2$.
    \end{itemize}
\end{itemize}
We conclude that in all cases we either reach a contradiction, or we have a witness.

\paragraph*{Step 4: Witness to Undeterminisability}
By \cref{lem:witness implies nondet}, since there is a witness, then $\augA_\infty^\infty$ is undeterminisable. By \cref{thm:aug A inf inf determinisable iff A determinisable}, this means $\cA$ is undeterminisable. This concludes the proof of \cref{thm:main large gap to undet}. \hfill \qed

\subsection{Algorithm and Complexity}
\label{sec:algorithm and complexity}
\cref{thm:main large gap to undet} gives rise to a naive algorithm for determinisability, made up of two semi-algorithms: given WFA $\cA$, one semi-algorithm iterates over all deterministic WFAs and checks whether one is equivalent to $\cA$ (which is decidable by \cite{Almagor2020Whatsdecidableweighted}). The other iterates over all gap witnesses, and checks if one is found with gap $\generLfuncAmp{|S|}{1}$ (after computing this value).

This algorithm offers no improvement over the results of \cite{almagor2026determinization}. However, thanks to the concrete bound we obtain in \cref{thm:main large gap to undet}, we can now do much better and give a very simple algorithm. This is based on the following:
\begin{theorem}
\label{thm: det construction for B bounded gap}
    Consider a WFA $\cA$ and $B\in \bbN$. We can construct a deterministic WFA $\cA|_B$ such that $\cA$ is equivalent to $\cA|_B$ if and only if $\cA$ has $B$-bounded gaps.

    Moreover, $\cA|_B$ can be computed in time $B^{O(|\cA|)}$ (and also has this description size), where $|\cA|$ is the description size of $\cA$.
\end{theorem}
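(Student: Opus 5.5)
The plan is to build $\cA|_B$ as a deterministic WFA whose states are \emph{truncated normalised configurations}. Each state is a vector $\vec{c}\in\{0,1,\ldots,B,\infty\}^Q$ with $\min_q\vec{c}(q)=0$. The initial state is $\vec{c}_{q_0}$.

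On reading $\sigma$ from state $\vec{c}$, we proceed in three steps. First, compute $\vec{d}=\xconf(\vec{c},\sigma)$ as in the min-plus semantics. Second, let $m=\min_q\vec{d}(q)$ and normalise to $\vec{d}-m$. Third, truncate: every entry exceeding $B$ is replaced by $\infty$. The single outgoing transition is to the resulting vector, with weight $m$. If $\vec{d}\equiv\infty$, there is no transition. By construction the automaton is deterministic.

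An easy induction on $|w|$ shows that the weight accumulated by $\cA|_B$ on $w$ equals the minimal weight of a run of $\cA$ on $w$ that never passes through a state lying more than $B$ above the current minimum. Call such runs \emph{$B$-tight}. Hence $\cA|_B(w)\ge\cA(w)$ for every $w$.

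Both directions of the equivalence then reduce to this description. Suppose first that $\cA$ has no $B$-gap witness. Take any $w$ and a minimal run $\rho$ on it, and choose $\rho$ among the minimal runs so that it is seamless. If $\rho$ were at some prefix $x$ at state $q$ more than $B$ above $\minweight(q_0\runsto{x}Q)$, then $(x,y,q)$, with $y$ the remaining suffix, would be a $B$-gap witness by \cref{def: B gap witness}. So every minimal run is $B$-tight, and $\cA|_B(w)=\cA(w)$. Conversely, suppose a $B$-gap witness $(x,y,q)$ exists. Then every minimal run on $xy$ is at some point more than $B$ above the minimum, and the aim is to show that $\cA|_B$ either gives $xy$ a strictly larger value or can be made to disagree with $\cA$ on some word. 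The subtle point here is runs that are \emph{exactly} at the truncation boundary. I would settle it by choosing the witness with minimal $|y|$, so that the truncated run is the unique optimum, which forces $\cA|_B(xy)>\cA(xy)$. The strict inequality in the definition of a gap witness makes this work. I expect this converse to be the main (though mild) obstacle.

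The complexity is a counting argument. There are at most $(B+2)^{|Q|}$ states. Each transition is computed in time polynomial in $|\cA|$ and $\log B$. The total construction time, and hence the description size, is therefore $B^{O(|\cA|)}$.
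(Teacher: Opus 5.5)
Your construction, the inequality $\cA|_B(w)\ge\cA(w)$, the forward direction and the state count are fine. The forward direction is the argument that, with no $B$-gap witness, a seamless minimal run is never forgotten. This is the folklore truncation construction that the paper only sketches without proof.

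There is one minor imprecision. Your induction gives only $\cA(w)\le\cA|_B(w)\le$ the minimal weight of a run whose weight at every prefix $x$ is at most $\minweight(q_0\runsto{x}Q)+B$. It does not give equality, because once something is forgotten the running minimum of $\cA|_B$ can exceed $\minweight(q_0\runsto{x}Q)$. The inequality is all the forward direction needs.

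The genuine gap is the converse. A $B$-gap witness only requires the far run through $q$ to \emph{attain} $\minweight(q_0\runsto{xy}Q)$, so a run that stays within $B$ of the minimum may tie with it. Your claim that every minimal run on $xy$ is at some point more than $B$ above the minimum is therefore false. Minimising $|y|$ does nothing to make the far run the unique optimum. The truncation boundary is not the subtlety: entries equal to $B$ are kept, and witnesses need a gap strictly above $B$. Ties are the subtlety.

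In fact, for this construction the converse is false. Take $q_0\xrightarrow{a/0}p$, $q_0\xrightarrow{a/B+1}q$, $p\xrightarrow{b/B+1}p$ and $q\xrightarrow{b/0}p$, with no other transitions. Then $(a,b,q)$ is a $B$-gap witness. However, the far run merges into $p$ with exactly the weight of the tight run. So both $\cA$ and $\cA|_B$ assign $k(B+1)$ to $ab^k$ for $k\ge 0$, assign $0$ to $\epsilon$, and assign $\infty$ to every other word. They are equivalent, and no choice of witness produces a disagreement.

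What does hold is a strict version. If $\cA\equiv\cA|_B$, then the running minimum of $\cA|_B$ equals $\minweight(q_0\runsto{x}Q)$ on every prefix $x$. Tracing back the run of $\cA$ that realises the final minimum of $\cA|_B$ then gives, for every $w$, a minimal run that never visits a state more than $B$ above the current minimum.

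The paper only ever uses the direction you did prove: if $\cA\not\equiv\cA|_B$ then a $B$-gap witness exists, and \cref{thm:main large gap to undet} then gives undeterminisability. So you should drop the converse, or restate it for this strict notion of witness.
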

The proof of this theorem is folklore (e.g., \cite{filiot2017delay}) and appears explicitly in \cite{almagor2026determinization}. Intuitively, the idea is for $\cA|_B$ to keep track of the minimal run, and all runs up to $B$ above it. When a run goes more than $B$ above, it is ``forgotten''.

Given \cref{thm: det construction for B bounded gap}, a determinisability algorithm is now simple: given a WFA $\cA$, compute $B=\generLfuncAmp{|S|}{0}$, construct $\cA|_B$, and check whether $\cA$ is equivalent to $\cA|_B$. If it is, it is determinisable, and $\cA|_B$ is a deterministic equivalent. Otherwise, it is not determinisable by \cref{thm:main large gap to undet}.

It remains to analyse the complexity of this algorithm. To this end, we show in \cref{sec:complexity} that $\generLfuncAmp{|S|}{0}$ is a primitive-recursive function, and in fact lies in $\gregF_6$, the $6$-th level of the fast-growing hierarchy (Grzegorczyk hierarchy)~\cite{grzegorczyk1953some,tourlakis1984computability,schmitz2016complexity}. Since this astronomical complexity ``swallows'' the exponent $\cA$, we have that the overall complexity is in $\gregF_6$.

\section{Growth Rate of $\generLfuncAmp{|S|}{0}$}
\label{sec:complexity}
In this section we establish the complexity of the functions we define in \cref{sec:effective cactus}. 
Due to the multi-parameter recursive nature of these functions, it is perhaps not surprising that they grow \emph{very} fast. Concretely, we place them all in various levels of the \emph{fast growing hierarchy} (also known as the \emph{Grzegorczyk hierarchy}). More precisely, we place them in levels between $\gregF_2$ to $\gregF_6$, with $\generLfuncAmp{|S|}{0}$ in $\gregF_6$. 
In a coarser lens, this places all of them as primitive recursive (much lower than Ackermann), but non-elementary. 

Since we do not assume familiarity with the fast-growing Hierarchy, we summarise briefly the properties we use. We refer to~\cite{grzegorczyk1953some,tourlakis1984computability,schmitz2016complexity} for further details.

\subsection{Properties of the Fast Growing Hierarchy}
\label{sec:intro to grzegorczyk}
For every $a\in \bbN$ we denote by $\gregF_a$ the $a$-th class of fast-growing functions (corresponding to $\mathcal{E}^{a+1}$ in the Grzegorczyk hierarchy). The functions in this class are of the form $f:\bbN^k\to \bbN$ for arbitrary $k$. Specifically, $\gregF_2$ is exactly the class ELEMENTARY.

For $a\ge 2$, the class $\gregF_a$ is closed under standard arithmetic operators, as well as exponentiation. Moreover, there are two special operations that can be applied (but may increase the level):
\paragraph*{Substitution:} Given functions $f(x_1,\ldots,x_m)$ and $g_1(y_1,\ldots,y_n)$,$\ldots$ , $g_m(y_1,\ldots,y_n)$, we define the substitution    
\[
\begin{split}
&f'(y_1,\ldots,y_n)=\\
&f(g_1(y_1,\ldots,y_n),\ldots,g_m(y_1,\ldots,y_n))
\end{split}
\]
If $f,g_1,\ldots,g_m\in \gregF_a$ for $a\ge 2$, then $f'\in \gregF_a$ as well~\cite{tourlakis1984computability}.

\paragraph*{Bounded Primitive Recursion:}
Given functions $f(x_1,\ldots,x_m,y_1,y_2)$ and $g(x_1,\cdots,x_m)$ we define $\prim(f,g):\bbN^{m+1}\to \bbN$ as
\[
\begin{split}
&\prim(f,g)(x_1,\ldots,x_m,y)=\\
&\begin{cases}
    g(x_1,\ldots,x_m) & y=0\\
    f(x_1,\ldots,x_m,y,\prim(f,g)(x_1,\ldots,x_m,y-1)) & y>0
\end{cases}
\end{split}
\]
If $f,g\in \gregF_a$ then $\prim(f,g)\in \gregF_{a+1}$. Thus, primitive recursion increases the class by at most one~\cite{tourlakis1984computability}.

\subsection{Upper Bounds for Effective Functions}
\label{sec:functions are in fast growing}
Our goal is to give upper bounds to our functions of \cref{sec:effective cactus} that place them in various levels of the fast growing hierarchy. Note that the main parameter now is the size of the input. In the following, we denote by $n$ the size of $\augA$ (i.e., $n=|S|$). This is not precisely the size of the input $\cA$. Indeed, it is exponential in the size of $\cA$. Nonetheless, this suffices for our analysis, since the levels we deal with are closed under exponentiation anyway.

\begin{table}[h]
\centering
\begin{tabular}{|c|c|c|}
\hline
\textbf{Original function} & \textbf{Upper bound function}\\ \hline
\simpLfuncMaxW{d} & W(n,d,Ld) \\ \hline%& (2n \cdot n!)^nL_d^n
\simpLfuncCover{d+1}{i} & C(n,d,i,Ld,Li)  \\ \hline%& 8(n!)^2W(n,d,i,Ld,Li)Ld(Li)^n
\simpLfuncTypes{d+1}{i} & T(n,d,i,Ld,Li) \\ \hline%3^n\cdot (2C(n,d,i,Ld,Li)+2)^{2n^2}
\simpLfuncAmp{d+1}{i} & A(n,d,i,Ld,Li) \\ \hline%& 32n^2(2W(n,d-1,i,Ld,Li))(\ramsey(T(n,d,i,Ld,Li),3)+Li+Ld)
\simpLfuncLength{d+1}{i} & L(n,d,i,Ld,Li) \\ \hline%& 32n^2(Ld \cdot Li)(\ramsey(T(n,d,i,Ld,Li),3)+2)^{A(n,d,i,Ld,Li)+1} 
\generLfuncMaxW{d} & W'(n,d,Ld) \\ \hline%& (2n \cdot n!)^nL_d^n
\generLfuncCover{d+1}{i} & C'(n,d,i,Ld,Li)  \\ \hline%& 8(n!)^2W(n,d,i,Ld,Li)Ld(Li)^n
\generLfuncTypes{d+1}{i} & T'(n,d,i,Ld,Li) \\ \hline%3^n\cdot (2C(n,d,i,Ld,Li)+2)^{2n^2}
\generLfuncAmp{d+1}{i} & A'(n,d,i,Ld,Li) \\ \hline%& 32n^2(2W(n,d-1,i,Ld,Li))(\ramsey(T(n,d,i,Ld,Li),3)+Li+Ld)
\generLfuncLength{d+1}{i} & L'(n,d,i,Ld,Li) \\ \hline%& 32n^2(Ld \cdot Li)(\ramsey(T(n,d,i,Ld,Li),3)+2)^{A(n,d,i,Ld,Li)+1} 
% Row 6, Col 1 & Row 6, Col 2 \\ \hline
% Row 7, Col 1 & Row 7, Col 2 \\ \hline
\end{tabular}
\caption{Each function we define with a corresponding upper-bound function.}
\label{tab:upper bound functions}
\end{table}
%\gatodo{if we removed the formal proof for $\generLfuncLength{n}{0}$ do we still need $A',T',L',...$ upper bounds in the table ~\cref{tab:upper bound functions} ?}
In~\cref{tab:upper bound functions} we associate with each function we consider a corresponding upper bound function. Note that the upper bounds have additional parameters, which we now turn to explain.
\begin{itemize}
    \item $n$ represents the size of $\augA$, as explained above.
    \item $d$ is the depth of a cactus letter.
    \item $i$ almost represents the number of independent runs, but not exactly: since primitive recursion only works with decreasing variables, and the number of independent runs increases, we instead think of $i$ as $n-j$ where $j$ is the number of independent runs.
    \item $Ld$ represents $\simpLfuncLength{d}{1}$ (or $\generLfuncLength{d}{1}$ for the general functions).
    \item $Li$ represents $\simpLfuncLength{d+1}{i+1}$ (or $\generLfuncLength{d+1}{i+1}$ for the general functions).
\end{itemize}

We now tackle each function, show how to define its corresponding upper bound function, and place the upper bound in the hierarchy. For the most part, this is straightforward, with some issues arising for functions that depend on e.g., $\simpLfuncLength{d}{1}$, since for those we first need to bound this respective component.

\subparagraph*{$\simpLfuncMaxW{d}$.}
Recall that 
$\simpLfuncMaxW{d}=2\bigM \simpLfuncMaxW{d-1}\simpLfuncLength{d+1}{1}$. We therefore define
\[W(n,d,Ld)=\begin{cases}
    n & d=0\\
    (2n\cdot n!)\cdot W(n,d-1,Ld)\cdot  Ld & d>0
\end{cases}\]
Clearly if $\simpLfuncLength{d}{1}\le Ld$ then $\simpLfuncMaxW{d}\le W(n,d,Ld)$.

By expanding the inductive definition we readily have
\[W(n,d,Ld)\le (2n\cdot n!)^n Ld^n+n\]
which is in $\gregF_2$ as an elementary function. \hfill $\triangle\triangle\triangle$

\subparagraph*{$\simpLfuncCover{d+1}{i}$.}
Recall that 
\[\begin{split}
&\simpLfuncCover{d+1}{i}=\\
&8\bigM^2(\simpLfuncMaxW{d}\simpLfuncLength{d}{1}\simpLfuncLength{d+1}{i+1}+\simpLfuncCover{d+1}{i+1})
\end{split}\]
We therefore define
\[
\begin{split}
    &C(n,d,i,Ld,Li)=\\
    &\begin{cases} 
        1 &  i = 0 \\
        8(2n \cdot n!)^2\cdot (W(n,d,Ld)\cdot Ld \cdot Li + C(n,d,i-1,Ld,Li)) &  i > 0 
        \end{cases}
\end{split}
\]
And clearly this is an upper bound.
Again by expanding the inductive definition (noticing this is in fact a non-homogenous linear recurrence in $C(\cdots)$) we have
\[C(n,d,i,Ld,Li)\le (8(2n\cdot n!))^{2(n+1)}(W(n,d,Ld)\cdot Ld \cdot Li + 1)\]
Again, this an elementary function and is therefore in $\gregF_2$.
\hfill $\triangle\triangle\triangle$

\subparagraph*{$\simpLfuncTypes{d+1}{i}$}
Recall that
\[\simpLfuncTypes{d+1}{i}=3^i\cdot (2\simpLfuncCover{d+1}{i}+2)^{2|S|i}\]
We define
\[
T(n,d,i,Ld,Li)=3^i(2C(n,d,i,Ld,Li)+2)^{2n i}
\]
and note that the upper bound swaps $i$ with $n-i$. That is,
$\simpLfuncTypes{d+1}{i}\le T(n,d,n-i,\simpLfuncLength{d}{1},\simpLfuncLength{d+1}{i+1})$.

By closure under substitution and since $C$ is in $\gregF_2$, we also get that $T$ is in $\gregF_2$.
\hfill $\triangle\triangle\triangle$

\subparagraph*{$\simpLfuncAmp{d+1}{i}$}
Recall that 
\begin{align*}
    \simpLfuncAmp{d+1}{i} =& \extraSize(2\simpLfuncMaxW{d})(\ramsey(\simpLfuncTypes{d+1}{i},3)\\ &+\simpLfuncLength{d+1}{i+1}+\simpLfuncLength{d}{1})
    \end{align*}
We define
\[\begin{split}
&A(n,d,i,Ld,Li)=\\
&32(2n\cdot n!)^2(2W(n,d-1,Ld)\ramsey(T(n,d,i,Ld,Li),3)+Li+Ld)
\end{split}
\]
Note that again we swap $i$ with $n-i$ for the upper bound.
Also, again by substitution and the fact that the Ramsey function is elementary (single exponential, in fact), we have that $A$ is in $\gregF_2$.
\hfill $\triangle\triangle\triangle$

\subparagraph*{$\simpLfuncLength{d+1}{i}$}
Recall that 
 $\simpLfuncLength{d+1}{i}=\extraSize\cdot\simpLfuncLength{d}{1}\cdot\simpLfuncLength{d+1}{i+1}\cdot (\ramsey(\simpLfuncTypes{d+1}{i},3)+2)^{2\simpLfuncAmp{d+1}{i}+1}$.
We define
\[
\begin{split}
&L(n,d,i,Ld,Li)=\\
&32(2n\cdot n!)^2 (Ld\cdot Li\cdot \ramsey(T(n,d,i,Ld,Li),3)+2)^{2A(n,d,i,Ld,Li)+1}
\end{split}
\]
if $i>0$, and $0$  if $i=0$.

Again by substitution and the fact that the Ramsey function is elementary, we have that $L$ is in $\gregF_2$.
\hfill $\triangle\triangle\triangle$

\subsubsection*{Getting rid of $Ld$ and $Li$}
Our upper bounds thus far are not useful, as they assume $Ld$ and $Li$ as parameters, where they are in fact stand-in for recursive calls. In order to get rid of them, we use bounded primitive recursion, as follows.

We first get rid of $Li$. To this end, define
Define 
\[
        Len1 (n,d,i,Ld) = 
        \begin{cases} 
        L(n,d,0,Ld,0) &  i = 0 \\
        L(n,d,i,Ld,Len1 (n,d,i-1,Ld)) &  i > 0 
        \end{cases}
\]
By the definition of $L$, it is immediate by induction that  $\simpLfuncLength{d}{i}\le Len1(n,d,n-i,\simpLfuncLength{d-1}{1})$ (recalling that $n-i$ is the number of independent runs).

Further note that $Len1$ is in fact obtained as
\[\prim(L(n,d,i,Ld,Li),L(n,d,0,Ld,0))\]
Since $L(n,d,i,Ld,Li)\in \gregF_2$ and $L(n,d,0,Ld,0)\in \gregF_2$ (as we show above), then by the bounded primitive recursion property above, we get that $Len1\in \gregF_3$.

We now proceed to get rid of $Ld$. To this end, define
\[
        Len2 (n,d) = 
        \begin{cases} 
        Len1(n,1,n-1,1) &  d = 1 \\
        Len1(n,d,n-1,Len2 (n,d-1)) &  d > 1 
        \end{cases}
\]
Again, an immediate induction gives $Len2(n,d)=\simpLfuncLength{d}{0}$.

Since $Len2$ is in fact $\prim(Len1(n,d,n-1,Ld),Len1(n,0,n-1,1))$, then since $Len1\in \gregF_3$ we have $Len2\in \gregF_4$, and we conclude that $\simpLfuncLength{d}{0}\in \gregF_4$.

We now conduct the same analysis, mutatis-mutandis, with the general functions. The only difference is that now we have a dependency on $\budH=\generLfuncAmp{n}{0}\in \gregF_4$ (as a composition of $A\in \gregF_2$ with $\simpLfuncLength{n}{0}\in \gregF_4$). This means that all the upper bound functions are in $\gregF_4$, and getting rid of $L_i$ and $L_d$ using two recursions increases the class further to $\gregF_6$, in particular, $\generLfuncLength{n}{0}\in \gregF_6$, and therefore also

We conclude that $\generLfuncAmp{n}{0}\in \gregF_6$, which completes the complexity analysis.

%\bibliographystyle{plain} 
%\bibliography{main}

\end{document}